\documentclass[cleveref,english]{lipics-v2021}
\usepackage[T1]{fontenc}
\usepackage[utf8]{inputenc}
\usepackage{amsmath,amsfonts,amssymb,amsthm}
\usepackage[dvipsnames,svgnames, table]{xcolor}
\usepackage{graphicx}
\usepackage{enumerate}
\usepackage{enumitem}
\usepackage{todonotes}
\hypersetup{hidelinks,breaklinks}

\usepackage{doi}

\usepackage{diagbox}
\usepackage{array} %to center content in tabular environments
\usepackage{arydshln} %dashed lines
\usepackage{makecell}

\usepackage{mathcommand}
\usepackage[notion, quotation, electronic]{knowledge}

%%Algorithms
\usepackage{algorithm, algorithmicx, algpseudocode,setspace}
\usepackage{bbm}

%TICKZ and automata
\usepackage{tikz}
\usetikzlibrary{automata, snakes, positioning, arrows,arrows.meta,calc,decorations,decorations.markings,math,shapes.geometric}

\tikzset{
	%->,  % makes the edges directed
	>=stealth',%Black traingles
	-={stealth',ultra tdeck,scale=3} % makes the arrow heads bold
	node distance=1cm, % specifies the minimum distance between two nodes.
	every state/.style={thick}, % sets the properties for each node
	initial text=$ $, % sets the text that appears on the start arrow
}

%%%%%%%%%%%%%%%%%%%%%%%%%%%%%%%%%%%%%%%%%%%%%%%%%%%%%%
%% PAPER DATA
%%%%%%%%%%%%%%%%%%%%%%%%%%%%%%%%%%%%%%%%%%%%%%%%%%%%%%
\title{The Complexity of Simplifying $\omega$-Automata through the Alternating Cycle Decomposition}
\titlerunning{The Complexity of Simplifying $\omega$-Automata through the ACD}

\author{Antonio Casares}{LaBRI, Université de Bordeaux, France \and University of Warsaw, Poland \and \url{https://antonio-casares.github.io/}}{antoniocasares@mimuw.edu.pl}{https://orcid.org/0000-0002-6539-2020}{}

\author{Corto Mascle}{LaBRI, Université de Bordeaux, France \and \url{https://corto-mascle.github.io/}}{corto.mascle@labri.fr}{}{}

\authorrunning{A. Casares and C. Mascle} \Copyright{Antonio Casares and Corto Mascle}

%\author{Anonymous authors}{Anonymous affiliations}{}{}{}
%\authorrunning{Anonymous}

\ccsdesc[500]{Theory of computation~Automata over infinite objects} %TODO mandatory: Please choose ACM 2012 classifications from https://dl.acm.org/ccs/ccs_flat.cfm 
\keywords{Automata minimisation, omega-regular languages, Alternating Cycle Decomposition}
\category{} %optional, e.g. invited paper
%\relatedversion{} %optional, e.g. full version hosted on arXiv, HAL, or other respository/website
%\relatedversiondetails[]{Full Version}{https://arxiv.org/abs/2209.12044} %linktext and cite are optional
%%\supplement{}%optional, e.g. related research data, source code, ... hosted on a repository like zenodo, figshare, GitHub, ...
%\supplementdetails[linktext={opt. text shown instead of the URL}, cite=DBLP:books/mk/GrayR93, subcategory={Description, Subcategory}, swhid={Software Heritage Identifier}]{General Classification (e.g. Software, Dataset, Model, ...)}{URL to related version} %linktext, cite, and subcategory are optional
%\funding{}%optional, to capture a funding statement, which applies to all authors. Please enter author specific 	funding statements as fifth argument of the \author macro.
%\acknowledgements{}%optional
\nolinenumbers %uncomment to disable line numbering

\funding{Antonio Casares: Supported by the Polish National Science Centre (NCN) grant ``Polynomial finite state computation'' (2022/46/A/ST6/00072).}

%Editor-only macros:: begin (do not touch as author)%%%%%%%%%%%%%%%%%%%%%%%%%%%%%%%%%%
%\EventEditors{John Q. Open and Joan R. Access}
%\EventNoEds{2}
%\EventLongTitle{42nd Conference on Very Important Topics (CVIT 2016)}
%\EventShortTitle{CVIT 2016}
%\EventAcronym{CVIT}
%\EventYear{2016}
%\EventDate{December 24--27, 2016}
%\EventLocation{Little Whinging, United Kingdom}
%\EventLogo{}
%\SeriesVolume{42}
%\ArticleNo{23}

%%%%%%%%%%%%%%%%%%%%%%%%%%%%%%%%%%%%%%%%%%%%%%%%%%%%%%

% Used for displaying a sample figure. If possible, figure files should
% be included in EPS format.
%
% If you use the hyperref package, please uncomment the following line
% to display URLs in blue roman font according to Springer's eBook style:

%\renewcommand{\qedsymbol}{\rule{0.7em}{0.7em}}

\hideLIPIcs

% !TEX root = ../main.tex

% !TEX root =  main.tex

%Restriction of functions
\newcommand\restr[2]{{% we make the whole thing an ordinary symbol
		\left.\kern-\nulldelimiterspace % automatically resize the bar with \right
		#1 % the function
		\littletaller % pretend it's a little taller at normal size
		\right|_{#2} % this is the delimiter
}}

\newcommand{\littletaller}{\mathchoice{\vphantom{\big|}}{}{}{}}

\newcommand{\ts}{\textsuperscript}

%OPERATIONS
\newrobustcmd\inv[1]{#1^{-1}}
\newcommand{\quotient}[2]{{\raisebox{0.1em}{$#1$\hspace{-1mm}}\left/\hspace{-0.5mm}\raisebox{-.1em}{{\scriptsize$#2$}}\right.}}

%\newcommand{\quotientTuned}[3]{{\raisebox{0.1em}{$#1$\hspace{-1mm}}\left/\raisebox{.05em}{\makebox[0pt][l]{{\hspace{-0.4mm}\scriptsize$#2$}}}\hspace{-1.6mm}\raisebox{-.5em}{{\tiny\scalebox{0.8}{$#3$}}}\right.}}

%TEXT SHORTCUTS
\newcommand{\tand}{\text{ and }}

\newcommand{\tin}{\text{ in }}
\newcommand{\tif}{\text{ if }}

%%COMPLEXITY
\newcommand{\NP}{\ensuremath{\mathsf{NP}}}
\newcommand{\NPc}{\ensuremath{\mathsf{NP}\hyphen\mathrm{complete}}}

\newcommand{\PTimeFull}{\ensuremath{\mathsf{PTIME}}}
\newcommand{\coNP}{\ensuremath{\mathsf{coNP}}}

\newcommand{\PSPACE}{\ensuremath{\mathsf{PSPACE}}}

%%ALGORITHMS
\newcommand{\for}{\ensuremath{\mathsf{for}}}

%CALIGRAPHIC
\DeclareMathAlphabet{\mathpzc}{OT1}{pzc}{m}{it}

\newrobustcmd{\NN}{\mathbb{N}}
\newrobustcmd{\ZZ}{\mathbb{Z}}
\newrobustcmd{\QQ}{\mathbb{Q}}
\newrobustcmd{\RR}{\mathbb{R}}
\newrobustcmd{\CC}{\mathbb{C}}
\newrobustcmd{\WW}{\mathbb{W}}

\newrobustcmd{\I}{\mathcal{I}}
\newrobustcmd{\F}{\mathcal{F}}
\newrobustcmd{\D}{\mathcal{D}}
\newrobustcmd{\N}{\mathcal{N}}
\newrobustcmd{\G}{\mathcal{G}}

\renewcommand{\L}{\mathcal{L}}
\newrobustcmd{\M}{\mathcal{M}}
\newrobustcmd{\Q}{\mathcal{Q}}
\newrobustcmd{\C}{\mathcal{C}}
\newrobustcmd{\A}{\mathcal{A}}
\newrobustcmd{\B}{\mathcal{B}}
\newrobustcmd{\Z}{\mathcal{Z}}
\newrobustcmd{\R}{\mathcal{R}}
\newrobustcmd{\T}{\mathcal{T}}
\newrobustcmd{\U}{\mathcal{U}}
\newrobustcmd{\W}{\mathcal{W}}

\renewcommand{\P}{\mathcal{P}}

\renewcommand{\O}{\mathcal{O}}

\renewcommand{\S}{\mathcal{S}}

%GREEK LETTERS
\newrobustcmd{\kk}{\kappa}
\newrobustcmd{\uu}{\upsilon}
\newrobustcmd{\dd}{\delta}

\renewcommand{\ss}{\sigma}

\newrobustcmd{\rr}{\rho}

\renewcommand{\aa}{\alpha}

\newrobustcmd{\bb}{\beta}

\newrobustcmd{\oo}{\omega}
\newrobustcmd{\pp}{\varphi}

\renewcommand{\gg}{\gamma}
\newrobustcmd{\ee}{\varepsilon}

\renewcommand{\SS}{\Sigma}
\newrobustcmd{\GG}{\Gamma}
\newrobustcmd{\DD}{\Delta}

\knowledgerenewmathcommand\nu{\cmdkl{\LaTeXnu}}
\knowledgenewmathcommand\nuAcd{\cmdkl{\LaTeXnu}}
\knowledgerenewmathcommand\eta{\cmdkl{\LaTeXeta}}
%\knowledgerenewmathcommand\chi{\cmdkl{\LaTeXchi}}
% !TEX root =  main.tex

%SETS AND ALPHABETS
\newrobustcmd{\pow}[1]{2^{#1}}
%\newrobustcmd\powplus[1]{\kl[\powplus]{\mathcal{P}_+}}
\newrobustcmd{\powplus}[1]{\kl[\powplus]{2^{#1}_{+}}}
\knowledge{\powplus}{notion}
\newrobustcmd\SigmaInfty{\kl[\SigmaInfty]{\Sigma^\infty}}
\knowledge{\SigmaInfty}[E'^\infty | E^\infty | \GammaInfty|\powInf|\GG^\infty]{notion}
\newrobustcmd\GammaInfty{\kl[\SigmaInfty]{\Gamma^\infty}}
\newrobustcmd{\powInf}[1]{\kl[\SigmaInfty]{#1^{\infty}}}
\newcommand{\disjUnion}{\sqcup}

\newrobustcmd{\restSubsets}[2]{\kl[\restSubsets]{\restr{#1}{#2}}}
\knowledge{\restSubsets}{notion}
\newrobustcmd{\restLang}[2]{\kl[\restLang]{\restr{#1}{#2}}}
\knowledge{\restLang}{notion}

\newrobustcmd{\partialF}{\mathrel{\kl[\partialF]{\rightharpoonup}}}
\knowledge{\partialF}{notion}

\newrobustcmd{\complTS}[1]{\kl[\complTS]{\overline{#1}}}
\knowledge{\complTS}{notion}
\newrobustcmd{\complSet}[1]{\kl[\complSet]{\overline{#1}}}
\knowledge{\complSet}{notion}

\newrobustcmd{\infEL}{\mathtt{Inf}}
\newrobustcmd{\finEL}{\mathtt{Fin}}

%WORDS
\newrobustcmd{\emptyword}{\kl[\emptyword]{\varepsilon}}
\knowledge{\emptyword}{notion}

\newrobustcmd{\prefix}{\mathrel{\kl[\prefix]{\sqsubseteq}}}
\knowledge{\prefix}{notion}
\newrobustcmd{\nprefix}{\mathrel{\kl[\nprefix]\sqsubset}}
\knowledge{\nprefix}{notion}

\newrobustcmd{\first}{\kl[\first]{\mathsf{first}}}
\knowledge\first{notion}
\newrobustcmd{\last}{\kl[\last]{\mathsf{last}}}
\knowledge\last{notion}

\newrobustcmd{\minf}{\kl[\minf]{\mathsf{Inf}}}
\knowledge{\minf}{notion}
\newrobustcmd{\mocc}{\kl[\mocc]{\mathsf{Occ}}}
\knowledge{\mocc}{notion}

%ARROWS
\newcommand{\re}[1]{\xrightarrow{#1}}
\usetikzlibrary{calc,decorations.pathmorphing,shapes} %%

\newcounter{sarrow}
\newcommand\lrp[1]{%
	\stepcounter{sarrow}%
	\mathrel{\begin{tikzpicture}[baseline= {( $ (current bounding box.south) + (0,-0.5ex) $ )}]
			\node[inner sep=.5ex] (\thesarrow) {$\scriptstyle #1$};
			\path[draw,<-,decorate,
			decoration={zigzag,amplitude=0.7pt,segment length=1.2mm,pre=lineto,pre length=4pt}] 
			(\thesarrow.south east) -- (\thesarrow.south west);
	\end{tikzpicture}}%
}

\newrobustcmd\lrpResolver[2]{\kl[\lrpResolver]{
		\stepcounter{sarrow}%
		\mathrel{\hspace{1mm}\begin{tikzpicture}[baseline= {( $ (current bounding box.south) + (0,2.2mm) $ )}]
				\node[inner sep=.5ex] (\thesarrow) {$\scriptstyle #2$};
				\draw[<-, decorate,
				decoration={zigzag,amplitude=0.7pt,segment length=1.2mm,pre=lineto,pre length=4pt}] 
				(\thesarrow.south east) -- (\thesarrow.south west) node[below, pos=0.1, inner sep=1mm] {$\scriptstyle #1$};
			\end{tikzpicture}\hspace{0.7mm}}%
}}
\knowledge\lrpResolver{notion}

\newrobustcmd\lrpResolverMem[2]{\kl[\lrpResolverMem]{\lrpResolver{#1}{#2}}}
\knowledge\lrpResolverMem{notion}

\newrobustcmd\lrpAllResolver[2]{\kl[\lrpAllResolver]{
	\stepcounter{sarrow}%
	\mathrel{\hspace{1mm}\begin{tikzpicture}[baseline= {( $ (current bounding box.south) + (0,2.7mm) $ )}]
			\node[inner sep=.5ex] (\thesarrow) {$\scriptstyle #2$};
			\draw[<-, decorate,
			decoration={zigzag,amplitude=0.7pt,segment length=1.2mm,pre=lineto,pre length=4pt}] 
			(\thesarrow.south east) -- (\thesarrow.south west) node[below, pos=0.1, inner sep=1mm] {$\mathsmaller{\mathsmaller{\forall},\, #1}$};
	\end{tikzpicture}\hspace{0.7mm}}%
}}
\knowledge\lrpAllResolver{notion}

\newrobustcmd\lrpExistsResolver[2]{\kl[\lrpExistsResolver]{
		\stepcounter{sarrow}%
		\mathrel{\hspace{1mm}\begin{tikzpicture}[baseline= {( $ (current bounding box.south) + (0,2.7mm) $ )}]
				\node[inner sep=.5ex] (\thesarrow) {$\scriptstyle #2$};
				\draw[<-, decorate,
				decoration={zigzag,amplitude=0.7pt,segment length=1.2mm,pre=lineto,pre length=4pt}] 
				(\thesarrow.south east) -- (\thesarrow.south west) node[below, pos=0.1, inner sep=1mm] {$\mathsmaller{\mathsmaller{\exists},\, #1}$};
			\end{tikzpicture}\hspace{0.7mm}}%
}}
\knowledge\lrpExistsResolver{notion}

\newrobustcmd\lrpResolverWord[3]{\kl[\lrpResolverWord]{
		\stepcounter{sarrow}%
		\mathrel{\hspace{1mm}\begin{tikzpicture}[baseline= {( $ (current bounding box.south) + (0,2.7mm) $ )}]
				\node[inner sep=.5ex] (\thesarrow) {$\scriptstyle #3$};
				\draw[<-, decorate,
				decoration={zigzag,amplitude=0.7pt,segment length=1.2mm,pre=lineto,pre length=4pt}] 
				(\thesarrow.south east) -- (\thesarrow.south west) node[below, pos=0.1, inner sep=1mm] {$\scriptstyle #2,#1$};
			\end{tikzpicture}\hspace{0.7mm}}%
}}
\knowledge\lrpResolverWord{notion}

%GRAPHS, TRANSITION SYSTEMS

\newcommand{\init}{\mathsf{init}}
\newcommand{\col}{\mathsf{col}}

\newrobustcmd{\TS}{\mathcal{T\hspace{-1.1mm}S}}
%\newrobustcmd{\TS}{\mathpzc{TS}}
\newrobustcmd{\colTS}{\kl[\colTS]{\mathsf{col}}}
\knowledge\colTS{notion}
\newrobustcmd{\colTSStates}{\kl[\colTSStates]{\mathsf{col}_{\mathsf{st}}}}
\knowledge\colTSStates{notion}

\newrobustcmd{\outputTS}{\kl[\outputTS]{\mathsf{col}}}
\knowledge\outputTS{notion}

\newrobustcmd{\Graph}[1]{\kl[\Graph]G_{#1}}
\knowledge\Graph{notion}

\newrobustcmd{\underlyingGraph}[1]{\kl[\underlyingGraph]G_{#1}}
\knowledge\underlyingGraph{notion}

\newrobustcmd{\macc}[1]{\kl[\macc]{\mathrm{Acc}}_{#1}}
\knowledge\macc{notion}

\newrobustcmd{\msource}{\kl[\msource]{\mathsf{source}}}
\knowledge{\msource}[source]{notion}
\newrobustcmd{\mtarget}{\kl[\mtarget]{\mathsf{target}}}
\knowledge{\mtarget}[target]{notion}

\newrobustcmd{\msourcePath}{\kl[\msourcePath]{\mathsf{source}}}
\knowledge{\msourcePath}[source@path]{notion}
\newrobustcmd{\mtargetPath}{\kl[\mtargetPath]{\mathsf{target}}}
\knowledge{\mtargetPath}[target@path]{notion}

\newrobustcmd{\PathSet}[2]{\kl[\PathSet]{\mathpzc{Path}_{#2}(#1)}}
\knowledge\PathSet[\Paths]{notion}
\newrobustcmd{\PathSetFin}[2]{\kl[\PathSetFin]{\mathpzc{Path}^{\mathsf{fin}}_{#2}(#1)}}
\knowledge\PathSetFin[\PathFin]{notion}
\newrobustcmd{\PathFin}[1]{\kl[\PathFin]{\mathpzc{Path}^{\mathsf{fin}}(#1)}}
\newrobustcmd{\Paths}[1]{\kl[\PathFin]{\mathpzc{Path}^{\oo}(#1)}}

\newrobustcmd{\Runs}[1]{\kl[\Runs]{\mathpzc{Run}(#1)}}
\knowledge\Runs{notion}
\newrobustcmd{\RunsFin}[1]{\kl[\RunsFin]{\mathpzc{Run}^{\mathsf{fin}}(#1)}}
\knowledge\RunsFin{notion}
\newrobustcmd{\RunsInfty}[1]{\kl[\RunsInfty]{\mathpzc{Run}^{\infty}(#1)}}
\knowledge\RunsInfty{notion}
\newrobustcmd{\PathSetInfty}[2]{\kl[\PathSetInfty]{\mathpzc{Path}^{\infty}_{#2}(#1)}}
\knowledge\PathSetInfty{notion}

\newrobustcmd{\mout}{\kl[\mout]{\mathsf{Out}}}
\knowledge\mout{notion}
\newrobustcmd{\mIn}{\kl[\mIn]{\mathsf{In}}}
\knowledge\mIn{notion}

\newrobustcmd{\initialTS}[2]{\kl[\initialTS]{#1_{#2}}}
\knowledge\initialTS{notion}
\newrobustcmd{\initialAut}[2]{\kl[\initialAut]{#1_{#2}}}
\knowledge\initialAut{notion}

%\newrobustcmd{\moutput}{\kl[\moutput]{\mathrm{Output}}}
%\knowledge\moutput{notion}
\newcommand{\size}[1]{|#1|}
\newrobustcmd{\sizeAut}[1]{\kl[\sizeAut]{|#1|}}
\knowledge\sizeAut{notion}

\newrobustcmd{\Vrec}{V_\mathrm{rec}}
\newrobustcmd{\Vtrans}{V_\mathrm{trans}}

%AUTOMATA
\newrobustcmd{\transAut}{\kl[\transAut]{\delta}}
\newrobustcmd{\transAutA}[1]{\kl[\transAutA]{\delta_{#1}}}
\knowledge\transAut[\transAutA]{notion}

%\newrobustcmd{\transDetAut}[1]{\kl[\transDetAut]{\delta_{#1}}}
%\knowledge\transDetAut{notion}
%

\newcommand{\Prio}{\Pi}

\newrobustcmd{\letAut}{\kl[\letAut]{\mathsf{let}}}
\knowledge\letAut{notion}
\newcommand{\colAut}{\mathsf{col}}
\newrobustcmd{\colAutDet}{\kl[\colAutDet]{\mathsf{col}}}
\knowledge\colAutDet{notion}
\newrobustcmd{\colAutStates}{\kl[\colAutStates]{\mathsf{col}_{\mathsf{st}}}}
\knowledge\colAutStates{notion}

\newrobustcmd{\outputAut}{\kl[\outputAut]{\mathsf{col}}}
\knowledge\outputAut{notion}

\newrobustcmd{\compositionAut}{\mathbin{\kl[\compositionAut]{\ltimes}}}
\knowledge\compositionAut{notion}

\newrobustcmd{\edgesProduct}{\kl[\edgesProduct]E^\ltimes}
\knowledge\edgesProduct{notion}

%\newrobustcmd{\compositionGame}{\kl[\compositionGame]{\rhd}}
%\knowledge\compositionGame{notion}

\newrobustcmd\Lang[1]{\kl[\Lang]{\mathcal{L}(#1)}}
\knowledge\Lang{notion}

\newrobustcmd\LangTS[1]{\kl[\LangTS]{\mathcal{L}_{\mathpzc{Runs}}(#1)}}
\knowledge\LangTS{notion}

%\newrobustcmd{\subautomaton}[2]{\kl[\subautomaton]{#1_{#2}}}
%\knowledge\subautomaton{notion}

%Rabin Pairs
%\newrobustcmd{\greenPair}{\mathpzc{g}}
%\newrobustcmd{\redPair}{\mathpzc{r}}
\newrobustcmd{\greenPair}{\mathfrak{g}}
\newrobustcmd{\redPair}{\mathfrak{r}}

%TYPES OF ACCEPTANCE CONDITIONS
\newrobustcmd{\Muller}[1]{\kl[\Muller]{\mathsf{Muller}(#1)}}
\newrobustcmd{\MullerC}[2]{\kl[\MullerC]{\mathsf{Muller}_{#2}(#1)}}
\knowledge{\Muller}[\MullerC]{notion}
\newrobustcmd{\Rabin}[1]{\kl[\Rabin]{\mathsf{Rabin}(#1)}}
\newrobustcmd{\RabinC}[2]{\kl[\RabinC]{\mathsf{Rabin}_{#2}(#1)}}
\knowledge{\Rabin}[\RabinC]{notion}
\newrobustcmd{\Streett}[1]{\kl[\Streett]{\mathsf{Streett}(#1)}}
\newrobustcmd{\StreettC}[2]{\kl[\StreettC]{\mathsf{Streett}_{#2}(#1)}}
\knowledge{\Streett}[\StreettC]{notion}
\knowledgenewrobustcmd{\GHC}[2]{\cmdkl{\mathsf{GH}}_{#2}(#1)}
\newrobustcmd{\parity}{\kl[\parity]{\mathsf{parity}}}
\knowledge{\parity}{notion}
\newrobustcmd{\Buchi}[1]{\kl[\Buchi]{\mathsf{Buchi}(#1)}}
\newrobustcmd{\BuchiC}[2]{\kl[\BuchiC]{\mathsf{Buchi}_{#2}(#1)}}
\knowledge{\Buchi}[\BuchiC]{notion}
\newrobustcmd{\genBuchi}[1]{\kl[\genBuchi]{\mathsf{genBuchi}(#1)}}
\newrobustcmd{\genBuchiC}[2]{\kl[\genBuchiC]{\mathsf{genBuchi}_{#2}(#1)}}
\knowledge{\genBuchi}[\genBuchiC]{notion}

\newrobustcmd{\coBuchi}[1]{\kl[\coBuchi]{\mathsf{coBuchi}(#1)}}
\newrobustcmd{\coBuchiC}[2]{\kl[\coBuchiC]{\mathsf{coBuchi}_{#2}(#1)}}
\knowledge{\coBuchi}[\coBuchiC]{notion}
\newrobustcmd{\gencoBuchi}[1]{\kl[\gencoBuchi]{\mathsf{genCoBuchi}(#1)}}
\newrobustcmd{\gencoBuchiC}[2]{\kl[\gencoBuchiC]{\mathsf{genCoBuchi}_{#2}(#1)}}
\knowledge{\gencoBuchi}[\gencoBuchiC]{notion}

\newrobustcmd{\Weak}[1]{\kl[\Weak]{\mathsf{Weak}_{#1}}}
\knowledge{\Weak}{notion}

\newrobustcmd{\WeakIndex}[1]{\kl[\WeakIndex]{\mathsf{Weak}_{#1}}}
\knowledge{\WeakIndex}{notion}

\newrobustcmd{\MullerFamily}[1]{\kl[\MullerFamily]{\F_{#1}}}
\knowledge{\MullerFamily}{notion}

\newrobustcmd{\impliesMuller}[2]{\mathbin{\kl[\impliesMuller]{#1 \rightarrow #2}}}
\knowledge{\impliesMuller}{notion}

%%EQUIVALENCY
\newrobustcmd{\equivAut}{\mathrel{\kl[\equivAut]{\simeq}}}
\knowledge{\equivAut}{notion}
\newrobustcmd{\isoTS}{\mathrel{\kl[\isoTS]{\simeq}}}
\knowledge{\isoTS}{notion}
\newrobustcmd{\equivCond}[1]{\mathrel{\kl[\equivCond]{\simeq}_{#1}}}
\knowledge{\equivCond}{notion}

%CYCLES, LOCALITY
\newrobustcmd{\cycles}[1]{\kl[\cycles]{\mathpzc{Cycles}(#1)}}
\knowledge\cycles{notion}
\newrobustcmd{\cyclesState}[2]{\kl[\cyclesState]{\mathpzc{Cycles}_{#2}(#1)}}
\knowledge\cyclesState{notion}
\newrobustcmd{\states}[1]{\kl[\states]{\mathsf{States}(#1)}}
\knowledge\states{notion}
\newrobustcmd{\localMuller}[2]{\kl[\localMuller]{\mathsf{LocalMuller}_{#2}(#1)}}
\knowledge\localMuller{notion}

%MORPHISMS AND HISTORY DETERMINISTIC
\newrobustcmd{\rInit}{\kl[\rInit]{\resolv_{\mathsf{Init}}}}
\knowledge\rInit[\trInit]{notion}
\newrobustcmd{\trInit}{\kl[\rInit]{\tilde{\resolv}_{\mathsf{Init}}}}

\newrobustcmd{\rRuns}{\kl[\rRuns]\resolv_{\mathpzc{Runs}}}
\knowledge\rRuns[\rRunsOption]{notion}
\newrobustcmd{\rRunsOption}[1]{\kl[\rRunsOption]\resolv_{#1,\mathpzc{Runs}}}

\newrobustcmd{\ppRuns}{\kl[\ppRuns]\pp_{\mathpzc{Runs}}}
\newrobustcmd{\ppRunsP}[1]{\kl[\ppRuns]{#1}_{\mathpzc{Runs}}}
\knowledge\ppRuns[\ppRunsP]{notion}
\newrobustcmd{\Id}[1]{\kl[\Id]{\mathsf{Id}_{#1}}}
\knowledge\Id{notion}

\newrobustcmd{\autMorphism}[1]{\kl[\autMorphism]{\mathcal{A}_{#1}}}
\knowledge\autMorphism{notion}

\newrobustcmd\resolv{\mathsf{r}}

%TREES
\newrobustcmd{\ancestor}{\mathrel{\kl[\ancestor]{\preceq}}}
\knowledge\ancestor{notion}
\newrobustcmd{\descendant}{\mathrel{\kl[\descendant]{\succeq}}}
\knowledge\descendant{notion}
\newrobustcmd{\roundnodes}{\kl[\roundnodes]{N_\bigcirc}}
\knowledge\roundnodes{notion}
\newrobustcmd{\squarenodes}{\kl[\squarenodes]{N_\Box}}
\knowledge\squarenodes{notion}
\newrobustcmd{\leaves}{\kl[\leaves]{\mathsf{Leaves}}}
\knowledge\leaves{notion}
\newrobustcmd{\children}{\kl[\children]{\mathsf{Children}}}
\knowledge\children{notion}
\newrobustcmd{\nextChild}{\kl[\nextChild]{\mathsf{Next}}}
\knowledge\nextChild{notion}
\newrobustcmd{\jump}{\kl[\jump]{\mathsf{Jump}}}
\knowledge\jump[intermediate node]{notion}
\newrobustcmd{\depth}{\kl[\depth]{\mathsf{Depth}}}
\knowledge\depth{notion}
\newrobustcmd{\orderTree}[1]{\kl[\orderTree]{\leq_{#1}}}
\knowledge\orderTree{notion}

\newrobustcmd{\pred}{\kl[\pred]{\mathsf{pred}}}
\knowledge\pred{notion}
\newrobustcmd{\predBranch}{\kl[\predBranch]{\mathsf{pred}^*}}
\knowledge\predBranch{notion}

%ZIELONKA TREE
\newrobustcmd\zielonkaTree[1]{\kl[\zielonkaTree]{\mathcal{Z}_{#1}}}
\knowledge\zielonkaTree{notion}
\newrobustcmd{\supp}{\kl[\supp]{\mathsf{Supp}}}
\knowledge\supp{notion}
\newrobustcmd{\parityNodes}{\kl[\parityNodes]{p_\Z}}
\knowledge\parityNodes{notion}
\newrobustcmd{\minparityZ}[1]{\kl[\minparityZ]{\min_{#1}}}
\knowledge\minparityZ{notion}
\newrobustcmd{\maxparityZ}[1]{\kl[\maxparityZ]{\max_{#1}}}
\knowledge\maxparityZ{notion}
\newrobustcmd{\memTree}[1]{\kl[\memTree]{\mathsf{rbw}(#1)}}
\knowledge\memTree{notion}

%%ZIELONKA DAG
\newrobustcmd\zielonkaDAG[1]{\kl[\zielonkaDAG]{\mathcal{Z}\hyphen\small{\mathsf{DAG}}_{#1}}}
\knowledge\zielonkaDAG{notion}
\newrobustcmd{\ancestorDAG}{\mathrel{\kl[\ancestorDAG]{\preceq}}}
\knowledge\ancestorDAG{notion}
\knowledgenewrobustcmd\roundnodesDAG{D_{\bigcirc}}
\knowledgenewrobustcmd\squarenodesDAG{D_{\Box}}
\newrobustcmd\acdDAG[1]{\kl[\acdDAG]{\mathcal{ACD}\hyphen\small{\mathsf{DAG}}(#1)}}
\knowledge\acdDAG{notion}
\newrobustcmd\altDAG[1]{\kl[\altDAG]{\mathsf{AltDAG}(#1)}}
\knowledge\altDAG{notion}
\newrobustcmd\dagVertex[1]{\kl[\dagVertex]{\mathpzc{D}_{#1}}}
%%\newrobustcmd\dagVertex[1]{\kl[\dagVertex]{\mathpzc{T}_{#1}\hyphen\small{\mathsf{DAG}}}}
\knowledge\dagVertex{notion}

%\newrobustcmd{\RabinZielonkaDAG}{\kl[\RabinZielonkaDAG]{R_{\mathrm{DAG}}}}
%\knowledge\RabinZielonkaDAG{notion}
%\newrobustcmd\zielonkaHDAutomatonDAG[1]{\kl[\zielonkaHDAutomatonDAG]{\mathcal{R}_{\mathrm{ZDag}(#1)}}}
%\knowledge\zielonkaHDAutomatonDAG{notion}

%ZIELONKA TREE AUTOMATON
\newrobustcmd\zielonkaAutomaton[1]{\kl[\zielonkaAutomaton]{\mathcal{A}^{\mathsf{parity}}_{\mathcal{Z}_{#1}}}}
\knowledge\zielonkaAutomaton{notion}

\newrobustcmd\zielonkaHDAutomaton[1]{\kl[\zielonkaHDAutomaton]{\mathcal{A}^{\mathsf{Rabin}}_{\mathcal{Z}_{#1}}}}
\knowledge\zielonkaHDAutomaton{notion}
\newrobustcmd{\sizeHDRabin}{\memTree{\zielonkaTree{\F}{\SS}}}

%ACD
\newrobustcmd\acd[1]{\kl[\acd]{\mathcal{ACD}(#1)}}
\knowledge\acd{notion}
\newcommand\acdNoP{\mathcal{ACD}}
\newrobustcmd\altTree[1]{\kl[\altTree]{\mathsf{AltTree}(#1)}}
\knowledge\altTree{notion}
\newrobustcmd\nuStates{\kl[\nuStates]{\nu_{\mathsf{States}}}}
\knowledge\nuStates{notion}
\newrobustcmd\nodesAcdCycle[1]{\kl[\nodesAcdCycle]{N_{#1}}}
\knowledge\nodesAcdCycle{notion}
\newrobustcmd\treeVertex[1]{\kl[\treeVertex]{\mathpzc{T}_{#1}}}
\knowledge\treeVertex{notion}
\newrobustcmd\nodesTreeVertex[1]{\kl[\nodesTreeVertex]{N_{#1}}}
\knowledge\nodesTreeVertex{notion}

\newrobustcmd\acdVertex[2]{\kl[\acdVertex]{\mathcal{ACD}_{(#1,#2)}}}
\knowledge\acdVertex{notion}

\knowledgenewrobustcmd\roundnodesq{N_{q,\bigcirc}}
\knowledgenewrobustcmd\squarenodesq{N_{q,\Box}}

\newrobustcmd\autCyclePreimage[1]{\kl[\autCyclePreimage]{\A_{(\inv{\pp},#1)}}}
\knowledge\autCyclePreimage{notion}

\newrobustcmd\unfold{\kl[\unfold]{\mathsf{Unfold}}}
\knowledge\unfold{notion}

\newrobustcmd\nodesAcd[1]{\kl[\nodesAcd]{\mathsf{Nodes}(\mathcal{ACD}_{#1})}}
\knowledge\nodesAcd{notion}
\newrobustcmd\nodesAcdRound[1]{\kl[\nodesAcdRound]{\mathsf{Nodes}_\bigcirc(\mathcal{ACD}_{#1})}}
\knowledge\nodesAcdRound{notion}
\newrobustcmd\nodesAcdSquare[1]{\kl[\nodesAcdSquare]{\mathsf{Nodes}_\Box(\mathcal{ACD}_{#1})}}
\knowledge\nodesAcdSquare{notion}

\newrobustcmd{\leavesAcd}{\kl[\leavesAcd]{\mathsf{Leaves}(\acd{\TS})}}
\knowledge\leavesAcd{notion}

\newrobustcmd{\suppAcd}{\kl[\suppAcd]{\mathsf{Supp}}}
\knowledge\suppAcd{notion}

%ACD TRANSFORMATIONS
\newrobustcmd{\parityNodesAcd}{\kl[\parityNodesAcd]p_{\mathcal{ACD}}}
\knowledge\parityNodesAcd{notion}
\newrobustcmd{\minparityAcd}[1]{\kl[\minparityAcd]{{\min}_{#1}}}
\knowledge\minparityAcd{notion}
\newrobustcmd{\maxparityAcd}[1]{\kl[\maxparityAcd]{{\max}_{#1}}}
\knowledge\maxparityAcd{notion}

\newrobustcmd{\coloursNodesAcd}{\kl[\coloursNodesAcd]\gg_{\mathcal{ACD}}}
\knowledge\coloursNodesAcd{notion}

\newrobustcmd{\minparityAcdVertex}[2]{\kl[\minparityAcdVertex]{{\min}_{(#1,#2)}}}
\knowledge\minparityAcdVertex{notion}
\newrobustcmd{\maxparityAcdVertex}[2]{\kl[\maxparityAcdVertex]{{\max}_{(#1,#2)}}}
\knowledge\maxparityAcdVertex{notion}

\newrobustcmd\acdParityTransform[1]{\kl[\acdParityTransform]{\P^{\mathsf{ACD}}_{#1}}}
\knowledge\acdParityTransform{notion}

\newrobustcmd\acdRabinTransform[1]{\kl[\acdRabinTransform]{\mathsf{ACD}_{\mathsf{Rabin}}(#1)}}
\knowledge\acdRabinTransform{notion}
\newrobustcmd\acdRabinTransformGFG[1]{\kl[\acdRabinTransformGFG]{\mathsf{ACD}_{\mathsf{Rabin}}^{\mathsf{game}}(#1)}}
\knowledge\acdRabinTransformGFG{notion}
	
\newrobustcmd{\ppAcd}{\kl[\ppAcd]{\pp_{\acdNoP}}}
\knowledge\ppAcd{notion}

\newrobustcmd{\VAsucc}{\kl[\VAsucc]{V_{\mathrm{A}\hyphen\mathrm{succ}}}}
\knowledge\VAsucc{notion}
\newrobustcmd{\Vnormal}{\kl[\Vnormal]{V_{\mathrm{normal}}}}
\knowledge\Vnormal{notion}
\newrobustcmd{\Apred}{\kl[\Apred]{\mathsf{pred}}}
\knowledge\Apred{notion}

%%ALGORITHMS

%Common
\newrobustcmd{\pop}{\kl[\pop]{\mathtt{pop}}}
\knowledge\pop{notion}
\newrobustcmd{\push}{\kl[\push]{\mathtt{push}}}
\knowledge\push{notion}
\newrobustcmd{\maxInclusion}{\kl[\maxInclusion]{\mathtt{MaxInclusion}}}
\knowledge\maxInclusion{notion}
\newrobustcmd{\SCCDec}{\kl[\SCCDec]{\mathtt{SCC\hyphen Decomposition}}}
\knowledge\SCCDec{notion}

%%Computation ACD
\newrobustcmd{\computeChildrenACD}{\kl[\computeChildrenACD]{\mathtt{ComputeChildren}}}
\knowledge\computeChildrenACD{notion}
\newrobustcmd{\computeChildrenTwoACD}{\kl[\computeChildrenTwoACD]{\mathtt{ComputeChildren2}}}
\knowledge\computeChildrenTwoACD{notion}

\newrobustcmd{\findIntSubcycles}{\kl[\findIntSubcycles]{\mathtt{FindInterestingSubcycles}}}
\knowledge\findIntSubcycles{notion}

\newcommand{\childrenList}{\mathsf{children}}
\newcommand{\newChildren}{\mathsf{newChildren}}

\newrobustcmd{\edges}{\kl[\edges]{\mathtt{Edges}}}
\knowledge\edges{notion}

\newrobustcmd{\computeMaxAltSubsets}{\kl[\computeMaxAltSubsets]{\mathtt{MaxAltSubsets}}}
\knowledge\computeMaxAltSubsets{notion}

\newcommand{\nodesStack}{\mathsf{nodesToTreat}}
\newcommand{\nodeACD}{\mathsf{n}}

\newrobustcmd{\costComputationAltSubsets}[1]{\kl[\costComputationAltSubsets]{\aa_{#1}}}
\knowledge\costComputationAltSubsets{notion}

%%SIZE ACD
\newrobustcmd{\doubleFact}[1]{\kl[\doubleFact]{#1!!}}
\knowledge\doubleFact{notion}

\newrobustcmd{\evenLetters}[1]{\kl[\evenLetters]{\mathsf{EvenLetters}_{#1}}}
\knowledge\evenLetters{notion}
\newrobustcmd{\smallDAGCondition}[1]{\kl[\smallDAGCondition]{\mathsf{MinOddAndSucc}_{#1}}}
\knowledge\smallDAGCondition{notion}

%%Minimisation Parity Automata
\newrobustcmd{\AlternatingSets}{\kl[\AlternatingSets]{\mathtt{AlternatingSets}}}
\knowledge\AlternatingSets{notion}
\newrobustcmd{\leavingLetters}{\kl[\leavingLetters]{\mathtt{LeavingLetters}}}
\knowledge\leavingLetters{notion}

\newrobustcmd{\minCol}{\kl[\minCol]{\mathtt{MinColour}}}
\knowledge\minCol{notion}
\newrobustcmd{\mletters}{\kl[\mletters]{\mathtt{Letters}}}
\knowledge\mletters{notion}

%%GRAPH OF MULLER CONDITION
\newrobustcmd{\graphMuller}[1]{\kl[\graphMuller]{G_{#1}}}
\knowledge\graphMuller{notion}
\newrobustcmd{\graphMullerVert}[1]{\kl[\graphMullerVert]{V_{#1}}}
\knowledge\graphMullerVert{notion}
\newrobustcmd{\graphMullerEdge}[1]{\kl[\graphMullerEdge]{E_{#1}}}
\knowledge\graphMullerEdge{notion}

%%Chromatic number

\newrobustcmd{\chromNum}{\kl[\chromNum]{\chi}}
\knowledge\chromNum{notion}

\newrobustcmd{\neighbourhood}[1]{\kl[\neighbourhood]{N[#1]}}
\knowledge\neighbourhood{notion}
\newrobustcmd{\neighbourhoodOpen}[1]{\kl[\neighbourhoodOpen]{n(#1)}}
\knowledge\neighbourhoodOpen{notion}

\newcommand{\graphcolouring}{\mathbf{c}}
\newrobustcmd{\env}{\kl[\env]{\mathsf{env}}}
\knowledge\env{notion}

%Languages for Reductions
\newrobustcmd{\langVertex}[1]{\kl[\langVertex]{L_{#1}}}
\knowledge\langVertex{notion}
\newrobustcmd{\langGraph}{\kl[\langGraph]{L_G}}
\knowledge\langGraph{notion}

\newrobustcmd{\langClique}{\kl[\langClique]{L_{\exists\mathrm{noRep}}}}
\knowledge\langClique{notion}

\newrobustcmd{\FTwoLetters}{\kl[\FTwoLetters]{\F_G}}
\knowledge\FTwoLetters{notion}

\newrobustcmd{\Stab}{\kl[\Stab]{\mathsf{Stab}}}
\knowledge\Stab{notion}
\newrobustcmd{\adj}{\kl[\adj]{\mathsf{adj}}}
\knowledge\adj{notion}
\newrobustcmd{\neigh}{\kl[\neigh]{\mathsf{neigh}}}
\knowledge\neigh{notion}

\newrobustcmd{\autChromG}{\kl[\autChromG]{\mathcal{A}_G}}
\knowledge\autChromG{notion}

%Languages succinctness
\newrobustcmd{\FHalfLetters}[1]{\kl[\FHalfLetters]{\F_{#1}}}
\knowledge\FHalfLetters{notion}
\newrobustcmd{\LHalfLetters}[1]{\kl[\LHalfLetters]{L_{#1}}}
\knowledge\LHalfLetters{notion}

%Minimal Automata
\newrobustcmd{\minRabin}[1]{\ensuremath{\kl[\minRabin]{\mathsf{minDetRabin}(#1)}}}
\knowledge\minRabin{notion}

%%DECISION PROBLEMS
%Min Automata
\NewDocumentCommand{\pbMinRabin}{}{\kl[\pbMinRabin]{\normalfont{\textsc{\small{Minimisation of Rabin automata}}}}}
\knowledge\pbMinRabin[\pbMinGenBuchi|\pbMinGenCoBuchi|\pbMinMuller|\pbMinHDRabin]{notion}
\NewDocumentCommand{\pbMinMuller}{}{\kl[\pbMinMuller]{\normalfont{\textsc{\small{Minimisation of Muller automata}}}}}
\NewDocumentCommand{\pbMinGenBuchi}{}{\kl[\pbMinGenBuchi]{\normalfont{\textsc{\small{Minimisation of generalised B\"uchi automata}}}}}
\NewDocumentCommand{\pbMinGenCoBuchi}{}{\kl[\pbMinGenCoBuchi]{\normalfont{\textsc{\small{Minimisation of generalised coB\"uchi automata}}}}}

\NewDocumentCommand{\pbMinHDRabin}{}{\kl[\pbMinHDRabin]{\normalfont{\textsc{\small{Minimisation of HD Rabin automata}}}}}
\NewDocumentCommand{\pbMinHDParity}{}{\kl[\pbMinHDParity]{\normalfont{\textsc{\small{Minimisation of HD parity automata}}}}}

\NewDocumentCommand{\pbMinCoBuchi}{}{\kl[\pbMinCoBuchi]{\normalfont{\textsc{\small{Minimisation of coB\"uchi automata}}}}}
\knowledge\pbMinCoBuchi{notion}

\NewDocumentCommand{\pbMinBuchiMuller}{}{\kl[\pbMinBuchiMuller]{\normalfont{\textsc{\small{Minimisation B\"uchi-to-Muller}}}}}
\knowledge\pbMinBuchiMuller{notion}

\NewDocumentCommand{\pbTransformGenBuchi}{}{\kl[\pbTransformGenBuchi]{\normalfont{\textsc{\small{Transform-GenBuchi}}}}}
\knowledge\pbTransformGenBuchi{notion}

\NewDocumentCommand{\pbTransformGenBuchiStates}{}{\kl[\pbTransformGenBuchiStates]{\normalfont{\textsc{\small{Transform-GenBuchi-to-States}}}}}
\knowledge\pbTransformGenBuchiStates{notion}
\NewDocumentCommand{\pbTransToStates}{}{\kl[\pbTransToStates]{\normalfont{\textsc{\small{Transitions-to-States}}}}}
\knowledge\pbTransToStates{notion}

%%Memory
\NewDocumentCommand{\pbChromMem}{}{\kl[\pbChromMem]{\normalfont{\textsc{\small{Chromatic-Memory}}}}}
\knowledge\pbChromMem{notion}

%Conditions
\NewDocumentCommand{\pbColorMinML}{}{\kl[\pbColorMinML]{\normalfont{\textsc{\small{Colour-Minimisation-ML}}}}}
\knowledge\pbColorMinML{notion}
\NewDocumentCommand{\pbColorMinAut}{}{\kl[\pbColorMinAut]{\normalfont{\textsc{\small{Colour-Minimisation-Aut}}}}}
\knowledge\pbColorMinAut{notion}
\NewDocumentCommand{\pbMultiColourMinAut}{}{\kl[\pbMultiColorMinAut]{\normalfont{\textsc{\small{Multiple-Colour-Minimisation-Aut}}}}}
\knowledge\pbMultiColorMinAut{notion}

\NewDocumentCommand{\pbRabinPairMinML}{}{\kl[\pbRabinPairMinML]{\normalfont{\textsc{\small{Rabin-Pair-Minimisation-ML}}}}}
\knowledge\pbRabinPairMinML{notion}
\NewDocumentCommand{\pbRabinPairMinAut}{}{\kl[\pbRabinPairMinAut]{\normalfont{\textsc{\small{Rabin-Pair-Minimisation-Aut}}}}}
\knowledge\pbRabinPairMinAut{notion}

\NewDocumentCommand{\pbRabinTypeness}{}{\kl[\pbRabinTypeness]{\normalfont{\textsc{\small{Rabin-Typeness}}}}}
\knowledge\pbRabinTypeness{notion}

%Graphs
\NewDocumentCommand{\pbChromNum}{}{\kl[\pbChromNum]{\normalfont{\textsc{\small{Chromatic number}}}}}
\knowledge\pbChromNum{notion}
\NewDocumentCommand{\pbThreeCol}{}{\kl[\pbThreeCol]{\normalfont{\textsc{\small{3-colorability}}}}}
\knowledge\pbThreeCol{notion}
\NewDocumentCommand{\pbVertexCover}{}{\kl[\pbVertexCover]{\normalfont{\textsc{\small{Vertex Cover}}}}}
\knowledge\pbVertexCover{notion}
\NewDocumentCommand{\pbEdgeToEdgeCover}{}{\kl[\pbEdgeToEdgeCover]{\normalfont{\textsc{\small{Edge-to-edge Cover}}}}}
\knowledge\pbEdgeToEdgeCover{notion}
\NewDocumentCommand{\pbMaxClique}{}{\kl[\pbMaxClique]{\normalfont{\textsc{\small{Max-Clique}}}}}
\knowledge\pbMaxClique{notion}

%Other

%%\knowledge\pbSAT{notion}
\NewDocumentCommand{\pbunSAT}{}{\kl[\pbunSAT]{\normalfont{\textsc{\small{un-SAT}}}}}
\knowledge\pbunSAT{notion}

\NewDocumentCommand{\pbSubsetCover}{}{\kl[\pbSubsetCover]{\normalfont{\textsc{\small{Subset Cover}}}}}
\knowledge\pbSubsetCover{notion}

%%UNDIRECTED GRAPHS
\newrobustcmd{\minCover}[1]{\kl[\minCover]{\mathsf{minCover}(#1)}}
\knowledge\minCover{notion}
\newrobustcmd{\lineGraph}[1]{\kl[\lineGraph]{\mathsf{Line}(#1)}}
\knowledge\lineGraph{notion}

%%OTHER
\newrobustcmd{\sizeDet}{\kl[\sizeDet]{\mathsf{sizeDet}}}
\knowledge\sizeDet{notion}

%%MEMORY STRUCTURES

\newrobustcmd{\transMem}{\kl[\transMem]{\mu}}
\knowledge\transMem{notion}

\newrobustcmd{\nextmove}{\kl[\nextmove]{\mathsf{next}\hyphen\mathsf{move}}}
\knowledge\nextmove{notion}

\newrobustcmd{\stratMem}{\kl[\stratMem]{\mathsf{strat}_{\M}}}
\knowledge\stratMem{notion}

\newrobustcmd{\prodMem}[1]{\mathbin{\kl[\prodMem]{\lhd_{#1}}}}
\knowledge\prodMem{notion}

%%MEMORY REQUIREMENTS

\newcommand{\mem}{\mathsf{mem}}
\newcommand{\gen}{\mathsf{gen}}
\newcommand{\chrom}{\mathsf{chr}}
\newcommand{\vertic}{\mathsf{vrt}}
\newcommand{\arInd}{\mathsf{ArInd}}
\newcommand{\epsFree}{\ee\hyphen \mathsf{free}}

\newrobustcmd{\memGen}{\kl[\memGen]{\mathsf{mem}_{\mathsf{gen}}}}
\knowledge\memGen[\mem_\gen^X|\mem_Y]{notion}
\newrobustcmd{\memChrom}{\kl[\memChrom]{\mathsf{mem}_{\mathsf{chr}}}}
\knowledge\memChrom[\mem_\chrom^X]{notion}
\newrobustcmd{\memArenaInd}{\kl[\memArenaInd]{\mathsf{mem}_{\mathsf{ArInd}}}}
\knowledge\memArenaInd[\mem_\arInd^X]{notion}

\newrobustcmd{\memGenEpsFree}{\kl[\memGenEpsFree]{\mathsf{mem}_{\mathsf{gen}}^{\ee\hyphen \mathsf{free}}}}
\knowledge\memGenEpsFree[\mem_Y^{\epsFree}]{notion}
\newrobustcmd{\memChromEpsFree}{\kl[\memChromEpsFree]{\mathsf{mem}_{\mathsf{chr}}^{\ee\hyphen \mathsf{free}}}}
\knowledge\memChromEpsFree{notion}
\newrobustcmd{\memArenaIndEpsFree}{\kl[\memArenaIndEpsFree]{\mathsf{mem}_{\mathsf{ArInd}}^{\ee\hyphen \mathsf{free}}}}
\knowledge\memArenaIndEpsFree{notion}

\newrobustcmd{\memGenVert}{\kl[\memGenVert]{\mathsf{mem}_{\mathsf{gen}}^{\mathsf{vrt}}}}
\knowledge\memGenVert[\mem_Y^\vertic]{notion}
\newrobustcmd{\memChromVert}{\kl[\memChromVert]{\mathsf{mem}_{\mathsf{chr}}^{\mathsf{vrt}}}}
\knowledge\memChromVert{notion}
\newrobustcmd{\memArenaIndVert}{\kl[\memArenaIndVert]{\mathsf{mem}_{\mathsf{ArInd}}^{\mathsf{vrt}}}}
\knowledge\memArenaIndVert{notion}

%\newrobustcmd{\memEps}{\kl[\memEps]{\mathsf{mem}_{\ee}}}
%\knowledge\memEps{notion}

%%MEMORY PROOFS

\newrobustcmd{\Valpha}[1]{\kl[\Valpha]V_{#1}}
\knowledge\Valpha{notion}
\newrobustcmd{\Aalpha}[1]{\kl[\Aalpha]A_{#1}}
\knowledge\Aalpha{notion}

%COLORS TODO

%%%LOGICS
\newrobustcmd{\FO}{\ensuremath{\mathrm{FO}}}
%\newrobustcmd{\LTL}{\kl[\LTL]{\ensuremath{\mathrm{LTL}}}}
\newcommand{\LTL}{\ensuremath{\mathrm{LTL}}}

\newrobustcmd{\MSO}{\ensuremath{\mathrm{MSO}}}
%\knowledge\FO{notion}
\newrobustcmd{\SOneS}{\ensuremath{\mathrm{S1S}}}
%\knowledge\SOneS{notion}
\newrobustcmd{\STwoS}{\ensuremath{\mathrm{S2S}}}
%\knowledge\STwoS{notion}

%BASIC MATHS

\newcommand{\set}[1]{\{#1\}}

\let\ab\allowbreak
\mathchardef\hyphen=45 %Decimal
%\mathchardef\hyphen="2D %Hexadecimal

%\DeclareRobustCommand{\qedEx}{%
%	\leavevmode\unskip\penalty9999 \hbox{}\nobreak\hfill
%	\quad\hbox{$\lrcorner$}%

\newcommand{\nocontentsline}[3]{}
\newcommand{\tocless}[2]{\bgroup\let\addcontentsline=\nocontentsline#1{#2}\egroup}

\usepackage[super]{nth}

%%%EXAMPLE KNOWLEDGE
\knowledge{notion}
| notion@example
|definition@example

%%%RELATIONS
%\knowledge{notion}
%  | equivalence relations
%  | equivalence relation
%  | equivalent relation
  
\knowledge{notion}
  | preorder
  | preordered
  | preorders

%%%WORDS AND SETS
\knowledge{notion}
| $\omega $-word
| $\omega $-words
| infinite words
| infinite word

\knowledge{notion}
| word
| words

\knowledge{notion}
 | empty word

\knowledge{notion}
| prefix

\knowledge{notion}
  | factor

\knowledge{notion}
  | partial mapping

\knowledge{notion}
  | state complexity

%%GRAPHS

\knowledge{notion}
| directed graph
| graph
| graphs
| undirected graph
| (undirected) graph
| undirected 
| undirected graphs

\knowledge{notion}
| size@graph

\knowledge{notion}  
 |  connected@und
 | connected
  
\knowledge{notion}
| simple

\knowledge{notion}
  | sink
  | sinks

\knowledge{notion}
| final@SCC
| final SCC

\knowledge{notion}
| recurrent

\knowledge{notion}
| transient
| transient vertex

\knowledge{notion}
  | labelled graph
  | labelled@graph
  | labelled graphs
  | labelled
  | (labelled)
  | labelling
  
\knowledge{notion}
| pointed graph
| pointed graphs
| (pointed) graphs

%%%DAGS
\knowledge{notion}
  | directed acyclic graph
  | DAG
  | DAGs
  
\knowledge{notion}  
  | subDAG rooted at a node
  | subDAGs rooted at
  | subDAG rooted at

\knowledge{notion}
  | $A$-labelled DAG
  | labelled@DAG
  | labelled DAG

%%REPRESENTATIONS
\knowledge{notion}
  | size of the representation
  | representation of@Muller
  | representations@Muller
  | representation of@Muller
  | representations@Muller
  | represented@Muller
  | representation
  | representation@Muller
  | represent@aut
  | representation@aut
  | represented
  | formalisms@repMuller
  | represent
  | representation
  | representations
  | represent@Muller
  
\knowledge{notion}
  | explicit@Muller
  | explicitly@Muller
  | explicit representation@Muller
  | represented explicitly@Muller
  | Explicit Muller
  | explicit representation
  | explicit
  | explicit list@Muller
  | colour-explicit@Muller
  | Colour-explicit@Muller
  | colour-explicitly@Muller
  | colour-explicit representation@Muller
  | colour-explicit
  | colour-explicitly

\knowledge{notion}
  | Emerson-Lei
  | Emerson-Lei condition
  | Emerson-Lei conditions

\knowledge{notion}
  | size of the representation of@parity
  | size of its representation@parity
  | size of the representation@parity
  | representation@parity
  | representations@parity

\knowledge{notion}
  | size of the representation of@genBuchi
  | size of its representation@genBuchi
  | size of the representation@genBuchi
  
\knowledge{notion}
  | size of its representation@Rabin
  | size of the representation@Rabin
  | representation of the automata@Rabin
  | representation@Rabin

\knowledge{notion}
  | size@aut

%%TRANSITION SYSTEMS

\knowledge{notion}
 | transition system
 | transition systems
 | transitions systems
 | Transition system
 | Transition systems
 
\knowledge{notion}
  | state-based transition systems
  | state-based TS
  | state-based@TS
  | transition-based@trans
  | transition-based@TS
  | defined over states
  | edge-coloured 
  
\knowledge{notion}
 | Vertex-coloured
 | vertex-coloured
 | vertex-coloured game
 | vertex-coloured games
 | Vertex-coloured@game

\knowledge{notion}
| state-based automata
| state-based 
| State-based
| state-based@aut
| transition-based
| transition-based@aut
| transition-based automata
| state-based automaton
| over transitions
| transition-based acceptance
| transition-based automaton 
| state-based acceptance
   
\knowledge{notion}
 | TS

\knowledge{notion}
| underlying graph@TS
| underlying graph
| underlying graphs

\knowledge{notion}
| underlying graph@game
| underlying graphs@game
| game graphs
| game graph
| underlying graph@games

\knowledge{notion}
| underlying graph@aut
| underlying graphs@aut

\knowledge{notion}
| initial vertex@TS
| initial vertices@TS
| initial vertex
| initial set of vertices@TS
| initial vertices

\knowledge{notion}
| initial vertex@pointed
| initial vertex@game
| initial vertices@game
| initial vertices@PG
| initial vertex@PG
| initial@game
| initial vertices@games
| initial@games

\knowledge{notion}
| initial state
| initial@state
| initial states
| initial states@aut
| initial state@aut
| initial

\knowledge{notion}
  | uncoloured edges
  | $\ee $-edges
  | uncoloured
  | uncoloured edge
  | $\ee $-edges@games
  | coloured
  | Uncoloured edges
  | uncoloured@games
  | uncoloured@game
  | uncoloured vertices
  | totally coloured
  
\knowledge{notion}  
  | uncoloured edges@TS
  | uncoloured@TS

\knowledge{notion}
 | colours@TS 
 | output colours
 | output colour
 | colours
 | colouring@aut
 | sets of colours
 | colour
 | colour@output
 | output colours@aut
 | output colour@aut
 | colours@aut
 | couleurs@aut
 | colours@out
 | colours@outAut
 | colour@aut
 | colouring@TS
 | colouring function@TS
 | edge-colouring@TS
 | edge-colouring 
 | colouring function
 | output alphabet

\knowledge{notion}
| output
| output@run
| output@TS

\knowledge{notion}
 | accessible@vertex
 | reachable
 | accessible
 | unreachable
 
\knowledge{notion}
 | accessible from
 | accessible@fromVertex

\knowledge{notion}
 | accessible@set
 | accessible@sets

\knowledge{notion}
| accessible part

\knowledge{notion}
| accessible part of $\TS $ from $v$
| accessible part of $\P $ from $v_P$
| accessible part from a vertex
| accessible part of $\widetilde {\TS }$ from $\tilde {v}$

\knowledge{notion}
| accepting@run
| accepting
| acceptance@run
| accepting run
| accepting runs
| acceptance@runs
| accepting@runTS
| acceptance of runs
| acceptants
| acceptants@aut
| accepting@runAut
| accepting@aut
| acceptance@runAut
| accepting run@aut
| accepting runs@aut
| acceptance@runAut
| acceptance of runs@aut

\knowledge{notion}
| rejecting@run
| rejecting run
| rejecting@runTS
| rejecting

\knowledge{notion}
| rejecting@runAut
| rejecting run@aut

\knowledge{notion}
| labelled transition system
| labelled transition systems
| (labelled) transition system
| labelling with input letters
| (labelled) TS

\knowledge{notion}
| vertex-labelling
| vertex-labellings
| vertex-labelled

\knowledge{notion}
| edge-labelled
| edge-labelled@graph

\knowledge{notion}
| size@transSys

\knowledge{notion}
  | product construction
  | composition
  | product automaton
  | product@aut
  | composing
  | composition of automata
  | compositions
  | composition@aut
  | composition operation@aut
  
\knowledge{notion}
  | suitable for transformations
  | suitability for transformations
  | suitable for composition

%ACCEPT COND IN TS

\knowledge{notion}
| acceptance condition
| acceptance conditions
| conditions
| condition
| condition@accep
| condition@acc
| conditions@acc
| acceptance condition@TS

\knowledge{notion}  
| acceptance condition@aut
| condition d'acceptation
| condition d'accep\-ta\-tion
| condition@aut

\knowledge{notion}
  | prefix-independent
  | prefix-independence

%%AUTOMATA

\knowledge{notion}
| automaton
| (non-deterministic) 
| automata
| (non-deterministic) automaton
| automate
| automates
| $\oo $-automates
| $\oo $-automata
| automata over infinite words

\knowledge{notion}
| input letters
| input alphabet
| input letter

\knowledge{notion}
| deterministic
| deterministic automaton
| deterministic automata
| Deterministic
| non-deterministic automata
| non-deterministic
| non-deterministic automaton
| non-determinism
| determinism 
| déterministes

\knowledge{notion}
| unambiguous

\knowledge{notion}
 | strongly unambiguous
 | (strongly) unambiguity
 | strongly@unamb

\knowledge{notion}
 | equivalent@automaton
 | equivalent@automata
 | equivalent@aut
 | equivalence@aut
 | equivalent to@aut
 | equivalent to
 
\knowledge{notion}
| transition function

\knowledge{notion}
  | $\SS $-complete
  | complete
  | completeness
  | completeness@aut

\knowledge{notion}
| run over $w$
| run@automaton
| run over
| run@aut
| over@run
| run $\rr $ over
| runs@aut
| run of@aut
| run over a word
| run over $w_kw_{k+1}\dots $
| run over $w^\oo $
| run over@aut
| calculs@run
| run@transSys
| infinite run
| infinite runs
| finite runs
| finite run
| runs
| run
| run from $v_0$
| runs@TS

\knowledge{notion}
  | output of@aut
  | output@aut

\knowledge{notion}
  | \GG ^\infty

\knowledge{notion}
  | transitions@aut

\knowledge{notion}
  | underlying graph@aut

\knowledge{notion}
  | subautomaton induced by
  | subautomaton@induced
  | subautomaton
  | -subautomaton
  | subautomata
  | automaton induced by
  | induces@aut
  | induced by@aut
  | subautomaton induced
  | subautomata induced by subgraphs
  | induced by@subautomaton

\knowledge{notion}
| accepted@word
| accepts@aut
| accepted by@aut
| accepted
| accepted by
| accepted@aut
| rejected@aut
| accepting word
| accepted words

\knowledge{notion}
| language accepted
| accepting@automaton
| recognising
| recognise
| recognises
| recognising@automaton
| recognising@aut
| recognised
| recognises@aut
| language recognised
| languages recognised
| language
| languages
| reconnaissant
| reconnaître

\knowledge{notion}
  | $\oo $-regular language
  | $\oo $-regular languages
  | $\oo $-regular@language
  | $\oo $-regular automata
  | $\oo $-regular
  | $\oo $-regularity
  | $\oo $-regular objective
  | $\omega $-regular objectives
  | $\omega $-regular languages
  | non-$\oo $-regular
  | $\oo $-regular conditions
  | langage $\oo $-régulier
  | $\oo $-réguliers
 | $\oo $-regular objectives

%% HIST-DETERMINISTIC AUTOMATA

\knowledge{notion}
  | history-deterministic
  | HD-parity automaton
  | history-determinism
  | history-deterministic@aut
  | HD@aut
  | HD
  | history-determinism@aut
  | HD automaton
  | HD automata
  | (history-)deterministic
  | history-deterministic automaton
  | History-deterministic automata
  | history-deterministic automata
  | History-
  | deterministic@hist
  | History-deterministic
  | déterminisme en histoire
  | déterministes en histoire
  | history-
  | History-determinism

\knowledge{notion}
  | good-for-games
  | good-for-gameness
| GFG
| GFG automata
 
\knowledge{notion}
  | pruned 
  | prune
 
\knowledge{notion}
  | Determinisable by pruning (DBP)
  | determinisable by pruning

\knowledge{notion}
| resolver
| resolvers
| resolver@aut
| resolvers@aut
| resolve@aut

\knowledge{notion}
| sound@aut
| sound resolver@aut
| soundness@resolverAut
| sound@resolverAut
| sound resolver
| sound@resAut

\knowledge{notion}
| run induced by@aut
| induced by $\resolv$@aut
  | induced run of $\resolv $
  | run induced over $w$
  | induced run
  | induced run of $\resolv $ over $u_0$
  | induced run of $\resolv $ over
  | run induced by $\resolv $
  | induced by@resolver
  | run induced by@res
  | run induced over $w$@res
  | induced by@resolv
  | induce@resolv
  | induced by

\knowledge{notion}
  | reachable using some sound resolver
  | reachable using@resolver
  | reachable using a sound resolver 
  | reachable using this resolver
  | reachable using $\resolv $
  | reachable using the resolver $(r_0, \resolv )$
  | reachable using $(r_0, \resolv )$

\knowledge{notion}  
 | reachable@memoryRes 
 | reachable using $\resolv $@memory

%%MEMORIES FOR RESOLVERS  
\knowledge{notion}
| memory structure@aut
| memory structure@resolver
| finite memory resolvers
| finite memory structures@resolv
| finite memory@res

\knowledge{notion}
| implements@resolver
| resolver implemented by
| implemented by@resolver
| implemented by@memRes
| implemented by@resMem
| implemented by@aut
| implemented by@resolvMem

\knowledge{notion}
  | composition@memory
  | product game@mem
  | product@memoryRes

%% CONDITIONS
%\knowledge{notion}
%  | B\"uchi language
%  | Büchi languages
%  | B\"uchi@language
%  | B\"uchi
%  | Büchi
%  | B\"uchi languages
%  | (co)B\"uchi
%   
%  
%\knowledge{notion}
% | Büchi language associated to
% 
%\knowledge{notion}
%  | coB\"uchi language
%  | coB\"uchi languages
%  | coB\"uchi
%  | (resp. coB\"uchi) language
%  | coBüchi
%
%\knowledge{notion}
%  | coB\"uchi language associated to
  
\knowledge{notion}
 | Streett language
 | Streett@language
 | Streett languages
 | Streett language associated to
 | Streett condition
 | Streett conditions
 | Streett

\knowledge{notion}
  | Rabin pairs
  | Rabin pair

\knowledge{notion}
  | Rabin language associated to
  | Rabin
  | Rabin condition
  | Rabin language
  | Rabin@language
  | Rabin languages
  
\knowledge{notion}
  | green in@rabPair
  | green@rabin

\knowledge{notion}
  | red in@rabPair
  | red@rabin
  | red colours@Rabin

\knowledge{notion}
  | orange in@rabPair

\knowledge{notion}
| satisfies@Rabin
| satisfied by@Rabin
| accepted@Rabin
| rejected@Rabin
| accepted by@Rabin
| rejected by@Rabin
| accepted by the Rabin pair
| accepted by@Rabinpair
| accepted@Rabinpair
| accepts@RabinPair
| accepted by@RabinPair
| accept@rabinPair
| accepted by@rabinPair
| accepting@RabinPair
| accepted@Streett

\knowledge{notion}
  | priority
  | priorities

\knowledge{notion}
  | $[c,d]$-parity language
  | $[0,1]$-parity languages
  | $[1,2]$-parity languages
  | $[d_{\min }, d_{\max }]$-parity language
  | $[\minparityZ {\F },\maxparityZ {\F }]$-parity language
  | $[1,d]$-parity
  | $[0,1]$-parity
  | $[1,2]$-parity
  | $[d_{\min },d_{\max }]$-parity
  
\knowledge{notion}
  | parity@language
  | parity language
  | Parity conditions
  | parity condition
  | parity
  | parity languages
  | parity conditions
  | Parity languages
  | Parity
  | parity objective
  | parity objectives
  | parité
  | parite

\knowledge{notion}
| accepting set@Muller
| accepting sets@Muller
| accepting set
| accepting@set
| accepting@Muller
| accepting@Muller
| accepting sets

\knowledge{notion}
| rejecting@Muller
| rejecting set@Muller
| rejecting sets@Muller
| rejecting@set
| rejecting sets

\knowledge{notion}
  | Muller language associated to
  | Muller
  | language associated@Muller
  | language associated to@Muller
  | Muller language associated to the family
  | Muller language
  | Muller@condition
  | Muller languages
  | Muller conditions
  | Muller condition
  | Muller objectives
  | langages de Muller
  | Muller objective
  | Muller acceptance condition

\knowledge{notion}
  | $\C$ condition
  | $\C$ acceptance condition
  | $\C$ transition system
  | $\C$ automaton
  | $\C$ language
  | $\C$ languages
  | $\C$ automata
  | Muller automaton
  | Rabin automaton
  | Streett automaton
  | parity automaton
  | Muller automata
  | Rabin automata
  | Streett automata
  | parity automata

%%CYCLES
\knowledge{notion}
  | cycles
  | cycle
  | cycle@aut
  | Cycles
  | subcycles
  | subcycle
  
\knowledge{notion}
  | lassos

\knowledge{notion}
  | cycle@TS
  | cycles@TS

\knowledge{notion}
| states of the cycle
| containing@cycle
| cycle over $(q,m_i)$
| cycles over $q$
| set of states@cycle
| over $v$@cycle

\knowledge{notion}
| state in common

\knowledge{notion}
| accepting@cycle
| accepting cycle
| accepting cycles
| accepting@cycles

\knowledge{notion}
| rejecting@cycle
| rejecting cycle
| rejecting cycles
| rejecting@cycles

\knowledge{notion}
  | accepting SCC
  | accepting@SCC

\knowledge{notion}
  | rejecting SCC
  | rejecting@SCC

\knowledge{notion}
| $d$-flower
| $d$-flowers
| flowers
  | $k$-flower
  | $k+1$-flower
  | $(d+1)$-flower
  | $(d-1)$-flower

\knowledge{notion}
| positive flower
| positive@flower

\knowledge{notion}
  | negative@flower
  | negative flowers

\knowledge{notion}
| positive@SCC

\knowledge{notion}
| negative@SCC

\knowledge{notion}
  | language of accepting runs of

%%PARITY HIERARCHY
\knowledge{notion}
  | deterministic parity hierarchy

\knowledge{notion}
  | parity index
  | parity index is not less
  | parity index@atMost  
  | parity index at most
  | parity indices

\knowledge{notion}
  | parity index at least

\knowledge{notion}
  | parity-index-tight

%%%AUTOMATA STRUCTURES
\knowledge{notion}
| automaton structure

\knowledge{notion}
  | parity automaton on top of
  | parity automaton on top
  | automaton on top of
  
\knowledge{notion}
 | acceptance condition on top of it
 | on top of@accCond
 | acceptance condition@onTop
 | on top of
 | acceptance condition on top of
 | acceptance condition on top
 | equipped with@onTopOf
 | on top of@cond
 
\knowledge{notion}
| on top of@TS

%%TREES
\knowledge{notion}
  | tree
  | trees

\knowledge{notion}
  | ancestor relation
  | incomparable@descendant
  | incomparable@ancestor
  | minimal@ancestor
  | maximal@ancestor

\knowledge{notion}
  | ancestor
  | ancestors
  | above

\knowledge{notion}
  | descendant
  | descendants
  | descendent
  | descendants@nodes
  | below
  | below@tree
  | lowest@tree
  | below@nodes
  | strict descendant

\knowledge{notion}
  | $A$-labelled (ordered) tree
  | $\powplus {\SS }$-labelled tree
  | labelled tree
  | $A$-labelled tree

%%MORPHISMS OF TRANS-SYS

\knowledge{notion}
| morphism@graphs
| morphism of graphs
| morphism of (labelled) graphs

\knowledge{notion}
| morphism of pointed graphs
| morphism of (labelled) pointed graphs

\knowledge{notion}
| morphism of labelled graphs

\knowledge{notion}
  | weak morphism of transition systems
  | (weak) morphism@transSys
  | weak morphism
  | weak morphism of labelled transition systems
  | weak morphism of (labelled) transition systems
  | labelled transition systems@mapping
  | (weak) morphism
  | weak morphisms
  | weak morphism of (labelled) TS
  | weak morphism of TS

\knowledge{notion}
   | morphism of automata
   | of automata@morphism
   | automata@mapping
   | morphism@aut
   | of automata@morph
  
\knowledge{notion}
   | morphism of games
   | of games@morphism  
  
\knowledge{notion}
  | isomorphism@TS
  | isomorphism of transition systems
  | isomorphism

\knowledge{notion}
  | isomorphic@TS
  
\knowledge{notion}  
  | morphism of labelled transition system
  | morphism of (labelled) transition systems
  | morphism of labelled transition systems
  | morphism of labelled TS
  | morphism@transSys
  | morphisms@transSys
  | morphism of transition systems
  | morphism
  | morphisms
  | morphisms of transition systems
  | morphism@TS
  | morphism of TS
  | morphisms@TS
  | morphism of transition system
  | morphismes
  | Morphisms

\knowledge{notion}
  | surjective
  | surjectivity
  | surjective morphism

\knowledge{notion}
  | preserves the acceptance
  | preserve the acceptance of runs
  | preserves the acceptance of runs

\knowledge{notion}
  | preserves accepting runs
  | preserve rejecting runs
  | preserve accepting runs

\knowledge{notion}
  | Locally surjective
  | locally surjective
  | local surjectivity
  | locally surjective morphism
  | locally surjective morphisms
  | Locally surjective morphisms

\knowledge{notion}
  | Locally injective
  | locally injective
  | local injectivity

\knowledge{notion}
  | Locally bijective
  | locally bijective
  | locally bijective morphism
  | Locally bijective morphisms
  | local bijectivity
  | local bijective
  | locally bijective morphisms
  | bijective@loc

\knowledge{notion}
| automaton of the morphism
 | automaton@morphism
 | automaton of morphism $\pp $
 |automaton@autMorphism
 | of the morphism@aut

%%HD mappingS

\knowledge{notion}
  | resolver simulating $\pp $
  | resolver simulating
  | resolver@HDmorphism
  | simulating@resolver
  | resolver@morphism
  | resolver@mapping
  | simulating
  | resolver@HDmapping
  | resolver@morph
  | simulate@morph
  | resolver@HDmap
  
\knowledge{notion}
  | sound@resolverMorphism
  | sound resolver@morph
  | sound@morph
  | soundness@morph

\knowledge{notion}
| run induced by@morphism 
| run induced by@morph
| induced by $\resolv$@morph
| induced by@resMorph
| induced by@morph

\knowledge{notion}
  | history-deterministic mapping
  | history-deterministic mappings
  | history-deterministic@mapping
  | HD mappings
  | HD mapping
  | HD@mapping
  | history-determinism@mapping
  | history-determinism@morp
  | history-determinism@morph

\knowledge{notion}
  | HD mapping of labelled transition systems

\knowledge{notion}
| history-deterministic-for-games
  | HD@games
  | HD-for-games mapping
  | HD-for-games
  | HD-for-games mappings
  | mappings@HDForGames
  
\knowledge{notion}
  | resolver sound for $\G $
  | sound for $\G $
  | sound for $\TS $
  | sound for $\acdRabinTransformGFG {\G }$

\knowledge{notion}
  | consistent with@resolver
  | consistent with@mapping

%%ZIELONKA TREE AUTOMATON
\knowledge{notion}
  | Zielonka tree automaton

\knowledge{notion}
  | Zielonka tree
  | Zielonka trees
  | tree@ZT
  
\knowledge{notion}
  | Zielonka DAG
  | Zielonka@DAG
  | DAG@Zielonka

\knowledge{notion}
  | round nodes
  | round node
  | round
  | round@nodes
  | round@node

\knowledge{notion}
  | square nodes
  | square node
  | square
  | square@nodes
  | square@node
  | square@tree

\knowledge{notion}
  | round-branching width

\knowledge{notion}
  | ZT-HD-Rabin-automaton

\knowledge{notion}
  | ZT-parity-automaton
  | Zielonka-tree-parity-automaton

\knowledge{notion}
| round-branching-width

\knowledge{notion}
  | letter game
  | Letter game

\knowledge{notion}
  | attractor to $X$
  | attractor to $0$
  | attractor to $x$

\knowledge{notion}
| $x$-attractor decomposition
| attractor decompositions
| attractor decomposition
| $0$-attractor decomposition
| $2$-attractor decomposition

\knowledge{notion}
| $x$-recursive attractor decomposition
| $0$-recursive attractor decomposition
| $x+2$-recursive attractor decomposition
| decomposition@attrRecursive

\knowledge{notion}
| full attractor decomposition

\knowledge{notion}
  | $\SS _i$-region
  | $\SS _{i'}$-region
  | $\SS _i$-region@simple
  | $\SS _i$-region of the attractor decomposition
  | $\SS _i$-region@simpleAttr
  | $\SS _i$-regions

\knowledge{notion}
  | $1$-avoiding region
  | $x+1$-avoiding region
  | $y$-avoiding region@simpleAttr
  | $y+2$-avoiding region@simple
  | $3$-avoiding region
  
\knowledge{notion}
 | $\SS _i$-region of $\attrDec {\G '}$
 | $\SS _i$-region of $\attrDec {\letterGame {\A }}$
 | $\SS _i$-regions@recursive
 | $\SS _i$-region@rec
 | $\SS _i$-regions@rec

\knowledge{notion}
| $y$-avoiding region of $\attrDec {\G '}$
| $x-1$-avoiding region@recursive
| $y$-avoiding region@rec

\knowledge{notion}
  | $X$-Adam-closed@letterGame
  | $\SS _i$-Adam-closed@letterGame
  | $X$-Adam-closed subgraph@letterGame
  
\knowledge{notion}
  | $X$-closed subgame
  | $\SS _i$-closed subgame
  | $\SS _i$-closed subgames

\knowledge{notion}
  | $X$-FSCC@letterGame
  | $\SS _i$-FSCC@letterGame

%%%ACD
\knowledge{notion}
  | tree of alternating subcycles
  | trees of alternating subcycles
  
\knowledge{notion}
  | DAG of alternating subcycles

\knowledge{notion}
  | local subDAG at $q$
  | local subDAG

\knowledge{notion}
  | round nodes@acd
  | round node@acd
  | round@acd

\knowledge{notion}
  | square nodes@acd
  | square node@acd
  | square@acd
  
 \knowledge{notion}
 | same shape
  
\knowledge{notion}
  | alternating cycle decomposition
  | Alternating Cycle Decomposition
  | ACD
  | Alternating cycle decomposition
  | ACDs
  | décomposition en cycles alternants

\knowledge{notion}
  | ACD-DAG
  | ACD-DAGs
  | DAG-version@ACD

\knowledge{notion}
  | set of nodes of $\acd {\TS }$

\knowledge{notion}
  | local subtree at $q$
  | local subtree at
  | local subtree
  | local subtrees

\knowledge{notion}
| local Muller condition of $\T $ at $q$
| local Muller condition of $\TS $ at $v$
| local Muller condition
| local Muller condition of $\TS '$ at $v'$
| local Muller conditions
| local Muller language of $\TS $ at $v$

\knowledge{notion}
  | positive@tree
  | positive tree

\knowledge{notion}
  | negative@tree

\knowledge{notion}
  | positive@acd

\knowledge{notion}
  | negative@acd

\knowledge{notion}
  | equidistant@acd

\knowledge{notion}
  | ACD-parity-transform
  | ACD-parity-transformation
  | ACD-parity@ACDtrans
  | ACD-transforms
  | ACD-transform
  | ACD transformations

\knowledge{notion}
  | ACD-HD-Rabin-transform
  | ACD-HD-Rabin-transformation
  | ACD-HD-Rabin@ACDtrans
  
\knowledge{notion}
  | ACD-HD-Rabin-transform-for-games
  
\knowledge{notion}
| A-successor
| A-successors

%%PREIMAGE AUTOMATON
\knowledge{notion}
  | cycle-preimage-automaton at $v'$

\knowledge{notion}
  | unfolding

%%EQUIVALENCY TRANS-SYS
  
\knowledge{notion}
| equivalent acceptance conditions
| equivalent acceptance conditions over
| equivalent acceptance condition over
| equivalent over $\S $
| equivalent parity condition over
| equivalent over $G$
| equivalence of acceptance conditions
| equivalent to@accCond
| equivalent@accCond
| equivalent@acceptCond
| equivalent acceptance condition
| equivalent@acc
| equivalent over
| equivalent to@cond
| equivalent@cond
| equivalent to@acc
| equivalent parity condition over
| equivalent over $\autChromG $

\knowledge{notion}
| equivalent@TS

%%TYPENESS

\knowledge{notion}
  | decide the typeness
  | decide its typeness
  | decidability of typeness
  | deciding the typeness
  | decide typeness
  
\knowledge{notion}
  | decide the typeness of a class of Muller automata in polynomial time
  | decidability in polynomial time of typeness

\knowledge{notion}
| parity language@isa

\knowledge{notion}
  | $X$ type
  | $\C$-type
  | Rabin type
  | Streett type
  | parity type
  | $[1,d+1]$-parity type
  | $[0,d-1]$ (resp. $[1,d]$)-parity type
  | Rabin@type
  | B\"uchi type
  | coB\"uchi type
  | B\"uchi@type
  | (resp. coB\"uchi) type
  | generalised B\"uchi@type
  | (resp. generalised coB\"uchi) type
  | generalised B\"uchi type
  | generalised coB\"uchi type
  | $[0,d-1]$ (resp. $[1,d]$ / $\WeakIndex {d}$)-parity type
  | typeness
  | $[0,d-1]$-parity type
  | $[1,d]$-parity type
  | types
  | type
  | Typeness
  | typeness problem
  
\knowledge{notion}
| Streett shape

\knowledge{notion}
| Streett shape

\knowledge{notion}
| Streett shape

\knowledge{notion}
 | $\Weak {d}$ type
 | $\Weak {d}$-type

\knowledge{notion}
  | Rabin shape
  | $X$ shape@ZT

\knowledge{notion}
  | Streett shape

\knowledge{notion}
  | Parity shape
  | parity shape
  
\knowledge{notion}
  | $[0,1]$-parity shape
  | $[1,2]$-parity shape

\knowledge{notion}
  | Rabin ACD
  | Rabin@ACD
  | $X$-ACD
  
\knowledge{notion}
  | Streett ACD

\knowledge{notion}
  | parity ACD
  | Parity ACD 

\knowledge{notion}
  | $[0,d-1]$-parity ACD
  | $[1,d]$-parity ACD
  | $[0,1]$-parity ACD
  | $[1,2]$-parity ACD
  | $[0,d]$-parity ACD
  | $[1,d+1]$-parity ACD
  | $[0,d-1]$(resp. $[1,d]$)-parity ACD
  | $[0,d-1]$ (resp. $[1,d]$)-parity ACD

\knowledge{notion}
  | shapes@ZDAG
  | shape of a Zielonka DAG

\knowledge{notion}
| $X$-ACD-DAG
| types@ACDDAG
| Rabin ACD-DAG
| Streett ACD-DAG
| parity ACD-DAG

\knowledge{notion}
| relabelled
| relabelling
| relabel

%%NORMAL FORM PARITY AUTOMATA

\knowledge{notion}
  | normalised
  | normal form
  | normalise
  | normalising
  | normalisation
  | forme normale
  | form@normal
  | normality
  | Normality
  
\knowledge{notion}
| normalised@neg
| normal form@neg
  
\knowledge{notion}
  | negative@TS
  | not negative@TS
  | non-negative@TS

\knowledge{notion}
  | duplicated edges
  | duplicated transitions

%%GENERALISED CONDITIONS
\knowledge{notion}
  | generalised Büchi language associated to

\knowledge{notion}
  | generalised B\"uchi language
  | generalised B\"uchi@language
  | generalised B\"uchi
  | generalised B\"uchi languages
  | generalised Büchi
  | generalised B\"uchi-
  | gen. B\"uchi
  | generalised (co)B\"uchi
  | Büchi généralisés
  | generalised (co)B\"uchi languages
  
\knowledge{notion}
  | generalised coB\"uchi language associated to

\knowledge{notion}
  | generalised coB\"uchi language
  | generalised coB\"uchi
  | coB\"uchi@generalised
  | gen. coB\"uchi

\knowledge{notion}
  | B\"uchi shape
 | B\"uchi@shapeZT
 
\knowledge{notion}
  | coB\"uchi shape
 | (resp. coB\"uchi) shape
 
\knowledge{notion}
  | Generalised B\"uchi shape
  | generalised B\"uchi shape
 | generalised B\"uchi@shapeZT
 
\knowledge{notion}
  | Generalised coB\"uchi shape
  | generalised coB\"uchi shape
  | (resp. generalised coB\"uchi) shape

\knowledge{notion}
  | B\"uchi ACD

\knowledge{notion}
  | coB\"uchi ACD

\knowledge{notion}
  | Generalised B\"uchi ACD
  | generalised B\"uchi ACD

\knowledge{notion}
  | Generalised coB\"uchi ACD
  | generalised coB\"uchi ACD

\knowledge{notion}
| $\Weak {d}$ ACD

\knowledge{notion}
 | generalised weak
 | Generalised weak
 | weak automata
 | generalised weak automata
 | generalised weak conditions
 | weak languages
 | weak

%%MINIMISATION COLOURS MULLER
\knowledge{notion}
  | $k$-colour type@lang

\knowledge{notion}
  | $k$-Rabin-pair type@lang

\knowledge{notion}
  | $k$-colour type@TS
  | $|V|$-colour type@TS
  | $3$-colour type@TS
  | $3$-colour type@TS

\knowledge{notion}
  | $k$-multiple-colour type@TS
  | $3$-multiple-colour type@TS

\knowledge{notion}
  | $k$-Rabin-pair type@TS

\knowledge{notion}
  | $\F $-equivalent
  | $\F $-equivalence
  | $\sim_\F$ relation

%%SIZE OF ZT ACD
\knowledge{notion}
  | double factorial

%%REDUCTION RABIN AUTOMATA
\knowledge{notion}
  | graph of rejecting subsets associated to $\F $
  | graphs of rejecting subsets associated to $\FHalfLetters {n}$
  | graph of rejecting subsets

\knowledge{notion}
  | neighbourhood
  | neighbour

\knowledge{notion}
  | open neighbourhood

\knowledge{notion}
| pseudo-path
| pseudo-paths

\knowledge{notion}
| stabilises around
| stabilise around
| stabilising around

%%%NP PROBLEMS
\knowledge{notion}
  | $(\S ,m)$-cover
  
\knowledge{notion}
  | cover
  | vertex cover
  
\knowledge{notion}
  | colouring@chrm
  | colouring@graph
  | $k$-colouring@graph
  | $3$-colouring@graph
  | $3$-colourable
  | valid colouring
  
\knowledge{notion}  
  | size@colouring
  
\knowledge{notion}
  | chromatic number
  
\knowledge{notion}
	| undi

\knowledge{notion}
  | independent set
  | independent sets
 
\knowledge{notion} 
  | clique

%GH formulas

\knowledge{notion}
| Horn clause
| Horn clauses

\knowledge{notion}
| negative@Horn

\knowledge{notion}
| Horn formula
| Horn formulas

\knowledge{notion} 
| Generalised Horn clause
| generalised Horn clause
| GH clause
| GH clauses

\knowledge{notion} 
| Generalised Horn formula
| generalised Horn formulas
| generalised Horn formula
| Generalised Horn formulas
| GH formula
| GH formulas

\knowledge{notion} 
| GH language

%\input{knowledge-memory.tex}
%%%GENERAL VOCABULARY
\knowledge{notion}
  | objective
  | objectives

\knowledge{notion}
| winning conditions
| condition de victoire
| winning condition
| winning objective

%%STRUCTURE OF GAMES

\knowledge{notion}
| game
| games
| Games
| infinite duration games
| jeux
| jeu

\knowledge{notion}
| $\locLang {W}{u}$-game
| $\locLangBuchi {W}{u}$-game
| $\locLang x q$-game
| $W$-game
| $W$-games
  | $\Muller {\F }$-game
  | $\Muller {\F }$-games
  | $W_n$-game
  | $\Lang {\A }$-game
  | $\addNeutral {W}$-games

\knowledge{notion}
| play
| plays
| finite play

\knowledge{notion}
  | Eve-game
  | Eve-games
  | Adam-games
  
\knowledge{notion}
  | bipositional over (finite) one-player games
  | bipositional over finite one-player games
  | bipositional over one-player games

\knowledge{notion}
  | output@game

\knowledge{notion}
  | winning@play

\knowledge{notion}
  | losing@play
  | losing
  | loses

\knowledge{notion}
  | subplay

%%STRATEGIES
\knowledge{notion}
| strategy
| strategies
| stratégies

\knowledge{notion}
| uniformly
| uniformly@strat
| uniform strategies

\knowledge{notion}
| wins
| wins $\G $ from $v$
| wins $\G \compositionAut \A $ from $(v, q_0)$
| win
| from@winGame
| winner
| wins $\G $ from
| winners
| wins from
| win from $v_0$
| winning from
| wins from $v_0$
| losing@strat
| winning from $v$
| stratégie gagnante
  | wins $\G \compositionAut \A $ from $(v, q_\init )$
  | win from

\knowledge{notion}
  | win positionally
  | positionally@win
  | positionally

\knowledge{notion}
| consistent with the strategy
| consistent with
| consistent with@strat
| consistency with@strat

\knowledge{notion}
| winning region
  | winning regions

\knowledge{notion}
| full winning region

\knowledge{notion}
| players
| player
| controls
| controlled by

\knowledge{notion}
| Eve
| Eve's
| Eve-vertex
| controlled by Eve
| owners of the vertices

\knowledge{notion}
| Adam
| Adam-vertex
| controlled by Adam
| Adam's vertex
| Adam's

\knowledge{notion}
| winning strategy for Eve
| winning strategy
| winning@strat
| winning for Eve
| winning strategy from
| winning
| winning strategies
| winning from@strat
| winning strategy from $v$

\knowledge{notion}
| solve@game
| solve a game
| Solving@game
| Solving a game
| solving@game
| solving@games
| solved@games
| résolution de ces jeux
| résolution des jeux
 | solving a game
 | solving games

\knowledge{notion}  
| determined

%%POSITIONALITY
\knowledge{notion}
  | positional@strategy
  | positional strategy
  | positional@strat
  | positional strategies
  | positionnalité
  | positionally@strat
  | stratégies positionnelles
  | positionnelle

\knowledge{notion}
| optimal strategy
| optimal (for Eve)
| optimal@strat
| playing optimally
| play optimally

\knowledge{notion}
  | play optimally in $\G $ using a positional strategy
  | play optimally using positional strategies
  | jouer de manière optimale
  | play optimally using@pos
  | play optimally using a positional strategy
  | play optimally@pos
| play optimally using

\knowledge{notion}
  | half-positional@objective
  | half-positional
  | half-positionality
  | half-positionally
  | Half-positional
  | positionality@half
  | positionality
  | positional
  | positionnels
  
\knowledge{notion}  
  | half-positional over $\mathcal {X}$ games
  | half-positional over
  | half-positional over@eve

\knowledge{notion}
  | bipositional@objective
  | bipositionality@objective
  | bipositional
  | bipositionality
  | Bipositionality

%%% MEMORY STRUCTURES
\knowledge{notion}
  | memory skeleton over a set $X$
  | memory skeleton
  | memory skeleton $\M $ over
  | memory skeleton over
  | memory skeletons
  | memory skeleton over $\GG $
  | skeleton@mem

\knowledge{notion}
  | memory states
  | memory state

\knowledge{notion}
  | initial memory state

\knowledge{notion}
| update function@mem
| updates@mem
| update function
| memory updates

\knowledge{notion}
  | memory structure for a game $\G $
  | memory structure
  | finite memory structure
  | memory structures
  | memory@strat
  | memory
  | memories
  | mémoire
  | structures de mémoire
  | finite state automaton@memoryStruct
  | finite memory
  | finite memory strategy
  | finite memory@strat
  | memory@games
  | memory for games
  | structure@mem
  | memory@struct
  
\knowledge{notion}
  | $\ee $-memory

\knowledge{notion}
  | size@mem

\knowledge{notion}
  | play optimally in@mem
  | play optimally@mem
  | play optimally in $\G $ using memory $k$
  | play optimally with finite memory
  | play optimally in $\G $ using general memory $k$

\knowledge{notion}
| next-move function
| next-move

\knowledge{notion}
| implements@memStrat
| implemented by@mem
| implementing@mem
| implemented by
| implementing
| implements
| implements@strat
| implemented by@stratMem
| implements a strategy
| strategy implemented by
| implemented
| implements a@mem
| implemented by@strat
  | implement@mem
  | implement an
  | implementing@strat
  | implemented@mem

%%%TYPES OF MEMORY
\knowledge{notion}
  | general memory structure
  | general memories
  | mémoires générales
  | general memory
  | general ones@mem
  | general memory structure for a game $\G $
  | memory structure@gen
  | General@mem
  | general
  | General
  | memory@gen

\knowledge{notion}  
  | memory required
  | Memory requirements
  | memory requirement
  | memory@req
  | memory requirements
  
\knowledge{notion}
  | general memory requirements
    | general ones@memReq
    | general ones@reqMem
    | general@memReq
    | memory requirements@gen

\knowledge{notion}
| general memory requirements over $\ee$-free games

\knowledge{notion}
| over vertex-coloured games@genreq

\knowledge{notion}
| over vertex-coloured games@chromreq

\knowledge{notion}
| over vertex-coloured games@arIndReq
 
\knowledge{notion}
  | chromatic memory structure
  | chromatic memory
  | chromatic memories
  | chromatic
  | mémoires chromatiques
  | Chromatic memories
  | Chromatic
  | chromatic memory structures
  | chromatic@mem
  | Chromatic@mem
  | chromatic memory@struc
  
\knowledge{notion}
  | chromatic memory requirements
  | chromatic memory requirements over
  | chromatic@reqMem

\knowledge{notion}
| chromatic memory requirements over $\ee$-free games

\knowledge{notion}
| arena-independent memory structure
| arena-independent memory
  | arena-independent memory structure for $W$
  | arena-independent memories
  | arena-independent memory
  | arena-independent memory for $W$
  | chromatic memory structures@arenaInd
  | Arena-Independent
  | arena-independent structure

\knowledge{notion}
| arena-independent memory requirements

\knowledge{notion}
| arena-independent memory requirements over $\ee$-free games
| over $\ee $-free games@arInd

\knowledge{notion}
| arena-independent memory structures over $\ee $-free games

\knowledge{notion}
  | $\ee $-free game
  | $\ee $-free
  | $\ee $-free@game
  | $\ee $-free games
  | $\ee $-free@games

\knowledge{notion}
| arbitrary games
| Arbitrary@game
| Arbitrary
| arbitrary@games
  
%%PROGRESS CONSISTENCY

\knowledge{notion}
  | progress consistency
  | progress consistent

\knowledge{notion}
  | bi-progress consistent
  | bi-progress consistency

\knowledge{notion}
| fully progress consistent
| fully progress consistent@sigAut
| full progress consistency

%%PARITY AUTOMATA
\knowledge{notion}
  | $a$-transitions@out
	| $a$-transition@out
	| $a$-transitions@in
	| $a$-transition@in
	| $\NLetters {x}{q}$-transitions@in
	  | $a$-transition

\knowledge{notion}
  | $y$-transitions@out
  | $x$-transitions@out
  | $x-1$-transitions@out
  | $0$-transitions
  | $y$-transitions
  | ${>}x$-transitions
  | $x-1$-transitions
  | $x$-transitions
  | $({\geq }x)$-transitions
  | $({\geq }x-2)$-transitions
  | $({<}x)$-transitions
  | ${\geq }x$-transitions
  | $({>}x)$-transitions@out
  | $(x-1)$-transitions@out
  | $(x-1)$-transitions

\knowledge{notion}
  | deterministic over~$\DD '$
  | deterministic over
    | deterministic over $2$-transitions
    | deterministic over transitions
    | determinism over transitions
  | determinism over

\knowledge{notion}
  | accepted with priority $x$
  | accepted with priority
  | accepted with a priority
  | accepted with an even priority
  | accepted or rejected with priority
  | accepted with
  | accepted@byPriority

\knowledge{notion}
  | is rejected with priority $x$
  | is rejected with priority
  | rejected with an odd priority
    | rejected with priority
    | rejected with a priority

\knowledge{notion}
  | automaton with $\ee $-transitions
  | $\ee $-transitions@aut
  | $\ee $-letters

%%%%SIGNATURE AUTOMATA

\knowledge{notion}
| signature automaton
| signature automata
| signature

\knowledge{notion}
 | $d$-signature automaton

\knowledge{notion}
  | $d$-structured signature automaton
  | $(x-2)$-structured signature automaton
  | $(x-2)$-structured signature automaton
  | $x$-structured signature automaton
  | $(x-2)$-structured
  | $x$-structured signature
  | structured signature
  | $(x-2)$-structured signature

\knowledge{notion}
  | structured signature automaton
  | structured
  | structured signature automata

\knowledge{notion}
  | reduced@sig
  | reduction@sig
  | reduced signature automata  
  | reduced signature automaton
  | reduced
  
\knowledge{notion}
 | strongly reduced@sig
 | refine this notion
 | refine this notion@strReducedSig

\knowledge{notion}
  | Local monotonicity
  | local monotonicity
  | local monotonicity of $({\geq } x)$-transitions
 | local monotonicity of $({\geq }x)$-transitions

\knowledge{notion}
| $({<}x)$-safe separation
  
%%UNIVERSAL GRAPHS

\knowledge{notion}
  | $\SS $-graph
    | $\SS $-graphs
    | $\Sigma $-graphs

\knowledge{notion}
  | monotone@graph
  | monotone graph@univ
  | monotone@univ
  | monotonicity@graph
  | Monotonicity
  
\knowledge{notion}
    | $v$ satisfies@univ
    | satisfies@univ
    | satisfy@
    | satisfies@univTree
    | satisfies@tree
    | satisfies@graph
 | satisfy@univ
    
 \knowledge{notion}
  | totally ordered graph
  | ordered@univ
  
  \knowledge{notion}
    | well-ordered graph
    | well-ordered@graph
    | well-ordered@univ
    | well-order
    | well-ordered
  
  \knowledge{notion}
    | $\SS $-tree
    | $[0,\dots ,d]$-tree
    | trees@univ
    | $[0,d+1]$-tree
  
  \knowledge{notion}
    | root@univ
  
  \knowledge{notion}
    | satisfies@treeUniv
    | satisfy@treeUniv
  
  \knowledge{notion}
    | $(\kappa ,W)$-universal for trees
    | $(\kappa ,\parity )$-universal for trees
    | universality for trees
    | $(\kk ,W)$-universal for trees
    | $(\kappa ,W)$-universality for trees  
    | tree@univ

\knowledge{notion}
  | morphism@univ
  | morphism of $\SS $-graphs@univ

\knowledge{notion}
  | $(\kappa ,W)$-universal  
  | $(\kappa , W)$-universal
  | universal
  | universal graphs
  | universal graph
  | $(\kappa ,\parity )$-universal
  | universality
  | $(\kappa ,\parity )$-universality
  | $(\kk ,W)$-universal
  | Universal graphs
  | Universal graph
  | $(\aleph _0,W)$-universal

\knowledge{notion}
  | Universal graph for the parity objective
  | universal graph $\UPar $ for the parity objective

%%%NEUTRAL LETTERS
\knowledge{notion}
  | neutral letter
  | neutral for $W$
  | neutral letters
  
%%%CONGRUENCES AND PREORDERS 

\knowledge{notion}
| induced order@eqRel

\knowledge{notion}
  | total@preorder
  | total preorder
  | total preorders
  | totally preordered
  | total@preord
  | total@order
  
\knowledge{notion}
  | refinement
  | refining
  | refines
  | refine

\knowledge{notion}
  | collection of nested (total) preorders over a set $Q$
  | collection of nested total preorders
  | nested total preorders
  | nested preorders
  | nested@preorder
  | collection of nested total preorders over a set $Q$

\knowledge{notion}
  | congruence for $\DD '$
  | congruences for
  | congruence for
  | congruency
  | congruence for $y$-transitions
  | congruence for $({>}x)$-transitions
  | preserve this congruence

\knowledge{notion}
   | congruence
   | congruences

\knowledge{notion}
  | strong congruence for
  | strong congruency
  | strong congruence
  | strong congruence for transitions

\knowledge{notion}
  | congruence class
  | equivalence class

\knowledge{notion}
  | are monotone for@cong
  | monotone for $\DD '$
  | monotone@cong
  | monotone for@cong
  | are monotone for
  | monotonicity@cong
  | monotone for $\leqSig {x}$
  | monotone for
  | monotone for a preorder

\knowledge{notion}
 | strictly monotone for
  
\knowledge{notion}
 |	(strictly) monotone@aut
 | monotone@aut
 | monotone automaton

\knowledge{notion}
  | induced equivalence relation
  | equivalence relation induced from a preorder 
  | induced equivalence relation@preord

\knowledge{notion}
  | residual of $L$ with respect to
  | residuals
  | residual
  | Residuals

\knowledge{notion}
  | residual classes
  | residual class
  | class@res
  | class
  
  \knowledge{notion}
| residual class@state
| class of states

\knowledge{notion}
  | equivalent@res

\knowledge{notion}
  | associated to@residual
  | associated to@res
 | states associated to@res
 | states of@res

\knowledge{notion}
    | act uniformly
    | uniformity
    | uniform
    | uniform over
    | uniformity for
    | $y_1$-uniformity
    | are uniform over $\sim$-classes
  | are uniform over $\sim $-classes
  | uniform@trans
  | uniformity@trans
  | are uniform
  | behave uniformly
  | uniformity@cong
 
 \knowledge{notion}
  | produces priority $y$ uniformly
  | produces priority $x-1$ uniformly
  | producing priority $x$ uniformly
  | produce priority $x$ uniformly
  | produces priority $x$ uniformly
  | producing priority $x$ uniformly in  
  | produces priority $x$ uniformly in

\knowledge{notion}
  | automaton of residuals
  | residual automaton

\knowledge{notion}
  | redirected by the transformation

\knowledge{notion}
 | preserves the residuals@transform

\knowledge{notion}
| leading to@class

\knowledge{notion}
| extended tuple

%%%%% FAITHFUL AND QUOTIENTS

\knowledge{notion}
  | $[0,x]$-faithful
  | $[0,y]$-faithful
  | faithful
  | faithfulness
  | $[0,x-2]$-faithfulness
  | $[0,x]$-faithfulness
  | $[0,x-2]$-faithful
  | $[0,x-1]$-faithful
  | $[0,x-1]$-faithfulness
  | $[0,x]$-faithful congruence
  | faithful congruences
  | $[0,x-1]$-faithful congruence
  | $[0,x-2]$-faithful congruence
  | $[0,y]$@faithful
  | priority-faithful

\knowledge{notion}
  | quotient of $\A $ by $\sim $
  | quotient of $\A $ by
  | quotient automaton
    | quotient automaton@res
    | quotient@aut
    
\knowledge{notion}
  | $({\leq }x)$-quotient of $\A $ by $\sim $
  | ${\leq }x$-quotient
  | $({\leq }d)$-quotient by $\eqSig {d}$
  | quotient automata@leq
  | quotient of automata@leq
  | quotient automaton@leq
    | ${\leq }(x-1)$-quotient automata
    | quotient@leq
  | $({\leq }x)$-quotient
  
  \knowledge{notion}
    | projection of $\rr $@quot
    | projection@quot
    | induces the run@resAut

\knowledge{notion}
  | nice transformations
  | nice transformation
  | $\eqSig {x-2}$-nice transformation  
    | $\sim $-nice transformation of $\A $ at level $x$
    | $\eqSig {x-2}$-nice transformation at level $x-1$
    | $\eqRes {x-2}$-nice transformation of $\A $ at level $x-1$
    | $\eqRes {x-2}$-nice transformation at level $x-1$
    | $\eqSig {x-2}$-nice transformation of $\A $ at level $x-1$
    | $\eqSig {x}$-nice transformation of $\A $ at level $x$
    | nice transformation at level $x$
    | $\eqSig {x}$-nice transformation at level $x$

%%%%HD STUFF
\knowledge{notion}
  | semantically deterministic

%LOCALISATION

\knowledge{notion}
  | proper alphabet
  | proper alphabets

\knowledge{notion}
  | prefix codes
  | prefix code
  
\knowledge{notion}
  | local alphabet at $\resClass {u}$
  
\knowledge{notion}
  | $x$-local alphabet at $\classSig {x}{q}$

\knowledge{notion}
  | localisation of $W$ to~$\resClass {u}$
    | localisation to residuals

\knowledge{notion}
  | localisation of $W$ to~$\classSig {x}{q}$

\knowledge{notion}
  | local automaton of the residual $\resClassState {q}$
  | local automaton of the residual
  
\knowledge{notion}
  | local automaton of the class $\classSig {x}{q}$
  | local automaton of a $\eqSig {x}$-class
  | local automata
  | local automaton

%%%SAFE LANGUAGES

\knowledge{notion}
  | $({\leq}1)$-safe languages
  | $({<}2)$-safe language@coB
  | $({<}2)$-safe languages@coB
  | safe language@coB
  | safe language@coB
  | safe languages@coB
  
\knowledge{notion}  
  | $(\leq 1)$-safe languages
  | safe languages
  | safe language
  | $({<}x)$-safe language
  | ${<}x$-safe languages
  | $({<}x)$-safe languages
  | safe languages@sig  
  | ${<}x$-safe language
  
\knowledge{notion}
  |	$({<}2)$-safe component@coB
  | safe component@coB
  | safe components@coB
  
\knowledge{notion}  
  | safe component
  | safe components
  | $({<}x)$-safe component
  | ${<}x$-safe components
  | ${<}y$-safe components
  | ${<}1$-safe components
  | ${<}(x-1)$-safe components
  | $({<}x)$-safe components
  | ${<}x$-safe component
  | $({<}x-1)$-safe component
  | $({<}x-1)$-safe components
  | $({<})y$-safe component
  | $({<}y)$-safe components
  | safe component@sig

\knowledge{notion}
 | safe path@coB
 | safe run@coB
 | safe@runCoB
 | $1$-safe run@coB
  
\knowledge{notion}  
 | safe path
| safe run
| safe@run
  | $({<}x)$-safe@path
  | ${<}x$-safe run
  | $({<}x)$-safe path
  | ${<}x$-safe@path
  | ${<}x$-safe path
  | ${<}x$-safe@run
  | safe run@sig
  | $(<x)$-safe path

\knowledge{notion}
| redundant@coB

\knowledge{notion}
 | redundant@sig
 | redundant@safec
 | redundant@safeCentr

%%CLASSES OF OBJECTIVES
\knowledge{notion}
  | closed objectives
  | topologically closed
  | closed@obj
  | closed
  | closed objective
  | closeness@obj
  | closed languages

\knowledge{notion}
  | open objectives
  | topologically open
  | open@obj
  | open
  | open objective

\knowledge{notion}
  | $\oo $-regular closed objectives
  | $\omega $-regular closed
  | $\oo $-regular closed
  | $\oo $-regular closed objective

\knowledge{notion}
  | safety objective associated to $L$

\knowledge{notion}
  | reachability objective associated to $L$

\knowledge{notion}
  | $\oo $-regular open objectives
  | $\oo $-regular open
  | $\omega $-regular open
  | open@reg

\knowledge{notion}
  | concave
  | Concavity

%%SUPER WORDS
\knowledge{notion}
  | super word@buchi
  | super words@buchi
  
\knowledge{notion}
  | super word
  | super words
  | super word for

\knowledge{notion}
  | super letter@buchi
  | super letters@buchi
  | non-super letters@buchi

\knowledge{notion}
  | super letter
  | non-super letters
  | super letters for
  | super letters

\knowledge{notion}  
  | neutral letters@polishSig
  | neutral letter@sig
  | neutral letters@sig

\knowledge{notion}
  | polished@classBuchi
  | polish@classBuchi
  | polished in $\A $@buchi
  | polished class@buchi

\knowledge{notion}
  | polished@autBuchi
  | polished automata@buchi
  | polished automaton@buchi

\knowledge{notion}
  | redirecting@buchi
  | redirected@buchi
  | redirected transitions@buchi

\knowledge{notion}
  | polished@aut
  | polish@aut
  | $x$-polished@aut
  | $x$-polished automaton
  | $x$-polished automata
  | polishing operation

\knowledge{notion}
  | polished@class
  | polished in $\A $ 
  | polish@class
  | $x$-polished@class
  | polish
  | $x$-polished
  | $x$-polish@class
  | polish@sig
  | $x$-polished class
  | polished@classSig

\knowledge{notion}
| redirected in $\A '$@sc
 | redirected@sc
  | redirected transition@safec
  | redirected from@safec
  | redirected@safec

\knowledge{notion}
  | redirected transition@polish
  | redirected transitions@polish
  | redirected@polish

\knowledge{notion}  
| recurrent@class

\knowledge{notion}  
| transient@class
| transient classes

%%%HD PROPERTIES

\knowledge{notion}
  | safe centralised@coB
  | safe centralisation@coB
    | Safe centrality@coB
  | safe centrality@coB
  
\knowledge{notion}
  | safe centralised
  | safe centralisation
  | Safe centrality
  | safe centrality
  | $({<}x)$-safe centralise
  | $({<}x)$-safe centralised
  | $({<}x)$-safe centrality
  | $({<}x)$-safe centralisation
  | ${<}x$-safe centralise
  | ${<}x$-safe centrality
  | safe centralisation@sig

\knowledge{notion}
  | homogeneous
  | homogeneity

\knowledge{notion} 
   | $({\leq }x)$-homogeneous 

\knowledge{notion}
  | safe minimal@coB
  | safe minimal
  | safe minimality

\knowledge{notion}
  | $1$-saturated

\knowledge{notion}
  | $(x-1)$-saturated
  | $(x-1)$-saturation
  | $x-1$-saturated
  | $x-1$-saturation

\knowledge{notion}
  | $1$-saturation@coB

%%%EPSILON COMPLETE AUTOMATA
\knowledge{notion}
  | $\ee$-complete automata
  | $\ee$-complete
  | $\ee$-complete automaton
  | $\ee $-completeness

\knowledge{notion}
  | $x$-jump
  | $x'$-jump
  | $(x-2)$-jump
  | $y$-jump
  | $(y-2)$-jump
  | $0$-jumps
  | $(x-1)$-jump
  | $(y-1)$-jump
  
\knowledge{notion}
  | priority-closed
  | priority-closed@aut
  | priority-closure  
  | closed by monotonicity
  | closed by monotonicity@HM
  
\knowledge{notion}
| makes a reset
| reset@make
  
\knowledge{notion}
  | reset-stable
  | reset-stability 
  | makes infinitely many resets
  | makes finitely many resets

%%%CONJECTURES
\knowledge{notion}
| Kopczyński@conjUnion
| Kopczyński's conjecture
| this conjecture@KopczUnion

\knowledge{notion}
| Ohlmann@conj
| Ohlmann's conjecture
| Neutral letter conjecture
| Neutral Letter Conjecture
| neutral letter conjecture

\knowledge{notion}
| games originating from logical formulas
| originating from logical formulas

\knowledge{notion}
| notion
| definition

\knowledge{notion}
| hyperlink
| hyperliens
| Hyperlinks
| hyperlinks

\IfKnowledgePaperModeTF{
	}{
		% If we are NOT in paper mode (i.e. in composition mode or electronic mode)
		\knowledgestyle{intro notion}{color={Dark Ruby Red}, emphasize}
		\knowledgestyle{notion}{color={Dark Blue Sapphire}}
		\hypersetup{
				colorlinks=true,
				breaklinks=true,
				linkcolor={}, % Links to sections, pages, etc.
				citecolor={}, % Links to bibliography
				filecolor={Blue Marine}, % Links to local file
				urlcolor={Blue Marine},
			}
	}
%\IfKnowledgeCompositionModeTF{
%		% If we are in composition mode, highlight unknown stuff (in yellow) and display the anchor point.
%		\knowledgestyle{intro notion}{color={Ruby Red}, emphasize}
%		\knowledgestyle{notion}{color={Blue Sapphire}}
%		\knowledgeconfigure{anchor point color={Ruby Red}, anchor point shape=corner}
%		\knowledgestyle{intro unknown}{color={Dark Gamboge}, emphasize}
%		\knowledgestyle{intro unknown cont}{color={Dark Gamboge}, emphasize}
%		\knowledgestyle{kl unknown}{color={Dark Gamboge}}
%		\knowledgestyle{kl unknown cont}{color={Dark Gamboge}}
%	}{
%		%
%	}
\input{colours.sty}

%MARGINNOTES
\usepackage{marginnote}

%\def\marginpar#1{}
%\def\marginnote#1{}
%
%
%%To force notes to appear on the right margin
%\makeatletter
%\patchcmd{\@addmarginpar}{\ifodd\c@page}{\ifodd\c@page\@tempcnta\m@ne}{}{}
%\makeatother
%\reversemarginpar

\begin{document}

\maketitle              % typeset the header of the contribution
\begin{abstract}
In 2021, Casares, Colcombet and Fijalkow introduced the Alternating Cycle Decomposition (ACD), a structure used to define optimal transformations of Muller into parity automata and to obtain theoretical results about the possibility of relabelling automata with different acceptance conditions.
In this work, we study the complexity of computing the ACD and its DAG-version, proving that this can be done in polynomial time for suitable representations of the acceptance condition of the Muller automaton. As corollaries, we obtain that we can decide typeness of Muller automata in polynomial time, as well as the parity index of the languages they recognise.

Furthermore, we show that we can minimise in polynomial time the number of colours (resp. Rabin pairs) defining a Muller (resp. Rabin) acceptance condition, but that these problems become NP-complete when taking into account the structure of an automaton using such a condition.
\end{abstract}

\paragraph*{}
This document contains hyperlinks.
\AP Each occurrence of a "notion" is linked to its ""definition"".
On an electronic device, the reader can click on words or symbols (or just hover over them on some PDF readers) to see their definition.

\vskip1em

\newpage
\tableofcontents
\newpage

\section{Introduction}
\label{sec:introduction}
\subsection{Context}

\subparagraph*{Automata for the synthesis problem.} Since the 60s, automata over infinite words have provided a fundamental tool to study problems related to the decidability of different logics~\cite{Buchi1962decision,Rabin69S2S}.
%, and they continue to be a fundamental tool for the formal verification and synthesis of reactive systems.
Recent focus has centered on the study of synthesis of controllers for reactive systems with the specification given in Linear Temporal Logic ($\LTL$). 
The original automata-theoretic approach by Pnueli and Rosner~\cite{PR89Synthesis}  remains at the heart of the state-of-the-art $\LTL$-synthesis tools~\cite{EKRS17FromLTLtoParity,LMS20SynthesisLTL,MichaudColange18Synt,MullerSickert17LTLtoDeterministic}.
Their method consists in translating the $\LTL$ formula into a "deterministic" $\oo$-automaton which is then used to build an infinite duration game; a winning strategy in this game provides a correct controller for the system.

\subparagraph*{Different acceptance conditions.}
There are different ways of specifying which "runs" of an automaton over infinite words are "accepting@@run".
Generally, we label the transitions of the automaton with some "output colours", and we then indicate which colours should be seen (or not) infinitely often.
This can be expressed in a variety of ways, obtaining different "acceptance conditions", such as "parity", "Rabin" or "Muller".
The complexity of such "acceptance conditions" is crucial in the performance of algorithms dealing with automata and games over infinite words. For instance, "parity" games can be solved in quasi-polynomial time~\cite{CJKLS22} and parity games solvers are extremely performing in practice~\cite{SyntCompReport22}, while solving "Rabin" and "Muller" games is, respectively, $\NP$-complete~\cite{EJS93ModelChecking} and $\PSPACE$-complete~\cite{Dawar2005ComplexityBounds}.
Moreover, many existing algorithms for solving these games are polynomial on the size of the game graph, and the exponential dependency is only on parameters coming from the "acceptance condition":  "Muller" games can be solved in time $\O(k^{5k}n^5)$~\cite[Theorem~3.4]{CJKLS22},  where $n$ is the size of the game and $k$ is the number of "colours" used by the "acceptance condition", and "Rabin" games can be solved in time $\O(n^{r+3}rr!)$~\cite[Theorem~7]{PP06Faster}, where $r$ is the number of  "Rabin pairs" of the "acceptance condition".
Also, the emptiness check of "Muller" automata with the condition represented by a Boolean formula $\phi$ ("Emerson-Lei" condition) can be done in time $\O(2^kkn^2|\phi|)$~\cite[Theorem~1]{BBDKMS19EmptEL}.

Some important objectives are therefore: (1) transform an automaton $\A$ using a complex "acceptance conditions" into an automaton $\B$ using a simpler one, and (2) simplify as much as possible the "acceptance condition" used by an automaton $\A$ (without adding further states).

\subparagraph*{The Zielonka tree and Zielonka DAG.} The "Zielonka tree" is an informative representation of "Muller conditions", introduced for the study of strategy complexity in "Muller" games~\cite{Zielonka1998infinite,DJW1997memory}. 
Zielonka showed that we can use this structure to tell whether a "Muller language" can be expressed as a "Rabin" or a "parity language"~\cite[Section~5]{Zielonka1998infinite}.
Moreover, it has been recently proved that the "Zielonka tree" provides minimal "deterministic" "parity" "automata" "recognising" a "Muller condition"~\cite{CCFL24FromMtoP,MeyerSickert21OptimalPracticalNote}, and can thus be used to transform "Muller" automata using this condition into "equivalent@@aut" "parity" automata.

A natural alternative is to consider the more succinct "DAG"-version of this structure: the "Zielonka DAG". Hunter and Dawar studied the complexity of building the "Zielonka DAG" from an "explicit representation" of a "Muller condition", and the complexity of solving "Muller" games for these different representations~\cite{HD08ComplexityMuller}. Recently, Hugenroth showed that many decision problems concerning "Muller" automata become tractable when using the "Zielonka DAG" to represent the "acceptance condition"~\cite{Hugenroth23ZielonkaDAG}.

\subparagraph*{The ACD: Theoretical applications.} In 2021, Casares, Colcombet and Fijalkow~\cite{CCF21Optimal} proposed the "Alternating Cycle Decomposition" (ACD) as a generalisation of the "Zielonka tree".
% to study the structural complexity of "Muller" "automata" and "games". 
The main motivation for the introduction of the "ACD" was to define optimal transformations of automata: given a "Muller" "automaton" $\A$, we can build using the "ACD" an equivalent parity automaton that is minimal amongst all "parity" automata that can be obtained by duplicating states of $\A$~\cite[Theorem~5.32]{CCFL24FromMtoP}.
Moreover, the "ACD" (or its "DAG-version@@ACD") can be used to tell whether a "Muller" automaton can be relabelled with an "acceptance condition" of a simpler "type"~\cite[Section~6.1]{CCFL24FromMtoP}.

However, the works introducing the "ACD"~\cite{CCF21Optimal,CCFL24FromMtoP} are of theoretical nature, and no study of the computational cost of constructing it and performing the related transformations is presented. 

\subparagraph*{The ACD: Practice.} The transformations based on the "ACD" have been implemented in the tools Spot~2.10~\cite{Spot2.10CAV22} and Owl~21.0~\cite{KMS18Owl}, and are used in the $\LTL$-synthesis tools \texttt{ltlsynt}~\cite{MichaudColange18Synt} and \texttt{STRIX}~\cite{LMS20SynthesisLTL, MS21ModernisingStrix} (top-ranked in the \textsc{SYNTCOMP} competitions~\cite{SyntCompReport22}).
In the tool paper~\cite{CDMRS22Tacas}, these transformation are compared with the state-of-the-art methods to transform "Emerson-Lei" automata into "parity" ones.
Surprisingly, the transformation based on the "ACD" does not only produce the smallest "parity" automata, but also outperforms all other existing paritizing methods in computation time.

In~\cite[Section~4]{CDMRS22Tacas}, an algorithm computing the "ACD" is proposed. However, the focus is made in the handling of Boolean formulas to enhance the algorithm's performance in practice, but no theoretical analysis of its complexity is provided.

\subparagraph*{Simplification of acceptance conditions.} 
%As already mentioned, the complexity of the "acceptance conditions" play a crucial role in algorithms manipulating automata. Therefore, given a "Muller" "automaton" $\A$ an important question is: Can we simplify the "acceptance condition" of $\A$ without adding further states? 
%This question admits two slightly different interpretations:

As already mentioned, the complexity of the "acceptance conditions" play a crucial role in algorithms. 
One can simplify the acceptance condition of a "Muller automaton" by adding further states (and the optimal way of doing this is determined by the "ACD"~\cite{CCFL24FromMtoP}).
However, in some cases this leads to an exponential blow-up in the number of states~\cite{Loding1999Optimal}.
A natural question is therefore to try to simplify the acceptance condition while avoiding adding so many states. 
%Given a "Muller" "automaton" $\A$ an important question is: Can we simplify the "acceptance condition" of $\A$ without adding further states? 
%This question admits two slightly different interpretations:
We consider two versions of this problem:

\begin{description}
	\item[Typeness problem.] Can we "relabel" the "acceptance condition" of $\A$ with one of a simpler "type", such as "Rabin", "Streett" or "parity"?
	\item[Minimisation of colours and Rabin pairs.] Can we minimise the number of "colours" used by the "acceptance condition" (or, in the case of "Rabin automata", the number of "Rabin pairs")?
\end{description}

The "ACD" has proven fruitful for studying the "typeness problem": just by inspecting the "ACD" of $\A$, we can tell whether we can "relabel" it with an "equivalent@@accCond" "Rabin", "parity" or "Streett" "acceptance condition"~\cite{CCFL24FromMtoP}.
Also, it is a classical result that we can minimise in polynomial time the number of "colours" used by a "parity" automaton~\cite{CartonMaceiras99RabinIndex}.
% (number that can also be directly read from the "ACD-DAG").
However, it was still unclear whether the "ACD" could be of any help for minimising the number of "colours" of "Muller conditions" or the number of "Rabin pairs" of "Rabin" "acceptance conditions", question that we tackle in this work. 

The minimisation of colours in "Muller" automata has recently been studied by Schwarzová, Strejček and Major~\cite{SSJ23MarksEL}. In their approach, they use heuristics to reduce the number of colours by applying QBF-solvers. The final "acceptance condition" is however not guaranteed to have a minimal number of colours.
%They mention the "ACD" as a possible direction 
There have also been attempts to minimise the number of "Rabin pairs" of "Rabin" automata coming from the determinisation of B\"uchi automata~\cite{TD14BuchiTight}.
Also, in their work about minimal "history-deterministic" "Rabin" "automata", Casares, Colcombet and Lehtinen left open the question of the minimisation of "Rabin pairs"~\cite{CCL22SizeGFG}.

\subsection{Contributions}
We outline the main contributions of this work.
\begin{description}
	\item[1. Computation of the ACD and the ACD-DAG.] Our main contribution is to show that we can compute the "ACD" of a "Muller automaton" in polynomial time, provided that we are given the "Zielonka tree" of its "acceptance condition" as input (Theorem~\ref{th-comp:compt-ACD-poly-ZT}). This shows that the computation of the "ACD" is not harder than that of the "Zielonka tree", (partially) explaining the strikingly favourable experimental results from~\cite{CDMRS22Tacas}.
	We also show that we can compute the "DAG-version@@ACD" of the "ACD" in polynomial time if the "acceptance condition" of $\A$ is given "colour-explicitly@@Muller" or by a "Zielonka DAG" (Theorem~\ref{th-comp:compt-ACD-DAG-poly-ZDAG}).
	The main technical challenge is to prove that the "ACD" (resp. "ACD-DAG") has polynomial size in the size of the "Zielonka tree" (resp. "Zielonka DAG").
	
	\item[2. Deciding typeness and the parity index in polynomial time.] Combining the previous contributions with the results from~\cite{CCFL24FromMtoP}, we directly obtain that we can decide in polynomial time whether a "Muller" automaton can be "relabelled" with an "equivalent@@accCond" "parity", "Rabin" or "Streett" "acceptance condition" (Corollary~\ref{cor-comp:decision-typ-poly}).
	Moreover, we recover a result from Wilke and Yoo~\cite{WilkeYoo96RabinIndex}: we can compute in polynomial time the "parity index" of the language of a "Muller automaton".

	\item[3. Minimisation of colours and Rabin pairs of acceptance conditions.] Given a "Muller" (resp. "Rabin") language $L$, we show that we can minimise the number of "colours" (resp. "Rabin pairs") needed to define $L$ in polynomial time (Theorems~\ref{th-min:colorMinML-poly} and~\ref{th-min:RabinPairsMinML-poly}).

	\item[4. Minimisation of colours and Rabin pairs over an automaton structure.] Given an automaton $\A$ using a "Muller" (resp. "Rabin") "acceptance condition", we show that the problem of minimising the number of "colours" (resp. "Rabin pairs") to "relabel" $\A$ with an "equivalent acceptance condition" over its structure is $\NP$-complete, even if the "ACD" is given as input (Theorems~\ref{th-min:colorMin-Aut-NPhard} and~\ref{th-min:RabinPairsMin-Aut-NPHard}).
	This result is obtained for both automata using single and multiple colours per edge.
	This came as a surprise to us, as our first intuition was in fact that the "ACD" would allow to lift the previous polynomial-time minimisation results to the problem in which we take into account the structure of the automaton.

	\item[5. Analysis on the size of different representations of Muller conditions.] 
	We provide tight bounds on the size of the "Zielonka tree" in the worst case (Proposition~\ref{prop-size:worst-case-ZT}). Combining them with~\cite[Theorem~4.13]{CCFL24FromMtoP}, we recover results from L\"oding~\cite{Loding1999Optimal} giving bounds on the size of "deterministic" "parity" automata, and extend them to "history-deterministic" automata.
	We moreover provide examples showing the exponential gap on the size of the different representations of "Muller conditions" (Section~\ref{subsec:size-comparison}).	
\end{description}

Furthermore, we include an appendix (Appendix~\ref{sec:Gen-Horn}) in which we study a subclass of interest of Boolean formulas, which we call "generalised Horn formulas", and relate them to the problem of minimising the number of "Rabin pairs" of a "Rabin language".

\section{Preliminaries}
\label{sec:preliminaries}

\paragraph*{Basic notations}

\AP For a set $A$ we let $|A|$ denote its cardinality, $\pow{A}$ its power set and $\intro*\powplus{A} = \pow{A} \setminus \{\emptyset\}$.
\AP For a family of subsets $\F\subseteq \pow{A}$ and $A'\subseteq A$, we write $\intro*\restSubsets{\F}{A'} = \F \cap \pow{A'}$.
For natural numbers $i\leq j$, $[i,j]$ stands for $\{i,i+1, \dots, j-1,j \}$.

\AP For a set $\Sigma$, a word over $\Sigma$ is a sequence of elements from $\Sigma$.
\AP The sets of finite and infinite words over $\Sigma$ will be written $\Sigma^*$ and $\Sigma^{\oo}$, respectively, 
\AP and we let $\intro*\SigmaInfty = \Sigma^* \cup \Sigma^\oo$. 
\AP Subsets of $\SS^*$ and $\SS^\oo$ will be called languages.  For a word $w\in \SigmaInfty$ we write $w_i$ to represent the i\ts{th} letter of $w$.
\AP The concatenation of two words $u\in \Sigma^*$ and $v\in \SigmaInfty$ is written $u\cdot v$, or simply $uv$. 
\AP For a word $w\in \Sigma^\oo$, we let $\intro*\minf(w)=\{ a\in \Sigma \mid w_i=a \text{ for infinitely many } i\in \NN\}$.

\subsection{Automata over infinite words and their acceptance conditions}
\paragraph*{Automata}

\AP A ""(non-deterministic) automaton"" is a tuple $\A = (Q,q_\init,\Sigma, \DD, \GG, \colAut, W)$, where $Q$ is a finite set of states, $q_\init\in Q$ is an initial state, 
\AP $\SS$ is an ""input alphabet"", $\DD\subseteq Q\times \SS \times Q$ is a set of transitions, 
\AP $\GG$ is a finite set of ""output colours"", 
\AP $\colAut\colon \DD \to \GG$ is a colouring of the transitions,
\AP and $W\subseteq\GG^\oo$ is a language over $\GG$.
\AP We call the tuple $(\colAut,W)$ the ""acceptance condition"" of $\A$. 
%The ""size@@aut"" of an automaton is its number of states, denoted $\intro*\sizeAut{\A}$.
We write $q\re{a}q'$ to denote a transition $e=(q,a,q')\in\DD$, and $q\re{a:c}q'$ to further indicate that $\colAut(e)=c$. We write $q\lrp{w:u}q'$ to represent the existence of a path from $q$ to $q'$ labelled with the "input letters" $w\in \SS^*$ and "output colours@@aut" $u\in\GG^*$.

\AP We say that $\A$ is ""deterministic"" (resp. ""complete"") if for every $q\in Q$ and $a\in \Sigma$, there is at most (resp. at least) one transition of the form $q\re{a}q'$.

\AP Given an "automaton" $\A$ and a word $w\in \SS^\oo$, a  ""run over $w$"" in $\A$ is a path 
\[q_\init \re{w_0:c_0} q_1 \re{w_1:c_1} q_2 \re{w_2:c_2} q_3 \re{w_3:c_3} \dots \in \DD^\oo.\]
%\[\rr = (q_0,w_0,c_0,q_1)(q_1,w_1,c_1,q_2)(q_2,w_2,c_2,q_3)\dots\] such that $q_0=q_\init$.
Such a  "run" is ""accepting@@runAut"" if $c_0c_1c_2\dots\in W$, and ""rejecting@@runAut"" otherwise.
\AP A word $w\in \SS^\oo$ is  ""accepted@@word"" by $\A$ if it admits an "accepting@@runAut" "run".
\AP The ""language recognised"" by an automaton $\A$ is the set 
\[ \intro*\Lang{\A}= \{ w\in \Sigma^\oo \mid w \text{ is "accepted@@word" by } \A \}.\]
\AP We say that two "automata" over the same input alphabet are ""equivalent@@aut"" if they "recognise" the same "language".

\AP We let the \emph{size of $\A$} be $\intro*\sizeAut{\A} = |Q| + |\SS| + |\DD| + |\GG|$. We note that this does not take into account the size of the representation of its "acceptance condition", which can admit different forms (see page~\pageref{par:representation}). When necessary, we will indicate the size of the representation of the "acceptance condition" separately.

\begin{remark}
	Most results in this paper concern the set of "accepting runs" of "automata", rather than their "languages". For instance we will try to modify the "acceptance condition" while preserving the set of "accepting runs". 
	However in the case of "deterministic" "complete" "automata", those two notions coincide: preserving the set of "accepting runs" is exactly the same as preserving the "language".
	Hence all results pertaining to the "languages recognised" by automata appearing in this paper will concern "deterministic automata".
%	Using the terminology from~\cite{CCFL24FromMtoP}, these results can be stated at the level of \emph{transition systems} (see~\cite[Section 2.1]{CCFL24FromMtoP} for a definition).
\end{remark}

\begin{remark}[Transition-based acceptance]
	We remark that the colours used to define the acceptance of runs appear \emph{over transitions}, instead of over states. This makes an important difference for many decisions problems on automata over infinite words such as the ones considered in this paper.
	For a discussion on the differences between transition-based and state-based automata, and arguments on why the first should be preferred, we refer to~\cite[Chapter~VI]{Casares23Thesis}.
\end{remark}

\begin{remark}[Multiple colours and transitions]
	We note that in the definition above, each transition is labelled with a single colour in $\GG$. It is sometimes useful to let transitions carry multiple colours -- for instance, this is the standard model in the HOA format~\cite{HOAFormat2015}.
	For many results of this paper (those from Section~\ref{sec:computation-ACD}), allowing or not multiple colours per edge does not make a difference; we could always take $\GG' = \pow{\GG}$ or $\GG'=\DD$ as new set of colours.
	However, multiple labels become significant for the problem of the minimisation of colours over a "Muller automaton", studied in Section~\ref{subsec:min-colours-automata}. We refer to that section for more details.
	
	Also, the HOA format allows for multiple transitions between the same two states with the same input letter. These transitions can always be replaced by one carrying multiple colours (we refer to~\cite[Prop. 18]{CCL22SizeGFG} for details).
\end{remark}

Some corollaries of our results will refer to "history-deterministic" automata, although this model will not play a central role in our work.
\AP An automaton $\A$ is ""history-deterministic"" if there is a function $\ss\colon \SS^+ \to \DD$ resolving its non-determinism in such a way that for every $w\in \Lang{\A}$, the "run" built by this function is an "accepting run".

\paragraph*{Acceptance conditions}%\label{subsec-prelim:classes-acc}

We now define the main classes of languages used by "automata" over infinite words as "acceptance conditions".
We let $\Gamma$ stand for a finite set of "colours".

\begin{description}\setlength\itemsep{2mm}
	\item[Muller.] \AP We define the ""Muller language"" of a family $\F\subseteq \powplus{\Gamma}$ of non-empty subsets of $\Gamma$ as:
	\[ \intro*\MullerC{\F}{\GG} = \{w\in \GG^\oo \mid \minf(w) \in \F\}. \]
	\AP We will often refer to sets in $\F$ as ""accepting sets@@Muller"" and sets not in $\F$ as ""rejecting sets@@Muller"".

	\item[Rabin.] \AP A Rabin condition is represented by a family $\R=\{(\greenPair_1,\redPair_1),\dots,(\greenPair_r,\redPair_r)\}$  of ""Rabin pairs"", where $\greenPair_j,\redPair_j\subseteq \Gamma$.
	\AP We define the ""Rabin language"" of a single Rabin pair $(\greenPair,\redPair)$ as 
	\[ \intro*\RabinC{(\greenPair,\redPair)}{\GG} = \{w\in \GG^\oo \mid \minf(w)\cap \greenPair \neq \emptyset \land \minf(w)\cap \redPair = \emptyset\}, \]
	and the Rabin language of a family of "Rabin pairs" $\R$ as:
	$ \reintro*\RabinC{\R}{\GG} = \bigcup_{j=1}^r \RabinC{(\greenPair_j,\redPair_j)}{\GG}$.
	
	\item[Streett.] \AP The ""Streett language"" of a family  $\R=\{(\greenPair_1,\redPair_1),\dots,(\greenPair_r,\redPair_r)\}$ of "Rabin pairs" is defined as the complement of its "Rabin language":
	\[ \intro*\StreettC{\R}{\GG} = \GG^\oo \setminus \RabinC{\R}{\GG}. \]

	\item[Parity.] \AP We define the ""parity language"" over a finite alphabet $\Prio \subseteq \NN$ %, $d_{\min}\leq d_{\max}$,
	as:
	\[ \intro*\parity_{\Prio} = \{ w\in \Prio^\oo \mid \min \minf(w) \text{ is even}\}.\]
	%\AP We say that a language $W\subseteq \GG^\oo$ is a ""parity language@@isa"" if there is a mapping $\phi\colon \GG \to \NN$ such that for all $w\in \GG^\oo$, $w\in L$ if and only if $\phi(w)\in \parity_{\phi(\GG)}$. 	
\end{description}

%A \emph{Muller language} (resp. \emph{Rabin/Streett language}) is a language that can be described as $\MullerC{\F}{\SS}$ (resp.~$\RabinC{\R}{\SS}$/$\StreettC{\R}{\SS}$).
\AP We say that an "automaton" is a ""$\C$ automaton"", for $\C$ one of the classes of languages above, if its "acceptance condition" uses a $\C$ language. 
We refer to the survey~\cite{Boker18WhyTypes} for a more detailed account on different types of "acceptance conditions".

\begin{remark}\label{rmk-p0-prelim:characterisation_Muller_Languages}
	"Muller languages" are exactly the languages characterised by the set of letters seen infinitely often. 
	They are also the "languages" "recognised" by "deterministic" "Muller" "automata" with one state.
\end{remark}

We observe that "parity languages" are special cases of "Rabin" and "Streett languages" which are in turn special cases of "Muller languages".

\begin{example}\label{ex-p0-prelim:multiple-automata-examples}
	In Figure~\ref{fig-prelim:multiple-aut-ex} we show different types of "automata" over the alphabet $\SS=\{a,b\}$ "recognising" the language of words that contain infinitely many $b$s and  eventually do not encounter the factor $abb$.%\qedEx
\end{example}
\begin{figure}[ht]
	\centering 
	\includegraphics[]{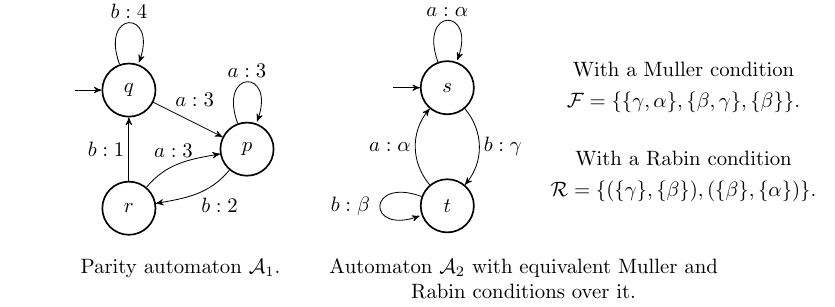}
	\caption{Different types of "automata" "recognising" the language $L = \SS^*b^\oo + \SS^* (a^+b)^\oo$ $=$\\ $\{ w \in \SS^\oo \mid w \text{ has infinitely many } b\text{s} \text{ and finitely many } abb \text{ factors} \}$. (Note that $\{\beta,\gamma\}$ cannot result as the set of outputs occurring infinitely often in a run of $\A_2$.)}
	\label{fig-prelim:multiple-aut-ex}
\end{figure}

The $8$ classes of "automata" obtained by combining the $4$ types of "acceptance conditions" above with "deterministic" and "non-deterministic" models are equally expressive~\cite{McNaughton1966Testing,Mostowski1984RegularEF}. 
\AP We call the class of languages that can be "recognised" by these automata ""$\oo$-regular languages"".

\paragraph*{Typeness}
Let $\A_1= (Q,q_\init,\Sigma, \DD, \GG_1, \colAut_1, W_1)$ be a "deterministic automaton", and let $\C$ be a class of languages (potentially containing languages over different alphabets).
\AP We say that $\A_1$ can be ""relabelled"" with a $\C$-"acceptance condition", or that $\A_1$ is ""$\C$-type"", if there is $W_2\subseteq \GG_2^\oo$, $W_2\in \C$, and a "colouring function" $\colAut_2\colon \DD\to \GG_2$ such that $\A_1$ is "equivalent@@aut" to $\A_2= (Q,q_\init,\Sigma, \DD, \GG_2, \colAut_2, W_2)$. 
In this case, we say that $(\col_1,W_1)$ and $(\col_2,W_2)$ are ""equivalent acceptance conditions over"" $\A_1$.

\begin{remark}
	In this work, we only consider typeness for "deterministic" automata. For "non-deterministic" models, typeness admits two non-equivalent definitions~\cite{KMM06Typeness}: (1) the acceptance status of each individual infinite path coincide for both "acceptance conditions", or (2) both automata "recognise" the same language.
\end{remark}

\begin{example}
	The "automaton" $\A_2$ from Figure~\ref{fig-prelim:multiple-aut-ex} is "Rabin type", as we have labelled it with a "Rabin" "acceptance condition" that is "equivalent over" $\A$ to the "Muller condition" given by $\F$ (in this case, both conditions use the same set of "colours" $\GG=\{\aa,\bb,\gg\}$).
	However, we note that $\RabinC{\R}{\GG}\neq \MullerC{\F}{\GG}$, as $\gg^\oo\in \RabinC{\R}{\GG}$, while $\gg^\oo\notin \MullerC{\F}{\GG}$.
	This is possible, as no infinite path in $\A_2$ is labelled by a word that differentiates both languages (such as $\gamma^\omega$).
\end{example}

\AP Given a "Muller automaton" $\A$, we use the expression ""deciding the typeness"" of $\A$ for the problem of answering if:
\begin{itemize}
	\item $\A$ is "Rabin type",
	\item $\A$ is "Streett type", and
	\item $\A$ is "parity type".
\end{itemize}

Formally, these are three different decision problems. 
\AP We say that we can ""decide the typeness of a class of Muller automata in polynomial time"" if the three of them can be decided in polynomial time.\footnote{Here, we could consider further classes of "acceptance conditions" such as B\"uchi, coB\"uchi, generalised B\"uchi, weak, etc... We refer to~\cite[Appendix~A]{CCFL24FromMtoP} for more details on these acceptance types. Our main result establishing "decidability in polynomial time of typeness" for "Muller automata" also holds for these acceptance conditions, as they are characterised by the "ACD-DAG".}

\paragraph*{Parity index}

Let $L\subseteq \SS^\oo$ be an "$\oo$-regular language". 
\AP The ""parity index"" of $L$ is the minimal number $k$ such that $L$ can be "recognised" by a "deterministic" "parity" "automaton" using $k$ "output colours".\footnote{This notion can be refined by taking into account whether the minimal colour needed is odd or even. We omit these details here for the sake of simplicit	y of the presentation, and refer to~\cite[Definition~2.14]{CCFL24FromMtoP} for formal definitions.}
Such number is well-defined, as any "Muller automaton" admits an equivalent "deterministic" "parity" "automaton"~\cite{Mostowski1984RegularEF}. 
Moreover, it does not depend on the particular "parity" "automaton" used to "recognise" $L$:

\begin{proposition}[\cite{NiwinskiWalukievicz1998Relating}]
	Let $\A$ be a "deterministic" "parity" "automaton" "recognising" a language $L\subseteq \SS^\oo$. If $L$ has "parity index" $k$, then $\A$ admits an "equivalent parity condition over" it using only $k$ "output colours".
\end{proposition}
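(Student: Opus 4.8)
The plan is to reduce to the classical result that the parity index of an $\omega$-regular language is well-defined by appealing to the Wagner hierarchy / the Niwiński–Walukiewicz characterisation, and then use the fact that the parity index can be \emph{read off} from the cycle structure of any deterministic parity automaton recognising $L$. Concretely, suppose $\A = (Q, q_\init, \SS, \DD, \Prio, \colAut, \parity_\Prio)$ is a deterministic parity automaton recognising $L$, and let $k$ be the parity index of $L$, witnessed by some deterministic parity automaton $\B$ with $k$ output colours. We want a recolouring $\colAut'\colon \DD \to \Prio'$ with $|\Prio'| = k$ such that $(\colAut', \parity_{\Prio'})$ is equivalent to $(\colAut, \parity_\Prio)$ over $\A$ — i.e. it preserves the acceptance status of every run of $\A$.

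First I would recall that, since $\A$ is deterministic, preserving the set of accepting runs is the same as preserving $L$ (cf.\ the remark in the preliminaries). The key structural input is the notion of a $d$-flower: a nested family of cycles in (the accessible part of) $\A$ on which the minimal priority alternates parity. A standard argument (the "flower lemma", essentially due to Niwiński–Walukiewicz, and also obtainable via the Zielonka tree / ACD of $\A$) shows that $\A$ contains a $(k)$-flower but no $(k{+}1)$-flower, precisely because the parity index of $L$ is $k$: a $(j)$-flower forces any equivalent parity automaton — in particular $\B$ — to use at least $j$ priorities, and conversely the absence of large flowers lets one relabel with few colours. So the real content is the converse direction: from the bound on flower size in $\A$, produce the recolouring.

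For that step I would run the Carton–Maceiras style minimisation of priorities (\cite{CartonMaceiras99RabinIndex}), which operates directly on a deterministic parity automaton and outputs an equivalent colouring using the minimum number of priorities, where that minimum is exactly one more than the largest $d$ such that $\A$ has a $d$-flower. Equivalently one can invoke the ACD of $\A$: since $\A$ is already parity, its ACD has "parity shape", and the ACD-parity-transform relabels $\A$ over the same state space with the optimal number of colours, which coincides with the length of the longest branch of the ACD, which in turn equals the parity index of $L$. Either route gives a colouring $\colAut'$ on $\DD$ with $|\Prio'| = k$ and $\parity_{\Prio'}$ equivalent to $\parity_\Prio$ over $\A$.

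The main obstacle is making precise the two-way correspondence "parity index of $L$ $=$ maximal flower size in $\A$": the easy inequality is that a $d$-flower in $\A$ can be pumped into words witnessing that no parity automaton with fewer than $d$ priorities recognises $L$, so $k \geq d_{\max}$; the harder inequality, $k \leq d_{\max}$, is exactly the correctness of the Carton–Maceiras relabelling (or of the ACD-parity-transform), namely that when $\A$ has no $(d_{\max}{+}1)$-flower one can consistently assign priorities in $[c, c + d_{\max}]$ to the transitions so that every cycle's minimal priority has the correct parity. I would cite this rather than reprove it, since it is precisely the classical input the proposition rests on; the proposition itself is then just the observation that this relabelling lives over the fixed structure of $\A$ and uses exactly $k$ colours.
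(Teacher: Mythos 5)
The paper does not prove this proposition; it is stated as a citation to Niwi\'nski--Walukiewicz, so there is no internal proof to compare against. Your sketch correctly reconstructs the standard argument behind the cited result: a $d$-flower in the (accessible part of the) deterministic automaton pumps into a lower bound on the parity index of $L$, and conversely the Carton--Maceiras relabelling (equivalently, the ACD of $\A$, which has a single branch per state since $\A$ is parity) recolours the transitions of $\A$ with exactly the maximal flower size many priorities. The only point glossed over --- by you and by the paper alike --- is the even/odd refinement of the index (whether the minimal priority of the witnessing flower is even or odd), which is needed to match the number of colours exactly rather than up to one; with that bookkeeping added, the argument is the standard one.
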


	As a matter of fact, the "parity index" of a language coincides with the minimal number of colours used by a "Muller automaton" recognising it~\cite[Proposition 6.14]{CCFL24FromMtoP}. However, in contrast with the previous proposition, in order to reduce the number of colours of a "Muller automaton" we may need to modify its structure.

\subsection{The Zielonka tree and the Zielonka DAG}\label{subsec:ZielonkaTree}

We now introduce two closely-related ways of representing "Muller conditions", the "Zielonka tree" and the "Zielonka DAG", which are obtained by recursively listing the maximal "accepting@@Muller" and "rejecting@@Muller" subsets of colours of a family $\F\subseteq \powplus{\GG}$.

\paragraph*{Trees and DAGs}

\AP We represent a ""tree"" as a pair $T=(N,\ancestor)$ with $N$ a non-empty finite set of nodes and $\intro*\ancestor$ the ""ancestor relation"" ($n\ancestor n'$ meaning that $n$ is above $n'$).
We assume the reader to be familiar with the usual vocabulary associated with trees. 
The set of leaves of $T$ is written $\intro*\leaves(T)$.
\AP An ""$A$-labelled tree"" is a "tree" $T$ together with a labelling function $\nu \colon N \to A$.

\AP A ""directed acyclic graph"" (DAG)~$(D,\ancestorDAG)$ is a non-empty finite set of nodes~$D$ equipped with an order relation~$\intro*\ancestorDAG$ called the \emph{ancestor relation} such that there is a minimal node for~$\ancestorDAG$, called the \emph{root}.
We apply to DAGs similar vocabulary than for "trees" (children, leaves, depth, subDAG rooted at a node, ...).
\AP An ""$A$-labelled DAG"" is a "DAG" together with a labelling function $\nu \colon D \to A$.

\paragraph*{The Zielonka tree}

\begin{definition}[{\cite{Zielonka1998infinite}}]\label{def-zt:zielonkaTree}
	\AP Let~$\F\subseteq\powplus{\GG}$ be a family of non-empty subsets of a finite set~$\GG$. The ""Zielonka tree"" for~$\F$ (over $\GG$),\footnotemark{} denoted $\intro*\zielonkaTree{\F} = (N, \ancestor, \intro*\nu :N \to \powplus{\GG})$ is a $\powplus{\GG}$-labelled "tree" with nodes partitioned into ""round nodes"" and ""square nodes"", $N= \intro*\roundnodes \disjUnion \intro*\squarenodes$, such that:
	\footnotetext{The definition of $\zielonkaTree{\F}$, as well as most subsequent definitions, do not only depend on $\F$ but also on the alphabet $\GG$. Although this dependence is important,  we do not explicitly include it in the notations in order to lighten them, as most of the times the alphabet will be clear from the context.}
	\begin{itemize}
		\item The root is labelled $\GG$.
		\item If a node is labelled~$X\subseteq \GG$, with~$X\in\F$, then it is a "round node", and it has a child for each maximal non-empty subset~$Y\subseteq X$ such that~$Y\not\in\F$, which is labelled $Y$.
		\item If a node is labelled~$X\subseteq \GG$, with~$X\not\in\F$, then it is a "square node", and it has a child for each maximal non-empty subset~$Y\subseteq X$ such that~$Y\in\F$, which is labelled $Y$. %\qedEx
	\end{itemize}

\end{definition}

We write $|\zielonkaTree{\F}|$ to denote the number of nodes in $\zielonkaTree{\F}$.

%\begin{remark}
%	If $n$ is a node of $\zielonkaTree{\F}$, then the "subtree of~$\zielonkaTree{\F}$ rooted at~$n$" is the "Zielonka tree" for the family $\restSubsets{\F}{\nu(n)}$ over the alphabet $\nu(n)$, that is, for the restriction of $\F$ to the subsets included in the label of $n$.
%\end{remark}

\begin{remark}\label{rmk-zt:union-changes-acceptance}
	Let $n$ be a node of $\zielonkaTree{\F}$ and let $n_1$ be a child of it. If $\nu(n_1) \subsetneq X \subseteq \nu(n)$, then $\nu(n_1)\in \F \iff X\notin \F \iff \nu(n)\notin \F$. In particular, if $n_1, n_2$ are two different children of $n$, then  
	$\nu(n_1)\in \F \iff \nu(n_2)\in \F \iff \nu(n_1)\cup \nu(n_2)\notin \F$.%\qedEx
\end{remark}

The next lemma provides a simple way to decide if a subset $C\subseteq \GG$ belongs to $\F$ given the "Zielonka tree". It follows directly from the previous remark.

\begin{lemma}\label{lemma-zt:accepting-set-in-ZT}
	Let $C\subseteq \GG$ and let $n$ be a node of $\zielonkaTree{\F}$ such that $C\subseteq \nu(n)$ and that is maximal for $\ancestor$ amongst nodes containing $C$ in its label. Then, $C\in \F$ if and only if $n$ is "round".%\qedEx
\end{lemma}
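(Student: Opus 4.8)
The plan is to prove the statement by induction on the depth of the node $n$ (equivalently, on the structure of $\zielonkaTree{\F}$), using Remark~\ref{rmk-zt:union-changes-acceptance} as the key engine. First I would establish the base case: if $n$ is the root, then $\nu(n) = \GG$ and the hypothesis that $n$ is maximal for $\ancestor$ among nodes containing $C$ just says $C \subseteq \GG$, which is automatic. Here I would observe that the root is "round" precisely when $\GG \in \F$, so I must argue that under the maximality hypothesis (no child of the root contains $C$), we have $C \in \F \iff \GG \in \F$. The point is that if $C \notin \F$ but $\GG \in \F$ (root round), then $C$ is a non-empty subset of $\GG$ not in $\F$, hence $C$ is contained in some maximal such subset $Y$, which labels a child of the root --- contradicting maximality. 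Symmetrically if $C \in \F$ but $\GG \notin \F$. So in the base case the claim holds.

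For the inductive step, suppose $n$ is not maximal, i.e. some child $n_1$ of $n$ has $C \subseteq \nu(n_1)$. Then the set of nodes below the root containing $C$ in their label, restricted to the subtree rooted at $n_1$, still has $n$ maximal-minus-one... — more precisely, I would instead phrase the induction the other way: take $n$ to be the \emph{deepest} node with $C \subseteq \nu(n)$ (this is well-defined and unique, since the children of any node have pairwise "incomparable" labels by construction, so $C$ cannot be contained in two of them; this uniqueness deserves an explicit sentence). Then no child $n'$ of $n$ satisfies $C \subseteq \nu(n')$. Now I apply the same argument as in the base case localized at $n$: if $n$ is "round" then $\nu(n) \in \F$, and I claim $C \in \F$; indeed if $C \notin \F$, then since $C$ is a non-empty subset of $\nu(n)$ not in $\F$, by the definition of the children of a "round node" it is contained in some maximal non-$\F$ subset $Y \subseteq \nu(n)$, and $Y$ labels a child of $n$ containing $C$ --- contradiction. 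The case $n$ "square" (so $\nu(n) \notin \F$) is symmetric, using that children of a "square node" are labelled by the maximal $\F$-subsets. This gives both directions of the biconditional.

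The only subtlety — and the step I expect to need the most care — is making sure the node $n$ in the statement is exactly this deepest/maximal node and that it is unique, so that "the" node is well-defined and the argument applies verbatim; this uses that a set $C$ cannot be contained in the labels of two distinct children of a common node, which follows because those labels are maximal subsets of the parent's label with a fixed membership status in $\F$, hence pairwise incomparable, and also that no child of $n$ can contain $C$ once $n$ is maximal. Everything else is a direct unwinding of Definition~\ref{def-zt:zielonkaTree} together with Remark~\ref{rmk-zt:union-changes-acceptance}, so no real computation is involved.
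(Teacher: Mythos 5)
Your core argument is correct and is exactly the intended one: the paper itself offers no more than ``it follows directly from the previous remark'', and the right way to unwind that is what you do --- since $n$ is maximal (for $\ancestor$) among nodes whose label contains $C$, no child of $n$ contains $C$ in its label, and the definition of the children of a round (resp.\ square) node then forces $C\in\F$ (resp.\ $C\notin\F$), since otherwise $C$ would sit inside some maximal subset of $\nu(n)$ of the opposite acceptance status, i.e.\ inside a child's label. There is, however, one genuine error in a side step: your claimed uniqueness of the deepest node containing $C$, and in particular the inference ``the children's labels are pairwise incomparable, so $C$ cannot be contained in two of them'', is false. Incomparability does not prevent $C$ from lying in the intersection of two children's labels: take $\GG=\{a,b,c\}$, $\F=\{\GG\}$ and $C=\{a\}$; the root has the three children labelled $\{a,b\}$, $\{a,c\}$, $\{b,c\}$, and $C$ is contained in two of them. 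Fortunately the lemma never needs uniqueness --- it asserts the equivalence for \emph{every} node that is maximal among those containing $C$, and your local argument applies verbatim to each such node (that all of them then have the same shape is a consequence of the lemma, not a prerequisite). Simply delete the uniqueness discussion. Note also that, as in the paper's usage, the statement implicitly assumes $C\neq\emptyset$; your argument already relies on this when it invokes ``non-empty subset''.
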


%\AP For a "leaf" $l \in \leaves(\zielonkaTree{\F})$ and a letter $a\in \SS$ we define $\intro*\supp(l,a) = n$ to be the "deepest" "ancestor" of $l$ (maximal for $\ancestor$) such that $a\in \nu(n)$.

\begin{example}\label{example-ZT:zielonka-tree}
	%We will use the "Muller language associated to" the following family of subsets as a running example throughout the chapter.
	Let $\F$ be the "Muller condition" used by the automaton from Example~\ref{ex-p0-prelim:multiple-automata-examples}:
	$\F=\{ \{\gg,\aa\}, \{\gg,\bb\}, \{\bb\}\}$, over the alphabet $\set{\aa, \bb, \gg}$.
	In Figure~\ref{fig-zt:zielonkaTree-example} we show the "Zielonka tree" of $\F$. 
	%The "subtree rooted at" $n_1$ contains the nodes $\{n_1,n_2,n_3\}$. It is the "Zielonka tree" of $\restSubsets{\F}{\{\aa,\gg\}} = \{\{\aa,\gg\}\}$ (over the alphabet $\{\aa,\gg\}$).%\qedEx
\end{example}
\begin{figure}[ht]
	\centering 
	\includegraphics[scale=1]{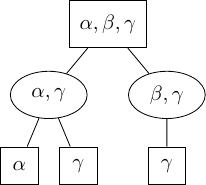}
	\caption{"Zielonka tree" $\zielonkaTree{\F}$ for 
		$\F=\{ \{\gg,\aa\}, \{\gg,\bb\}, \{\bb\}\}$.}
	\label{fig-zt:zielonkaTree-example}
\end{figure}

One important application of the "Zielonka tree" is that it provides minimal "parity" automata "recognising" "Muller languages".

\begin{proposition}[{\cite[Theorem~4.13]{CCFL24FromMtoP}}]\label{prop-prelim:minimal-det-parity}
	Let $L=\MullerC{\F}{\SS}$ be a "Muller language". There is a "deterministic" "parity" automaton of size $|\leaves(\zielonkaTree{\F})|$ "recognising" $L$. Moreover, such an automaton is minimal both amongst "deterministic" and "history-deterministic" "parity" automata "recognising" $L$.
\end{proposition}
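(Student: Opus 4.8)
The statement to prove is: for a Muller language $L = \MullerC{\F}{\SS}$, there is a deterministic parity automaton of size $|\leaves(\zielonkaTree{\F})|$ recognising $L$, and it is minimal amongst both deterministic and history-deterministic parity automata for $L$. The plan has two halves: an upper-bound construction and a matching lower bound.

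\textbf{The construction (upper bound).} Since this is the ``Zielonka-tree automaton'' of the literature, I would define a deterministic automaton $\zielonkaAutomaton{\F}$ over the alphabet $\GG$ whose states are the leaves of $\zielonkaTree{\F}$. Reading a colour $c \in \GG$ from a leaf $\ell$, one walks up the branch from $\ell$ to the highest node $n$ whose label still contains $c$ (this node exists since the root is labelled $\GG$), then descends from $n$ into the child subtree that is ``next'' in the cyclic order of children of $n$ relative to the branch we came from, and finally follows the leftmost (or any canonical) path down to a leaf $\ell'$; that leaf is the new state. The output priority is determined by the depth of the node $n$ at which the branch ``jumped'': by Remark~\ref{rmk-zt:union-changes-acceptance} and Lemma~\ref{lemma-zt:accepting-set-in-ZT}, whether $n$ is round or square is governed by its depth parity, so one assigns to the transition a priority equal to (or derived from) $\depth(n)$, with the parity convention chosen so that round-node jumps are even and square-node jumps are odd. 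The correctness argument: for an infinite run over $w$, let $X = \minf(w)$, and let $n^*$ be the deepest node of $\zielonkaTree{\F}$ such that $X \subseteq \nu(n^*)$ and the run eventually stays within the subtree rooted at $n^*$; one shows that the minimal priority produced infinitely often corresponds exactly to $\depth(n^*)$, and $n^*$ is round iff $X \in \F$ by Lemma~\ref{lemma-zt:accepting-set-in-ZT}. This gives $w \in L \iff$ the run is accepting. Composing with a deterministic automaton for $\SS$ is unnecessary here since the language is a pure Muller condition over $\GG = \SS$; one just needs $\colAut = \mathrm{id}$. The size is $|\leaves(\zielonkaTree{\F})|$ by construction.

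\textbf{Minimality (lower bound).} Here I would invoke (or reprove) the flower/fooling argument: any deterministic — indeed history-deterministic — parity automaton $\B$ recognising $L$ must contain, in a single strongly connected component reachable by a suitable word, a ``$(d+1)$-flower'' structure witnessing nested alternating cycles whose acceptance pattern mirrors a root-to-leaf branch of $\zielonkaTree{\F}$; more precisely, for each branch of the Zielonka tree one exhibits a word whose run in $\B$ must visit a distinct recurrent behaviour, and a counting argument (a variant of the Myhill–Nerode-style lower bound for parity automata, or the explicit argument in~\cite{CCFL24FromMtoP}) forces the number of states to be at least the number of leaves. For the history-deterministic case one additionally uses that a resolver cannot do better than a deterministic choice on the relevant fooling words, because the distinguishing words can be played adaptively by an adversary. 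I would cite the cited reference~\cite[Theorem~4.13]{CCFL24FromMtoP} for the precise statement and reproduce only the skeleton: (i) each leaf gives rise to a required ``memory'' distinction, (ii) these distinctions are pairwise incompatible in any parity automaton of smaller size, (iii) history-determinism does not help.

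\textbf{Main obstacle.} The routine part is the construction and its correctness; the genuinely delicate part is the history-deterministic lower bound — showing that allowing a resolver does not permit fewer states than $|\leaves(\zielonkaTree{\F})|$. This requires a careful adversarial argument: one must design a strategy for the ``word player'' in the letter game that, against any resolver, forces the automaton through as many essentially-distinct configurations as there are leaves, exploiting the tree structure (the alternation between round and square nodes corresponds to alternation between the two players' ability to force a min-priority). I would structure this as a game-theoretic lower bound, possibly factoring through the known result that the Zielonka tree captures the memory requirements of Muller games, and then transferring that to automaton size via the standard correspondence between HD automata for $L$ and memory in the $L$-game. If a self-contained proof is desired rather than a citation, this transfer and the adversary construction are where the real work lies.
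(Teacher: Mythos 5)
The paper does not prove this statement at all: it is imported verbatim from \cite[Theorem~4.13]{CCFL24FromMtoP}, so there is no in-paper argument to compare against. Your sketch of the upper bound is the standard Zielonka-tree parity automaton and is correct in outline, with one caveat: the node $n$ you jump to must be the \emph{deepest} ancestor of the current leaf whose label contains the colour read (the first one met when walking up), not the ``highest'' one — since labels grow along the path to the root, every ancestor of that node also contains $c$, and the root always does. With that reading, the round-robin over the children of $n$ and the priority-by-depth assignment give exactly the construction of \cite{DJW1997memory,CCFL24FromMtoP}, and your correctness argument (the minimal priority seen infinitely often is that of the deepest node containing $\minf(w)$, which is round iff $\minf(w)\in\F$ by Lemma~\ref{lemma-zt:accepting-set-in-ZT}) is the right one.

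The genuine gap is in the route you propose for the lower bound. Factoring through ``the Zielonka tree captures the memory requirements of Muller games'' and then transferring memory to HD-automaton size cannot yield the bound $|\leaves(\zielonkaTree{\F})|$: the memory requirement of a Muller game is the \emph{round-branching width} of $\zielonkaTree{\F}$ (max over children at square nodes, sum at round nodes), and it is this quantity — not the number of leaves — that equals the size of a minimal \emph{history-deterministic Rabin} automaton for $L$ (this is precisely the result of \cite{CCL22SizeGFG} that the paper points to right after the proposition). The round-branching width is in general exponentially smaller than $|\leaves(\zielonkaTree{\F})|$, so the transfer argument would prove a strictly weaker statement and could not establish minimality of the parity automaton. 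The actual lower bound for (history-)deterministic \emph{parity} automata in \cite{CCFL24FromMtoP} is a direct adversarial argument in the letter game that exploits the totally ordered (chain) structure of parity acceptance to force one distinct state per leaf; it does not go through game memory. Since you ultimately defer to the citation, the overall plan stands, but the self-contained route you outline for the hard half would fail.
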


The "Zielonka tree" can also be used to obtain minimal "history-deterministic" "Rabin" automata. The formal statement of this result can be found in~\cite[Proposition~11]{CCL22SizeGFG}.

We can use "Zielonka trees" to represent the "Muller" "acceptance condition" of an automaton. Not all labelled trees correspond to "Zielonka trees" arising from "Muller languages", nonetheless, this does not pose a problem since we can efficiently check whether a tree  corresponds to some "Zielonka tree", and in this case, it defines a unique "Muller language".

\begin{lemma}[{Implied by~\cite[Lemma~1]{Hugenroth23ZielonkaDAG}}]\label{lemma-checkZielonkaTree}
	Given a $\powplus{\Gamma}$-labelled tree $T$ with a partition into round and square nodes, we can decide in polynomial time whether there is a family $\F \subseteq \powplus{\Gamma}$ such that $T = \zielonkaTree{\F}$.
	In the affirmative case, this family is uniquely determined.
\end{lemma}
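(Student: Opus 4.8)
The plan is to establish uniqueness first and then turn the recursive definition of the Zielonka tree (Definition~\ref{def-zt:zielonkaTree}) into a polynomial recognition algorithm.

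For uniqueness, note that if $T = \zielonkaTree{\F}$ then Lemma~\ref{lemma-zt:accepting-set-in-ZT} recovers $\F$ from $T$: a nonempty $C \subseteq \Gamma$ belongs to $\F$ if and only if some node that is maximal for $\ancestor$ among those whose label contains $C$ is round. Hence at most one family $\F$ can satisfy $T = \zielonkaTree{\F}$, and this recipe moreover provides a polynomial-time membership oracle for the candidate family. (One must keep in mind that several nodes may be maximal for $\ancestor$ among those containing a given $C$; when $T$ genuinely is a Zielonka tree these all carry the same type, which is precisely what Lemma~\ref{lemma-zt:accepting-set-in-ZT} asserts, so the recovered family is well defined.)

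For the decision procedure, I would process $T$ bottom-up, maintaining for every node $n$ --- with label $X$ and children $n_1,\dots,n_k$ carrying labels $X_1,\dots,X_k$ --- the invariant that the subtree $T_n$ rooted at $n$ has been certified to equal $\zielonkaTree{\F_n}$ for a unique family $\F_n \subseteq \powplus{X}$. At $n$ I would first run the \emph{local shape checks}: each $X_i$ is a nonempty proper subset of $X$, the $X_i$ are pairwise $\subseteq$-incomparable, each $n_i$ carries the round/square type opposite to $n$, and, at the root, $X = \Gamma$; all these conditions are necessary by Definition~\ref{def-zt:zielonkaTree}. The structural claim to prove is then: \emph{if the local shape checks pass and each $T_{n_i} = \zielonkaTree{\F_{n_i}}$, then $T_n$ is a Zielonka tree if and only if $\restSubsets{\F_{n_i}}{X_i \cap X_j} = \restSubsets{\F_{n_j}}{X_i \cap X_j}$ for all $i,j$.} For the forward implication, recall that the subtree rooted at a node labelled $Y$ in $\zielonkaTree{\F_n}$ is exactly $\zielonkaTree{\restSubsets{\F_n}{Y}}$; applying this to $Y = X_i$ and using uniqueness gives $\restSubsets{\F_n}{X_i} = \F_{n_i}$, which forces the overlap agreement and also pins down $\F_n$ completely, since any nonempty $Y \subsetneq X$ not contained in any $X_i$ must have the same type as $X$ (otherwise it would be contained in a maximal opposite-type subset, i.e.\ in some $X_i$). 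For the converse, I would \emph{define} $\F_n$ by these forced rules --- this is well defined thanks to the overlap agreement --- and check directly from Definition~\ref{def-zt:zielonkaTree} that $\zielonkaTree{\F_n} = T_n$: $\subseteq$-incomparability makes each $X_i$ maximal among opposite-type subsets, the ``same type off the $X_i$'' rule ensures no maximal opposite-type subset is missed, and $\restSubsets{\F_n}{X_i} = \F_{n_i}$ identifies the subtrees. Unwinding the recursion, $T$ is a Zielonka tree iff all local shape checks and all overlap-agreement checks succeed, in which case $\F$ is the family attached to the root.

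It remains to perform each overlap-agreement test $\restSubsets{\F_{n_i}}{Z} = \restSubsets{\F_{n_j}}{Z}$, with $Z = X_i \cap X_j$, in polynomial time; a brute-force scan over the $2^{|Z|}$ subsets of $Z$ is out of the question. Using injectivity of $\F \mapsto \zielonkaTree{\F}$ (the uniqueness established above), this reduces to comparing the labelled trees $\zielonkaTree{\restSubsets{\F_{n_i}}{Z}}$ and $\zielonkaTree{\restSubsets{\F_{n_j}}{Z}}$, and I would compute the former from $T_{n_i}$ by the natural restriction operation on Zielonka trees: intersect every label with $Z$ and, whenever a node's intersected label coincides with that of its parent, contract it, relocating the description into the appropriate child. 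The main obstacle I foresee is the quantitative claim that this restriction never increases the number of nodes, so that $\zielonkaTree{\restSubsets{\F_{n_i}}{Z}}$ has at most $|T_{n_i}| \le |T|$ nodes and is computable in time polynomial in $|T|$ and $|\Gamma|$; granting this, the whole procedure is polynomial, since $T$ has $|N|$ nodes, at most $|N|^2$ sibling pairs in total, and each test is polynomial. (Alternatively the statement can be deduced from \cite[Lemma~1]{Hugenroth23ZielonkaDAG}, as indicated; the above is the self-contained route, whose technical heart is exactly the ``restriction does not blow up the Zielonka tree'' lemma.)
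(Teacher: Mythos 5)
Your uniqueness argument and your structural characterisation (local shape checks plus agreement of the children's families on pairwise intersections) are sound, but the proof is incomplete exactly where you say the difficulty lies: the polynomial-time implementation of the overlap test $\restSubsets{\F_{n_i}}{Z} = \restSubsets{\F_{n_j}}{Z}$. You need two facts there, and you prove neither. First, the size bound: that $\zielonkaTree{\restSubsets{\F}{Z}}$ has at most as many nodes as $\zielonkaTree{\F}$. This is true (it is essentially the $|Q|=1$ case of the embedding argument of Lemmas~\ref{lemma-comp:many-descendants} and~\ref{lemma-comp:size-treeVertex-noMore-ZT}, viewing $\zielonkaTree{\restSubsets{\F}{Z}}$ as the ACD of a one-state automaton with self-loops only on the letters of $Z$), but you explicitly grant it rather than prove it. Second, and more seriously, the ``intersect every label with $Z$ and contract'' procedure is not correct as described: after intersecting, the acceptance status of a node's label can flip (e.g.\ for $\F=\{\{a,b,c\},\{a\},\{b\}\}$ and $Z=\{a,b\}$ the root's label becomes $\{a,b\}\notin\F$ while the root is round), siblings can become equal or comparable, and which child one should ``relocate into'' is not determined by the tree alone. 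Computing $\zielonkaTree{\restSubsets{\F}{Z}}$ from $\zielonkaTree{\F}$ is possible in polynomial time, but it requires a top-down recomputation in the style of $\computeChildrenACD$ together with the size bound, i.e.\ a genuine argument that is missing here.

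For comparison, the paper avoids the recursion and the restriction machinery entirely. It reads off a candidate family $\F_+$ (sets contained in the label of some round node but in none of its children) and a dual family $\F_-$ from the square nodes, and observes that $T=\zielonkaTree{\F_+}$ fails precisely when $\F_+\cap\F_-\neq\emptyset$; crucially, any witness $C$ of such a conflict, accepted at a round node $n_r$ and rejected at a square node $n_s$, can be replaced by $\nu(n_r)\cap\nu(n_s)$, which is still a witness. Hence only the quadratically many pairwise intersections of a round and a square label need to be tested, each in polynomial time. If you want to keep your bottom-up architecture, you could import the same localisation idea to decide the overlap agreement (a disagreement on $Z$ can be witnessed at the intersection of a round label of $T_{n_i}$ with a square label of $T_{n_j}$, intersected with $Z$), which would let you dispense with computing restricted Zielonka trees altogether.
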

\begin{proof}
	First, it is a necessary condition that children of round nodes of $T$ are square nodes and vice-versa.
	We say that a subset $C\subseteq \Gamma$ is accepted by a round node $n_r$ of $T$ if $C\subseteq \nu(n_r) \tand C\nsubseteq \nu(n_r') \text{ for any children } n_r' \text{ of } n_r$.
	We define symmetrically to be rejected by a square node of $T$.
	Let 
	\begin{align*}
		\F_+  = & \{ C\subseteq \Gamma \mid C \text{ is accepted by some round node of } T\};\\
		\F_-  = & \{ D\subseteq \Gamma \mid D \text{ is rejected by some square node of } T\}.
	\end{align*}
	
	%\[ \F_+ = \{ C\subseteq \Gamma \mid C \text{ is accepted by some round node of } T\}; \F_- = \{ D\subseteq \Gamma \mid D \text{ is rejected by some square node of } T\}.\]
	If $T$ is the "Zielonka tree" of a family of subsets, this family must be $\F_+$. 
	If it is not the case that $T = \zielonkaTree{F_+}$, it must be because there is $C\in \F_+\cap \F_-$. Let $n_r$ and $n_s$ be a round and a square node accepting and rejecting $C$, respectively. Let $C' = \nu(n_r)\cap \nu(n_s)$. This set also has the property that is accepted according to $n_r$, and rejecting according to $n_s$.
	Therefore, $T = \zielonkaTree{\F}$ if and only if for all pairs of a square node and a round node the intersection of their labels is not both in $\F_+$ and in $\F_-$. This can be done in polynomial time.	
\end{proof}

\paragraph*{The Zielonka DAG}

\AP The ""Zielonka DAG"" of a family $\F\subseteq \powplus{\GG}$ is the "labelled@@DAG" "directed acyclic graph" obtained by merging the nodes of $\zielonkaTree{\F} = (N,\ancestor, \nu)$ that share a common label.
Formally, it is the "labelled DAG" $\intro*\zielonkaDAG{\F} = (N',\ancestorDAG', \nu')$ where $N' = \{C\subseteq \GG \mid \exists n \text{ node of the "Zielonka tree" such that }\ab C=\nu(n)\}$, $\nu'$ is the identity and the relation $\ancestorDAG'$ is inherited from the "ancestor relation" of the "tree": $C\ancestorDAG' D$ if there are $n_C,n_D$ nodes of the "Zielonka tree" such that $\nu(n_C)=C$, $\nu(n_D)=D$ and $n_C\ancestor n_D$.
In particular, $C\ancestorDAG' D$ implies $D\subseteq C$ (but the converse does not hold in general).

We remark that $\zielonkaDAG{\F}$ inherits the partition of nodes into "round@@nodes" and "square@@nodes" ones. 
Moreover, children of a "round node" of the "Zielonka DAG" are "square nodes" and vice-versa. 
We also note that Remark~\ref{rmk-zt:union-changes-acceptance} and Lemma~\ref{lemma-zt:accepting-set-in-ZT} hold similarly replacing $\zielonkaTree{\F}$ by $\zielonkaDAG{\F}$ in their statement.

\begin{example}
	The "Zielonka DAG" of the condition $\F=\{ \{\gg,\aa\}, \{\gg,\bb\}, \{\bb\}\}$ is obtained by merging the nodes labelled $\{\gg\}$ in the tree from Figure~\ref{fig-zt:zielonkaTree-example}.
	Another example can be found in Figure~\ref{fig-size:ZT-vs-ZDAG} (page~\pageref{fig-size:ZT-vs-ZDAG}).
\end{example}

\paragraph*{Representation of acceptance conditions}\label{par:representation}
\AP There is a wide variety of ways to ""represent@@Muller"" a "Muller language" $W = \MullerC{\F}{\GG}$, and the complexity and practicality of algorithms manipulating "Muller" "automata" may greatly differ depending on which of these representations is used~\cite{Horn2008Explicit, Dawar2005ComplexityBounds}.
\AP A ""colour-explicit representation@@Muller"" is given simply as a list of the subsets appearing in $\F\subseteq \powplus{\GG}$.
In this section we have defined two further representations for "Muller languages": the "Zielonka tree" and the "Zielonka DAG". A thorough study of automata with "acceptance condition" given as a "Zielonka DAG" was conducted by Hugenroth~\cite{Hugenroth23ZielonkaDAG}. Our results (of orthogonal nature) reinforce his thesis that the "Zielonka DAG" is a well-suited way of representing "Muller" "acceptance conditions" providing a good balance between succinctness and algorithmic properties.

In Figure~\ref{fig-prel:representationMuller} we provide a summary of the relationship between these three "representations". These will be proved and studied in further detail in Section~\ref{sec:size}.
We highlight that the "Zielonka DAG"  can be built in polynomial time from both the "Zielonka tree" and from a "colour-explicit@@Muller" representation of a "Muller condition"~\cite[Theorem~3.17]{HD08ComplexityMuller}, being the most succinct representation of the three.
The exponential-size separation between the "Zielonka tree" and "colour-explicit" representations, as well as explicit examples showing the gap between "Zielonka trees" and "DAGs" are original contributions. We provide full proofs and further discussions in Section~\ref{sec:size}.

\begin{figure}[ht]
	\centering
	\includegraphics[]{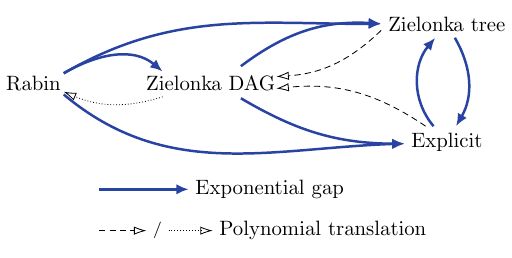}
	\caption{Comparison between the different "representations@@Muller" of "Muller" "conditions@@acc". A blue bold arrow from $X$ to $Y$ means that converting an $X$-representation into the form $Y$ cannot be done in polynomial time. A dashed arrow from $X$ to $Y$ means that a conversion can be computed in polynomial time. The dotted arrow indicates that the polynomial translation can only be applied on a fragment of $X$, as it is more expressive than $Y$.}
	\label{fig-prel:representationMuller}
\end{figure}

\AP In practical applications it is sometimes useful to have a more succinct representation, and a common choice are ""Emerson-Lei conditions"", which describe a family $\F$ as a positive Boolean formula over the primitives $\infEL(c)$ and $\finEL(c)$, for $c\in \GG$.
We do not consider Emerson-Lei representations in this work, as they inherit the complexity analysis of Boolean formulas; for instance, the problem of emptiness of Emerson-Lei automata (even with a single state) is essentially the SAT problem, and thus \NP-complete.

\subsection{The Alternating Cycle Decomposition}\label{subsec:ACD}
We now present the "Alternating Cycle Decomposition" and its DAG-version, following~\cite{CCFL24FromMtoP}. We also justify their interest by listing some key properties, mainly, optimal transformations of automata (Proposition~\ref{prop-prel:optimal-transform-ACD}) and characterisations of the "typeness" and the "parity index" of automata (Propositions~\ref{prop-prel:typeness-ACD} and~\ref{prop-prel:parity-index-ACD}).
We start with some definitions about "cycles" of automata.

\paragraph*{Cycles}
Let $\A$ be an "automaton" with $Q$ and $\DD$ as set of states and transitions, respectively. A ""cycle"" of $\A$ is a subset $\ell \subseteq \DD$ such that there is a finite path $q_0 \re {a_0}q_1 \re {a_1}q_2 \re{a_2} \dots q_r \re {a_r} q_0$ with $e_i=(q_i,a_i,q_{i+1})\in\DD$ and $\ell = \{e_0, e_1, \dots, e_r\}$.
%$\rr=e_0e_1\dots e_r$ starting and finishing in a same state $q$ with $\ell = \{e_0, e_1, \dots, e_r\}$.
Note that we do not require this path to be simple, that is, edges and vertices may appear multiple times. 
The set of states of the cycle $\ell$ is $\intro*\states{\ell} = \{q_0, q_1, \dots q_r\}$.
%\AP We say that two "cycles" $\ell_1,\ell_2$ have some ""state in common"" if $\states{\ell_1}\cap \states{\ell_2}\neq \emptyset$.
\AP The set of "cycles" of an "automaton" $\A$ is written $\intro*\cycles{\A}$. 
We will consider the set of "cycles" ordered by inclusion.
\AP For a state $q\in Q$, we note $\intro*\cyclesState{\A}{q}$ the subset of "cycles" of $Q$ containing $q$.
Note that $\cyclesState{\A}{q}$ is closed under union; moreover, the union of two "cycles" $\ell_1, \ell_2\in \cycles{\A}$ is again a "cycle" if and only if they have some state in common.
\AP A state is called ""recurrent"" if it belongs to some "cycle" and ""transient"" if it does not.
\AP If we see $\A$ as a graph, its "cycles" are the strongly connected subgraphs of that graph, and the maximal "cycles" are its strongly connected components (SCCs).

Let $\A$ be a "Muller" "automaton" with "acceptance condition" $(\colAut, \MullerC{\F}{\GG})$.
\AP Given a "cycle" $\ell\in \cycles{\A}$, we say that $\ell$ is ""accepting@@cycle"" (resp. ""rejecting@@cycle"") if $\colAut(\ell) \in \F$ (resp.~$\colAut(\ell) \notin \F$). %, and we call this the ""accepting status"" of $\ell$.
%We remark that the maximal "cycles" of an "automaton" are exactly the sets of edges of its strongly connected components. 
%In particular, we can apply the adjectives ""accepting@@SCC"" and ""rejecting@@SCC"" similarly to the "SCCs" of a "Muller" "automaton".

%We note that, by definition, the "acceptance@@runAut" of a "run@@aut" in a "Muller" "automaton" only depends on the set of transitions taken infinitely often. For any infinite "run@@aut" $\rr$, the set of transitions taken infinitely often forms a "cycle", $\minf(\rr) = \ell_\rr\in \cycles{\A}$, and $\rr$ is an "accepting run@@aut" if and only if $\ell_\rr$ is an "accepting cycle". 

\paragraph*{Tree of alternating subcycles and the Alternating Cycle Decomposition}

\begin{definition}\label{def:tree_alternating_cycles}
	Let $\ell_0 \in \cycles{\A}$ be a "cycle". 
	\AP We define the ""tree of alternating subcycles"" of~$\ell_0$, denoted $\intro*\altTree{\ell_0} = (N, \ancestor, \intro*\nuAcd\colon N \to \cycles{\A})$ as a  $\cycles{\A}$-"labelled tree" with nodes partitioned into ""round nodes@@acd"" and ""square nodes@@acd"", $N= \roundnodes \sqcup \squarenodes$, such that:
	\begin{itemize}
		\item The root is labelled $\ell_0$.
		\item If a node is labelled $\ell\in \cycles{\A}$, and $\ell$ is an "accepting cycle" ($\colAut(\ell)\in \F$), then it is a "round node@@acd", and its children are labelled exactly with the maximal subcycles $\ell' \subseteq \ell$ such that $\ell'$ is "rejecting@@cycle" ($\colAut(\ell') \notin \F$).
		\item If a node is labelled $\ell\in \cycles{\A}$, and $\ell$ is a "rejecting cycle" ($\colAut(\ell)\notin \F$), then it is a "square node@@acd", and its children are labelled exactly with the maximal subcycles $\ell' \subseteq \ell$ such that $\ell'$ is "accepting@@cycle" ($\colAut(\ell') \in \F$).%\qedEx
	\end{itemize}
\end{definition}

%
%\AP For a $\cycles{\A}$-"labelled tree" $T = (N, \ancestor, \nuAcd\colon N \to \cycles{\A})$ and $n\in N$, we let $\intro*\nuStates(n) = \states{\nuAcd(n)}$ be the "set of states@@cycle" of the "cycle" labelling $n$. 

%\begin{remark}\label{rmk-acd:union-changes-acceptance}
%	Let $n$ be a node of $\altTree{\ell_0}$ and let $n_1$ be a "child" of it. If $\ell'$ is a "cycle" such that $\nuAcd(n_1) \subsetneq \ell' \subseteq \nu(n)$, then $\nu(n_1) \text{ is "accepting@@cycle" } \iff \ell' \text{ is "rejecting@@cycle" } \iff \nu(n) \text{ is "rejecting@@cycle"}$.%\qedEx 
%	%In particular, if $n_1, n_2$ are two different "children" of $n$, then $\nu(n_1)\in \F \iff \nu(n_2)\in \F \iff \nu(n_1)\cup \nu(n_2)\notin \F$.%\qedEx
%\end{remark}

\begin{definition}[Alternating cycle decomposition]\label{def:acd}
	\AP Let $\A$ be a "Muller" "automaton", and let $\ell_1, \ell_2, \dots, \ell_k$ be an enumeration of its maximal "cycles". We define the ""alternating cycle decomposition"" of $\A$ to be  the forest $\intro*\acd{\A} = \{\altTree{\ell_1},\dots, \altTree{\ell_k}\}$.
	%We let $\intro*\nodesAcdCycle{\ell_i}$ be the set of nodes of $\altTree{\ell_i}$, and $n_{\ell_i}$ its "root". We will suppose that $\nodesAcdCycle{\ell_i} \cap \nodesAcdCycle{\ell_j} = \emptyset$ if $i\neq j$.%\qedEx
\end{definition}

%\AP We define the ""set of nodes of $\acd{\A}$"" to be $\intro*\nodesAcd{\A} = \bigcup_{i = 1}^k N_{\ell_i}$. %, and we let $\intro*\nodesAcdRound{\A}$ (resp.~$\intro*\nodesAcdSquare{\A}$) be the subset of "round@@acd" (resp. "square@@acd") nodes. 

\begin{remark}\label{rmk-acd:ZT-as-ACD}
	The "Zielonka tree" can be seen as the special case of the "alternating cycle decomposition" for "automata" with a single state. Indeed, a "Muller language" $\MullerC{\F}{\SS}$ can be trivially "recognised" by a "deterministic" "Muller" "automaton" $\A$ with a single state $q$ and self loops $q\re{a:a}q$. The "ACD" of this "automaton" is exactly the "Zielonka tree" of $\F$.
\end{remark}

\begin{example}{}\label{ex-acd:example-acd}
	We show the "alternating cycle decomposition" of the "automata" $\A_1$ and $\A_2$ from Figure~\ref{fig-prelim:multiple-aut-ex} in Figure~\ref{fig-acd:acd}. 
	As these "automata" are deterministic, we can represent their transitions as pairs $(q,a)\in Q\times\SS$.
	Since both of them are strongly connected, each "ACD" consists in a single "tree", whose root is the whole set of transitions.	
	The bold coloured subtrees correspond to "local subtrees" at states $p$ and $t$, respectively, as defined below.
\end{example}
\begin{figure}[ht]
	\centering 
	\includegraphics[]{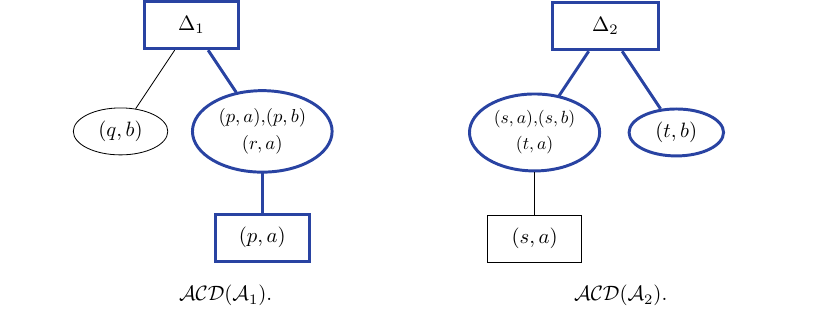}
	\caption{"Alternating cycle decomposition" of $\A_1$ and $\A_2$, from Figure~\ref{fig-prelim:multiple-aut-ex}. In bold blue, the "local subtrees" of $\acd{\A_1}$ at $p$ and of $\acd{\A_2}$ at $t$.}
	\label{fig-acd:acd}
\end{figure}

\paragraph*{Local subtrees}
We remark that for a "recurrent" state $q$ of $\A$, there is one and only one "tree" $\altTree{\ell_i}$ in $\acd{\A}$ such that $q$ appears in such a "tree". On the other hand, "transient" vertices do not appear in the trees of $\acd{\A}$.

\AP If $q$ is a "recurrent" state of $\A$, appearing in the SCC $\ell_i$, we define the ""local subtree at $q$"", noted $\intro*\treeVertex{q}$, as the subtree of $\altTree{\ell_i}$ containing the nodes 
$ \intro*\nodesTreeVertex{q} = \{ n \in \altTree{\ell_i} \mid q \text{ is a state in } \nuAcd(n)\}$.
(If $q$ is a "transient" state, $\treeVertex{q}$ is empty.)

%For $q$ "recurrent", as $\nodesTreeVertex{q}$ is a subset of the nodes of $\altTree{\ell_i}$, the "tree" $\treeVertex{q}$ inherits the "order@@tree" from $\altTree{\ell_i}$, as well as its partition into "round@@acd" and "square@@acd" nodes, $\nodesTreeVertex{q} = \roundnodesq \sqcup \squarenodesq$. Also, it inherits the labelling given by the mapping $\nuAcd$, whose restriction to $\treeVertex{q}$ has an image in $\cyclesState{\A}{q}$.

%%NOT SURE THIS RMK IS IMPORTANT HERE
%\begin{remark}\label{rmk-acd:treeVertex-closed-by-ancestor}
%	Let $q$ be a state of $\nuACD(n_{\ell_i})$. If $n\in \nodesTreeVertex{q}$ and $n'$ is an "ancestor" of $n$ in $\altTree{\ell_i}$, then $n'\in \nodesTreeVertex{q}$. In particular, $\treeVertex{q}$ is indeed a "subtree" of $\altTree{\ell_i}$. Also, we note that the "root" of $\treeVertex{q}$ is $n_{\ell_i}$.%\qedEx
%\end{remark}

\paragraph*{The ACD-parity-transform}
As mentioned in the introduction, the "ACD" was introduced as a structure to build small "parity" automata from  "Muller" ones.
\AP Casares, Colcombet and Fijalkow~\cite{CCF21Optimal} defined the ""ACD-parity-transform""  $\intro*\acdParityTransform{\A}$ of a "Muller automaton" $\A$, which is an "equivalent@@aut" "parity" automaton with the property that it is minimal amongst "parity" automata that can be obtained from $\A$ by duplication of states.
They formalise this minimality statement using "morphisms" of automata.\footnote{For simplicity, here we define only "morphisms" of "deterministic" automata. More general statements use the notions of locally bijective and history-deterministic morphisms. We refer to~\cite{CCFL24FromMtoP} for details.}

Let $\A$ and $\B$ be two "deterministic" automata over the same input alphabet.
\AP  A ""morphism"" $\varphi\colon \A \to \B$ is a function sending states of $\A$ to states of $\B$ such that:
\begin{itemize}
	\item $\varphi(q_\init^\A)$ is the initial state of $\B$,
	\item  for all transition $(q,a,q')$ in $\A$, $(\varphi(q),a,\varphi(q'))$ is a transition in $\B$, and
	\item a "run" $\rr$ in $\A$ is "accepting@@run" if and only if $\varphi(\rr)$ is "accepting@@run" in $\B$.
\end{itemize} 

\begin{proposition}[{\cite[Section~5.2]{CCFL24FromMtoP}}]\label{prop-prel:optimal-transform-ACD}
	Given a "Muller automaton" $\A$ and its "ACD", we can build in polynomial time a "parity" automaton $\acdParityTransform{\A}$ "equivalent@@aut" to $\A$ such that no "deterministic" "parity" automaton admitting a "morphism" to $\A$ is smaller than $\acdParityTransform{\A}$.
\end{proposition}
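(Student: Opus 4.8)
\emph{Overview and construction.} The plan is to spell out the construction of $\acdParityTransform{\A}$, verify it works via a projection morphism, and then match its size by a fibre-by-fibre lower bound. Write $\acd{\A}=\{\altTree{\ell_1},\dots,\altTree{\ell_k}\}$ for the SCCs $\ell_1,\dots,\ell_k$ of $\A$. I take as states of $\acdParityTransform{\A}$ the transient states of $\A$ together with all pairs $(q,\beta)$, where $q$ is a recurrent state in some SCC $\ell_i$ and $\beta$ is a leaf of the local subtree $\treeVertex{q}$ (equivalently, a maximal branch of $\treeVertex{q}$); there are $\sum_{q\text{ recurrent}}|\leaves(\treeVertex{q})| + |\{\text{transient states}\}|$ of them. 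On a transition $e=(q,a,q')$ of $\A$ with $q,q'$ in the same SCC $\ell_i$ and a leaf $\beta$ of $\treeVertex{q}$, let $n_e$ be the $\ancestor$-deepest node on the branch $\beta$ whose label (a cycle) contains $e$; the transition from $(q,\beta)$ on $a$ goes to $(q',\beta')$, where $\beta'$ descends from $n_e$ along the ``next'' child in a fixed cyclic order on the children of $n_e$ (restricted to nodes whose label contains $q'$), and its output priority is $\mathrm{depth}(n_e)$ shifted by a fixed constant so that round nodes produce even priorities and square nodes odd ones; transitions leaving an SCC copy $\A$ with an irrelevant priority. Each transition is thus computed by a bounded number of look-ups in trees $\treeVertex{q}$ of size at most $|\acd{\A}|$, so $\acdParityTransform{\A}$ is produced in time polynomial in $\sizeAut{\A}+|\acd{\A}|$.

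\emph{Equivalence.} The projection $\pi$ sending $(q,\beta)$ and transient $q$ to $q$ maps transitions to transitions and the initial state to the initial state, so it is a morphism $\acdParityTransform{\A}\to\A$ once we know it preserves acceptance of runs. Given a run $\rho$ of $\acdParityTransform{\A}$, its projection eventually stays in one SCC $\ell_i$, and the set $\ell\in\cycles{\A}$ of transitions of $\A$ it visits infinitely often is a cycle of $\ell_i$. As in the correctness argument for the Zielonka-tree automaton, I would show the branch navigation stabilises: among the nodes of $\treeVertex{q}$ (for $q$ a state of $\ell$) visited infinitely often there is a unique shallowest one $n^\star$, whose label is the $\ancestor$-deepest cycle of $\altTree{\ell_i}$ containing $\ell$, and the minimum priority produced infinitely often equals $\mathrm{depth}(n^\star)$ (shifted). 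By the analogue of Lemma~\ref{lemma-zt:accepting-set-in-ZT} for the alternating cycle decomposition, $n^\star$ is round iff $\col(\ell)\in\F$, i.e. iff $\pi(\rho)$ is accepting; hence $\rho$ is accepting iff $\pi(\rho)$ is, so $\pi$ is a morphism and $\acdParityTransform{\A}\equivAut\A$.

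\emph{Minimality.} Let $\B$ be a deterministic parity automaton with a morphism $\varphi\colon\B\to\A$; after trimming $\B$ and $\A$ I may assume $\varphi$ surjective, so $|\varphi^{-1}(q)|\ge 1$ for every state $q$, in particular for transient ones. Fix a recurrent state $q$ in the SCC $\ell_i$; the crux is the inequality $|\varphi^{-1}(q)|\ge|\leaves(\treeVertex{q})|$. I would prove it by distilling out of $\B$ a deterministic parity automaton $\B_q$ on state set $\varphi^{-1}(q)$ whose input letters are the simple cycles of $\A$ through $q$: reading such a cycle $c$ from $p\in\varphi^{-1}(q)$ means running $\B$ over a fixed word labelling $c$; since $\varphi$ maps this run onto $c$, it stays inside $\varphi^{-1}(\ell_i)$ and ends in $\varphi^{-1}(q)$, and we let the priority of the letter be the least priority along it. Because the union of cycles through $q$ is again a cycle through $q$ and $\varphi$ preserves acceptance, $\B_q$ recognises, from a suitable initial state, the local Muller language of $\A$ at $q$, whose Zielonka tree is $\treeVertex{q}$; Proposition~\ref{prop-prelim:minimal-det-parity} then yields $|\varphi^{-1}(q)|=|\B_q|\ge|\leaves(\treeVertex{q})|$. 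Summing over all states, $\sizeAut{\B}\ge\sum_{q\text{ recurrent}}|\leaves(\treeVertex{q})|+|\{\text{transient states}\}|=\sizeAut{\acdParityTransform{\A}}$, and since $\acdParityTransform{\A}$ itself admits the morphism $\pi$ to $\A$, it is minimum among such automata.

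\emph{Main obstacle.} The delicate point is this fibre-wise bound. It demands, first, a clean ``covering'' argument showing that cycles of $\B$ living over $\ell_i$ correspond to cycles of $\A$ in the way used above, even though $\varphi$ collapses the whole fibre $\varphi^{-1}(q)$; and second, the structural fact, intrinsic to the definition of the ACD, that the local subtree $\treeVertex{q}$ is, up to relabelling colours by simple cycles through $q$, the Zielonka tree of the local Muller condition at $q$. Once these are in place, the lower bound of Proposition~\ref{prop-prelim:minimal-det-parity} transfers essentially verbatim; getting the reduction to $\B_q$ and the choice of alphabet exactly right, so that its language is genuinely a Muller language with Zielonka tree $\treeVertex{q}$, is where the real work lies.
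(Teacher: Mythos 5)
The paper does not actually prove this proposition: it is imported verbatim from \cite{CCFL24FromMtoP}, so there is no in-document proof to compare against. Your reconstruction follows what is essentially the known route from that reference --- states of $\acdParityTransform{\A}$ are pairs of a state and a leaf of its local subtree, correctness goes through stabilisation of the branch navigation, and minimality rests on the fibre-wise bound $|\varphi^{-1}(q)|\ge|\leaves(\treeVertex{q})|$, obtained by carving out of $\B$ a deterministic parity automaton for the local Muller condition at $q$ and invoking Proposition~\ref{prop-prelim:minimal-det-parity}.

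There is, however, one concrete flaw in the reduction as you set it up: taking the alphabet of $\B_q$ to be the \emph{simple} cycles of $\A$ through $q$. A cycle through $q$ in the sense of this paper (a strongly connected set of edges whose set of states contains $q$) need not be a union of simple cycles through $q$: in a figure-eight $q\to a\to q$, $a\to b\to a$, the only simple cycle through $q$ is $\{q\to a,\,a\to q\}$, so the full cycle $\{q\to a,\,a\to q,\,a\to b,\,b\to a\}$ is not a union of letters of your alphabet. Consequently the Muller language recognised by $\B_q$ only records the acceptance status of those cycles through $q$ that arise as unions of simple cycles through $q$; its Zielonka tree can have strictly fewer leaves than $\treeVertex{q}$ whenever an alternation in $\treeVertex{q}$ is witnessed only by cycles invisible to this alphabet, and the lower bound you extract is then too weak. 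The fix is routine: take the alphabet to be all of $\cyclesState{\A}{q}$ --- each such cycle is realised by a finite closed walk from $q$, hence by a finite input word --- for which your saturation argument does identify the Zielonka tree of the local condition with $\treeVertex{q}$ (using that, since any cycle containing a given cycle through $q$ also contains $q$, maximality among alternating subcycles through $q$ coincides with maximality among all alternating subcycles). With that repair, and granting the local-surjectivity convention on morphisms that the statement itself glosses over (under the bare definition a one-state parity automaton with no transitions admits a morphism to $\A$), your argument goes through; prefix-independence of both acceptance conditions is what lets you evaluate acceptance of $\B_q$ from an arbitrary reachable state of the fibre.
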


This optimality result has been generalised to "history-deterministic" "parity" automata, and optimal transformations towards "history-deterministic" "Rabin" automata based on the "ACD" have been also proposed~\cite{CCFL24FromMtoP}.

\paragraph*{The ACD-DAG}
\AP In the same way as we obtained the "Zielonka DAG" from the "Zielonka tree", we define a "DAG" obtained by merging the nodes of the "ACD" sharing the same label. 

Let $\A$ be a "Muller" "automaton".
\AP The ""DAG of alternating subcycles"" of a "cycle"~$\ell$, denoted $\intro*\altDAG{\ell}$ is the  $\cycles{\A}$-"labelled DAG" obtained by merging the nodes of $\altTree{\ell}$ with a same label.
\AP The ""ACD-DAG"" of a "Muller" "automaton" $\A$  is $\intro*\acdDAG{\A} = \{\altDAG{\ell_1},\dots, \ab\altDAG{\ell_k}\}$, where $\ell_1,\dots,\ell_k$ is an enumeration of the maximal "cycles" of $\A$ (that is, of its SCCs).

\AP For $q$ a state of $\A$, we define the ""local subDAG at $q$"", noted $\intro*\dagVertex{q}$, as the "DAG" obtained by merging the nodes of $\treeVertex{q}$ with a same label.
We note that if $\ell_i$ is the maximal "cycle" containing $q$, $\dagVertex{q}$ coincides with the subDAG of $\altDAG{\ell_i}$ consisting of the nodes labelled with cycles containing $q$.

The "ACD-DAG" of a "deterministic" "Muller" "automaton" $\A$ can be used to "decide its typeness"  and the "parity index" of $\Lang{\A}$.

\begin{proposition}[{\cite[Section~6.1]{CCFL24FromMtoP}}]\label{prop-prel:typeness-ACD}
	Given a "deterministic" "Muller" "automaton" $\A$ and its "ACD-DAG", we can "decide the typeness" of $\A$ in polynomial time. More precisely, $\A$ is:
	\begin{itemize}
		\item "Rabin type" if and if for all $q\in Q$ and "round node" $n\in \dagVertex{q}$, $n$ has at most one child in $\dagVertex{q}$;
		\item "Streett type" if and if for all $q\in Q$ and "square node" $n\in \dagVertex{q}$, $n$ has at most one child in $\dagVertex{q}$;
		\item "parity type" if and if for all $q\in Q$, $\dagVertex{q}$ has a single branch.
	\end{itemize} 
\end{proposition}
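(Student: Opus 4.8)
The idea is to reduce typeness to a combinatorial property of "cycles" and then to read the witnessing condition (or a contradiction) directly off the "ACD-DAG". Since $\A$ is "deterministic" (and we may assume complete), two "acceptance conditions" are "equivalent acceptance conditions over" $\A$ exactly when they agree on every "cycle" of $\A$: the acceptance of a "run" depends only on the set of transitions it visits infinitely often, which is always a "cycle", and conversely every "cycle" $\ell$ together with a state $q\in\states{\ell}$ is the infinity set of a lasso run through $q$. I would also record the dictionary between $\dagVertex{q}$ and cycles through $q$: for a "cycle" $C$ and $q\in\states{C}$, a maximal "rejecting cycle" subcycle $C'\subseteq C$ has $q\in\states{C'}$ iff $C'$ is (the label of) a child of $C$ in $\dagVertex{q}$ --- equivalently, $\treeVertex q$ is the "Zielonka tree" of the "Muller" condition induced on the cycles through $q$. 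Thus the hypothesis ``round nodes of $\dagVertex q$ have at most one child in $\dagVertex q$'' reads: \emph{for every "accepting cycle" $C$ and every $q\in\states{C}$, at most one maximal rejecting subcycle of $C$ contains $q$}, and dually for square nodes.

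\emph{Necessity.} Suppose $\A$ is "Rabin type" via $(\col_2,W)$ with $W$ a Rabin language, and assume some "round node" $n\in\dagVertex q$ has two children $n_1,n_2$ in $\dagVertex q$. Writing $\ell=\nuAcd(n)$ and $\ell_j=\nuAcd(n_j)$, the $\ell_j$ are two distinct maximal "rejecting cycles" inside $\ell$, both containing $q$. Then $\ell_1\cup\ell_2$ is a "cycle" (they share $q$), it is an "accepting cycle" by maximality of $\ell_1$, yet it is rejected by $W$ because the cycles rejected by a Rabin condition are closed under union (for every "Rabin pair", if a colour of $\greenPair_j$ is present in a rejected cycle then so is a colour of $\redPair_j$); this contradicts agreement on cycles. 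The "Streett type" case is dual (square nodes, Streett-accepting cycles union-closed), and "parity type" implies both "Rabin type" and "Streett type", hence $\dagVertex q$ is a single branch.

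\emph{Sufficiency.} For Rabin, take colours $\GG=\DD$ and, for every round node $n$ of $\acd{\A}$, a "Rabin pair" with green set $\nuAcd(n)\setminus\bigcup\{\nuAcd(n'):n' \text{ a child of } n\}$ and red set $\DD\setminus\nuAcd(n)$. Correctness follows from: (i) if $\ell$ is an "accepting cycle", then descending from a root of $\acd\A$ one reaches a round node $n$ with $\ell\subseteq\nuAcd(n)$ but $\ell$ contained in no child of $n$, and the pair of $n$ then accepts $\ell$; (ii) conversely, if the pair of a round node $n$ accepts $\ell$ then $\ell\subseteq\nuAcd(n)$ and $\ell$ lies in no child of $n$, forcing $\ell$ to be accepting. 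The only non-routine point is the claim used in (i) that $\ell\not\subseteq\bigcup\{\nuAcd(n'):n' \text{ child of } n\}$, which is exactly where the hypothesis enters: were the children of $n$ to cover $\ell$, then since each state of $\ell$ lies in the label of at most one child, the labels of the children meeting $\ell$ would partition $\states\ell$ with no transition of $\ell$ crossing blocks --- contradicting strong connectivity of $\ell$ (and anyway placing $\ell$ inside a single child, against the choice of $n$). The "Streett type" case is dual. For "parity type" one can intersect the two constructions, or set $\col_2(e)$ to twice the depth of the deepest node of $\acd\A$ whose label contains $e$ (plus $1$ on square nodes): when every $\dagVertex q$ is a single branch this is well defined and the least colour along $\ell$ is the one of the deepest node containing $\ell$, which is a round node iff $\ell$ is accepting. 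Alternatively, $\Leftarrow$ for parity is immediate from Proposition~\ref{prop-prel:optimal-transform-ACD}, since single-branch "local subtrees" mean $\acdParityTransform{\A}$ duplicates no state and is thus a "parity" relabelling of $\A$.

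\emph{Complexity and main obstacle.} Given $\acdDAG\A$, the three conditions are checked by one traversal counting, for each node $C$ and each $q\in\states C$, how many children of $C$ contain $q$ in their label; this is polynomial in $|\acdDAG\A|$, hence --- using the polynomial-time computation of the "ACD-DAG" established earlier --- polynomial in the input. I expect the crux to be the sufficiency direction: designing the global condition from the "ACD" and, above all, the partition/strong-connectivity argument behind step (i); a secondary point is the bookkeeping that branching properties transfer between $\acd\A$ and $\acdDAG\A$ (nodes sharing a label have identical sets of children-labels, so $\treeVertex q$ and $\dagVertex q$ branch the same way at each label).
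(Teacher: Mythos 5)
Correct, and essentially the argument of the cited source: the paper itself gives no proof of this proposition (it is imported from~\cite{CCFL24FromMtoP}), and your reconstruction --- union-closure of the rejected sets of a Rabin (resp.\ Streett) condition for necessity, one Rabin pair per round node $n$ with green set $\nuAcd(n)\setminus\bigcup_{n'}\nuAcd(n')$ and red set $\DD\setminus\nuAcd(n)$ for sufficiency, with the covering-versus-containment step settled by exactly the partition/strong-connectivity argument you flag as the crux --- is the standard one. The same Rabin-pair construction appears in this paper for Zielonka DAGs as Proposition~\ref{prop-size:from-ZDAG-to-RabinPairs}, so nothing further is needed beyond the reachability caveat implicit in identifying ``equivalent over $\A$'' with ``agreeing on every cycle''.
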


\begin{proposition}[{\cite[Proposition~6.13]{CCFL24FromMtoP}}]\label{prop-prel:parity-index-ACD}
	Let $\A$ be a "deterministic" "Muller" "automaton". The "parity index" of $\Lang{\A}$ coincides with the maximal height of a "DAG" from $\acdDAG{\A}$ (which coincides with the maximal height of a "tree" from $\acd{\A}$).
\end{proposition}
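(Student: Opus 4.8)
The plan is to split the statement into two claims: (i) $\acd{\A}$ and $\acdDAG{\A}$ have the same maximal height --- call this common value $h$ ---, and (ii) $h$ equals the parity index of $\Lang{\A}$. Claim (i) is bookkeeping: along any branch of a tree of $\acd{\A}$ the labelling cycles strictly decrease for inclusion (a node labelled $\ell$ has children labelled by proper subcycles of $\ell$, of the \emph{opposite} accepting/rejecting status), so no branch carries a repeated label; hence identifying equally-labelled nodes to form $\acdDAG{\A}$ collapses no branch, and since the set of children-labels of a node depends only on its label, a longest chain in $\acdDAG{\A}$ lifts step by step to a branch of $\acd{\A}$ of the same length. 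So it remains to prove (ii).

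For ``$\le$'' I would use the ACD-parity-transform $\acdParityTransform{\A}$ of Proposition~\ref{prop-prel:optimal-transform-ACD}. It is a deterministic parity automaton equivalent to $\A$, and by its construction in~\cite{CCFL24FromMtoP} the priority labelling a transition is read off from the depth in the ACD of the relevant node, so the priorities form an integer interval of size equal to the maximal height of a tree of $\acd{\A}$, i.e.\ $h$. Hence $\Lang{\A} = \Lang{\acdParityTransform{\A}}$ is recognised by a parity automaton with $h$ priorities, so its parity index is at most $h$.

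For ``$\ge$'' I would extract a flower from a deepest branch. Fix a tree $\altTree{\ell}$ of $\acd{\A}$ of maximal height $h$ and a root-to-leaf branch: its labels form a chain $\ell = \ell_h \supsetneq \ell_{h-1} \supsetneq \cdots \supsetneq \ell_1$ of cycles with strictly alternating accepting/rejecting status. The innermost cycle $\ell_1$ contains some state $q$, and since $\ell_1 \subseteq \ell_j$ as edge sets, $q \in \states{\ell_j}$ for every $j$; thus $\ell_1 \subsetneq \cdots \subsetneq \ell_h$ is an alternating flower of size $h$ at $q$. Pick a finite word $w_0$ whose run from $q_\init$ ends at $q$ and, for each $j$, a finite word $u_j$ whose unique run from $q$ returns to $q$ traversing exactly the edges of $\ell_j$. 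For any $i_1 i_2 \cdots \in [1,h]^\oo$, the run of $w_0 u_{i_1} u_{i_2} \cdots$ uses infinitely often exactly the edges of $\ell_{j^\ast}$, where $j^\ast$ is the largest index occurring infinitely often in $i_1 i_2 \cdots$; so $w_0 u_{i_1} u_{i_2} \cdots \in \Lang{\A}$ iff $\ell_{j^\ast}$ is accepting, i.e.\ iff $j^\ast$ has a fixed parity. This makes $i_1 i_2 \cdots \mapsto w_0 u_{i_1} u_{i_2} \cdots$ a continuous reduction of a parity language of index $h$ to $\Lang{\A}$, whence the parity index of $\Lang{\A}$ is at least $h$, since it cannot decrease under continuous reductions (equivalently, by the flower characterisation of the index hierarchy for deterministic automata~\cite{NiwinskiWalukievicz1998Relating}).

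The main obstacle is this last step. The reduction is elementary, but one must (a) track the polarity of the target parity language --- it depends on whether the root of $\altTree{\ell}$ is accepting or rejecting and on the parity of $h$, so the two ``halves'' of the hierarchy have to be handled symmetrically --- and (b) justify that a continuous reduction cannot lower the parity index of an $\oo$-regular language, invoked for the finite-state (not merely topological) notion of index; this is the genuine input from the theory of the parity index. The height equality in claim (i) and the upper bound via $\acdParityTransform{\A}$ are routine in comparison.
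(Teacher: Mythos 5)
This proposition is not proved in the paper at all --- it is imported verbatim from~\cite{CCFL24FromMtoP} --- so the comparison is with the argument in that reference, and your reconstruction follows the same route: the height equality between $\acd{\A}$ and $\acdDAG{\A}$ via the observation that labels strictly decrease along a branch while the children-labels of a node depend only on its label; the upper bound by counting the priorities of $\acdParityTransform{\A}$; and the lower bound by extracting from a deepest branch an alternating chain of $h$ nested cycles through a common state $q$ and invoking the flower characterisation of Niwi\'nski and Walukiewicz~\cite{NiwinskiWalukievicz1998Relating}. All three steps are sound (and citing the flower theorem directly is indeed cleaner than routing through invariance of the index under continuous reductions). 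You do implicitly assume that every state of $\A$ is reachable --- otherwise the step ``pick $w_0$ reaching $q$'' fails and the proposition itself is false for an unreachable deep SCC; this is the standard convention but should be said.

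One caveat, which is a defect of the statement as transcribed rather than of your argument. With the simplified definition of parity index used in this paper (minimal \emph{number} of colours), the claimed equality is off by one when $\acd{\A}$ contains both a positive and a negative tree of maximal height $h$: the positive one forces priorities of type $[0,h-1]$, the negative one of type $[1,h]$, so no deterministic parity automaton with $h$ colours recognises $\Lang{\A}$, and $\acdParityTransform{\A}$ itself uses $h+1$ priorities in that case. In particular your upper-bound sentence ``the priorities form an integer interval of size equal to the maximal height'' is only true tree by tree, not globally. The footnote in the preliminaries concedes that the precise statement needs the refined index $[i,j]$ of~\cite{CCFL24FromMtoP}; your point (a) about tracking the polarity of the target parity language is exactly where this surfaces, and a careful write-up should state and prove the result in that refined form.
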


%\subparagraph{$X$-Final SCC.}
%Let $\A$ be an "automaton", and let $X\subseteq \SS$ be a subset of the "input alphabet". 
%\AP We say that a set of states $S\subset Q$ is ""$X$-closed@@automaton"" if for every state $q$ in $S$ and every  transition $q\re{a:c}q'$, the state $q'$ is in $S$.
%\AP An ""$X$-final strongly connected component"" (\emph{$X$-FSCC}) of $\A$ is an "final SCC" in the graph obtained by taking the restriction of the "underlying graph@@aut" of $\A$ to the edges labelled by letters in $X$. We remark that the states of a "$X$-FSCC" form an "$X$-closed", and that a subset $S\subseteq Q$ is the set of states of an "$X$-FSCC" if and only if:
%\begin{itemize}
%	%\item the states in $S$ are "accessible",
%	\item for any two states $q, q'\in S$ there is a finite word $w\in X^*$ labelling a finite "path" from $q$ to $q'$, and
%	\item if $q\in S$ and there is a finite "path" from $q$ to $q'$ labelled with a word $w\in X^*$, then $q'\in S$.
%\end{itemize}

%\end{comment}

\section{Computation of the Alternating Cycle Decomposition}
\label{sec:computation-ACD}

We present in this section the main contribution of the paper: a polynomial-time algorithm to compute the "alternating cycle decomposition" of a "Muller" "automaton", and its analysis. 
We prove that if the "acceptance condition" of the automaton is "represented@@Muller" as a "Zielonka tree", we can compute $\acd{\A}$ in polynomial time (Theorem~\ref{th-comp:compt-ACD-poly-ZT}). 
This shows that the computation of the "ACD" is not harder than that of the "Zielonka tree", (partially) explaining the strikingly performing experimental results from~\cite{CDMRS22Tacas}.
We also show that if the "acceptance condition" is "represented@@Muller" as a "Zielonka DAG", we can compute $\acdDAG{\A}$ in polynomial time (Theorem~\ref{th-comp:compt-ACD-DAG-poly-ZDAG}), from which we can derive "decidability in polynomial time of typeness" of "Muller" "automata" (Corollary~\ref{cor-comp:decision-typ-poly}) and of the "parity index" of the languages they "recognise" (Corollary~\ref{cor-comp:decision-parity-index-poly}).

\subsection{Statements of the results}

We first state the results that will be obtained in this section. In all the section, given an "automaton" $\A$, $Q$ will stand for its set of states.

%\paragraph*{Computation of the ACD and ACD-DAG}
\begin{theorem}[Computation of the ACD]\label{th-comp:compt-ACD-poly-ZT}
	Given a "Muller" "automaton" $\A$ with "acceptance condition" represented by a "Zielonka tree" $\zielonkaTree{\F}$, we can compute $\acd{\A}$ in polynomial time in $\sizeAut{\A}+|\zielonkaTree{\F}|$.
\end{theorem}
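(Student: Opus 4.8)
The plan is to split the task into two parts: (1) give an algorithm that actually builds $\acd{\A}$, and (2) bound its running time by bounding the size of the output, which is the real obstacle. For the algorithm itself: first compute the SCCs $\ell_1,\dots,\ell_k$ of $\A$ in linear time by a standard SCC decomposition. For each SCC $\ell_i$, one must build $\altTree{\ell_i}$ recursively. At a node labelled by a cycle $\ell$, one reads off $\colAut(\ell) \subseteq \GG$, locates it in the Zielonka tree (using Lemma~\ref{lemma-zt:accepting-set-in-ZT} to decide whether $\ell$ is accepting or rejecting), and then must compute the maximal subcycles $\ell'\subsetneq\ell$ whose colour set flips acceptance status. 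The key combinatorial observation — the one linking the ACD to the Zielonka tree — is that a maximal subcycle $\ell'$ of $\ell$ of the opposite acceptance status is obtained by: taking the node $n$ of $\zielonkaTree{\F}$ corresponding to $\colAut(\ell)$, for each child $n_j$ of $n$ restricting the transitions of $\ell$ to those coloured in $\nu(n_j)$, and then taking the maximal subcycles of that restricted set of transitions (which are again found by an SCC decomposition). So each node of $\altTree{\ell_i}$ spawns children indexed by pairs (child of the corresponding Zielonka node, SCC of the colour-restricted sub-automaton); each such child is computable in time polynomial in $\sizeAut{\A}$.

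The main obstacle is showing that this recursion terminates quickly, i.e. that $\acd{\A}$ has size polynomial in $\sizeAut{\A}+|\zielonkaTree{\F}|$ — a priori the branching could blow up. The plan here is to prove two bounds. \textbf{Depth:} along any branch of $\altTree{\ell_i}$, each step either strictly decreases the colour set $\colAut(\ell)$ or keeps it the same while strictly shrinking $\ell$ (strictly fewer transitions); and whenever the colour set is unchanged from parent to child, the corresponding Zielonka-tree node is also unchanged, so the colour set can stay constant for at most $|\DD|$ consecutive steps, while it can strictly decrease at most $\depth(\zielonkaTree{\F}) \le |\zielonkaTree{\F}|$ times. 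Hence the depth of every tree in $\acd{\A}$ is at most $|\DD|\cdot|\zielonkaTree{\F}|$, which is polynomial. \textbf{Width / number of nodes:} the crucial point is that a node $n$ of $\altTree{\ell_i}$ with label $\ell$ has children indexed by (child $n_j$ of the associated Zielonka node) $\times$ (SCC of $\restr{\ell}{\nu(n_j)}$, the restriction of $\ell$ to transitions coloured inside $\nu(n_j)$). The number of children of a Zielonka node is bounded by the local branching of $\zielonkaTree{\F}$, and the number of SCCs of any sub-automaton is at most $|Q|$; so every node has at most (branching of $\zielonkaTree{\F}$)$\cdot|Q|$ children. Naively this gives an exponential bound in the depth, so one needs a sharper argument: I would argue that the total number of nodes is bounded by associating to each node of each $\altTree{\ell_i}$ a distinct pair consisting of (a node of $\zielonkaTree{\F}$) and (a subset of $Q$ that is the state-set of the corresponding cycle), more carefully, that each node is determined by its cycle $\nuAcd(n)$, and cycles arising in the ACD with a fixed colour set are antichains under inclusion within a structure controlled by the SCC lattice — combined with the result from~\cite{CCFL24FromMtoP} (Proposition~\ref{prop-prel:parity-index-ACD}) identifying heights, and the structure of $\acdParityTransform{\A}$ being polynomial (Proposition~\ref{prop-prel:optimal-transform-ACD}), to get a polynomial bound on $|\acd{\A}|$.

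Finally, assemble: once the polynomial size bound on $\acd{\A}$ is established, and each node's children are computed by a polynomial-time routine (one $\colAut$-evaluation, one Zielonka-tree lookup via Lemma~\ref{lemma-zt:accepting-set-in-ZT}, and $O(|Q|)$ SCC decompositions of colour-restricted sub-automata), the whole construction runs in polynomial time in $\sizeAut{\A}+|\zielonkaTree{\F}|$. I would present the algorithm as a recursive procedure $\computeChildrenACD$ producing the children of a given node, prove its correctness against Definition~\ref{def:tree_alternating_cycles} (the children it produces are exactly the maximal subcycles of opposite status — using that maximality of a subcycle of opposite status is equivalent to maximality of an SCC inside some colour-restricted sub-automaton, which follows from Remark~\ref{rmk-zt:union-changes-acceptance}), and then conclude with the size and time analysis. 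I expect the size bound to be the step requiring the most care; the algorithm and its correctness are comparatively routine once the Zielonka-tree/SCC-restriction link is spelled out.
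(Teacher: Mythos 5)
Your algorithm is essentially the paper's (top-down construction, children of a node obtained by restricting the cycle to the colour sets of the children of the corresponding Zielonka-tree node and taking SCCs; the paper phrases this via the Zielonka DAG and the ACD-DAG and then unfolds, but that is presentational). You also correctly identify that the entire difficulty is the size bound on $\acd{\A}$. However, that bound is precisely where your proposal has a genuine gap: you acknowledge the naive per-node branching estimate is exponential and then only gesture at a fix. The two ingredients you invoke do not work. Appealing to Proposition~\ref{prop-prel:optimal-transform-ACD} is circular: it says the parity transform is computable in time polynomial \emph{in the ACD given as input}, so it gives no bound on $|\acd{\A}|$ in terms of $\sizeAut{\A}+|\zielonkaTree{\F}|$. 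And the claim that ``cycles with a fixed colour set form antichains controlled by the SCC lattice'' is not an argument; in fact two distinct nodes of $\altTree{\ell_i}$ can carry the same colour set, and the number of incomparable cycles with a given colour set is not obviously bounded by anything polynomial without further work. (Your depth discussion is also slightly off: a child's colour set always \emph{strictly} decreases, since equal colour sets would give equal acceptance status, so the depth is at most $|\GG|$ and the ``same colour set for up to $|\DD|$ steps'' scenario cannot occur --- but depth was never the problem.)

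The missing idea is to bound not $\altTree{\ell_i}$ directly but each \emph{local subtree} $\treeVertex{q}$, and to show $|\treeVertex{q}|\leq|\zielonkaTree{\F}|$ by an injection into the Zielonka tree; summing over states then gives $|\acd{\A}|\leq\sum_{q}|\treeVertex{q}|\leq|Q|\cdot|\zielonkaTree{\F}|$. The reason the restriction to a fixed state $q$ is essential is that siblings $n_1,\dots,n_k$ in $\treeVertex{q}$ all label cycles containing $q$, so any pairwise union $\nuAcd(n_i)\cup\nuAcd(n_j)$ is again a cycle, and by maximality of the children this union must flip acceptance status relative to each $D_i=\colAut(\nuAcd(n_i))$. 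This is exactly the hypothesis needed to find, inside the Zielonka tree, $k$ pairwise incomparable strict descendants of the node assigned to the parent, each containing the corresponding $D_i$ (the paper's Lemma~\ref{lemma-comp:many-descendants}); disjointness of the subtrees rooted there lets the injection be built top-down. For the full tree $\altTree{\ell_i}$ this argument fails, because two children of a node need not share a state, so their union need not be a cycle and nothing forces the acceptance status to alternate on unions. Without this (or an equivalent) argument, the polynomial bound on $|\acd{\A}|$ --- and hence the theorem --- is not established.
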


As explained in Proposition~\ref{prop-prel:optimal-transform-ACD}, given the "ACD" of a "Muller automaton" $\A$, we can transform $\A$ in polynomial time into its "ACD-parity-transform": a "parity automaton" equivalent to $\A$ that is minimal amongst parity automata obtained as a transformation of $\A$. The previous theorem implies that this can be done even if only the "Zielonka tree" of the "acceptance condition" of $\A$ is given as input.

\begin{corollary}\label{cor-comp:acdParityTransformPoly}
	We can compute the "ACD-parity-transform" of a "Muller automaton" in polynomial time, if its "acceptance condition" is given by a "Zielonka tree".
\end{corollary}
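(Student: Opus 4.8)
The plan is simply to chain together two results already established in the paper. First I would run the algorithm of Theorem~\ref{th-comp:compt-ACD-poly-ZT} on the input $\A$ together with the "Zielonka tree" $\zielonkaTree{\F}$ of its "acceptance condition"; this produces $\acd{\A}$ in time polynomial in $\sizeAut{\A} + |\zielonkaTree{\F}|$, and in particular the object $\acd{\A}$ it outputs has size polynomial in this same quantity (this size bound is precisely the main technical content of Theorem~\ref{th-comp:compt-ACD-poly-ZT}). Then I would hand $\A$ and the computed $\acd{\A}$ to the construction of Proposition~\ref{prop-prel:optimal-transform-ACD}, which builds the "parity" automaton $\acdParityTransform{\A}$ equivalent to $\A$ in time polynomial in $\sizeAut{\A} + |\acd{\A}|$.

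The only verification required is that the composition of these two polynomial-time procedures is itself polynomial in the size of the original input, namely $\sizeAut{\A} + |\zielonkaTree{\F}|$. This is immediate: the first step produces an intermediate object $\acd{\A}$ whose size is polynomially bounded in the input, so substituting this bound for $|\acd{\A}|$ into the running time of the second step keeps the total running time polynomial. Minimality of $\acdParityTransform{\A}$ among "parity" automata admitting a "morphism" to $\A$ is inherited verbatim from Proposition~\ref{prop-prel:optimal-transform-ACD}, since the output automaton is literally the one produced there.

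There is essentially no obstacle: all the real work lies in Theorem~\ref{th-comp:compt-ACD-poly-ZT} (and the polynomial size bound on the "ACD" proved along the way) and in Proposition~\ref{prop-prel:optimal-transform-ACD}. The single point requiring a sentence of care is to read Proposition~\ref{prop-prel:optimal-transform-ACD} with the correct parameter, i.e.\ that "polynomial time" there means polynomial in the combined size of $\A$ and $\acd{\A}$, so that plugging in the polynomial size bound for $\acd{\A}$ does not smuggle in a hidden super-polynomial dependence on $\sizeAut{\A} + |\zielonkaTree{\F}|$.
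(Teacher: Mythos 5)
Your proposal is correct and is exactly the paper's own argument: the corollary is obtained by chaining Theorem~\ref{th-comp:compt-ACD-poly-ZT} (which computes $\acd{\A}$ in time, and hence size, polynomial in $\sizeAut{\A}+|\zielonkaTree{\F}|$) with Proposition~\ref{prop-prel:optimal-transform-ACD}. Your extra remark that one must read the proposition's running time as polynomial in $\sizeAut{\A}+|\acd{\A}|$ so that the composition stays polynomial is precisely the (implicit) content of the paper's one-line justification.
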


\begin{theorem}[Computation of the ACD-DAG]\label{th-comp:compt-ACD-DAG-poly-ZDAG}
	Given a "Muller" "automaton" $\A$ with acceptance condition represented  by a "Zielonka DAG" $\zielonkaDAG{\F}$ (resp. "colour-explicitly@@Muller"), we can compute $\acdDAG{\A}$ in polynomial time in $\sizeAut{\A}+|\zielonkaDAG{\F}|$ (resp.~$\sizeAut{\A}+|\F|$).
\end{theorem}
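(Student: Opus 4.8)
The plan is to build, given $\zielonkaDAG{\F}$, the \kl{DAG of alternating subcycles} $\altDAG{\ell_i}$ of each SCC $\ell_i$ of $\A$ \emph{directly}, without first constructing the tree $\altTree{\ell_i}$: since $|\zielonkaTree{\F}|$ may be exponential in $|\zielonkaDAG{\F}|$, so can $\acd{\A}$ be in $\sizeAut{\A}+|\zielonkaDAG{\F}|$, so going through the tree is not an option. The colour-explicit case reduces to this one: from a colour-explicit $\F$ we first compute $\zielonkaDAG{\F}$ in polynomial time by~\cite[Theorem~3.17]{HD08ComplexityMuller}, which also yields $|\zielonkaDAG{\F}|=\mathrm{poly}(|\F|)$. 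We then compute the SCCs $\ell_1,\dots,\ell_k$ of $\A$ in linear time, and since $\acdDAG{\A}=\{\altDAG{\ell_1},\dots,\altDAG{\ell_k}\}$ it suffices to build each $\altDAG{\ell_i}$ in time polynomial in $\sizeAut{\A}+|\zielonkaDAG{\F}|$.

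I would build $\altDAG{\ell_i}$ by a breadth-first exploration from its root, labelled $\ell_i$, keeping the already-created nodes indexed by their labels (which are cycles, i.e.\ strongly connected sets of transitions). Processing a node labelled $\ell$ needs two subroutines. \emph{Deciding acceptance}: to test whether $\colAut(\ell)\in\F$ we use the DAG-version of Lemma~\ref{lemma-zt:accepting-set-in-ZT}, finding a node $C$ of $\zielonkaDAG{\F}$ maximal for $\ancestorDAG$ with $\colAut(\ell)\subseteq C$ and checking whether it is round; this is polynomial. \emph{Computing children}: assume $\ell$ accepting (the other case is symmetric), let $C$ be as above with Zielonka-DAG children $Z_1,\dots,Z_t$ (the maximal subsets of $C$ not in $\F$); the children of $\ell$ in $\altDAG{\ell_i}$ are the $\subseteq$-maximal rejecting subcycles of $\ell$, computed by a recursive routine $\textsc{Desc}(\ell)$: for each $j$, take the SCCs of the subgraph of $\ell$ formed by the transitions coloured in $Z_j$; for each such SCC $S$, add $S$ to a candidate list if $S$ is rejecting, and otherwise (then $S$ is accepting with $\colAut(S)\subsetneq Z_j$) add the cycles returned by $\textsc{Desc}(S)$; finally return the $\subseteq$-maximal candidates. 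This routine is memoised \emph{globally}, across all nodes processed by the exploration and all nested calls.

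Correctness of $\textsc{Desc}$ follows by well-founded induction on the colour set: any rejecting subcycle $\ell'$ of $\ell$ has $\colAut(\ell')\notin\F$, hence $\colAut(\ell')\subseteq Z_j$ for some $j$ (every non-$\F$ subset of $C$ extends to a maximal one, which is some $Z_j$), so $\ell'$ lies in an SCC $S$ of the $Z_j$-restriction; if $S$ is rejecting then $\ell'=S$ by maximality, and if $S$ is accepting then a maximal rejecting subcycle of $S$ containing $\ell'$ is recovered by the recursive call, which is legitimate since $\colAut(S)\subsetneq C$. A symmetric bookkeeping argument shows the $\subseteq$-maximal candidates are exactly the $\subseteq$-maximal rejecting subcycles of $\ell$. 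The other key observation is a \emph{normal form}: every cycle appearing as an argument of $\textsc{Desc}$, and every node of $\altDAG{\ell_i}$, arises as an SCC of a colour-restriction of $\ell_i$, hence equals the set of all transitions of $\ell_i$ joining two of its vertices and coloured inside $\colAut(\ell)$; so such a cycle is determined by the pair $(\states{\ell},\colAut(\ell))$, which makes the memoisation well-defined and lets us count these cycles.

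It then remains to bound the running time, and here lies the crux — and the main obstacle — namely proving that $|\acdDAG{\A}|$, and more generally the number of distinct cycles ever passed to $\textsc{Desc}$, is polynomial in $\sizeAut{\A}+|\zielonkaDAG{\F}|$. The plan is: (1) along any branch of $\altDAG{\ell_i}$ the colour sets strictly decrease (an accepting cycle and a child of it never share their colour set), so branches have length at most $|\GG|$; (2) using the normal form $\ell\leftrightarrow(\states{\ell},\colAut(\ell))$, show that for each node $C$ of $\zielonkaDAG{\F}$ the cycles $\ell$ whose maximal-for-$\ancestorDAG$ Zielonka-DAG node equals $C$ occur, at each level of $\altDAG{\ell_i}$, as a vertex-disjoint family (they are SCCs of colour-restricted subgraphs all carrying the colour set $C$), so at most $|Q|$ of them arise per node $C$ and per level; summing over the $|\zielonkaDAG{\F}|$ nodes and the $\le|\GG|$ levels gives a polynomial bound, and the same argument bounds the intermediate accepting cycles produced inside $\textsc{Desc}$. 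Granting this, the exploration processes polynomially many nodes, each memoised call of $\textsc{Desc}$ does polynomial work beyond its recursive calls (it inspects $\le t$ restrictions, an SCC decomposition per restriction, and already-computed sublists), and the acceptance test is polynomial, so the whole computation is polynomial. The hard part is step (2): understanding how the subcycles of an automaton can be organised so that the alternating decomposition does not blow up, which is exactly where the combinatorics of the Zielonka DAG must be exploited.
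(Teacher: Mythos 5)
Your overall architecture matches the paper's: compute the SCCs, grow each $\altDAG{\ell_i}$ top-down with a memoised children subroutine, reduce the colour-explicit case to the Zielonka-DAG case via Hunter--Dawar, and observe that everything reduces to a polynomial bound on $|\acdDAG{\A}|$ (plus a polynomial bound on the per-node work). Your $\textsc{Desc}$ routine differs in detail from the paper's subroutine --- the paper iterates over \emph{all} round (resp.\ square) nodes $m$ of $\zielonkaDAG{\F}$, takes the SCCs of the restriction of the current cycle to $\nu(m)$, filters out those contained in a child of $m$, and keeps the $\subseteq$-maximal survivors, with no recursion --- but your recursive variant is also correct and polynomial \emph{granted} the size bound, and your normal-form/memoisation argument (a cycle arising this way is determined by $(\states{\ell},\colAut(\ell))$) is sound.

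The genuine gap is exactly where you place it: the claim that $|\acdDAG{\A}|$ is polynomial in $|Q|+|\zielonkaDAG{\F}|$. You only offer a plan, and its step (2) is both unproven and, as stated, doubtful. You claim that the ACD-DAG nodes whose colour set has a given deepest Zielonka-DAG node $C$ are pairwise vertex-disjoint within a level, ``because they are SCCs of colour-restricted subgraphs''; but they are SCCs of restrictions of \emph{different parent cycles}, so nothing forces disjointness. Concretely, two distinct same-level accepting nodes $\ell_1,\ell_2$ sharing a state $q$ and sharing the deepest Zielonka node $C$ do have $\ell_1\cup\ell_2$ accepting, yet this contradicts no maximality: $\ell_1$ is only maximal among accepting subcycles of \emph{its own} square parent, and $\ell_2$ need not be contained in that parent (e.g.\ $\ell_1,\ell_2$ maximal accepting subcycles of two distinct maximal rejecting children $R_1,R_2$ of the root, with $q\in\states{R_1}\cap\states{R_2}$). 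Also, ``level'' is ambiguous in a DAG and would need fixing. The paper's actual argument is organised differently and is genuinely delicate: for each state $q$ it bounds the local subDAG by $|\dagVertex{q}|\le|\zielonkaDAG{\F}|$ (giving $|\acdDAG{\A}|\le|Q|\cdot|\zielonkaDAG{\F}|$) by recursively constructing a map $f\colon\dagVertex{q}\to\zielonkaDAG{\F}$ along a chosen predecessor branch and proving injectivity in two steps: first, two distinct nodes of $\dagVertex{q}$ must have distinct colour sets (else a minimal-depth node containing one label but not the other would fail maximality), and second, $f(n_1)=f(n_2)$ forces a contradiction by locating the lowest node on the predecessor branch of $n_1$ whose label contains $\nuAcd(n_2)$ and deriving an acceptance-status clash at its child. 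None of this is routine, and your proposal as written does not contain a proof of the bound it rests on.
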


Combining Theorem~\ref{th-comp:compt-ACD-DAG-poly-ZDAG} with Propositions~\ref{prop-prel:typeness-ACD} and~\ref{prop-prel:parity-index-ACD}, we directly obtain that we can "decide typeness" of "Muller automata" and the "parity index" of their languages in polynomial time.

\begin{corollary}[Polynomial-time decidability of typeness]\label{cor-comp:decision-typ-poly}
	Given a "deterministic" "Muller automaton" $\A$ with its "acceptance condition" represented "colour-explicitly@@Muller", as a "Zielonka tree", or as a "Zielonka DAG", we can "decide the typeness" of $\A$ in polynomial time.
\end{corollary}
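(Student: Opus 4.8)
The plan is to reduce all three input formats to the case where the "ACD-DAG" is available, and then read off the answer using the combinatorial characterisation of "typeness" provided by Proposition~\ref{prop-prel:typeness-ACD}.

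First I would normalise the representation of the acceptance condition. If it is given "colour-explicitly@@Muller" or as a "Zielonka DAG", then Theorem~\ref{th-comp:compt-ACD-DAG-poly-ZDAG} already yields $\acdDAG{\A}$ in polynomial time in $\sizeAut{\A}+|\F|$, respectively $\sizeAut{\A}+|\zielonkaDAG{\F}|$. If it is given as a "Zielonka tree" $\zielonkaTree{\F}$, I would first build $\zielonkaDAG{\F}$ by merging nodes carrying the same label; this can be done in polynomial time (see~\cite[Theorem~3.17]{HD08ComplexityMuller}), and since the "Zielonka DAG" is a quotient of the "Zielonka tree" we have $|\zielonkaDAG{\F}|\le |\zielonkaTree{\F}|$, so no size blow-up occurs. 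Applying Theorem~\ref{th-comp:compt-ACD-DAG-poly-ZDAG} then produces $\acdDAG{\A}$ in polynomial time. In every case we thus obtain, in time polynomial in the size of the input, an object of polynomial size.

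Next, for each state $q\in Q$ I would extract the "local subDAG at $q$" $\dagVertex{q}$ from $\acdDAG{\A}$: it is the subDAG of the component $\altDAG{\ell_i}$ (where $\ell_i$ is the unique SCC of $\A$ containing $q$) induced by those nodes whose label is a cycle containing $q$, and "transient" states contribute no node at all. This is a single traversal of $\acdDAG{\A}$. I would then test the three predicates of Proposition~\ref{prop-prel:typeness-ACD}: $\A$ is "Rabin type" if and only if every "round node" of every $\dagVertex{q}$ has at most one child inside $\dagVertex{q}$; $\A$ is "Streett type" if and only if the symmetric statement holds for "square nodes"; and $\A$ is "parity type" if and only if every $\dagVertex{q}$ consists of a single branch. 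Each test is a purely local check on the polynomial-size "ACD-DAG", hence polynomial, and correctness is exactly the content of Proposition~\ref{prop-prel:typeness-ACD}; running the three tests settles the three decision problems that together constitute "deciding the typeness" of $\A$.

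There is no genuine obstacle at this level: the whole difficulty has been front-loaded into Theorem~\ref{th-comp:compt-ACD-DAG-poly-ZDAG} (the polynomial-time computability and, crucially, the polynomial size of $\acdDAG{\A}$) and into Proposition~\ref{prop-prel:typeness-ACD}, both of which we may assume. The only points that deserve a moment's care are that the tree-to-DAG conversion stays polynomial — immediate, as just noted — and that the local subDAGs $\dagVertex{q}$ can be read off directly from $\acdDAG{\A}$ rather than recomputed, which is again immediate from the definition of $\dagVertex{q}$.
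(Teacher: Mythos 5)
Your proof is correct and follows essentially the same route as the paper, which obtains the corollary by combining Theorem~\ref{th-comp:compt-ACD-DAG-poly-ZDAG} with Proposition~\ref{prop-prel:typeness-ACD}; your added remark that a "Zielonka tree" input is first folded into the "Zielonka DAG" (with no size blow-up) matches the paper's own handling of that case.
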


\begin{corollary}[Polynomial-time decidability of parity index]\label{cor-comp:decision-parity-index-poly}
	Given a "deterministic" "Muller" "automaton" $\A$ with its "acceptance condition" represented "colour-explicitly@@Muller", as a "Zielonka tree", or as a "Zielonka DAG", we can determine the "parity index" of $\Lang{\A}$ in polynomial time.
\end{corollary}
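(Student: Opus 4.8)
The plan is to obtain this as an essentially immediate consequence of Proposition~\ref{prop-prel:parity-index-ACD} together with the computability results of this section. Recall that Proposition~\ref{prop-prel:parity-index-ACD} identifies the parity index of $\Lang{\A}$ with the maximal height of a DAG appearing in $\acdDAG{\A}$ (equivalently, of a tree in $\acd{\A}$). Hence the whole task reduces to: (1) compute $\acdDAG{\A}$, and (2) return the maximum of the heights of its component DAGs. Step (2) is a trivial linear-time graph traversal, so everything hinges on step (1).

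For the two representations of the acceptance condition given colour-explicitly or as a Zielonka DAG, step (1) is exactly Theorem~\ref{th-comp:compt-ACD-DAG-poly-ZDAG}: we can build $\acdDAG{\A}$ in polynomial time in $\sizeAut{\A}+|\F|$, respectively $\sizeAut{\A}+|\zielonkaDAG{\F}|$. So in those two cases the corollary follows directly. For the Zielonka-tree representation there are two equally short routes. Either we first turn $\zielonkaTree{\F}$ into $\zielonkaDAG{\F}$ in polynomial time --- by merging the nodes carrying the same label, as recalled in Section~\ref{subsec:ZielonkaTree} (see also~\cite[Theorem~3.17]{HD08ComplexityMuller}) --- and then fall back to the previous case; or, more directly, we apply Theorem~\ref{th-comp:compt-ACD-poly-ZT} to compute $\acd{\A}$ itself in polynomial time in $\sizeAut{\A}+|\zielonkaTree{\F}|$ and read off the maximal height of its trees, which by Proposition~\ref{prop-prel:parity-index-ACD} is the parity index.

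I do not expect any real obstacle here once Theorems~\ref{th-comp:compt-ACD-poly-ZT} and~\ref{th-comp:compt-ACD-DAG-poly-ZDAG} are established; the only point deserving a line of care is that Proposition~\ref{prop-prel:parity-index-ACD} gives the parity index as a bare number of colours. If one wants the finer invariant that also records whether the optimal minimal colour must be even or odd (the refinement alluded to in the footnote on parity index), it suffices to additionally inspect whether a DAG of maximal height in $\acdDAG{\A}$ is rooted at a round or at a square node, which is again computable in linear time. The genuine mathematical content of the corollary therefore lies entirely in the earlier results, and the present statement is just their combination.
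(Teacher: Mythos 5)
Your proposal is correct and follows exactly the paper's route: the paper obtains this corollary by combining Theorem~\ref{th-comp:compt-ACD-DAG-poly-ZDAG} (polynomial-time computation of the ACD-DAG from a colour-explicit or Zielonka-DAG representation, with the Zielonka-tree case handled by first folding the tree into its DAG) with Proposition~\ref{prop-prel:parity-index-ACD}, which identifies the parity index with the maximal height of a DAG in $\acdDAG{\A}$. Your additional remark about reading off the even/odd refinement from the shape of the root is a harmless extra not needed for the statement as given.
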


The decidability of the "parity index" in polynomial time had already been obtained by Wilke and Yoo~\cite{WilkeYoo96RabinIndex}.
This result contrasts with the fact that deciding the "parity index" of a language represented by a "deterministic" "Rabin" or "Streett" "automaton" is $\NP$-complete~\cite[Theorem~28]{KPB95Structural}. It was already well-known that the "parity index" was computable in polynomial time from a "deterministic" "parity" automata~\cite{NiwinskiWalukievicz1998Relating,CartonMaceiras99RabinIndex}.

\subsection{Main algorithm}
%We now provide a polynomial-time procedure computing $\acdDAG{\A}$ from $\A$ and the "Zielonka DAG" of its "accepting condition". 
%To obtain Theorem~\ref{th-comp:compt-ACD-poly-ZT} (computation of the "ACD" from the "Zielonka tree"), we will show that it suffices to unfold the "ACD-DAG" to get the "ACD".

We present the pseudocode of an algorithm computing $\acdDAG{\A}$ (Algorithm~\ref{algo:computationACD}) from a "Muller automaton" $\A$.
The full procedures requires a time polynomial in $|Q| + |\zielonkaDAG{\F}| + |\acdDAG{\A}|$; we will then obtain Theorem~\ref{th-comp:compt-ACD-DAG-poly-ZDAG} by showing that $|\acdDAG{\A}| \leq |Q|\cdot |\zielonkaDAG{\F}|$ (if the "acceptance condition" is represented "colour-explicitly@@Muller", we can compute the "Zielonka DAG" from it in polynomial time~\cite{HD08ComplexityMuller}).
If we want to compute the "ACD" of an "automaton" $\A$ with the "acceptance condition" given as a "Zielonka tree", we can simply compute the "Zielonka DAG" from it, apply the previous procedure to get the "ACD-DAG" and then unfold the latter to obtain the "ACD".
As a result, we can compute the "ACD" in time polynomial in $|Q| + |\zielonkaTree{\F}| + |\acd{\A}|$; we will then obtain Theorem~\ref{th-comp:compt-ACD-poly-ZT}  by showing that $|\acd{\A}| \leq |Q|\cdot |\zielonkaTree{\F}|$.
Quite surprisingly, the arguments we need to use to prove these upper bounds for $|\acdDAG{\A}|$ and $|\acd{\A}|$ are quite different.

The algorithm we propose builds the "ACD-DAG" in a top-down fashion: first, it computes the strongly connected components of $\A$ and initialises the root of each of the DAGs in $\acdDAG{\A}$. 
Then, it iteratively computes the children of the already found nodes using the sub-procedure $\computeChildrenACD$, presented in~Algorithm~\ref{algo:ChildrenNodeACD}. 
Given a node $n$ labelled with $\nuAcd(n)=\ell$ (assume that $\ell$ is an "accepting" "cycle"), $\computeChildrenACD$ goes through all "round" nodes in the "Zielonka DAG" and for each such node $m$ computes the maximal sub-cycles of $\ell$ whose set of colours is included in the one of $m$, but not in the one of any child of $m$.
The algorithm then selects maximal "cycles" among all those, add them to $\acdDAG{\A}$ (if they do not already appear in the DAG)  and sets them as children of $n$.

We use the following notations:
\begin{itemize}
	\item \AP $\intro*\SCCDec(\S)$ outputs a list of the strongly connected components of~$\S$.
	\item \AP $\intro*\pop(\mathsf{stck})$ removes an element from the stack $\mathsf{stck}$ and returns it. 
	\item \AP $\intro*\push(\mathsf{stck}, \mathsf{L})$ adds the elements of $\mathsf{L}$ to the stack $\mathsf{stck}$.
	%\item \AP $\intro*\leavingLetters(\S)$ outputs the set of letters $a\subseteq \SS$ such that there are states $q$ in $\S$ and edges $q\re{a:-}q'$ that do not belong to $\S$,
	\item \AP $\intro*\maxInclusion(\mathsf{lst})$ returns the list of the maximal subsets in $\mathsf{lst}$.
	%\item \AP $\intro*\edges(\S)$ is the set of transitions appearing in a "subautomaton" $\S$.
\end{itemize}
All the previous functions can be computed in polynomial time. 

%\begin{itemize}
%	\item \AP $\intro*\computeMaxAltSubsets(C,\F)$ takes a subset $C\subseteq \GG$ and returns the maximal subsets $D\subseteq C$ such that $D\in \F \iff C\notin \F$.
%\end{itemize}
%The computational cost of $\computeMaxAltSubsets$ depends on the representation of $\F$ (see details below).

\begin{algorithm}[!htbp]
	\caption{Computation of the "ACD-DAG"}
	\label{algo:computationACD}
	\begin{algorithmic}[1]
		\Statex \textbf{Input:} A "Muller" "automaton" $\A$
		\Statex \textbf{Output:} $\acdDAG{\A}$
		\State $\langle \S_{1}, \dots, \S_{r}\rangle \leftarrow \SCCDec(\A)$
		
		\State Add $\S_{1}, \dots, \S_{r}$ as the root of $r$ different DAGs  of $\acdDAG{\A}$
		\State $\nodesStack \leftarrow \langle \S_{1}, \dots, \S_{r}\rangle$ \Comment{Initialise a stack}
		\While{$\nodesStack\neq \emptyset$}
		\State $\nodeACD \leftarrow \pop(\nodesStack)$
		\State $\childrenList \leftarrow\computeChildrenACD(\nodeACD)$
		\State $\newChildren \leftarrow$ elements of $\childrenList$ that do not appear in $\acdDAG{\A}$\label{line-Cacd:newChildren}
		\State Add the nodes of $\newChildren$ to $\acdDAG{\A}$\label{line-Cacd:addChildrenNodes}
		\State Add an edge from $\nodeACD$ to each element of $\childrenList$ \label{line-Cacd:addChildrenEdges}
		\State $\push(\nodesStack,\newChildren)$
		\EndWhile
		\State \Return $\acdDAG{\A}$
	\end{algorithmic}
\end{algorithm}

We provide in Algorithm~\ref{algo:ChildrenNodeACD} a procedure to compute the children of a node of the "ACD-DAG".
We show in Lemma~\ref{lemma-comp:analysis-comptChildrenZT} that this algorithm is correct and terminates in polynomial time in $\size{\zielonkaDAG{\F}} + |Q|$.
We say that two nodes have the ""same shape"" if they are both "round" or both "square".

\AP
\begin{algorithm}[!htbp]
	\caption{ $\intro*\computeChildrenACD(n)$: Computing the children of a node $n$ of the "ACD-DAG"}
	\label{algo:ChildrenNodeACD}
	\begin{algorithmic}[1]
		\Statex \textbf{Input:} A node of $\acdDAG{\A}$ labelled by a "cycle" $C$
		%and the "acceptance condition" $\macc{}=(\colAut,\GG,\MullerC{\F}{\GG})$.
		\Statex \textbf{Output:} Maximal "subcycles" $\ell_1,\dots,\ell_k$ of $C$ such that $\colAut(\ell_i)\in \F \iff \colAut(C)\notin \F$.
		%\State $\A' \leftarrow$ Accessible "SCC" of $\A$ \Comment{$\Lang{\A'} = \Muller{\F}$}
		%\State $C \leftarrow \colAut(\S)$
		\State $\childrenList$ $\gets \emptyset$  %initialise children
		
		\For{$m \in \zielonkaDAG{\F}$ a node of the "same shape" as $n$} \label{line-CC:for-altSets} 
		
		\State $C_m \leftarrow$ restriction of $C$ to transitions $e$ such that $\colAut(e)\in C$ 
		\State $\langle C_{m,1}, \dots, C_{m,r}\rangle \leftarrow \SCCDec(C_m)$\label{line-CC:SCC} 
		\For{$i = 1,\dots, r$}\label{line-CC:for-SCCs} 
		\If{ for all child $p$ of $m$, $\colAut(C_{m,i})\nsubseteq \colAut(\nuAcd(p))$}
		\State $\childrenList \leftarrow \childrenList \cup \{C_{m,i}\}$\label{line-CC:addChildren} 
		\EndIf			
		\EndFor
		\EndFor\label{line-CC:endFor}
		\State $\childrenList \leftarrow \maxInclusion(\childrenList)$ \label{line-CC:maxIncl} 
		\State \Return $\childrenList$ \label{line-CC:return} 
	\end{algorithmic}
\end{algorithm}

\subsection{Complexity analysis}
We now prove correctness and termination in polynomial time of the algorithms presented in the previous subsection, establishing Theorems~\ref{th-comp:compt-ACD-poly-ZT} and~\ref{th-comp:compt-ACD-DAG-poly-ZDAG}.

We first remark that Algorithm~\ref{algo:computationACD} makes at most $|\acdDAG{\A}|$ calls to the function $\computeChildrenACD$, as each node of the "ACD-DAG" is added at most once to $\nodesStack$.
Therefore, to obtain Theorem~\ref{th-comp:compt-ACD-DAG-poly-ZDAG} (computation of the "ACD-DAG") we need to show: (1) $|\acdDAG{\A}|$ is polynomial in $|Q|+|\zielonkaDAG{\F}|$, and (2) the function $\computeChildrenACD$ takes polynomial time in this measure.

%if the "acceptance condition" of $\A$ is given as a "Zielonka tree", we can simply compute the "Zielonka DAG" from it, apply the previous procedure to get the "ACD-DAG" and then unfold the latter to obtain the "ACD".
%As a result, we can compute the "ACD" in time polynomial in $|Q| + |\zielonkaTree{\F}| + |\acd{\A}|$.

%We will then obtain Theorems~\ref{th-comp:compt-ACD-poly-ZT} and~\ref{th-comp:compt-ACD-DAG-poly-ZDAG} by showing that $|\acdDAG{\A}| \leq |Q|\cdot |\zielonkaDAG{\F}|$ and  $|\acd{\A}| \leq |Q|\cdot |\zielonkaTree{\F}|$.
%%Corollary~\ref{cor-comp:decision-typ-poly} will follow as we can compute the "ACD-DAG" in polynomial time and then infer the typeness of the automaton from it.
%% requires a time polynomial in $|Q| + |\zielonkaDAG{\F}| + |\acdDAG{\A}|$. 
%If the condition is given as a "Zielonka tree" then we can simply compute the "Zielonka DAG" from it, apply the previous procedure to get the "ACD-DAG" and then unfold the latter to obtain the "ACD".
%As a result, we can compute the "ACD" in time polynomial in $|Q| + |\zielonkaTree{\F}| + |\acd{\A}|$.

%%\paragraph*{Computing the children of a node of the ACD-DAG}

We start by showing that we can compute the children of a node of $\acdDAG{\A}$ in polynomial time in $|Q|+|\zielonkaDAG{\F}|$. The obtention of the upper bounds on $\acdDAG{\A}$ will be the subject of the next subsection.

\begin{lemma}\label{lemma-comp:analysis-comptChildrenZT}
	Algorithm~\ref{algo:ChildrenNodeACD} computes the list of children of a node of $\acdDAG{\A}$ in polynomial time in $\size{\zielonkaDAG{\F}} + |Q|$.
\end{lemma}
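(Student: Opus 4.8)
The plan is to go through Algorithm~\ref{algo:ChildrenNodeACD} line by line, bound the number of iterations and the cost of each operation, and then argue correctness, i.e.\ that the returned list is exactly the set of maximal subcycles of $C$ whose acceptance status is opposite to that of $C$.

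\textbf{Termination in polynomial time.} The outer loop on line~\ref{line-CC:for-altSets} iterates over nodes $m$ of $\zielonkaDAG{\F}$ of the same shape as $n$, so at most $|\zielonkaDAG{\F}|$ times. For each such $m$, computing $C_m$ (the restriction of $C$ to transitions coloured inside $\colAut(\nuAcd(m))$) takes time $O(|\DD|\cdot|\GG|)$, which is polynomial in $\sizeAut{\A}$. The call $\SCCDec(C_m)$ on line~\ref{line-CC:SCC} runs in polynomial time and produces at most $|Q|$ components, so the inner loop on line~\ref{line-CC:for-SCCs} runs at most $|Q|$ times; for each component $C_{m,i}$ the test ``for all child $p$ of $m$, $\colAut(C_{m,i}) \nsubseteq \colAut(\nuAcd(p))$'' inspects the children of $m$ in $\zielonkaDAG{\F}$ and does a subset comparison over $\GG$, hence costs $O(|\zielonkaDAG{\F}|\cdot|\GG|)$. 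Thus the body of the outer loop contributes a polynomial amount of work, and the whole double loop finishes in time polynomial in $|\zielonkaDAG{\F}|+|Q|+\sizeAut{\A}$. Finally, $\childrenList$ contains at most $|\zielonkaDAG{\F}|\cdot|Q|$ cycles, so the call $\maxInclusion(\childrenList)$ on line~\ref{line-CC:maxIncl} is also polynomial. Altogether the algorithm runs in polynomial time in $\size{\zielonkaDAG{\F}}+|Q|$ (and $\sizeAut{\A}$, which we regard as part of the input describing the node).

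\textbf{Correctness.} Assume $C$ is an accepting cycle (the rejecting case is symmetric, swapping the roles of round and square nodes); then $n$ is round, and we must show the output equals the set of maximal rejecting subcycles of $C$. Let $\ell' \subseteq C$ be any rejecting subcycle: $\colAut(\ell')\notin\F$. Pick a node $m$ of $\zielonkaDAG{\F}$ maximal for $\ancestorDAG$ among those whose label contains $\colAut(\ell')$; by the analogue of Lemma~\ref{lemma-zt:accepting-set-in-ZT} for the Zielonka DAG, $m$ is round (it ``rejects'' in the sense that its label is the minimal accepting set still above $\colAut(\ell')$, so below $m$ acceptance flips and $m$ itself being round means $\colAut(\ell')$ fails to be in $\F$ — more precisely, maximality of $m$ gives $\colAut(\ell')\subseteq\colAut(\nuAcd(m))$ and $\colAut(\ell')\nsubseteq\colAut(\nuAcd(p))$ for every child $p$ of $m$). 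Hence $\ell'$ is contained in $C_m$, so $\ell'$ is contained in one of the SCCs $C_{m,i}$ of $C_m$; and since $\colAut(\ell')\nsubseteq\colAut(\nuAcd(p))$ for all children $p$ of $m$, the same holds for the larger set $\colAut(C_{m,i})$ only if we additionally know $C_{m,i}$ is still rejecting — which it is, because $\colAut(\ell')\subseteq\colAut(C_{m,i})\subseteq\colAut(\nuAcd(m))$ together with the DAG-analogue of Remark~\ref{rmk-zt:union-changes-acceptance} forces $\colAut(C_{m,i})\notin\F$. Thus $C_{m,i}$ passes the test on line~\ref{line-CC:addChildren} and $\ell'\subseteq C_{m,i}\in\childrenList$. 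Conversely, every $C_{m,i}$ added to $\childrenList$ is an SCC of a subgraph of $C$, hence a subcycle of $C$, and the membership test guarantees $\colAut(C_{m,i})\nsubseteq\colAut(\nuAcd(p))$ for all children $p$ of $m$, which combined with $\colAut(C_{m,i})\subseteq\colAut(\nuAcd(m))$ and the DAG-analogue of Lemma~\ref{lemma-zt:accepting-set-in-ZT} (with $m$ round) yields $\colAut(C_{m,i})\notin\F$, i.e.\ $C_{m,i}$ is rejecting. Therefore $\childrenList$ is precisely the set of all rejecting subcycles of $C$ that are ``locally maximal'' with respect to some round DAG node, and it contains every rejecting subcycle as a subset; applying $\maxInclusion$ on line~\ref{line-CC:maxIncl} then extracts exactly the maximal rejecting subcycles of $C$, as required.

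\textbf{Main obstacle.} The routine part is the time bound; the delicate point is the correctness argument, specifically verifying that a subcycle chosen via a maximal Zielonka-DAG node $m$ survives as a full SCC $C_{m,i}$ with the right acceptance status — that is, that restricting $C$ to the colours of $m$ and then taking SCCs does not ``merge in'' extra transitions that change rejecting into accepting. This is exactly where the DAG-analogues of Remark~\ref{rmk-zt:union-changes-acceptance} and Lemma~\ref{lemma-zt:accepting-set-in-ZT} do the work, and I would make sure to state and use them carefully rather than gloss over the set-containment bookkeeping.
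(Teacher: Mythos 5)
Your overall strategy is the same as the paper's: argue that the double loop enumerates, for each suitable node $m$ of $\zielonkaDAG{\F}$, the SCCs of the restriction of $C$ to the colours of $\nu(m)$ that escape all children of $m$, and that $\maxInclusion$ then extracts exactly the maximal subcycles with flipped acceptance status; the complexity count is also identical (at most $|\zielonkaDAG{\F}|$ outer iterations, at most $|Q|$ SCCs each, Tarjan's algorithm in linear time). Your correctness argument is in fact spelled out in more detail than the paper's.

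There is, however, a concrete error in the correctness part: you have the shape of the witnessing Zielonka-DAG node backwards. If $C$ is accepting and $\ell'\subseteq C$ is a rejecting subcycle, then a node $m$ of $\zielonkaDAG{\F}$ that is maximal for $\ancestorDAG$ among those with $\colAut(\ell')\subseteq\nu(m)$ is \emph{square}, not round: Lemma~\ref{lemma-zt:accepting-set-in-ZT} (and its DAG analogue) says precisely that such an $m$ is round if and only if $\colAut(\ell')\in\F$. Your parenthetical justification --- that $m$ being round means $\colAut(\ell')\notin\F$ --- is the negation of that lemma, and the same inversion reappears in your converse direction, where you derive $\colAut(C_{m,i})\notin\F$ from $m$ being round together with the containment conditions; the lemma gives $\colAut(C_{m,i})\in\F$ in that case, i.e.\ it would prove the listed cycles are \emph{accepting}, the opposite of what you need. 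The fix is local: the loop must range over the nodes of $\zielonkaDAG{\F}$ of the shape \emph{opposite} to that of $n$ (this is what the paper's own proof does --- for a square input node it quantifies over round nodes $m$ --- even though the pseudocode's phrasing about nodes of the same shape as $n$ suggests otherwise, and may be what misled you). Once the shapes are swapped, every appeal you make to Lemma~\ref{lemma-zt:accepting-set-in-ZT} and Remark~\ref{rmk-zt:union-changes-acceptance} goes through as intended, and the remainder of your argument (every rejecting subcycle is contained in some rejecting $C_{m,i}$ placed in the list, every listed $C_{m,i}$ is a rejecting subcycle, hence the maximal elements of the list are exactly the maximal rejecting subcycles) is sound.
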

\begin{proof}
	First let us argue that the returned list contains exactly the children of the input node in $\acdDAG{\A}$.
	
	Let $n$ be the input node, $C$ its label, and let us assume that it is "square", the other case is symmetric.
	Its children are the maximal "cycles" $\ell_1,\dots,\ell_k\in \cycles{C}$ such that $\colAut(\ell_i)\in \F$. By definition of $\zielonkaDAG{\F}$, those are the maximal "cycles" such that there exists a "round" node $m$ in $\zielonkaDAG{\F}$ such that $\colAut(\ell_i) \subseteq C$ and $\colAut(\ell_i) \nsubseteq p$ for all children $p$ of $C$.
	This is straightforwardly what Algorithm~\ref{algo:ChildrenNodeACD} computes, as the algorithm goes through all "round" nodes $C$, computes the maximal "cycles" whose set of colours are included in $C$ but not in its children in $\zielonkaDAG{\F}$ and adds them to $\childrenList$. It then outputs the maximal "cycles" in $\childrenList$.
	
	For the complexity, note that we go through the \for{} loop on line~\ref{line-CC:for-altSets} at most $\size{\zielonkaDAG{\F}}$ times, and through the \for{} loop of line~\ref{line-CC:for-SCCs} at most $|Q|$ times at each iteration. Computing $\SCCDec(\S_C)$ on line~\ref{line-CC:SCC} requires time linear in $|Q|$ by Tarjan's algorithm~\cite{Tarjan72DepthFirst}. 
	As a result, the execution time of Algorithm~\ref{algo:ChildrenNodeACD} up to line~\ref{line-CC:endFor}  and the size of $\childrenList$ after line~\ref{line-CC:endFor} are both polynomial in $|Q|+\size{\zielonkaDAG{\A}}$, hence the whole algorithm takes polynomial time in that measure.
\end{proof}

%It is convenient to present the algorithm computing the "ACD-DAG" before the one computing the "ACD".
%We compute the "DAG" corresponding to each "SCC" of $\A$ recursively in a top-down fashion; before adding a new child, we check whether it already appears in the part of the DAG previously computed.

\subsection{Upper bounds on the size of the ACD and the ACD-DAG }
We now establish the desired upper bounds on the size of the "ACD" and the "ACD-DAG". We start by proving that  $|\acd{\A}|\leq |Q|\cdot|\zielonkaTree{\F}|$; the analysis of the size of $\acdDAG{\A}$ will be a refinement of this proof.

A polynomial upper bound on the size of $\acdDAG{\A}$ implies Theorem~\ref{th-comp:compt-ACD-DAG-poly-ZDAG} simply by combining the fact that computing the children of a node of $\acdDAG{\A}$ requires a time polynomial in $\size{\zielonkaDAG{\F}}+ |Q|$ (Lemma~\ref{lemma-comp:analysis-comptChildrenZT}), and the fact that we call $\computeChildrenACD$ exactly once per node of $\acdDAG{\A}$ in Algorithm~\ref{algo:computationACD} (as we never push back in $\nodesStack$ any set that was already explored, see line~\ref{line-Cacd:newChildren}-\ref{line-Cacd:addChildrenNodes}).

To establish Theorem~\ref{th-comp:compt-ACD-poly-ZT},
%(computation of the "ACD" from the "Zielonka tree").
%For this,  it suffices to show that $\acd{\A}$ is polynomial in  $|Q|+ |\zielonkaTree{\F}|$. Indeed,
we remark that to compute $\acd{\A}$ from $\zielonkaTree{\F}$ and $\A$ we simply fold $\zielonkaTree{\F}$ to obtain $\zielonkaDAG{\F}$, apply Theorem~\ref{th-comp:compt-ACD-DAG-poly-ZDAG} to get $\acdDAG{\A}$, and then unfold the latter to obtain $\acd{\A}$.
The first two steps require a time polynomial in $\size{\zielonkaDAG{\F}}+|Q| \leq \size{\zielonkaTree{\F}}+|Q|$, while the third step takes a time polynomial in $\size{\acd{\A}}\leq |Q|\cdot|\zielonkaTree{\F}|$.

\paragraph*{Upper bound on the size of the ACD}

%We now establish Theorem~\ref{th-comp:compt-ACD-poly-ZT} (computation of the "ACD" from the "Zielonka tree").
%For this,  it suffices to show that $\acd{\A}$ is polynomial in  $|Q|+ |\zielonkaTree{\F}|$. Indeed, in order to compute $\acd{\A}$ from $\zielonkaTree{\F}$ and $\A$, we simply fold $\zielonkaTree{\F}$ to obtain $\zielonkaDAG{\F}$, apply Theorem~\ref{th-comp:compt-ACD-DAG-poly-ZDAG} to get $\acdDAG{\A}$, and then unfold the latter to obtain $\acd{\A}$.
%The first two steps requires a time polynomial in $\size{\zielonkaDAG{\F}}+|Q| \leq \size{\zielonkaTree{\F}}+|Q|$, while the third step takes a time polynomial in $\size{\acd{\A}}$. 

%In order to obtain Theorem~\ref{th-comp:compt-ACD-poly-ZT}, we only need to show that $\size{\acd{\A}}$ is polynomial in $\size{\zielonkaTree{\F}}+|Q|$, as proven below.

\begin{proposition}\label{prop-comp:size-ACD}
	Let $\A$ be a "Muller" automaton and $\F$ the family defining its "acceptance condition". Then,
	$  |\acd{\A}|\leq |Q|\cdot|\zielonkaTree{\F}|$.
\end{proposition}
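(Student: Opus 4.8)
The plan is to bound the size of each tree $\altTree{\ell_i}$ in $\acd{\A}$ by establishing an injection (or more precisely a bounded-to-one map) from its nodes into $Q \times \nodes(\zielonkaTree{\F})$, and then sum over the SCCs $\ell_1,\dots,\ell_k$. The key observation is that a node $n$ of $\altTree{\ell_i}$ is labelled by a cycle $\nuAcd(n) = \ell$, and this cycle has an associated set of colours $\colAut(\ell) \subseteq \GG$. By Lemma~\ref{lemma-zt:accepting-set-in-ZT}, the set $\colAut(\ell)$ determines a well-defined node of $\zielonkaTree{\F}$ — namely the $\ancestor$-maximal node whose label contains $\colAut(\ell)$ — and the round/square status of that node matches whether $\ell$ is accepting or rejecting. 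I would like to send $n$ to the pair $(q, z(n))$ where $q$ is some state of $\nuAcd(n)$ and $z(n)$ is this node of $\zielonkaTree{\F}$. The subtlety is that neither coordinate alone is injective, so I need to show the pair determines $n$ (up to a controlled ambiguity), which is where the structure of the ACD must be exploited.

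First I would fix an SCC $\ell_i$ and a state $q$ appearing in it, and restrict attention to the local subtree $\treeVertex{q}$ — the nodes of $\altTree{\ell_i}$ whose label contains $q$. This subtree is a branch-like object: along any root-to-leaf path in $\altTree{\ell_i}$, the nodes containing $q$ form a contiguous segment (a sub-path), and crucially the cycles labelling them are \emph{totally ordered by inclusion} (since two cycles through a common state $q$ that are nested in the tree are comparable, and consecutive ones on a branch are strictly nested). So $\treeVertex{q}$, as a subset of nodes, when we project to labels, gives a strictly decreasing chain of cycles $\ell \supsetneq \ell' \supsetneq \dots$ all containing $q$. Now I map each node $n \in \nodesTreeVertex{q}$ to the $\zielonkaTree{\F}$-node $z(n)$ determined by $\colAut(\nuAcd(n))$ as above. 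The heart of the argument is to show this map $\nodesTreeVertex{q} \to \nodes(\zielonkaTree{\F})$ is injective: if $n$ is a strict descendant of $n'$ in $\treeVertex{q}$, then $\colAut(\nuAcd(n)) \subsetneq \colAut(\nuAcd(n'))$ would give $z(n)$ strictly below $z(n')$ — but I must rule out the case where the colour sets are equal, i.e. $\colAut(\ell) = \colAut(\ell')$ with $\ell' \subsetneq \ell$. Here I'd use Remark~\ref{rmk-zt:union-changes-acceptance}-style reasoning applied to the ACD: passing from a node to its child in $\altTree{\ell_i}$ flips the accepting/rejecting status, so $\colAut$ of a node and of its child cannot lie in the same "acceptance class" in a way that keeps $z(\cdot)$ fixed — more carefully, if $\colAut(\ell')=\colAut(\ell)$ then $\ell'$ and $\ell$ would be simultaneously accepting and rejecting, a contradiction. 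This gives $|\nodesTreeVertex{q}| \leq |\zielonkaTree{\F}|$.

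Then I would sum: every node $n$ of $\acd{\A}$ belongs to $\altTree{\ell_i}$ for the unique SCC $\ell_i$ with $\nuAcd(n)$ a subcycle of $\ell_i$, and $\nuAcd(n)$ contains at least one recurrent state $q$, so $n \in \nodesTreeVertex{q}$ for that $q$. Hence $|\acd{\A}| = \sum_i |\altTree{\ell_i}| \leq \sum_{q \in Q_{\mathrm{rec}}} |\nodesTreeVertex{q}| \leq |Q| \cdot |\zielonkaTree{\F}|$, where $Q_{\mathrm{rec}}$ is the set of recurrent states. (Transient states contribute nothing, which only helps.) The main obstacle I anticipate is the injectivity step of the previous paragraph: ruling out that two nested cycles on a branch of $\altTree{\ell_i}$ through the same state $q$ can have the same set of colours. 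This is intuitively clear because adjacent nodes in the ACD must flip acceptance status — so their colour sets land on opposite sides of $\F$ — but making this fully rigorous requires care about non-adjacent nodes in $\nodesTreeVertex{q}$ that might be adjacent in $\altTree{\ell_i}$ versus separated by nodes outside $\treeVertex{q}$; one must check the colour sets still strictly decrease (or at least that $z(\cdot)$ strictly descends) across such gaps, which follows because a child cycle is always a \emph{proper} subcycle and the intermediate acceptance flips force the colour set to actually shrink whenever the $\zielonkaTree{\F}$-node would otherwise stay the same. I would isolate this as a short lemma about alternating chains of cycles before assembling the count.
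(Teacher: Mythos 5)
Your overall skeleton — bound each local subtree $\treeVertex{q}$ by $|\zielonkaTree{\F}|$ and then sum over states — is exactly the paper's, and your final summation step is fine. But the core of your argument rests on a false structural claim: $\treeVertex{q}$ is \emph{not} a chain. Two distinct children $n_1,n_2$ of a node $n$ of $\altTree{\ell_i}$ can both contain $q$ in their labels, and their cycles $\nuAcd(n_1),\nuAcd(n_2)$ are then \emph{incomparable} for inclusion (they are distinct maximal alternating subcycles of $\nuAcd(n)$), even though both pass through $q$; ``two cycles through a common state are comparable'' is not true --- only their \emph{union} being a cycle is. Indeed, if every $\treeVertex{q}$ were a single branch, Proposition~\ref{prop-prel:typeness-ACD} would make every Muller automaton parity type. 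As a consequence, your injectivity argument, which only treats the case where one node is an ancestor of the other, leaves all incomparable pairs of $\nodesTreeVertex{q}$ unhandled.

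Moreover, the incomparable case cannot be repaired with your map, because $z(n)$ depends only on the colour set $\colAut(\nuAcd(n))$: distinct incomparable nodes of $\treeVertex{q}$ can carry the same colour set, and even the same cycle label. (Take the one-state automaton of Remark~\ref{rmk-acd:ZT-as-ACD}, whose ACD is $\zielonkaTree{\F}$ itself; the tree of Figure~\ref{fig-zt:zielonkaTree-example} has two leaves labelled $\{\gg\}$, both lying in $\treeVertex{q}$. Any function factoring through the colour set identifies them.) This is precisely the difficulty the paper's proof is organised around: it builds the embedding $f\colon \treeVertex{q}\to\zielonkaTree{\F}$ recursively top-down and uses a separate combinatorial lemma (Lemma~\ref{lemma-comp:many-descendants}) to place the images of the $k$ children of a node at $k$ pairwise $\ancestor$-incomparable strict descendants of $f(n)$, so that the subtrees rooted at these images are disjoint and the recursion goes through. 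The key input to that lemma is the one correct observation you do make --- that for siblings in $\treeVertex{q}$ the union of their colour sets flips acceptance, by maximality of the alternating subcycles --- but it must be used to \emph{choose} distinct, position-dependent targets rather than to define a canonical one from the colour set alone. Your chain analysis survives only as the special case where every node of $\treeVertex{q}$ has at most one child.
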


We start by giving a technical lemma that will be useful for the subsequent analysis.

\begin{lemma}\label{lemma-comp:many-descendants}
	Let $C\subseteq \GG$ and let $n_C$ be a node in $\zielonkaTree{\F}$ %(resp. in $\zielonkaDAG{\F}$) 
	such that $C\subseteq \nu(n_C)$. %and $C\in \F \iff \nu(n_C)\notin \F$. 
	Let $D_1,\dots, D_k$ be $k$ subsets of $C$ such that, for all $i\neq j$, $C\in \F \iff D_i\notin \F \iff D_i\cup D_j\in \F$. %(in particular, this is the case if $D_i$ are the maximal alternating subsets of~$C$).
	Then, there are $k$  strict descendants of $n_C$, $n_1,\dots,n_k$, such that $D_i\subseteq \nu(n_i)$, $ \nu(n_i)\in \F \iff D_i\in \F$ and such that nodes $n_i$ are pairwise incomparable for the ancestor relation.
	Moreover, these nodes can be computed in polynomial time in $|\zielonkaTree{\F}|$.
	% (resp. in $|\zielonkaDAG{\F}|$).
	%Also, $n_i$ "round" iff $D_i\in \F$ and we can take $n_i$ incomparable for $\ancestor$
\end{lemma}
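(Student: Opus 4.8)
The plan is to produce a single well-chosen round ancestor of the sought nodes and read off $n_1,\dots,n_k$ as its children. First I would reduce to the case $C\in\F$ (the other case is symmetric, exchanging the roles of round and square nodes); so from now on $C\in\F$, each $D_i\notin\F$, and $D_i\cup D_j\in\F$ for $i\neq j$. Observe that every $D_i$ is nonempty: if $k\geq 2$ this is forced, since $D_i=\emptyset$ would give $D_i\cup D_j=D_j\notin\F$, contradicting the hypothesis; and the case $k=1$ with $D_1\neq\emptyset$ is handled by exactly the argument below (the degenerate case $k=1$, $D_1=\emptyset$ does not occur in the applications, where the $D_i$ are labels of children of a node).

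Next I would fix an auxiliary node $n^{\ast}$, equal to $n_C$ or to a strict descendant of it, that is round and whose label contains $C$ (hence all the $D_i$). If $n_C$ is round, take $n^{\ast}=n_C$. If $n_C$ is square, then $\nu(n_C)\notin\F$ while $C\in\F$ and $C\subseteq\nu(n_C)$, so $C\subsetneq\nu(n_C)$ and $C$ is a nonempty subset of $\nu(n_C)$ lying in $\F$; by Definition~\ref{def-zt:zielonkaTree} it is contained in the label of some child $c$ of $n_C$ (a maximal nonempty accepting subset of $\nu(n_C)$), and we set $n^{\ast}=c$, which is round and satisfies $C\subseteq\nu(n^{\ast})$. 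In either case every child of $n^{\ast}$ is a strict descendant of $n_C$.

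Now for each $i$, the set $D_i$ is a nonempty subset of $\nu(n^{\ast})$ not in $\F$, and since $n^{\ast}$ is round its children are labelled exactly by the maximal nonempty subsets of $\nu(n^{\ast})$ not in $\F$; hence $D_i\subseteq\nu(n_i)$ for some child $n_i$ of $n^{\ast}$. Then $\nu(n_i)\notin\F$, so $\nu(n_i)\in\F\iff D_i\in\F$, and $n_i$ is a strict descendant of $n_C$. It remains to see that $n_i\neq n_j$ for $i\neq j$: if $n_i=n_j$ then $D_i\cup D_j\subseteq\nu(n_i)\notin\F$, contradicting $D_i\cup D_j\in\F$. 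Thus $n_1,\dots,n_k$ are pairwise distinct children of the single node $n^{\ast}$, hence pairwise incomparable for $\ancestor$. For the effectiveness claim, the partition into round and square nodes and the labelling $\nu$ are part of the input $\zielonkaTree{\F}$, so testing ``$\nu(m)\in\F$'' just means testing whether $m$ is round; then $n^{\ast}$ is found by inspecting $n_C$ and, if needed, its children, and each $n_i$ by scanning the children of $n^{\ast}$ — all in time polynomial in $|\zielonkaTree{\F}|$.

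The main obstacle, and the point worth stressing, is the choice of $n^{\ast}$: the naive idea of taking $n_i$ to be a deepest node whose label contains $D_i$ gives nodes with the right labels but which may be comparable to one another, losing the incomparability requirement. Forcing all the $n_i$ to be children of one common round node — which exists precisely because $C\in\F$ supplies a round node dominating every $D_i$, while the condition $D_i\cup D_j\in\F$ prevents two different $D_i$ from being absorbed by the same child — is what makes incomparability automatic.
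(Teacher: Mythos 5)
There is a genuine gap in the step establishing that the $n_i$ are pairwise distinct, and in fact your construction itself can fail, not just its justification. You argue that $n_i=n_j$ is impossible because it would give $D_i\cup D_j\subseteq \nu(n_i)\notin\F$, ``contradicting $D_i\cup D_j\in\F$''. But a subset of a rejecting set need not be rejecting: only the sets lying strictly between the label of a child and the label of its parent are forced to alternate (Remark~\ref{rmk-zt:union-changes-acceptance}); arbitrary subsets of $\nu(n_i)$ are unconstrained. Concretely, take $\GG=\set{a,b,c,d}$ and $\F=\set{X\subseteq\GG \mid X\neq\emptyset,\ d\in X}\cup\set{\set{a,b}}$, with $C=\set{a,b}$, $D_1=\set{a}$, $D_2=\set{b}$ and $n_C$ the root. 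The hypotheses of the lemma hold ($C$ and $D_1\cup D_2$ are in $\F$, while $D_1,D_2\notin\F$), the root is round, so $n^{\ast}$ is the root; but its \emph{unique} child is labelled $\set{a,b,c}$ and contains both $D_1$ and $D_2$, so your procedure is forced to output $n_1=n_2$, and no cleverer choice among the children of $n^{\ast}$ is available. The correct witnesses sit deeper in the tree: the branch is $\set{a,b,c,d}\to\set{a,b,c}\to\set{a,b}$, and the two leaves labelled $\set{a}$ and $\set{b}$ below the round node $\set{a,b}$ are the incomparable strict descendants the lemma asks for.

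The approach you dismissed as naive is the one that succeeds, and the comparability worry that led you to abandon it is resolved by the very union hypothesis you tried to invoke. Take $n_i$ to be a deepest descendant of $n_C$ whose label contains $D_i$; by maximality no child of $n_i$ contains $D_i$, hence none contains $D_i\cup D_j$ either. If $D_j\subseteq\nu(n_i)$ for some $j\neq i$ (which is exactly what would happen if $n_j$ were a descendant of $n_i$), then $n_i$ is also a deepest node containing $D_i\cup D_j$, and Lemma~\ref{lemma-zt:accepting-set-in-ZT} forces $D_i\cup D_j\notin\F$ because $n_i$ is square, a contradiction. So $D_j\nsubseteq\nu(n_i)$ for all $j\neq i$, which yields incomparability. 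This is the paper's proof; the lesson is that the union hypothesis must be applied at a node where $D_i$ is \emph{maximally placed} (so that Lemma~\ref{lemma-zt:accepting-set-in-ZT} is actually applicable to $D_i\cup D_j$), not at an arbitrary rejecting node whose label happens to contain the union.
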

\begin{proof}
	To simplify notations we assume that $C\in \F$ and $D_i\notin \F$ (the proof is symmetric in the other case).
	For each $D_i$ we pick a node $n_i$ which is a descendant of $n_C$, such that $D_i\subseteq \nu(n_i)$ and maximal for $\ancestor$ with this property. In particular, $n_i$ is "square" and a strict descendant (Lemma~\ref{lemma-zt:accepting-set-in-ZT}). 
	We prove that, for $j\neq i$, $D_j\nsubseteq \nu(n_i)$, implying that $n_i$ and $n_j$ are incomparable for the ancestor relation.
	Suppose by contradiction that for some $j\neq i$, $D_j\subseteq \nu(n_i)$. Then, $D_j\subseteq D_i\cup D_j\subseteq \nu(n_i)$, so, by Lemma~\ref{lemma-zt:accepting-set-in-ZT}, $D_i\cup D_j\notin \F$, contradicting the hypothesis.
\end{proof}

\begin{lemma}\label{lemma-comp:size-treeVertex-noMore-ZT}
	For every state $q$, the tree $\treeVertex{q}$ has size at most $|\zielonkaTree{\F}|$.
\end{lemma}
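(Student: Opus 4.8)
The plan is to exhibit an injective map $\Phi$ from the set of nodes $\nodesTreeVertex{q}$ of the local subtree $\treeVertex{q}$ into the set of nodes $N$ of $\zielonkaTree{\F}$, which immediately yields $|\nodesTreeVertex{q}| \le |\zielonkaTree{\F}|$. The natural candidate is a colour-projection map: to a node $n$ of $\treeVertex{q}$ labelled by a cycle $\ell = \nuAcd(n)$ with $q \in \states{\ell}$, we associate the unique node $m = \Phi(n)$ of $\zielonkaTree{\F}$ that is maximal for $\ancestor$ among nodes whose label contains $\colAut(\ell)$. By Lemma~\ref{lemma-zt:accepting-set-in-ZT}, such a node $m$ has the \emph{same shape} as $n$ (round iff $\colAut(\ell) \in \F$ iff $\ell$ is accepting iff $n$ is round), so $\Phi$ respects the round/square partition. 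First I would check that $\Phi$ is well-defined and monotone: if $n \ancestor n'$ in $\treeVertex{q}$, then $\nuAcd(n') \subseteq \nuAcd(n)$, hence $\colAut(\nuAcd(n')) \subseteq \colAut(\nuAcd(n)) \subseteq \nu(\Phi(n))$, so $\Phi(n)$ is an ancestor-or-equal of some node containing $\colAut(\nuAcd(n'))$; one then argues $\Phi(n) \ancestor \Phi(n')$ or $\Phi(n) = \Phi(n')$ using maximality.

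The crux is injectivity, and here is where I expect the main obstacle. Monotonicity alone does not prevent two comparable nodes $n \nprefix n'$ of $\treeVertex{q}$ from mapping to the same node of $\zielonkaTree{\F}$ — and indeed, if we only had monotonicity this could fail. The key observation to rule this out is that consecutive levels of $\altTree{\ell_0}$ strictly alternate shape (round/square), so $n$ and $n'$ having the same image $m$ would force a "jump" of at least two levels in $\treeVertex{q}$: there is an intermediate node $\hat n$ with $n \nprefix \hat n \nprefix n'$ (or $\hat n$ a child of $n$ on the branch to $n'$) of the opposite shape to $n$, whose label $\hat\ell$ is a cycle with $q \in \states{\hat\ell}$ and $\colAut(\nuAcd(n')) \subseteq \colAut(\hat\ell) \subsetneq \colAut(\nuAcd(n))$. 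Then, applying Remark~\ref{rmk-zt:union-changes-acceptance}-style reasoning (the colour set $\colAut(\hat\ell)$ has the opposite acceptance status of $\colAut(\nuAcd(n))$, while both are sandwiched between $\colAut(\nuAcd(n'))$ and $\nu(m)$), one contradicts the maximality defining $\Phi(n') = m$: a node below $m$ would contain $\colAut(\nuAcd(n'))$, or $m$ itself would have the wrong shape. I would make this precise by invoking Lemma~\ref{lemma-comp:many-descendants} or Lemma~\ref{lemma-zt:accepting-set-in-ZT} directly on the colour sets.

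Concretely, the steps in order are: (1) define $\Phi$ and verify it is well-defined and shape-preserving via Lemma~\ref{lemma-zt:accepting-set-in-ZT}; (2) prove monotonicity: $n \ancestor n'$ implies $\Phi(n) \ancestor \Phi(n')$; (3) prove that $\Phi$ does not collapse a node with any of its strict descendants in $\treeVertex{q}$, using the strict alternation of shapes along branches of $\altTree{\ell_0}$ together with the fact that $\colAut$ strictly shrinks the colour set only "enough" to flip acceptance — so an intermediate node of opposite shape would already be witnessed below $m$ in $\zielonkaTree{\F}$, contradicting maximality; (4) prove that $\Phi$ does not collapse two incomparable nodes: if $n, n'$ are children of a common node $\hat n$ in $\treeVertex{q}$ (or more generally incomparable in $\treeVertex{q}$), their colour sets $C_n = \colAut(\nuAcd(n))$ and $C_{n'}=\colAut(\nuAcd(n'))$ satisfy $C_n \cup C_{n'}$ has the acceptance status of their parent (opposite to $C_n, C_{n'}$), so by Lemma~\ref{lemma-comp:many-descendants} applied at $m = \Phi(\hat n)$ they admit \emph{incomparable} witnesses in $\zielonkaTree{\F}$, forcing $\Phi(n) \ne \Phi(n')$; (5) conclude $\Phi\colon \nodesTreeVertex{q} \hookrightarrow N$ is injective, hence $|\treeVertex{q}| \le |\zielonkaTree{\F}|$. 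Step (3) is the delicate one and the heart of the argument; steps (1), (2), (5) are routine, and step (4) is essentially a direct application of Lemma~\ref{lemma-comp:many-descendants}.
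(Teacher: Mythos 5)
There is a genuine gap, and it lies in how you define the map. You take $\Phi(n)$ to be \emph{the} node of $\zielonkaTree{\F}$ that is maximal for $\ancestor$ among nodes whose label contains $\colAut(\nuAcd(n))$ — a global, canonical choice depending only on the colour set $C_n=\colAut(\nuAcd(n))$. First, such a maximal node need not be unique (several incomparable nodes of a Zielonka tree can contain the same set in their labels), so $\Phi$ is not well defined as stated. More seriously, two distinct nodes of $\treeVertex{q}$ can carry the same colour set — indeed even the same cycle label, since $\altTree{\ell_0}$ is a tree and the same cycle may occur below two different incomparable nodes (this is exactly why the paper introduces the DAG version, and why it states that the arguments for the tree and DAG bounds differ). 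Any map determined by $C_n$ alone must collide such nodes, so no choice of tiebreaking can make your $\Phi$ injective. Your step (4) does not rescue this: Lemma~\ref{lemma-comp:many-descendants} guarantees that incomparable witnesses for the children \emph{exist} below the parent's image, but your canonically defined $\Phi$ has no reason to select those witnesses, and the argument only applies to siblings anyway — it says nothing about two nodes with equal colour sets sitting in different branches further down.

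The repair is to define the embedding \emph{recursively relative to the image of the parent}, which is what the paper does: send the root of $\treeVertex{q}$ to the root of $\zielonkaTree{\F}$, and, given $f(n)$, use Lemma~\ref{lemma-comp:many-descendants} to pick for the children $n_1,\dots,n_k$ of $n$ pairwise \emph{incomparable strict descendants} of $f(n)$ whose labels contain the respective $D_i=\colAut(\nuAcd(n_i))$. (The hypothesis of that lemma is exactly the observation you make in step (4): since all these cycles contain $q$, the union of two siblings' cycles is again a cycle, so by maximality $D_i\cup D_j$ flips acceptance.) Injectivity is then automatic: the subtrees of $\zielonkaTree{\F}$ rooted at incomparable nodes are disjoint, so the images of distinct subtrees of $\treeVertex{q}$ never meet, and strict descent separates each node from its ancestors — no analysis of shape alternation or ``jumps'' (your step (3)) is needed. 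You identified all the right ingredients; the recursion, rather than a canonical projection, is what makes them fit together.
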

\begin{proof}
	%The proof below applies similarly to the "Zielonka DAG".
	We define in a top-down fashion an injective function $f\colon\treeVertex{q}\to \zielonkaTree{\F}$. For the base case, we send the root of $\treeVertex{q}$ to the root of $\zielonkaTree{\F}$.
	Let $n$ be a node in $\treeVertex{q}$ such that $f(n)$ has been defined, and let $n_1,\dots, n_k$ be its children. We let $C_n = \colAut(\nuAcd(n))$ and $D_i = \colAut(\nuAcd(n_i))$ be the colours labelling the "cycles" of these nodes. These sets satisfy that for
	$i\neq j$, $C_n\in \F \iff D_i\notin \F \iff D_i\cup D_j\in \F$. Indeed, if the union of $D_i$ and $D_j$ does not change the acceptance, we could take the union of the corresponding "cycles", contradicting maximality.
	Lemma~\ref{lemma-comp:many-descendants} provides $k$ descendants of $f(n)$ such that the subtrees rooted at them are pairwise disjoint. This allows to define $f(n_i)$ for all $i$ and carry out the induction.
\end{proof}

We conclude  that the size of $\acd{\A}$ is polynomial in  $|Q|+|\zielonkaTree{\F}|$, concluding the proof of Proposition~\ref{prop-comp:size-ACD}:
\[ |\acd{\A}|\leq \sum_{q\in Q} |\treeVertex{q}| \leq |Q|\cdot|\zielonkaTree{\F}|.\]	

%This concludes the proof of Theorem~\ref{th-comp:compt-ACD-poly-ZT}.

\paragraph*{Upper bound on the size of the ACD-DAG}

\begin{proposition}\label{prop-comp:size-ACD-DAG}
	Let $\A$ be a "Muller" automaton and $\F$ the family defining its "acceptance condition". Then,
	$  |\acdDAG{\A}|\leq |Q|\cdot|\zielonkaDAG{\F}|$.
\end{proposition}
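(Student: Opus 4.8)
The plan is to mimic the structure of the proof of Proposition~\ref{prop-comp:size-ACD}, but working node-by-node through the automaton and using the $\zielonkaDAG{\F}$ instead of $\zielonkaTree{\F}$. The key quantity to bound is $\sum_{q\in Q}|\dagVertex{q}|$: since every node of $\acdDAG{\A}$ labelled by a cycle $\ell$ has $\states{\ell}\neq\emptyset$, and $\dagVertex{q}$ collects exactly the nodes whose label is a cycle containing $q$, we have $|\acdDAG{\A}|\leq \sum_{q\in Q}|\dagVertex{q}|$. So it suffices to show $|\dagVertex{q}|\leq |\zielonkaDAG{\F}|$ for each state $q$. Unlike in the tree case, here I cannot just pick disjoint subtrees, because in a DAG the same node can be reached along several branches; so the injection argument needs to be set up more carefully, keyed to \emph{labels} (sets of colours) rather than to tree positions.

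The core step is to build an injective map $g\colon \dagVertex{q}\to \zielonkaDAG{\F}$ (equivalently: an injection from the set of distinct labels $\colAut(\nuAcd(n))$ occurring among nodes $n$ of $\dagVertex{q}$ into the set of labels of $\zielonkaDAG{\F}$, i.e.\ into the node set of $\zielonkaDAG{\F}$ since $\nu'$ is the identity there). First I would observe that two different nodes of $\dagVertex{q}$ that are labelled by cycles with the \emph{same} colour set are in fact the same node of $\acdDAG{\A}$ — because $\altDAG{\ell}$ is obtained by merging nodes sharing a label — so $\dagVertex{q}$ is genuinely indexed by a set of distinct colour-subsets of $\GG$. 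Then I would send a node of $\dagVertex{q}$ labelled by a cycle $\ell$ to a node of $\zielonkaDAG{\F}$ that is maximal (for $\ancestorDAG$) among those whose label contains $\colAut(\ell)$; by Lemma~\ref{lemma-zt:accepting-set-in-ZT} (stated to hold for $\zielonkaDAG{\F}$ as well) such a node has the same ``round/square'' status as $\ell$ is accepting/rejecting. The content is to show this map is injective. Suppose two distinct nodes $n,n'$ of $\dagVertex{q}$ with labels (colour sets) $D\neq D'$ map to the same node $m$ of $\zielonkaDAG{\F}$. Then $D,D'\subseteq \nu'(m)$ and $\colAut$-wise both have acceptance status opposite to $\nu'(m)$. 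Since $n,n'\in\dagVertex{q}$, the underlying cycles share the state $q$, so their union is again a cycle, with colour set $D\cup D'\subseteq\nu'(m)$; by Lemma~\ref{lemma-zt:accepting-set-in-ZT} for the DAG, $D\cup D'$ has the opposite status to $\nu'(m)$, i.e.\ the same status as $D$ — but then the union cycle would witness that neither the cycle labelled $D$ nor the one labelled $D'$ is a \emph{maximal} sub-cycle of its parent with the required alternating status (unless $D=D'$), contradicting the definition of $\altDAG{\ell}$ unless $n$ and $n'$ have a common parent that they both ``should'' have been merged under. I would need to phrase this last contradiction with care: the cleanest route is probably to pick the maximal node $m$ as above for $n$, show $D'\not\subseteq\nu'(m)$ directly (just as in Lemma~\ref{lemma-comp:many-descendants}: if $D'\subseteq\nu'(m)$ then $D\cup D'\subseteq\nu'(m)$ so $D\cup D'$ is accepting iff $m$ is square, contradicting the hypothesis that $D,D'$ both flip $m$'s status while their union does too), hence $g(n)\neq g(n')$.

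The main obstacle I anticipate is exactly handling the non-disjointness inherent to DAGs: in the tree proof one used \emph{incomparable} descendants with \emph{disjoint subtrees}, and one defined the injection branch-by-branch top-down. In the DAG, the labels of $\dagVertex{q}$ are all subsets of $\GG$ comparable in complicated ways, and I want a single global injection into the node set of $\zielonkaDAG{\F}$, so I should not try to proceed top-down. Instead the right statement is: the map ``$\ell \mapsto$ the $\ancestorDAG$-maximal node of $\zielonkaDAG{\F}$ whose label contains $\colAut(\ell)$'' is well-defined (uniqueness of such a maximal node for a fixed subset $C\subseteq\GG$ holds in $\zielonkaDAG{\F}$ because if two incomparable nodes both contained $C$ in their label, their common ancestor structure would force a contradiction — this needs a short argument, essentially Lemma~\ref{lemma-zt:accepting-set-in-ZT} applied to $C$), and it is injective on $\dagVertex{q}$ by the union-of-cycles argument above. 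Wrapping up: $|\dagVertex{q}|\leq|\zielonkaDAG{\F}|$ for all $q$, hence
\[
|\acdDAG{\A}|\ \leq\ \sum_{q\in Q}|\dagVertex{q}|\ \leq\ |Q|\cdot|\zielonkaDAG{\F}|,
\]
which is the claimed bound. I expect the subtlety about ``a unique $\ancestorDAG$-maximal node containing a given label'' to be the one place where a genuinely new little lemma (beyond what the tree proof needed) is required.
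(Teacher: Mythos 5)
Your overall strategy is the paper's: reduce the bound to showing $|\dagVertex{q}|\leq|\zielonkaDAG{\F}|$ for each state $q$ via an injection into the Zielonka DAG, and then sum over states. However, the injection you propose --- send a node with colour set $C$ to \emph{the} $\ancestorDAG$-maximal node of $\zielonkaDAG{\F}$ whose label contains $C$ --- rests on a uniqueness claim that is false. Take $\GG=\{a,b,c,d\}$ and $\F=\{\{a,b,c\},\{a,b,d\}\}$: the maximal nodes of $\zielonkaDAG{\F}$ containing $\{a\}$ in their label are the three pairwise-incomparable leaves labelled $\{a,b\}$, $\{a,c\}$ and $\{a,d\}$. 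Lemma~\ref{lemma-zt:accepting-set-in-ZT} only guarantees that all such maximal nodes have the same shape, not that there is a single one. So your map involves an arbitrary choice, and injectivity has to be established for whatever choice is made.

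That is where the more serious gap lies. Your injectivity argument correctly derives that if two distinct nodes of $\dagVertex{q}$ with colour sets $D\neq D'$ were sent to the same node $m$, then $D$, $D'$ and $D\cup D'$ all have the \emph{same} acceptance status. You then try to contradict maximality of the corresponding cycles inside their ACD-parents, in the style of Lemma~\ref{lemma-comp:many-descendants}. But the hypothesis of that lemma --- that the union \emph{flips} the acceptance status --- is only available when the two cycles are maximal alternating subcycles of a \emph{common} parent, i.e.\ when the two nodes are siblings. For two nodes of $\dagVertex{q}$ that are neither comparable nor siblings (say, children of two different square nodes), the union $\nuAcd(n)\cup\nuAcd(n')$ is still a cycle through $q$ but need not be a subcycle of either parent, so no maximality is violated and the same-status conclusion is not by itself contradictory. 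This cousin case is exactly what the paper's proof is built to handle: it defines the embedding \emph{recursively}, constraining $f(n)$ to lie in the sub-DAG below $f(\pred(n))$, and obtains the contradiction by walking down the chain $\predBranch(n_1)$ to the last node whose cycle still contains $\nuAcd(n_2)$ and analysing its successor on that chain. A further, smaller issue: you justify that distinct nodes of $\dagVertex{q}$ have distinct colour sets by appealing to the merging performed when building $\acdDAG{\A}$, but that merging identifies nodes with the same \emph{cycle}, not the same colour set; the statement is true but requires the separate argument given as the first Claim in the paper's proof.
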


For obtaining this result, we want to follow the same proof scheme than in Proposition~\ref{prop-comp:size-ACD}: our objective is to show that for all $q\in Q$, the "local subDAG" $\dagVertex{q}$ can be embedded in  $\zielonkaDAG{\F}$. However, we face a technical difficulty; in the case of the "ACD" we had that the subtrees rooted at $k$ incomparable nodes were disjoint, which allowed us to carry out the recursion smoothly. This property no longer holds in "DAGs". 

\begin{lemma}\label{lemma-comp:size-dagVertex-noMore-ZDAG}
	For every state $q$, the "DAG" $\dagVertex{q}$ has size at most $|\zielonkaDAG{\F}|$.
\end{lemma}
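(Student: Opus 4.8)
The plan is to exhibit an injective map $\phi \colon \dagVertex{q} \to \zielonkaDAG{\F}$, mirroring the proof of Lemma~\ref{lemma-comp:size-treeVertex-noMore-ZT} but adapted to the DAG setting. I would build $\phi$ top-down: the root of $\dagVertex{q}$ (the SCC $\ell_i$ containing $q$) is sent to an $\ancestorDAG$-deepest node of $\zielonkaDAG{\F}$ whose label contains $\colAut(\nuAcd(\text{root}))$; and given $\phi(\ell)$ for a node $\ell$ of $\dagVertex{q}$ with children $\ell_1,\dots,\ell_k$, each $\ell_j$ is sent to a deepest descendant of $\phi(\ell)$ whose label contains $C_j := \colAut(\nuAcd(\ell_j))$. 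As in the tree case, writing $C := \colAut(\nuAcd(\ell))$, the sets $C, C_1, \dots, C_k$ satisfy the hypotheses of Lemma~\ref{lemma-comp:many-descendants}: if the union of two of the $C_j$ did not change acceptance, then the corresponding subcycles of $\nuAcd(\ell)$ -- which all pass through $q$ and hence share a state, so their union is again a cycle -- could be merged, contradicting the maximality in Definition~\ref{def:tree_alternating_cycles}. Combined with the DAG-version of Lemma~\ref{lemma-zt:accepting-set-in-ZT} (which fixes the shape of $\phi(\ell)$ to match the acceptance status of $\ell$), the DAG-analogue of Lemma~\ref{lemma-comp:many-descendants} provides $k$ pairwise incomparable strict descendants of $\phi(\ell)$ with the required labels, so the construction can proceed.

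The difficulty, flagged just before the statement, is that in a DAG incomparable nodes no longer root disjoint subDAGs, so this naive recursion does not by itself produce an injective $\phi$: two distinct nodes of $\dagVertex{q}$ reached through different branches might be assigned the same node of $\zielonkaDAG{\F}$, or a single node of $\dagVertex{q}$ reached through several parents might receive incompatible values. To avoid this, I would pin $\phi(\ell)$ down canonically so that it depends only on the cycle $\nuAcd(\ell)$ -- indeed only on $\colAut(\nuAcd(\ell))$ together with its acceptance status -- for instance as the node of $\zielonkaDAG{\F}$ reached by a fixed deterministic descent from the root that always stays inside labels containing $\colAut(\nuAcd(\ell))$. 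Such a $\phi$ factors through $\ell \mapsto \colAut(\nuAcd(\ell))$, hence descends correctly from $\treeVertex{q}$ to its quotient $\dagVertex{q}$, and the only thing left to verify is injectivity.

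Injectivity is the main obstacle, and this is where the structure of cycles through a fixed state is essential. Suppose $\phi(\ell_1) = \phi(\ell_2)$ for distinct nodes $\ell_1,\ell_2$ of $\dagVertex{q}$. Both cycles pass through $q$, so $\ell^\ast := \nuAcd(\ell_1) \cup \nuAcd(\ell_2)$ is again a cycle through $q$, with $\colAut(\ell^\ast) = \colAut(\nuAcd(\ell_1)) \cup \colAut(\nuAcd(\ell_2))$ contained in the common image label and carrying the same acceptance status as $\ell_1$ and $\ell_2$ (by the DAG-version of Lemma~\ref{lemma-zt:accepting-set-in-ZT}). The plan is to derive $\ell_1 = \ell_2$ from this: walking up in the ACD tree $\altTree{\ell_i}$ to a least common ancestor of (chosen representatives of) $\ell_1$ and $\ell_2$, the cycle $\ell^\ast$ is a subcycle of that ancestor's cycle carrying the acceptance status opposite to its relevant children, and absorbing it into the child-cycle on the path to $\ell_1$ contradicts the maximality clause of Definition~\ref{def:tree_alternating_cycles} unless everything collapses; iterating this absorption-and-descent pushes the common ancestor strictly downward until the two representatives coincide. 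In particular this shows that two distinct nodes of $\dagVertex{q}$ cannot carry the same set of colours, which is the obstruction that fails for general DAGs but holds here thanks to union-closure of cycles through $q$. Once $\phi$ is shown to be injective we conclude $|\dagVertex{q}| \le |\zielonkaDAG{\F}|$, and summing over all states as in the proof of Proposition~\ref{prop-comp:size-ACD} gives $|\acdDAG{\A}| \le |Q|\cdot|\zielonkaDAG{\F}|$.
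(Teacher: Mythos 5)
Your overall strategy is the paper's: embed $\dagVertex{q}$ injectively into $\zielonkaDAG{\F}$ by repeatedly descending to deep nodes whose labels contain the relevant colour sets, and prove injectivity via unions of cycles through $q$. But your modification --- pinning $\phi(\ell)$ down canonically so that it depends only on $\colAut(\nuAcd(\ell))$, via a fixed descent from the root of $\zielonkaDAG{\F}$ --- breaks the argument, and the injectivity step you sketch does not close the gap. Here is a concrete instance. Take states $q,s_1,s_2$ and edges $\alpha,\beta,\gamma$ (self-loops on $q$ with colours $a,b,c$), $\delta_1\colon q\to s_1$ and $\delta_2\colon s_1\to q$ (colours $d,a$), $\epsilon_1\colon q\to s_2$ and $\epsilon_2\colon s_2\to q$ (colours $e,b$). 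Let $\F$ contain exactly $\{a\},\{b\},\{a,b\},\{a,b,d\},\{a,b,e\},\{a,c,d\},\{b,c,e\},\{a,b,c,d\},\{a,b,c,e\},\{a,b,d,e\},\{a,b,c,d,e\}$. One checks that the root of the ACD (accepting) has rejecting children $\{\alpha,\delta_1,\delta_2\}$, $\{\beta,\epsilon_1,\epsilon_2\}$ and $\{\alpha,\beta,\gamma\}$, whose respective accepting children are $\{\alpha\}$, $\{\beta\}$ and $\{\alpha,\beta\}$, with colour sets $\{a\}$, $\{b\}$ and $\{a,b\}$. In $\zielonkaDAG{\F}$ the root has square children $\{a,c,d,e\}$, $\{b,c,d,e\}$ and $\{a,b,c\}$, and the node $\{a,b,c\}$ has the single round child $\{a,b\}$, which is a leaf. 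A fixed deterministic descent that sends both $\{a\}$ and $\{b\}$ through the child $\{a,b,c\}$ terminates, for both, at the node $\{a,b\}$: three distinct nodes of $\dagVertex{q}$ collide. Your proposed contradiction also fails here: the least common ancestor of $\{\alpha\}$ and $\{\beta\}$ is the root, absorbing $\ell^\ast=\{\alpha,\beta\}$ into the child $\{\alpha,\delta_1,\delta_2\}$ yields a cycle with colours $\{a,b,d\}\in\F$, which is \emph{accepting} and therefore does not violate the maximality of $\{\alpha,\delta_1,\delta_2\}$ as a maximal \emph{rejecting} subcycle. Nothing forces the absorbed cycle to carry the status needed for that contradiction, which is exactly the missing ingredient.

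What saves the paper is precisely the non-canonical choice you discarded: $f(n)$ is taken as a deepest node \emph{inside the subDAG rooted at $f(\pred(n))$}, so that $f$ maps the chain $\predBranch(n_1)$ to a chain of ancestors of $f(n_1)$ in $\zielonkaDAG{\F}$. When $f(n_1)=f(n_2)$, every node $\tilde n$ on $\predBranch(n_1)$ then satisfies $C_{n_2}\subseteq \nu(f(n_2))=\nu(f(n_1))\subseteq \nu(f(\tilde n))$; applying this to the child $\tilde n$ of the deepest node of $\predBranch(n_1)$ whose cycle contains $\nuAcd(n_2)$ gives that $C_{\tilde n}\cup C_{n_2}$ lies in $\nu(f(\tilde n))$ but in no child of $f(\tilde n)$, forcing its acceptance status to match the shape of $f(\tilde n)$ --- and this \emph{does} contradict the maximality of $\nuAcd(\tilde n)$ inside its parent, because now the absorbed cycle provably has the wrong status. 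Your derivation that $C_1\cup C_2$ shares the status of $C_1$ and $C_2$, and your resolution of the cases where the two nodes are comparable or have equal colour sets, are correct and match the paper's first claim; but for incomparable nodes with distinct colour sets the canonical $\phi$ is genuinely non-injective, so the construction itself, not just its justification, has to revert to the parent-relative definition.
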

\begin{proof}
	We will define an injective function $f\colon\dagVertex{q}\to \zielonkaDAG{\F}$. For a node $n$ in $\dagVertex{q}$, we let $C_n = \colAut(\nuAcd(n))$ be the set of colours appearing in the label of $n$. 
	\AP If $n$ is not the root, we define $\intro*\pred(n)$ to be an immediate ancestor of $n$ (that is, $n$ is a child of $\pred(n)$). We let $\intro*\predBranch(n)$ be the sub-branch of nodes above $n$ obtained by  successive applications of $\pred$, that is, $\predBranch(n)= \{n'\in \dagVertex{q} \mid n' = \pred^k(n) \text{ for some } k\}$. We note that the elements of $\predBranch(n)$ are totally ordered by $\ancestorDAG$ ($n$ being the maximal node and the root the minimal one).
	
	We define $f$ recursively: For the root $n_0$ of $\dagVertex{q}$, we let $f(n_0)$ be a maximal node (for $\ancestorDAG$) in $\zielonkaDAG{\F}$ containing $C_{n_0}$ in its label. 
	For $n$ a node such that we have define $f$ for all its ancestors, we let $f(n)$ be a maximal node (for $\ancestorDAG$) in the subDAG rooted at $f(\pred(n))$ containing $C_{n}$ in its label. 
	We remark that $f(n)$ is a "round node" if and only if $n$ is a "round node@@acd" (by Lemma~\ref{lemma-zt:accepting-set-in-ZT}). Also, if $n'$ is an ancestor of $n$ in $\predBranch(n)$, then $f(n')$ is an ancestor of $f(n)$ in $\zielonkaDAG{\F}$.
	
	We prove now the injectivity of $f$. Let $n_1,n_2$ be two different nodes in $\dagVertex{q}$ (that is, $\nuAcd(n_1)\neq\nuAcd(n_2)$). First, we show that the colours appearing in their labels must differ.
	\begin{claim}
		It is satisfied that $C_{n_1}\neq C_{n_2}$.
	\end{claim}
	\begin{claimproof}
		Suppose by contradiction that $C_{n_1}= C_{n_2}$. 
		Then, any node $n$ containing $\nuAcd(n_1)$ in its label satisfies that $\nuAcd(n)$ is an "accepting@@cycle" "cycle" if and only if $\nuAcd(n)\cup \nuAcd(n_2)$ is an "accepting@@cycle" "cycle".
		Let $n$ be a node of minimal depth such that $\nuAcd(n_1)\subseteq \nuAcd(n)$ and $\nuAcd(n_2)\nsubseteq \nuAcd(n)$. 
		The label of an immediate predecessor of $n$ contains $\nuAcd(n_1)\cup\nuAcd(n_2)$ by minimality. This leads to a contradiction, as $\nuAcd(n)\subsetneq \nuAcd(n)\cup \nuAcd(n_2)$, so $\nuAcd(n)$ would not be a maximal subcycle producing an alternation in the acceptance status.
	\end{claimproof}
	
	We assume w.l.o.g. that $n_1$ is "round@@acd" (that is, $C_{n_1}\in \F$).
	Suppose by contradiction that $f(n_1)=f(n_2)$. Then, $n_2$ is also "round@@acd", and it is satisfied that $C_{n_1}\cup C_{n_2}\subseteq f(n_1)$, by definition of $f$. Again by definition of $f$, no child of $f(n_1)$ contains $C_1\cup C_2$, so, by Lemma~\ref{lemma-zt:accepting-set-in-ZT}, $C_1\cup C_2\in \F$.
	Let $n'$ be the minimal node in $\predBranch(n_1)$ such that $\nuAcd(n_2)\subseteq \nuAcd(n')$. We do the prove for the case in which $n'$ is "round@@acd", the other case is symmetric. 
	Let $\tilde{n}$ be the child of $n'$ in $\predBranch(n')$, which is a "square node@@acd". We claim that the following three properties hold:
	\begin{enumerate}[label=\roman*)]
		\item $C_{\tilde{n}}\cup C_{n_2} \in \F$,
		\item $C_{\tilde{n}}\cup C_{n_2} \subseteq f(\tilde{n})$, and
		\item no child of $f(\tilde{n})$ contains $C_{\tilde{n}}\cup C_{n_2}$.
	\end{enumerate}
	This leads to a contradiction, as the second and third items, combined with Lemma~\ref{lemma-zt:accepting-set-in-ZT} and the fact that $f(\tilde{n})$ is a "square node", imply that $C_{\tilde{n}}\cup C_{n_2} \notin \F$. We prove the three items:
	
	\begin{enumerate}[label=\roman*)]
		\item Follows from the fact that $\nuAcd(\tilde{n})$ is a maximal "rejecting@@cycle" "cycle" of  $\nuAcd(n')$, but $\nuAcd(n')$ contains $\nuAcd(\tilde{n})\cup \nuAcd(n_2)$.
		\item By definition of $f$, $C_{\tilde{n}} \subseteq f(\tilde{n})$. Also, the node $f(\tilde{n})$ is an ancestor of $f(n_2)$, so $C_{n_2}\subseteq f(n_2) \subseteq f(\tilde{n})$.
		\item By definition of $f$, no child of $f(\tilde{n})$ contains $C_{\tilde{n}}$ in its label.\qedhere
	\end{enumerate}
\end{proof}

We conclude  that the size of $\acdDAG{\A}$ is polynomial in  $|Q|+|\zielonkaDAG{\F}|$:
\[ |\acdDAG{\A}|\leq \sum_{q\in Q} |\dagVertex{q}| \leq |Q|\cdot|\zielonkaDAG{\F}|.\]

\section{Minimisation of colours and Rabin pairs}
\label{sec:minimisation-colours}

We consider the problem of minimising the representation of the "acceptance condition" of "automata".
That is, given an "automaton" $\A$ using a "Muller" (resp. "Rabin") "acceptance condition", what is the minimal number of colours (resp. "Rabin  pairs") needed to define an "equivalent acceptance condition over" $\A$?

%The main motivation for the study of this question comes from the fact that most algorithms dealing with "Muller" or "Rabin" automata are exponential on the size of the "acceptance condition".
%For instance, "Rabin" and "Muller" games can be solved, respectively, in time $\O(n^{r+3}rr!)$~\cite[Theorem~7]{PP06Faster} and $\O(k^{5k}n^5)$~\cite[Theorem~3.4]{CJKLS22}, where $n$ is the number of vertices of the game, and $r$ and $k$ are the number of "Rabin pairs" and "colours" of the "acceptance condition".
%Also, the emptiness check of "Emerson-Lei" automata can be done in time $\O(2^kkn^2|\phi|)$, where $|\phi|$ is the size of the Boolean formula defining the "acceptance condition"~\cite[Theorem~1]{BBDKMS19EmptEL}.

%The minimisation of colours in "Muller" automata has recently been studied by Schwarzová, Strejček and Major~\cite{SSJ23MarksEL}. In their approach, they use heuristics to reduce the number of colours by applying QBF-solvers. The final "acceptance condition" is however not guaranteed to have a minimal number of colours.
%There have also been attempts to minimising the number of "Rabin pairs" of "Rabin" automata coming from the determinisation of "B\"uchi" automata~\cite{TD14BuchiTight}.\\

We first study the question of the minimisation of colours for "Muller languages", without taking into account the structure of the "automaton". We show that given the "Zielonka DAG" of the condition (resp. set of "Rabin pairs"), we can minimise its number of "colours" (resp. number of "Rabin pairs") in polynomial time (Theorems~\ref{th-min:colorMinML-poly} and~\ref{th-min:RabinPairsMinML-poly}). 
An alternative point of view over the minimisation of "Rabin pairs", using "generalised Horn formulas", is presented in Appendix~\ref{sec:Gen-Horn}.
Then, we consider the question taking into account the structure of the "automaton". Surprisingly, we show that in this case both problems are $\NP$-complete, and this hardness results holds even if the "ACD" is given as input (Theorems~\ref{th-min:colorMin-Aut-NPhard} and~\ref{th-min:RabinPairsMin-Aut-NPHard}).
We highlight that the $\NP$ upper bound is also non-trivial.

\subsection{Minimisation of the representation of Muller languages in polynomial time}

\subsubsection{Minimisation of colours for Muller languages}
\AP We say that a "Muller language" $L\subseteq \SS^\oo$ is ""$k$-colour type@@lang"" if there is a set of $k$ colours~$\GG$, a "Muller language" $L'\subseteq \GG^\oo$ and a mapping $\phi\colon \SS \to \GG$ such that for all $w\in \SS^\oo$, $w\in L \iff \phi(w)\in L'$. 
\begin{remark}
	A "Muller language" $L\subseteq \SS^\oo$ is "$k$-colour type@@lang" if and only if it can be "recognised" by a "deterministic" "Muller" "automaton" with one state using $k$ "output colours".
	However, this is \emph{not} the same as having a "Muller automaton" "recognising" $\MullerC{\F}{\SS}$ using at most $k$ colours (in general, automata with more states will use fewer colours).
	
	Also, $L$ is "$k$-colour type@@lang" if and only if all automata $\A$ using $L$ as "acceptance condition" can be "relabelled" with an "equivalent@@cond" "Muller condition" using at most $k$ colours.
\end{remark}

\AP
\begin{center}
	\fbox{\begin{tabular}{rl}
			{\textbf{Problem:}} & \intro*\pbColorMinML{} \\[1mm]
			{\textbf{Input:}} & A "Muller language" $\MullerC{\F}{\SS}$ represented by the "Zielonka@@DAG"\\ 
			& "DAG@@Zielonka" $\zielonkaDAG{\F}$ and a positive integer $k$.\\[1mm]   
			{\textbf{Question:}} & Is $\MullerC{\F}{\SS}$ "$k$-colour type@@lang"? 
	\end{tabular}} 
\end{center}

We could have chosen other representations of $\MullerC{\F}{\SS}$ for the input of this problem (mainly, "colour-explicitly@@Muller" or as a "Zielonka tree"). We have chosen to specify the input as a "Zielonka DAG", as it is more succinct than the other representations (c.f. Figure~\ref{fig-prel:representationMuller} and Propositions~\ref{prop-size:explicit-to-DAG},~\ref{prop-size:size-ZT-vs-ZDAG}).  We now prove that this problem can be solved in polynomial time if $\F$ is represented as a "Zielonka DAG", which implies that it can be equally solved in polynomial time if $\F$ is represented "colour-explicitly@@Muller" or as a "Zielonka tree".

\begin{theorem}[Tractability of minimisation of colours for Muller languages]\label{th-min:colorMinML-poly}
	The problem \pbColorMinML{} can be solved in polynomial time. %if $\F$ is represented either as a "Zielonka DAG", "colour-explicitly@@Muller", or as a "Zielonka tree".
\end{theorem}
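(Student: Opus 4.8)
The plan is to show that ``$k$-colour type'' depends only on the node \emph{labels} of the Zielonka DAG, so that after a reformulation the algorithm becomes a simple scan of $\zielonkaDAG{\F}$.

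First I would restate the problem in terms of partitions. A map $\phi\colon\SS\to\GG$ with $|\GG|=k$ is nothing but a partition $P$ of $\SS$ into at most $k$ blocks; writing $\mathrm{sat}_P(S)$ for the union of all blocks of $P$ that meet $S$, such a $\phi$ witnesses ``$k$-colour type'' precisely when $P$ is \emph{compatible} with $\F$, meaning $S\in\F\iff\mathrm{sat}_P(S)\in\F$ for every nonempty $S\subseteq\SS$. Indeed, given a compatible $P$, set $\F'=\{\phi(S)\ :\ S\in\F\}\subseteq\powplus{\GG}$; then for every $w\in\SS^\oo$ we have $\minf(\phi(w))=\phi(\minf(w))$, and $\phi(\minf(w))\in\F'$ iff some $S\in\F$ satisfies $\mathrm{sat}_P(S)=\mathrm{sat}_P(\minf(w))$, which by compatibility is equivalent to $\minf(w)\in\F$; conversely, the fibres of any witnessing $\phi$ form a compatible partition, since $\phi(S)=\phi(\mathrm{sat}_P(S))$ and every nonempty subset of $\SS$ is $\minf(w)$ for some $w$. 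So the question is exactly whether $\F$ admits a compatible partition with at most $k$ blocks.

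The crux is the following characterisation: \emph{a partition $P$ of $\SS$ is compatible with $\F$ if and only if every node of $\zielonkaDAG{\F}$ has a label that is a union of $P$-blocks.} For the ``if'' direction, let $n$ be a node maximal for $\ancestorDAG$ among nodes containing $S$; by the Zielonka-DAG analogue of Lemma~\ref{lemma-zt:accepting-set-in-ZT}, $S\in\F$ iff $n$ is round. Since $\nu(n)$ is a union of blocks and $S\subseteq\nu(n)$, also $\mathrm{sat}_P(S)\subseteq\nu(n)$, and $n$ remains maximal for $\ancestorDAG$ among nodes containing $\mathrm{sat}_P(S)$ (a strict descendant containing $\mathrm{sat}_P(S)$ would also contain $S$), so $\mathrm{sat}_P(S)\in\F$ iff $n$ is round as well. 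For the ``only if'' direction, assume $P$ is compatible but some node label is not a union of blocks, and choose such a node $m$ whose label is inclusion-maximal with this defect. The root label $\SS$ is a union of blocks, so $m$ is not the root; any parent $m^p$ of $m$ has $\nu(m^p)\supsetneq\nu(m)$, hence (by maximality of $\nu(m)$) $\nu(m^p)$ is a union of blocks, which forces $\mathrm{sat}_P(\nu(m))\subseteq\nu(m^p)$. But then $\mathrm{sat}_P(\nu(m))$ is a subset of $\nu(m^p)$ that strictly contains $\nu(m)$ and, by compatibility, has the same acceptance status as $\nu(m)$ — contradicting the fact that $\nu(m)$ is a \emph{maximal} subset of $\nu(m^p)$ with that status, as dictated by the construction of $\zielonkaDAG{\F}$.

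With this characterisation in hand, a partition has every node label as a union of blocks iff every block is contained in a class of the equivalence relation $\equiv$ defined by $a\equiv b$ iff $a$ and $b$ belong to exactly the same node labels of $\zielonkaDAG{\F}$; hence $\equiv$ induces the coarsest compatible partition, and $\MullerC{\F}{\SS}$ is ``$k$-colour type'' iff the number $m$ of $\equiv$-classes is at most $k$. The algorithm therefore computes $\equiv$ by comparing, for each pair of colours, the lists of node labels of $\zielonkaDAG{\F}$ containing them, counts the classes, and accepts iff $m\le k$ — all clearly polynomial in the size of the input. The step I expect to require the most care is the ``only if'' direction of the characterisation: in a DAG, inclusion of labels does not entail the ancestor relation and a set can have several $\ancestorDAG$-maximal super-labels, so one must be careful to pick the right node $m$ and to extract the contradiction from the maximality that is built into the Zielonka-DAG construction.
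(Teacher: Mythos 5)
Your proof is correct and follows essentially the same route as the paper: both identify the answer as the number of classes of the relation ``$a$ and $b$ occur in exactly the same node labels of $\zielonkaDAG{\F}$'', prove sufficiency by showing that saturating a set under this relation preserves its acceptance status (via a maximal node of the DAG containing it), and prove necessity from the maximality built into the Zielonka construction. The only divergence is in the necessity direction, where the paper exhibits, for each pair of non-equivalent letters $a,b$, an explicit set $S$ with one of $S\cup\{a\}$, $S\cup\{b\}$, $S\cup\{a,b\}$ accepted and another rejected, whereas you argue globally that any compatible partition must have every node label as a union of blocks by contradicting an inclusion-maximal offending node --- both arguments are sound.
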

\begin{proof}
	\AP We say that two letters $a,b\in \SS$ are ""$\F$-equivalent"", written $a\sim_\F b$, if they appear in the same nodes of the "Zielonka DAG" $\zielonkaDAG{\F}$, i.e., for every node $n$ of the "Zielonka DAG" of $\F$, $a\in \nu(n) \iff b\in \nu(n)$.
	It is immediate to check that $\sim_\F$ is indeed an equivalence relation.
	We let $[a]$ denote the equivalence class of $a$ for $\sim_\F$ and $\quotient{\SS}{\sim_\F}$ the set of equivalence classes. For $C\subseteq \SS$, we write $\mathsf{sat}(C) = \bigcup_{a\in C}[a]\subseteq \SS$ and $\pi(C)=\{[a] \mid a\in C\}$.
	
	\begin{claim}
		\label{claim-comp:saturation-classes}
		For all $C\subseteq \SS$, $C\in \F \iff \mathsf{sat}(C) \in \F$.
	\end{claim}
	\begin{claimproof}
		Assume that $C\in \F$ (the case $C\notin \F$ is symmetric).
		There is a "round" node $n$ in $\zielonkaDAG{\F}$ such that $C \subseteq \nu(n)$ and for all child $n'$ of $n$, $C\nsubseteq \nu(n')$.
		By definition of $\sim_{\F}$, $\mathsf{sat}(C)\subseteq \nu(n)$ and $\mathsf{sat}(C) \nsubseteq \nu(n')$, for all child $n'$ of $n$. We conclude that $\mathsf{sat}(C) \in \F$. 
		%We prove using the same argument that if $C \notin \F$ then $\bigcup_{a\in C}[a] \notin \F$.
	\end{claimproof}
	
	\begin{claim}
		$\MullerC{\F}{\SS}$ is "$k$-colour type@@lang", for $k$ the number of equivalent classes of $\sim_\F$.
	\end{claim}
	
	\begin{claimproof}
		We define the mapping $\phi : \SS \to \quotient{\SS}{\sim_\F}$ by $\phi(a) = [a]$, and let $\widetilde{\F} = \{ \pi(C)\mid C\in \F\}$.
		We show that for all $C \in \powplus{\SS}$, $C \in \F \iff \phi(C) \in \widetilde{\F}$.
		The left-to-right implication is immediate by definition of  $\widetilde{\F}$.
		For the other direction, suppose that $\phi(C) \in \widetilde{\F}$. Then there exists $D \in \F$ such that $\pi(C) = \pi(D)$, which implies $\mathsf{sat}(C) = \mathsf{sat}(D)$.
		By Claim~\ref{claim-comp:saturation-classes}, $\mathsf{sat}(D) =\mathsf{sat}(C) \in \F$, so, by the same claim, $C\in \F$ as wanted. 
	\end{claimproof}
	
	For the converse implication, we first prove that non-equivalent colours can be ``separated'' by the family $\F$.
	
	\begin{claim}
		\label{claim-comp:witness-set}
		For all $a,b \in \SS$, if $a\nsim_\F b$ then there exists a set $S\subseteq \SS$ such that one of $S \cup \set{a}, S\cup \set{b}, S\cup\set{a,b}$ is in $\F$ and another is not in $\F$.
	\end{claim}
	
	\begin{claimproof}
		As $a\nsim_\F b$, there is a node of $\zielonkaTree{\F}$ whose label $\nu(n)$ contains $a$ but not $b$ or $b$ and not $a$.
		We select such a node $n$ of minimal depth. We assume that $\nu(n)$ contains $a$ and not $b$, and that it is "square" (the other cases are similar).
		
		As the root is labelled $\SS$, $n$ is not the root. Let $m$ be the parent of $n$, as we took $n$ of minimal depth, the label $\nu(m)$ contains $a$ and $b$.
		Let $S= \nu(n) \setminus \set{a}$. As $n$ is "square", $S \cup \set{a}$ is a maximal rejected subset of $\nu(m)$, hence $S \cup \set{a}$ is rejected and $S \cup \set{a,b}$ is accepted, as wanted.
	\end{claimproof}

	\begin{claim}\label{claim-comp:k-col-iff-k-classes}
		If $\MullerC{\F}{\SS}$ is "$k$-colour type@@lang", then there are at most $k$ equivalence classes for the "$\sim_\F$ relation".
	\end{claim}
	
	\begin{claimproof}
		By definition, there is an alphabet $\GG$ with $k$ colours, a family of sets $\F' \subseteq \powplus{\GG}$ and a morphism $\phi \colon \SS\to \GG$ such that for all $S \in \pow{\GG}$, $S\in \F \iff \phi(S)\in \F'$.
		
		Let $a, b \in \SS$ such that $a \nsim_\F b$. By Claim~\ref{claim-comp:witness-set}, there exists a set $S$ such that one of $S \cup \set{a}, S\cup \set{b}, S\cup\set{a,b}$ is in $\F$ and another is not in $\F$. Hence $\phi(a) \neq \phi(b)$, as otherwise the image by $\phi$ of those three sets would be the same.
		Hence two colours of different equivalence classes for $\sim_\F$ cannot be mapped by $\phi$ to the same colour. As a result, there are at most $\size{\GG} = k$ classes for $\sim_\F$.
	\end{claimproof}
	Therefore, in order to minimise the number of required colours, we need to compute the classes of the "$\F$-equivalence" relation. This can be directly done by inspecting the "Zielonka DAG".	
\end{proof}

\subsubsection{Minimisation of Rabin pairs for Rabin languages}

In this section we tackle the minimisation of the number of "Rabin pairs" to represent "Rabin languages". We provide a polynomial-time algorithm which turns a family of "Rabin pairs" into an "equivalent@@cond" one with a minimal number of pairs. The algorithm comes down to partially computing the Zielonka tree of the input "Rabin language" from top to bottom, and stopping when we have obtained a set of "Rabin pairs" "equivalent@@cond" to the input.

We present the algorithm in a different way in order to clarify the proofs, in particular the proof that the resulting number of pairs is minimal.

\AP We say that a "Rabin language" $L\subseteq \SS^\oo$ is ""$k$-Rabin-pair type@@lang"" if there is a family of $k$ "Rabin pairs" $\R$ over some set of colours $\GG$ and a mapping $\phi\colon \SS \to \GG$ such that for all $w\in \SS^\oo$, $w\in L \iff \phi(w)\in \RabinC{\R}{\GG}$. 

\begin{remark}
	A "Rabin language" $L\subseteq \SS^\oo$ is "$k$-Rabin-pair type@@lang" if and only if it can be "recognised" by a "deterministic" "Rabin" "automaton" with one state using $k$ "Rabin pairs".
\end{remark}

\begin{remark}\label{rmk-min:Rabin-type-same-colours}
	If $L\subseteq \SS^\oo$ is "$k$-Rabin-pair type@@lang", then there exists a family of $k$ "Rabin pairs" $\R'$ over the same alphabet $\SS$ such that $L=\RabinC{\R'}{\SS}$.
\end{remark}
\begin{proof}
	Let $\R=\{(\greenPair_1,\redPair_1),\dots,(\greenPair_k,\redPair_k)\}$ be a set of "Rabin pairs" over $\GG$ and let $\phi\colon \SS \to \GG$ such that for all $w\in \SS^\oo$, $w\in L \iff \phi(w)\in \RabinC{\R}{\GG}$. 
	It suffices to define $\R'=\{(\greenPair_1',\redPair_1'),\dots,(\greenPair_k',\redPair_k')\}$ with $\greenPair_i' = \inv{\phi}(\greenPair_i)$ and $\redPair_i' = \inv{\phi}(\greenPair_i)$.	
\end{proof}

\AP
\begin{center}
	\fbox{\begin{tabular}{rl}
			{\textbf{Problem:}} & \intro*\pbRabinPairMinML{} \\[1mm]
			{\textbf{Input:}} & A family of "Rabin pairs" $\R$ over $\SS$ and a positive integer $k$.\\[1mm]  
			%A family of "Rabin pairs" $\R$ over $\SS$ and a positive integer $k$\\[1mm]   
			{\textbf{Question:}} & Is $\RabinC{\R}{\SS}$ "$k$-Rabin-pair type@@lang"? 
	\end{tabular}} 
\end{center} 

Our main result of this section is the following theorem:

\begin{theorem}[Tractability of minimisation of Rabin pairs for Rabin languages]\label{th-min:RabinPairsMinML-poly}
	The problem \pbRabinPairMinML{} can be solved in polynomial time.
\end{theorem}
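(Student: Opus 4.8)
The plan is to characterise the minimal number of Rabin pairs needed to describe a Rabin language $L = \RabinC{\R}{\SS}$ in terms of a combinatorial parameter of $L$ that can be read off efficiently, and then to show that this parameter is computable in polynomial time from the input family $\R$. Following the strategy already used for colours (Theorem~\ref{th-min:colorMinML-poly}), the natural parameter here is the \emph{maximal number of pairwise incomparable round nodes lying on a single antichain obtained by a one-step descent} in the Zielonka tree of $L$ — equivalently, the maximal ``round-branching'' one encounters. More precisely, since $L$ is a Rabin language, $\zielonkaTree{L}$ has the Rabin shape: every round node has at most one child, but square nodes may branch arbitrarily. The number of leaves (equivalently, the number of maximal branches) of $\zielonkaTree{L}$ is then exactly the minimal number of Rabin pairs, and we show $L$ is \emph{$k$-Rabin-pair type} iff $\zielonkaTree{L}$ has at most $k$ leaves. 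I will first prove the easy direction: given the tree with $\ell$ leaves, reconstruct an explicit family of $\ell$ Rabin pairs over $\SS$ recognising $L$ (one pair per branch, reading the green set off the round nodes along that branch and the red set off the complement), using Lemma~\ref{lemma-zt:accepting-set-in-ZT} to verify correctness.

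For the lower bound I would argue that any family $\R'$ of $k$ Rabin pairs over $\SS$ with $\RabinC{\R'}{\SS}=L$ forces $\zielonkaTree{L}$ to have at most $k$ leaves. The argument is: each leaf $v$ of $\zielonkaTree{L}$ carries an accepting set $\nu(v)$, and for any accepting set $C$ some pair $(\greenPair_j,\redPair_j)\in\R'$ accepts $C$, i.e. $C\cap\greenPair_j\neq\emptyset$ and $C\cap\redPair_j=\emptyset$. Using Claim~\ref{claim-comp:witness-set}-style separation (two distinct leaves can be distinguished by $\F$ via a witness set), one shows that two distinct leaves cannot be ``covered'' by the same Rabin pair — concretely, if pair $j$ accepted the maximal accepting sets associated with two incomparable round branches, taking an appropriate union of sets would still avoid $\redPair_j$ and still meet $\greenPair_j$, hence be accepted, contradicting that this union is rejected (which it is, by Remark~\ref{rmk-zt:union-changes-acceptance} applied along the tree). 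Hence $k \geq \#\leaves(\zielonkaTree{L})$. Combining the two directions gives the exact characterisation, and since the number of leaves is monotone, the decision ``is $L$ $k$-Rabin-pair type?'' reduces to comparing $k$ with $\#\leaves(\zielonkaTree{L})$.

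Finally I need the algorithmic part: compute (enough of) $\zielonkaTree{L}$ from $\R$ in polynomial time. Here I would \emph{not} build the whole Zielonka tree (which may be exponential), but instead build it top-down while pruning. Starting from the root labelled $\SS$, at a square node labelled $X$ I compute the maximal subsets $Y\subsetneq X$ accepted by $\R$ — this is a polynomially-solvable subproblem because membership in a Rabin language restricted to a set of colours is a simple check, and the maximal accepted subsets of $X$ are obtained by, for each Rabin pair $(\greenPair_j,\redPair_j)$, forming $X\setminus\redPair_j$ and keeping it if it still meets $\greenPair_j$, then taking $\maxInclusion$; at a round node labelled $X$ (at most one child) I compute the unique maximal rejected subset, namely $X$ minus a single colour of one of the $\greenPair_j$'s hit, again in polynomial time. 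I stop expanding a branch as soon as the node's label is accepted by a single Rabin pair with the rest of the branch forced, and in fact the whole tree has at most $|\SS|\cdot|\R|$ leaves in the Rabin case (each square branching is bounded by $|\R|$ and depth by $2|\SS|$, or more simply: the number of distinct green-set ``signatures'' along branches is bounded), so the construction terminates in polynomial time. The decision procedure then outputs ``yes'' iff the number of leaves produced is at most $k$.

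The main obstacle I anticipate is the lower-bound direction — making rigorous the claim that distinct incomparable round branches of $\zielonkaTree{L}$ cannot share a Rabin pair. This requires carefully locating, for each such branch, a maximal accepting set together with a ``union witness'' that is rejected, and checking that the Rabin-pair semantics ($\minf$ meeting $\greenPair$ and avoiding $\redPair$) is genuinely violated; the bookkeeping is analogous to the injectivity arguments in Lemma~\ref{lemma-comp:size-dagVertex-noMore-ZDAG} but must be adapted to the antichain/leaf setting. The secondary subtlety is proving the polynomial bound on the size of the (pruned) Zielonka tree of a Rabin language directly from the family of pairs, rather than from an a priori bound, which I would handle by the signature argument above.
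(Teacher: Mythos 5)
There is a genuine gap: your central characterisation is false. For a Rabin language the Zielonka tree does have the Rabin shape (every round node has at most one child), but the minimal number of Rabin pairs is \emph{not} the number of leaves of the Zielonka tree. Take the parity language over $\{0,1,2,3\}$: its Zielonka tree is the single branch $\{0,1,2,3\}\supset\{1,2,3\}\supset\{2,3\}\supset\{3\}$, hence has one leaf, yet two Rabin pairs are necessary (a single pair $(\greenPair,\redPair)$ accepting $\{0\}$ and $\{2\}$ while rejecting $\{1,2\}$ forces $0,2\in\greenPair$, $0,2\notin\redPair$ and $1\in\redPair$, and then it wrongly rejects $\{0,1\}$). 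Consequently your ``easy direction'' (one pair per branch) cannot produce a correct family either: one pair cannot express a branch containing two round nodes. The relevant quantity is governed by the round nodes encountered along the descent, not by the branching. The paper's Algorithm~\ref{algo-min-rabin} in effect walks down the tree adding one pair $(\SS\setminus T,\SS\setminus S)$ per step, where $S$ is a maximal still-uncovered accepted set and $T$ its maximum rejected subset, and it proves minimality not by counting anything in the tree but by injecting the produced pairs into an arbitrary equivalent family $\tilde{\R}$ (Lemma~\ref{lem:algo-min-Rabin-minimal}), which rests on the structural fact that for two produced pairs the union $(\SS\setminus\redPair_1)\cup(\SS\setminus\redPair_2)$ is rejected (Lemma~\ref{lem:algo-Rabin-no-union}). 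Your planned lower bound (distinct leaves cannot share a pair) is both unproven and, even if established, too weak.

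The algorithmic half of your plan also fails. The Zielonka tree of a Rabin language given by $m$ pairs can have $m!$ leaves and its Zielonka DAG $2^m$ nodes (Proposition~\ref{prop-app-size:ZT-exponential-in-RabinPairs} of this paper), so your claimed $|\SS|\cdot|\R|$ bound on the number of leaves is wrong, and a top-down construction of the tree, even pruned branch by branch, is not polynomial. The paper sidesteps this entirely: it never materialises the tree; it only iterates ``find a maximal accepted set outside $\RabinC{\R_{\min}}{\SS}$, then its maximum rejected subset'', both computable in polynomial time by greedily removing green sets (Lemmas~\ref{lem:Rabin-max-reject-PTIME} and~\ref{lem:Rabin-difference-PTIME}), and it terminates after at most $|\R|$ iterations because each iteration strictly enlarges $\RabinC{\R_{\min}}{\SS}$ and the output is minimal. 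Your intuition that the answer can be read off a partial Zielonka-tree exploration is salvageable, but both the parameter you extract and the way you bound the exploration must change along these lines.
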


Given a set of "Rabin pairs" $\R = \{(\greenPair_1,\redPair_1),\dots,(\greenPair_r,\redPair_r)\}$ over $\SS$ and a set $S\subseteq \SS$, \AP we say that $S$ ""satisfies@@Rabin"" (or that is \emph{accepted by}) $\R$ if, for some $i$, $S\cap \greenPair_i\neq \emptyset$ and $S\cap \redPair_i=\emptyset$.
\AP Otherwise, we say that $S$ is rejected by $\R$.
By a small abuse of notation, we write $S\in \RabinC{\R}{\SS}$ (resp.~$S\notin \RabinC{\R}{\SS}$) if $S$ is "accepted by@@Rabin" (resp. "rejected by@@Rabin") $\R$. We define the same notions for "Streett conditions" symmetrically.

Before we present the algorithm, we observe that two critical operations can be done in polynomial-time.

\begin{lemma}
	\label{lem:Rabin-max-reject-PTIME}
	Let $\R$ be a family of "Rabin pairs" over $\SS$ and let $S \subseteq \SS$. There exists a maximum subset of $S$ "rejected by@@Rabin" $\R$, and it is computable in polynomial time.
\end{lemma}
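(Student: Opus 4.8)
We are given a Rabin family $\R = \{(\greenPair_1, \redPair_1), \dots, (\greenPair_r, \redPair_r)\}$ over $\SS$ and a subset $S \subseteq \SS$, and we must show that among all subsets of $S$ that are rejected by $\R$, there is a maximum one (i.e.\ one containing all the others), and that it can be found in polynomial time. The first observation is that "rejected by $\R$" means "rejected by every pair $(\greenPair_i, \redPair_i)$", so the set of rejected subsets of $S$ is the intersection over $i$ of the sets of subsets of $S$ rejected by the single pair $(\greenPair_i, \redPair_i)$. Since the empty set is rejected by $\R$ (it has empty intersection with every $\greenPair_i$), this family of rejected subsets is always nonempty. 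So the plan is: (1) show that for a \emph{single} Rabin pair there is a maximum rejected subset of $S$ and describe it explicitly; (2) show that the property of being rejected by $\R$ is closed under union, which immediately gives a maximum element of the (finite, nonempty) poset of rejected subsets.

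**The single-pair case and closure under union.** Fix a pair $(\greenPair, \redPair)$. A subset $T \subseteq S$ is rejected by this pair iff $T \cap \greenPair = \emptyset$ \emph{or} $T \cap \redPair \neq \emptyset$. This disjunction is the only subtlety: "rejected by one pair" is not itself closed under union (e.g.\ $\{g\}$ with $g \in \greenPair \cap \redPair^c$ would be accepted but could be a union... wait, no — let me reconsider). Actually the cleaner route is to argue directly at the level of $\R$: I claim that if $T_1, T_2 \subseteq S$ are both rejected by $\R$, then $T_1 \cup T_2$ is rejected by $\R$. Suppose not; then $T_1 \cup T_2$ is accepted, so there is an index $i$ with $(T_1 \cup T_2) \cap \greenPair_i \neq \emptyset$ and $(T_1 \cup T_2) \cap \redPair_i = \emptyset$. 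The second condition gives $T_1 \cap \redPair_i = T_2 \cap \redPair_i = \emptyset$; the first gives (WLOG) $T_1 \cap \greenPair_i \neq \emptyset$. But then $T_1$ is accepted by the pair $(\greenPair_i, \redPair_i)$, contradicting that $T_1$ is rejected by $\R$. Hence the set of subsets of $S$ rejected by $\R$ is closed under (finite) union and contains $\emptyset$; being a finite sublattice under union with a top, it has a unique maximum element, namely the union of all rejected subsets of $S$.

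**Polynomial-time computation.** It remains to compute this maximum rejected subset efficiently; one cannot simply take the union of all $2^{|S|}$ rejected subsets. Instead I would use the single-pair analysis to get an explicit formula. The maximum subset of $S$ rejected by the single pair $(\greenPair_i, \redPair_i)$ is
\[
 M_i \;=\; \begin{cases} S & \text{if } S \cap \redPair_i \neq \emptyset,\\[1mm] S \setminus \greenPair_i & \text{if } S \cap \redPair_i = \emptyset,\end{cases}
\]
since in the first case $S$ itself is rejected by that pair, and in the second case a subset $T\subseteq S$ is rejected by the pair exactly when $T \cap \greenPair_i = \emptyset$, i.e.\ $T \subseteq S \setminus \greenPair_i$. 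Now a subset $T \subseteq S$ is rejected by $\R$ iff it is rejected by every pair, i.e.\ $T \subseteq M_i$ for all $i$, i.e.\ $T \subseteq \bigcap_{i=1}^r M_i$. Therefore the maximum subset of $S$ rejected by $\R$ is exactly $\bigcap_{i=1}^r M_i$, and each $M_i$ is computed in time $O(|\SS|)$ by a single scan testing whether $S$ meets $\redPair_i$ and, if not, removing $\greenPair_i$; the intersection over the $r$ pairs is then computed in time $O(r |\SS|)$. This is polynomial in the input size, completing the proof.

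The main obstacle here is purely conceptual rather than technical: recognizing that "rejected by a Rabin family" is a \emph{conjunctive} condition over the pairs (hence closed under union and admitting a maximum), in contrast to the accepting condition which is disjunctive. Once that is seen, the single-pair formula and the intersection give the algorithm with no further work.
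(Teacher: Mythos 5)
Your existence argument is correct and clean: closure of the rejected subsets of $S$ under union (proved exactly as you do, by pushing an accepting pair for $T_1\cup T_2$ down to $T_1$ or $T_2$) immediately yields a unique maximum. This is in fact a more transparent justification of existence than the paper gives, since the paper obtains existence only as a by-product of its algorithm's invariant.

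The computational step, however, contains a genuine error. The equivalence ``$T$ is rejected by every pair iff $T\subseteq M_i$ for all $i$'' holds in only one direction: rejection by a \emph{single} pair is not downward closed (a set can be rejected because it meets $\redPair_i$, while a subset of it avoids $\redPair_i$ and still meets $\greenPair_i$), so $T\subseteq M_i$ does not imply that $T$ is rejected by the $i$-th pair, and $\bigcap_i M_i$ need not be rejected at all. Concretely, take $\SS=\{g,r\}$, $S=\SS$ and $\R=\{(\{g\},\{r\}),\,(\{r\},\emptyset)\}$. Then $M_1=S$ (since $S$ meets $\redPair_1$) and $M_2=S\setminus\{r\}=\{g\}$, so your formula returns $\{g\}$; but $\{g\}$ is accepted by the first pair, and the true maximum rejected subset is $\emptyset$. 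The missing idea is that removing $\greenPair_i$ to defeat pair $i$ can delete the red witnesses of another pair $j$ and thereby make pair $j$ newly accepting, forcing further removals. This is why the paper computes the answer as a decreasing fixpoint: start from $T=S$ and, while some pair $(\greenPair,\redPair)$ accepts $T$, replace $T$ by $T\setminus\greenPair$; the invariant that every rejected subset of $S$ is contained in $T$ is preserved (if $(\greenPair,\redPair)$ accepts $T$ then $T\cap\redPair=\emptyset$, so any rejected subset of $T$ must avoid $\greenPair$ to be rejected by that pair), and the loop runs at most $|\SS|$ times. Your one-shot intersection is essentially the first round of this iteration; it must be repeated until stabilisation.
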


\begin{proof}
	We describe an algorithm building a decreasing sequences of subsets of $S$.
	Initially, set $T= S$. While there exists $(\greenPair,\redPair) \in \R$ "satisfied by@@Rabin" $T$, set $T=T \setminus \greenPair$.
	This algorithm maintains the invariant that all sets "rejected by@@Rabin" $\R$ should be included in $T$.	
	Furthermore, it terminates in at most $\size{\SS}$ iterations as $T$ strictly decreases at each step. 
	In the end, we obtain a set that does not satisfy $\R$, and that is maximum by the invariant property. 
\end{proof}

\begin{lemma}
	\label{lem:Rabin-difference-PTIME}
	Given two families $\R, \R'$ of "Rabin pairs" over $\SS$, there is a polynomial-time algorithm that checks whether  $\RabinC{\R}{\SS} \nsubseteq \RabinC{\R'}{\SS}$ and returns a maximal set $S \in \RabinC{\R}{\SS} \setminus \RabinC{\R'}{\SS}$ if it is the case.
\end{lemma}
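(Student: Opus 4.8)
I want to decide whether $\RabinC{\R}{\SS} \nsubseteq \RabinC{\R'}{\SS}$ and, when so, produce a maximal witness $S$. The key observation is that a set $S$ lies in $\RabinC{\R}{\SS} \setminus \RabinC{\R'}{\SS}$ precisely when $S$ is accepted by some pair $(\greenPair_i, \redPair_i) \in \R$ and rejected by the whole family $\R'$. So the plan is: for each Rabin pair $(\greenPair_i, \redPair_i)$ of $\R$, search for a maximal set that is (a) accepted by that particular pair and (b) rejected by $\R'$; then take the largest such witness over all $i$ (or report that none exists).

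\textbf{The per-pair subroutine.} Fix a pair $(\greenPair_i, \redPair_i) \in \R$. Any $S$ accepted by this pair must avoid $\redPair_i$ entirely, so we may as well work inside $\SS_i := \SS \setminus \redPair_i$. Among subsets of $\SS_i$, I want the maximum one that is rejected by $\R'$: this is exactly what Lemma~\ref{lem:Rabin-max-reject-PTIME} computes, call it $T_i$. Now $T_i$ is the largest subset of $\SS_i$ rejected by $\R'$; it only remains to check whether $T_i$ is itself accepted by $(\greenPair_i,\redPair_i)$, i.e.\ whether $T_i \cap \greenPair_i \neq \emptyset$ (the condition $T_i \cap \redPair_i = \emptyset$ holds by construction). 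If yes, then $T_i$ is a candidate witness, and in fact it is the \emph{maximum} set that is both accepted by this pair and rejected by $\R'$ — because any such set is contained in $\SS_i$ and rejected by $\R'$, hence contained in $T_i$ by the maximality in Lemma~\ref{lem:Rabin-max-reject-PTIME}, and adding back the elements of $T_i$ keeps it accepted by the pair and rejected by $\R'$. If $T_i \cap \greenPair_i = \emptyset$, then no subset of $\SS_i$ can simultaneously be accepted by $(\greenPair_i,\redPair_i)$ (it would need a green element, all of which have been removed) and rejected by $\R'$, so this pair contributes nothing.

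\textbf{Assembling the answer and complexity.} Run the subroutine for each $i = 1, \dots, r$, obtaining candidate witnesses among the $T_i$; if none of them is accepted by its corresponding pair, then $\RabinC{\R}{\SS} \subseteq \RabinC{\R'}{\SS}$ and we report that. Otherwise, among the valid candidates we may simply return any one of them, or — to get a witness maximal for inclusion in $\RabinC{\R}{\SS}\setminus\RabinC{\R'}{\SS}$ — take one that is inclusion-maximal among the valid $T_i$'s (note each $T_i$ is already maximal for its own pair, but one $T_i$ could be contained in another $T_j$; picking an inclusion-maximal one suffices). Correctness follows from the characterisation above: every set in $\RabinC{\R}{\SS}\setminus\RabinC{\R'}{\SS}$ is accepted by some pair of $\R$ and rejected by $\R'$, hence dominated by the corresponding $T_i$. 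For the running time, there are $r \le |\R|$ iterations, each invoking Lemma~\ref{lem:Rabin-max-reject-PTIME} once (polynomial) plus a constant number of set operations over $\SS$; the whole algorithm is therefore polynomial in $|\SS| + |\R| + |\R'|$.

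\textbf{Expected main obstacle.} The conceptual content is entirely in the reduction to Lemma~\ref{lem:Rabin-max-reject-PTIME}; the only subtle point I expect to need care with is the claim that $T_i$, when accepted by $(\greenPair_i,\redPair_i)$, is genuinely the \emph{maximum} (not just a maximal) witness associated with pair $i$ — this relies on monotonicity: enlarging a set rejected by $\R'$ to the maximal such set cannot destroy acceptance by a single Rabin pair, since acceptance by $(\greenPair_i,\redPair_i)$ is preserved under adding elements outside $\redPair_i$. Spelling this monotonicity out cleanly is the one place where a careless argument could go wrong, but it is routine.
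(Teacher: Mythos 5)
Your proposal is correct and follows essentially the same route as the paper: for each pair $(\greenPair,\redPair)\in\R$ restrict to $\SS\setminus\redPair$, invoke Lemma~\ref{lem:Rabin-max-reject-PTIME} to get the maximum subset rejected by $\R'$, test intersection with $\greenPair$, and return an inclusion-maximal valid candidate. Your extra care about why each $T_i$ dominates every witness for its pair, and why an inclusion-maximal candidate is maximal in the whole difference, only makes the argument slightly more explicit than the paper's.
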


\begin{proof}
	For each $(\greenPair,\redPair) \in \R$, we apply the following procedure.

	Set $S= \SS \setminus \redPair$, and define $S_{(\greenPair, \redPair)}$ as the maximum subset of $S$ not satisfying $\R'$. We can compute $S_{(\greenPair, \redPair)}$ by Lemma~\ref{lem:Rabin-max-reject-PTIME}.
	If $S_{(\greenPair, \redPair)} \cap \greenPair = \emptyset$, then it does not satisfy $(\greenPair, \redPair)$, hence $\RabinC{(\greenPair, \redPair)}{\SS} \setminus \RabinC{\R'}{\SS} = \emptyset$. If this is the case for all $(\greenPair, \redPair)$, then we can conclude that $\RabinC{\R}{\SS} \subseteq \RabinC{\R'}{\SS}$. Otherwise, we can select a maximal set among the $S_{(\greenPair, \redPair)}$ such that $S_{(\greenPair, \redPair)} \cap \greenPair \neq \emptyset$. 
	
	This yields a maximal set in $\RabinC{\R}{\SS} \setminus \RabinC{\R'}{\SS}$.
\end{proof}

In Algorithm~\ref{algo-min-rabin} we give a procedure minimising the number of "Rabin pairs". We remark that the "Rabin condition" built by this algorithm uses the same set of colours as the input "Rabin condition". 

\begin{algorithm}
	\caption{Minimisation algorithm for Rabin conditions.}
	\label{algo-min-rabin}
	\begin{algorithmic}
		\Statex \textbf{Input:} A set of "Rabin pairs" $\R$ over $\SS$
		\State $\R_{\min} \gets \{\}$
		\While{$\RabinC{\R}{\SS} \nsubseteq \RabinC{\R_{\min}}{\SS}$}
		
		\State $S \gets $ maximal set in $\RabinC{\R}{\SS} \setminus \RabinC{\R_{\min}}{\SS}$
		
		\State $T \gets $ maximum subset of $S$ not in $\RabinC{\R}{\SS}$
		
		\State $\R_{\min} \gets \R_{\min} \cup \{(\SS \setminus T, \SS \setminus S)\}$
		\EndWhile
		\State \Return $\R_{\min}$
	\end{algorithmic}
\end{algorithm}

\begin{lemma}
	\label{lem:algo-Rabin-terminates}
	Algorithm~\ref{algo-min-rabin} terminates and $\RabinC{\R_{\min}}{\SS} = \RabinC{\R}{\SS}$.
\end{lemma}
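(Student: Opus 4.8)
The plan is to establish two invariants maintained throughout the execution of Algorithm~\ref{algo-min-rabin}, and then argue termination separately. First I would observe the fundamental invariant: at every point in the loop, $\RabinC{\R_{\min}}{\SS} \subseteq \RabinC{\R}{\SS}$. This holds initially since $\R_{\min} = \{\}$ gives the empty language. For the inductive step, I would check that each newly added pair $(\SS \setminus T, \SS \setminus S)$ only accepts sets already accepted by $\R$. Indeed, suppose $U$ satisfies $(\SS \setminus T, \SS \setminus S)$, i.e. $U \cap (\SS\setminus T) \neq \emptyset$ and $U \cap (\SS \setminus S) = \emptyset$. The second condition forces $U \subseteq S$; the first forces $U \nsubseteq T$. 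Since $T$ is the \emph{maximum} subset of $S$ not in $\RabinC{\R}{\SS}$ (which exists and is computable in polynomial time by Lemma~\ref{lem:Rabin-max-reject-PTIME}, applied within $S$), any subset of $S$ strictly larger than $T$ — in particular any $U \subseteq S$ with $U \nsubseteq T$ — must be accepted by $\R$. Wait, one must be careful: $U \nsubseteq T$ with $U \subseteq S$ does not make $U$ strictly larger than $T$; rather $U \cup T$ is strictly larger than $T$ and is a subset of $S$, hence $U \cup T \in \RabinC{\R}{\SS}$. Then since $\RabinC{\R}{\SS}$ is a Rabin language and Rabin languages are closed downward under the "$\minf$-superset to a nonaccepting set" relation — more precisely, if $U \cup T$ satisfies some pair $(\greenPair_i, \redPair_i)$ of $\R$ but $U$ does not, then $T$ would have to meet $\redPair_i$ or... this requires the monotonicity property of Rabin pairs. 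The clean statement: for a Rabin pair, if $A \subseteq B$ and $B$ is accepted, then either $A$ is accepted or $A$ "loses" a green colour; but actually the relevant fact is that $T$ is rejected by \emph{all} pairs, so for each $i$, either $T \cap \greenPair_i = \emptyset$ or $T \cap \redPair_i \neq \emptyset$. Since $U \cup T$ satisfies pair $i_0$, we have $(U \cup T) \cap \redPair_{i_0} = \emptyset$, so $T \cap \redPair_{i_0} = \emptyset$, forcing $T \cap \greenPair_{i_0} = \emptyset$, hence $(U \cup T) \cap \greenPair_{i_0} = U \cap \greenPair_{i_0} \neq \emptyset$, and also $U \cap \redPair_{i_0} = \emptyset$, so $U$ itself satisfies pair $i_0$. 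Thus $U \in \RabinC{\R}{\SS}$, completing the invariant.

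Next I would handle the reverse inclusion at termination. The loop exits precisely when $\RabinC{\R}{\SS} \subseteq \RabinC{\R_{\min}}{\SS}$; combined with the invariant this gives $\RabinC{\R_{\min}}{\SS} = \RabinC{\R}{\SS}$, which is the second claim of the lemma. The check $\RabinC{\R}{\SS} \nsubseteq \RabinC{\R_{\min}}{\SS}$ and the extraction of the maximal set $S \in \RabinC{\R}{\SS} \setminus \RabinC{\R_{\min}}{\SS}$ are both polynomial-time by Lemma~\ref{lem:Rabin-difference-PTIME}, so each iteration runs in polynomial time.

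Finally, termination. I would argue that each iteration strictly enlarges $\RabinC{\R_{\min}}{\SS}$, and moreover that no set $S$ chosen as the ``maximal witness'' can be chosen twice. The newly added pair $(\SS \setminus T, \SS \setminus S)$ accepts $S$: indeed $S \cap (\SS \setminus S) = \emptyset$, and $S \cap (\SS \setminus T) = S \setminus T \neq \emptyset$ since $S \in \RabinC{\R}{\SS}$ while $T \notin \RabinC{\R}{\SS}$ forces $T \subsetneq S$. Hence after the iteration $S \in \RabinC{\R_{\min}}{\SS}$, so $S$ is never again available as a witness in $\RabinC{\R}{\SS} \setminus \RabinC{\R_{\min}}{\SS}$. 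Since there are at most $2^{|\SS|}$ candidate subsets, the loop terminates. The main obstacle in writing this out carefully is the monotonicity argument in the first invariant — getting the bookkeeping with the green/red colours of each pair exactly right — together with making sure the ``maximal witness'' $S$ is genuinely consumed, which relies on $S$ being a set that both $\R$ accepts and $\R_{\min}$ (before this step) rejects but now accepts.
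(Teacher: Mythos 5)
Your proof is correct and follows essentially the same route as the paper: the loop invariant $\RabinC{\R_{\min}}{\SS} \subseteq \RabinC{\R}{\SS}$, termination because the language strictly grows at each iteration, and the exit condition supplying the reverse inclusion. Your detour through $U \cup T$ and the per-pair green/red bookkeeping is valid but unnecessary: since $T$ is the \emph{maximum} (not merely maximal) rejected subset of $S$ by Lemma~\ref{lem:Rabin-max-reject-PTIME}, it contains every rejected subset of $S$, so $U \subseteq S$ with $U \nsubseteq T$ is immediately accepted by $\R$ — which is exactly the one-line justification the paper uses.
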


\begin{proof}
	The algorithm clearly terminates, as $\RabinC{\R_{\min}}{\SS}$ increases at each iteration of the loop. Thus, we eventually get out of the loop, hence $\RabinC{\R}{\SS} \subseteq \RabinC{\R_{\min}}{\SS}$. Furthermore $\RabinC{\R_{\min}}{\SS} \subseteq \RabinC{\R}{\SS}$ is a loop invariant: at the start we have $\RabinC{\R_{\min}}{\SS} = \emptyset$, and at each loop iteration we add to $\R_{\min}$ a pair $(\SS \setminus T, \SS \setminus S)$ such that $S \in \RabinC{\R}{\SS}$ and $T$ is the maximum subset of $S$ rejected by $\R$. As a consequence, since $\RabinC{(\SS \setminus T, \SS \setminus S)}{\SS}$ contains only sets included in $S$ but not in $T$, $\RabinC{(\SS \setminus T, \SS \setminus S)}{\SS} \subseteq \RabinC{\R}{\SS}$ and thus the invariant is maintained.
\end{proof}

\begin{lemma}
	\label{lem:algo-Rabin-no-union}
	Let $\R_{\min}$ be the set of "Rabin pairs" obtained by applying Algorithm~\ref{algo-min-rabin} on a set of "Rabin pairs" $\R$.
	Let $(\greenPair_1, \redPair_1) \neq (\greenPair_2, \redPair_2) \in \R_{\min}$, then $\SS \setminus \redPair_1 \cup \SS\setminus \redPair_2$ is not "accepted@@Rabin" by $\R_{\min}$.
\end{lemma}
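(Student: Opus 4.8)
The plan is to argue by contradiction, exploiting the order in which pairs are added to $\R_{\min}$ and the maximality properties built into Algorithm~\ref{algo-min-rabin}. Suppose $(\greenPair_1,\redPair_1)$ and $(\greenPair_2,\redPair_2)$ are distinct pairs of $\R_{\min}$, and write $S_1 = \SS\setminus\redPair_1$, $S_2 = \SS\setminus\redPair_2$; these are exactly the sets $S$ chosen in the iterations that created each pair, so by construction $\greenPair_i = \SS\setminus T_i$ where $T_i$ is the maximum subset of $S_i$ not in $\RabinC{\R}{\SS}$, and moreover $S_i \in \RabinC{\R}{\SS}$. Assume toward a contradiction that $S_1\cup S_2$ is accepted by $\R_{\min}$; then it is accepted by some pair $(\greenPair_j,\redPair_j)\in\R_{\min}$, meaning $(S_1\cup S_2)\cap\greenPair_j\neq\emptyset$ and $(S_1\cup S_2)\cap\redPair_j=\emptyset$, i.e. $S_1\cup S_2\subseteq \SS\setminus\redPair_j = S_j$.

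The key step is to leverage that $S_j$ was itself chosen as a \emph{maximal} set in $\RabinC{\R}{\SS}\setminus\RabinC{\R_{\min}^{(j)}}{\SS}$, where $\R_{\min}^{(j)}$ denotes the value of $\R_{\min}$ just before the $j$-th pair was added. First I would observe that since $S_1\cup S_2\subseteq S_j$ and $S_j\in\RabinC{\R}{\SS}$ (this is a loop invariant already established in Lemma~\ref{lem:algo-Rabin-terminates}, or directly: $T_j\subsetneq S_j$ and $S_j$ contains an element of $\R$-accepting type), we know $S_1\cup S_2\in\RabinC{\R}{\SS}$ provided $S_1\cup S_2\not\subseteq T_j$; and in fact since $S_j$ is $\R$-accepted while $T_j$ is the maximum $\R$-rejected subset of $S_j$, any subset of $S_j$ strictly containing an $\R$-accepted point is $\R$-accepted — more carefully, I would use that $S_1$ (being an $S$ chosen earlier or later) is $\R$-accepted, so $S_1$ meets some Rabin pair of $\R$ positively and avoids its red set; since $S_1\subseteq S_1\cup S_2\subseteq S_j\subseteq \SS$, the same pair witnesses $S_1\cup S_2\in\RabinC{\R}{\SS}$ \emph{unless} $S_2$ reintroduces a red colour. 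Here I must be careful: $S_1\cup S_2$ need not be $\R$-accepted in general. So the cleaner route is to split on the relative order of the two iterations.

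Without loss of generality say the pair $(\greenPair_1,\redPair_1)$ was added \emph{before} $(\greenPair_2,\redPair_2)$, so $(\greenPair_1,\redPair_1)\in\R_{\min}^{(2)}$. When $S_2$ was chosen, it was a maximal element of $\RabinC{\R}{\SS}\setminus\RabinC{\R_{\min}^{(2)}}{\SS}$; in particular $S_2\notin\RabinC{\R_{\min}^{(2)}}{\SS}$, so $S_2$ is not accepted by $(\greenPair_1,\redPair_1)$. Now the pair $(\greenPair_j,\redPair_j)$ accepting $S_1\cup S_2$ is in $\R_{\min}$; I claim it cannot be $(\greenPair_1,\redPair_1)$: if it were, then $S_1\cup S_2\subseteq S_1$, forcing $S_2\subseteq S_1$, but then $S_2$ would be accepted by $(\greenPair_1,\redPair_1)$ exactly when $S_1$ is — and $S_1$ \emph{is} accepted by $(\greenPair_1,\redPair_1)$ (since $S_1=\SS\setminus\redPair_1$ and $\greenPair_1=\SS\setminus T_1$ with $T_1\subsetneq S_1$, so $\greenPair_1\cap S_1\neq\emptyset$ and $\redPair_1\cap S_1=\emptyset$), contradicting $S_2\notin\RabinC{(\greenPair_1,\redPair_1)}{\SS}$. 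It also cannot be $(\greenPair_2,\redPair_2)$ itself, since $S_1\cup S_2\subseteq S_2$ would force $S_1\subseteq S_2$, and then $S_1$ would have been in $\RabinC{\R_{\min}}{\SS}$ via... — here I would instead use maximality of $S_1$ directly: $S_1$ was chosen maximal in $\RabinC{\R}{\SS}\setminus\RabinC{\R_{\min}^{(1)}}{\SS}$, and if $S_1\subsetneq S_2$ with $S_2\in\RabinC{\R}{\SS}$ this contradicts maximality once we check $S_2\notin\RabinC{\R_{\min}^{(1)}}{\SS}$ (which holds since $\R_{\min}^{(1)}\subseteq\R_{\min}^{(2)}$ and $S_2\notin\RabinC{\R_{\min}^{(2)}}{\SS}$); the equality case $S_1=S_2$ is excluded because the pairs are distinct and both have the form $(\SS\setminus T, \SS\setminus S)$ with $T$ determined by $S$. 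Finally, for $(\greenPair_j,\redPair_j)$ with $j\neq 1,2$: such a pair was added at some iteration, and I would run the same maximality argument at whichever of iterations $1$ or $2$ came last relative to $j$, deriving that the larger of $S_1,S_2$ — or the set $S$ chosen at that last iteration — was not genuinely maximal, a contradiction.

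\textbf{Main obstacle.} The delicate point is bookkeeping the timeline: $\R_{\min}$ grows, and "accepted by $\R_{\min}$" is monotone in iterations, so I must always compare a putative accepting pair $(\greenPair_j,\redPair_j)$ against the \emph{snapshot} $\R_{\min}^{(t)}$ at the right iteration $t$. I expect the cleanest formulation is: among the iterations producing $(\greenPair_1,\redPair_1)$ and $(\greenPair_2,\redPair_2)$, let $t$ be the later one and $S$ the set chosen then; show $S\cup(\text{other }S_i)\in\RabinC{\R}{\SS}\setminus\RabinC{\R_{\min}^{(t)}}{\SS}$ and strictly contains $S$, contradicting the maximality in the choice of $S$. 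Establishing "$\in\RabinC{\R}{\SS}$" for the union is the one genuinely non-trivial sub-claim, and I would handle it by noting that if $(\greenPair_j,\redPair_j)\in\R_{\min}$ accepts $S_1\cup S_2$ then, since every pair of $\R_{\min}$ satisfies $\RabinC{(\greenPair_j,\redPair_j)}{\SS}\subseteq\RabinC{\R}{\SS}$ (loop invariant from Lemma~\ref{lem:algo-Rabin-terminates}), $S_1\cup S_2\in\RabinC{\R}{\SS}$ follows immediately — so in fact the union being $\R$-accepted is \emph{free} from the assumption, and the whole argument collapses to the maximality contradiction at iteration $t$.
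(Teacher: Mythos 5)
Your high-level plan is the same as the paper's: order the two iterations that produced the pairs and contradict the maximality of the set $S$ chosen at one of them, using the loop invariant $\RabinC{\R_{\min}}{\SS}\subseteq\RabinC{\R}{\SS}$ to get that $(\SS\setminus\redPair_1)\cup(\SS\setminus\redPair_2)\in\RabinC{\R}{\SS}$ for free; your treatment of the subcase where the accepting pair is $(\greenPair_2,\redPair_2)$ (forcing $S_1\subseteq S_2$ and contradicting the maximality of $S_1$ at the earlier iteration) is correct. The genuine gap is the subcase where the accepting pair is $(\greenPair_1,\redPair_1)$ itself, i.e.\ $S_2\subseteq S_1$. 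There you claim that ``$S_2$ would be accepted by $(\greenPair_1,\redPair_1)$ exactly when $S_1$ is'', but Rabin acceptance is not monotone under taking subsets: $S_2\subseteq S_1=\SS\setminus\redPair_1$ only gives $S_2\cap\redPair_1=\emptyset$, and in fact, since $S_2$ is chosen at a later iteration it is \emph{not} accepted by the already-present pair $(\greenPair_1,\redPair_1)$, which forces $S_2\cap\greenPair_1=\emptyset$, i.e.\ $S_2\subseteq T_1$ --- perfectly consistent, no contradiction. Your fallback ``cleanest formulation'' does not close this case either: to contradict the maximality of $S_2$ at its iteration you need the union not to be accepted by the \emph{snapshot} of $\R_{\min}$ at that iteration, and this fails exactly when the accepting pair was added earlier, as $(\greenPair_1,\redPair_1)$ was.

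This missing case is not vacuous, and no argument will close it, because the statement as written fails on it. Take $\SS=\set{a,b,c}$ and $\R=\set{(\set{a},\emptyset),\,(\set{b},\set{c})}$. The algorithm first picks $S=\set{a,b,c}$ and $T=\set{b,c}$, adding the pair $(\set{a},\emptyset)$; the only set of $\RabinC{\R}{\SS}$ not yet covered is $\set{b}$, so it then picks $S=\set{b}$ and $T=\emptyset$, adding $(\set{a,b,c},\set{a,c})$. For these two distinct pairs of $\R_{\min}$ the union $(\SS\setminus\emptyset)\cup(\SS\setminus\set{a,c})=\SS$ \emph{is} accepted by $(\set{a},\emptyset)$. (For what it is worth, the paper's own proof is thin at exactly the same point: its final contradiction presumes the accepting pair was added at a strictly earlier iteration than $(\greenPair_1,\redPair_1)$, which says nothing when the two coincide.) The property actually consumed downstream in Lemma~\ref{lem:algo-min-Rabin-minimal} is weaker --- that no \emph{single} Rabin pair whose language is contained in $\RabinC{\R}{\SS}$ accepts both $\SS\setminus\redPair_1$ and $\SS\setminus\redPair_2$ --- and that weaker statement is the one a repaired argument should target (in the problematic configuration $S_2\subseteq T_1$, such a pair would also accept $T_1\cup S_2=T_1\notin\RabinC{\R}{\SS}$).
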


\begin{proof}
	Suppose by contradiction that $\SS \setminus \redPair_1 \cup \SS\setminus \redPair_2$ is "accepted@@Rabin" by $\R_{\min}$. 
	Then there exists $(\greenPair_3, \redPair_3) \in \R_{\min}$ accepting $\SS \setminus \redPair_1 \cup \SS \setminus \redPair_2$. 
	As a consequence, neither $\SS \setminus \redPair_1$ nor $\SS \setminus \redPair_2$ intersects $\redPair_3$, and one of them intersects $\greenPair_3$.
	Therefore, $(\greenPair_3, \redPair_3)$ necessarily accepts either $\SS \setminus \redPair_1$ or $\SS \setminus \redPair_2$. Without loss of generality, we assume that it accepts $\SS \setminus \redPair_1$. In particular, we have $\redPair_3 \subseteq \redPair_1$.
	
	During the execution of Algorithm~\ref{algo-min-rabin} on $\R$ resulting in $\R_{\min}$, $S$ must have taken the values $\SS \setminus \redPair_j$ for both $j=1$ and $j=3$, starting with $j=3$ since $S$ is always taken maximal and $\SS \setminus \redPair_1 \subseteq \SS \setminus \redPair_3$. However, after adding $(\greenPair_3, \redPair_3)$ to $\R_{\min}$, $\SS \setminus \redPair_1$ is accepted by $\R_{\min}$, contradicting the fact that $(\greenPair_1, \redPair_1)$ is in $\R_{\min}$ in the end.
\end{proof}

By Remark~\ref{rmk-min:Rabin-type-same-colours}, it suffices to check minimality of $\R_{\min}$ amongst families of "Rabin pairs" over the alphabet $\SS$.

\begin{lemma}
	\label{lem:algo-min-Rabin-minimal}
	Let $\R$ and $\Tilde{\R}$ be two families of "Rabin pairs" over $\SS$ such that $\RabinC{\R}{\SS} = \RabinC{\Tilde{\R}}{\SS}$, and let $\R_{\min}$ the family returned by Algorithm~\ref{algo-min-rabin} when applied on $\R$.
	Then, $|\R_{\min}|\leq |\Tilde{\R}|$.	
\end{lemma}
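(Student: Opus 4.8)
The plan is to show that each pair in $\tilde\R$ can "pay for" at most one pair in $\R_{\min}$, via an injective assignment $\R_{\min}\to\tilde\R$. The key structural fact I would exploit is Lemma~\ref{lem:algo-Rabin-no-union}: for two distinct pairs $(\greenPair_1,\redPair_1)\neq(\greenPair_2,\redPair_2)\in\R_{\min}$, the set $(\SS\setminus\redPair_1)\cup(\SS\setminus\redPair_2)$ is \emph{not} accepted by $\R_{\min}$, hence (since $\RabinC{\R_{\min}}{\SS}=\RabinC{\R}{\SS}=\RabinC{\tilde\R}{\SS}$) not accepted by $\tilde\R$ either. On the other hand, for each $(\greenPair_i,\redPair_i)\in\R_{\min}$, the set $S_i:=\SS\setminus\redPair_i$ \emph{is} accepted by $\R_{\min}$ (this is how the algorithm chose it: $S$ was a set in $\RabinC{\R}{\SS}\setminus\RabinC{\R_{\min}^{\text{old}}}{\SS}$, and $\RabinC{(\SS\setminus T,\SS\setminus S)}{\SS}$ accepts $S$ since $T\subsetneq S$ — note $T\neq S$ because $S\in\RabinC{\R}{\SS}$ while $T\notin\RabinC{\R}{\SS}$). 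So $S_i\in\RabinC{\tilde\R}{\SS}$, meaning some pair of $\tilde\R$ accepts $S_i$.

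Now define the assignment: for each $(\greenPair_i,\redPair_i)\in\R_{\min}$, pick $\psi(i)\in\tilde\R$ to be some pair $(\tilde\greenPair,\tilde\redPair)$ that accepts $S_i=\SS\setminus\redPair_i$. I claim $\psi$ is injective. Suppose $\psi(i)=\psi(j)=(\tilde\greenPair,\tilde\redPair)$ with $i\neq j$. Then $(\tilde\greenPair,\tilde\redPair)$ accepts both $S_i$ and $S_j$, i.e. $S_i\cap\tilde\redPair=\emptyset$, $S_j\cap\tilde\redPair=\emptyset$, $S_i\cap\tilde\greenPair\neq\emptyset$ (and likewise for $j$). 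Then $(S_i\cup S_j)\cap\tilde\redPair=\emptyset$ and $(S_i\cup S_j)\cap\tilde\greenPair\neq\emptyset$, so $S_i\cup S_j=(\SS\setminus\redPair_i)\cup(\SS\setminus\redPair_j)$ is accepted by $\tilde\R$, hence by $\R_{\min}$ — contradicting Lemma~\ref{lem:algo-Rabin-no-union}. Therefore $\psi$ is injective and $|\R_{\min}|\leq|\tilde\R|$.

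The main thing to get right is the bookkeeping that $S_i\in\RabinC{\R_{\min}}{\SS}$, which rests on the observation that the pair added at that iteration, $(\SS\setminus T,\SS\setminus S)$, genuinely accepts $S$; this needs $T\subsetneq S$, which follows because $S$ is accepted by $\R$ but its maximum $\R$-rejected subset $T$ is, by definition, rejected by $\R$, so $T\neq S$. After that the argument is just the union-closure trick applied to $\tilde\R$ together with the equivalence $\RabinC{\R_{\min}}{\SS}=\RabinC{\tilde\R}{\SS}$ (from Lemma~\ref{lem:algo-Rabin-terminates} and the hypothesis). I would present it exactly in the order above: (1) record $S_i\in\RabinC{\R_{\min}}{\SS}$ for each pair of $\R_{\min}$; (2) invoke Lemma~\ref{lem:algo-Rabin-no-union} to get non-acceptance of pairwise unions of the $S_i$'s by $\tilde\R$; (3) build $\psi$ and check injectivity. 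The only subtlety ("hard part") is making sure Lemma~\ref{lem:algo-Rabin-no-union} is stated for $\R_{\min}$ and transferred to $\tilde\R$ via the language equality — no new ideas beyond that are needed.
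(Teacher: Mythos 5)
Your proof is correct and follows essentially the same route as the paper's: map each pair $(\greenPair,\redPair)\in\R_{\min}$ to a pair of $\Tilde{\R}$ accepting $\SS\setminus\redPair$, then use the fact that a single Rabin pair accepting two sets accepts their union to contradict Lemma~\ref{lem:algo-Rabin-no-union} if the map were not injective. Your extra bookkeeping that $\SS\setminus\redPair_i$ is accepted by $\R_{\min}$ is a harmless elaboration of what the paper gets directly from $S\in\RabinC{\R}{\SS}=\RabinC{\R_{\min}}{\SS}$.
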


\begin{proof}
	In order to prove the lemma, we map each pair $(\greenPair, \redPair)$ of $\R_{\min}$ to a pair $(\Tilde{\greenPair}, \Tilde{\redPair})$ of $\Tilde{\R}$ and prove that the mapping is injective.
	
	By Lemma~\ref{lem:algo-Rabin-terminates}, we have $\RabinC{\R}{\SS} = \RabinC{\R_{\min}}{\SS}$, and thus $\RabinC{\Tilde{\R}}{\SS} = \RabinC{\R_{\min}}{\SS}$.
	For all $(\greenPair, \redPair) \in \R_{\min}$, as $\SS \setminus \redPair$ is "accepted by@@Rabin" $\R_{\min}$ and $\RabinC{\Tilde{\R}}{\SS} = \RabinC{\R_{\min}}{\SS} $, we can find $(\Tilde{\greenPair}, \Tilde{\redPair}) \in \Tilde{\R}$ such that $\SS \setminus \redPair$ intersects $\Tilde{\greenPair}$ but not $\Tilde{\redPair}$.
	
	Now assume that there exist $(\greenPair_1, \redPair_1) \neq (\greenPair_2, \redPair_2) \in \R_{\min}$ such that $(\Tilde{\greenPair}_1, \Tilde{\redPair}_1) = (\Tilde{\greenPair}_2, \Tilde{\redPair}_2)$.
	The pair $(\Tilde{\greenPair}_1, \Tilde{\redPair}_1)$ then accepts both $\SS \setminus \redPair_1$ and $\SS \setminus \redPair_2$. As a consequence, both $\SS \setminus \redPair_1$ and $\SS \setminus \redPair_2$ intersect $\Tilde{\greenPair}_1$ and neither intersects $\Tilde{\redPair}_1$, thus $\SS \setminus \redPair_1 \cup \SS \setminus \redPair_2$ is also accepted by $(\Tilde{\greenPair}_1, \Tilde{\redPair}_1)$. This contradicts Lemma~\ref{lem:algo-Rabin-no-union}. 
	
	As a consequence, for all $(\greenPair_1, \redPair_1), (\greenPair_2, \redPair_2) \in \R_{\min}$, $(\Tilde{\greenPair}_1, \Tilde{\redPair}_1) \neq (\Tilde{\greenPair}_2, \Tilde{\redPair}_2)$. As a result, $\Tilde{\R}$ must contain at least as many pairs than $\R_{\min}$, proving the lemma.
\end{proof}

\begin{proposition}
	\label{prop:Rabin-min-in-PTIME}
	Algorithm~\ref{algo-min-rabin} terminates in polynomial time and returns a family of "Rabin pairs" with the same "Rabin language" as the input and with a minimal number of pairs.
\end{proposition}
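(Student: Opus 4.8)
The plan is simply to assemble the lemmas already proved in this subsection, so the argument will be short. For the \emph{correctness of the language}, I would just cite Lemma~\ref{lem:algo-Rabin-terminates}, which already records that Algorithm~\ref{algo-min-rabin} terminates and returns a family $\R_{\min}$ with $\RabinC{\R_{\min}}{\SS} = \RabinC{\R}{\SS}$.

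For the \emph{polynomial running time}, the first step is to bound the number of iterations of the \textbf{while} loop. I would observe that each iteration strictly enlarges $\R_{\min}$: at the start of an iteration the chosen set $S$ belongs to $\RabinC{\R}{\SS}\setminus\RabinC{\R_{\min}}{\SS}$, whereas after adding the pair $(\SS\setminus T,\SS\setminus S)$ we have $S\cap(\SS\setminus S)=\emptyset$ and, since $T$ is a \emph{strict} subset of $S$ (it is the maximum subset of $S$ outside $\RabinC{\R}{\SS}$, and $S\in\RabinC{\R}{\SS}$), also $S\cap(\SS\setminus T)\neq\emptyset$; hence $S$ becomes accepted by $\R_{\min}$ and the pair just added is genuinely new. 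Thus the number of iterations equals the final size $|\R_{\min}|$, and to cap it I would apply Lemma~\ref{lem:algo-min-Rabin-minimal} with the comparison family $\tilde{\R}$ taken to be the input family $\R$ itself (which trivially has the same Rabin language), yielding $|\R_{\min}|\leq|\R|$. A single iteration costs polynomial time: the loop test plus extraction of a maximal witness $S$ is Lemma~\ref{lem:Rabin-difference-PTIME}, the maximum rejected subset $T$ is Lemma~\ref{lem:Rabin-max-reject-PTIME}, and the rest are elementary set operations. Multiplying, the algorithm runs in polynomial time.

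For \emph{minimality}, suppose $\RabinC{\R}{\SS}$ is "$k$-Rabin-pair type@@lang". By Remark~\ref{rmk-min:Rabin-type-same-colours} there is a family $\tilde{\R}$ of $k$ "Rabin pairs" \emph{over the same alphabet $\SS$} with $\RabinC{\tilde{\R}}{\SS}=\RabinC{\R}{\SS}$, and then Lemma~\ref{lem:algo-min-Rabin-minimal} gives $|\R_{\min}|\leq|\tilde{\R}|=k$. Conversely $\R_{\min}$ is itself such a family, so $k=|\R_{\min}|$ is attainable; hence $|\R_{\min}|$ is the minimal number of "Rabin pairs" representing the language. As a byproduct this also settles Theorem~\ref{th-min:RabinPairsMinML-poly}: run Algorithm~\ref{algo-min-rabin} on $\R$ and answer ``yes'' iff $|\R_{\min}|\leq k$.

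I do not expect a genuine obstacle here, since every nontrivial claim is delegated to a lemma; the only point that needs a moment's care is the iteration bound in the runtime analysis, which might look circular because it uses the minimality lemma to control complexity, but is legitimate precisely because Lemma~\ref{lem:algo-min-Rabin-minimal} holds for \emph{any} equivalent family of pairs over $\SS$ — in particular for $\R$ — and makes no reference to the running time.
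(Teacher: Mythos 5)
Your proposal is correct and follows essentially the same route as the paper's proof: correctness of the language from Lemma~\ref{lem:algo-Rabin-terminates}, iteration bound $|\R_{\min}|\leq|\R|$ obtained by applying the minimality lemma to the input family itself, per-iteration cost from Lemmas~\ref{lem:Rabin-max-reject-PTIME} and~\ref{lem:Rabin-difference-PTIME}, and minimality via Remark~\ref{rmk-min:Rabin-type-same-colours} together with Lemma~\ref{lem:algo-min-Rabin-minimal}. The only difference is that you spell out explicitly why each iteration adds a genuinely new pair, a detail the paper leaves implicit.
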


\begin{proof}
	By Lemmas~\ref{lem:algo-Rabin-terminates} and~\ref{lem:algo-min-Rabin-minimal}, the algorithm terminates and returns a family of "Rabin pairs" with the desired property.
	Furthermore, since a pair is added to $\R_{\min}$ at each iteration of the loop, and since the resulting family contains at most $\size{\R}$ pairs (by minimality), the algorithm terminates after at most $\size{\R}$ iterations. Finally, by Lemmas~\ref{lem:Rabin-max-reject-PTIME} and~\ref{lem:Rabin-difference-PTIME}, each iteration can  be done in polynomial time, hence the algorithm terminates in polynomial time.
\end{proof}

\begin{corollary}
	\label{cor:Streett-condition-minimisation}
	Given a set of "Rabin pairs" $\R$, one can compute a set $\R_{min}$ with the same "Streett language" and a minimal number of clauses.
\end{corollary}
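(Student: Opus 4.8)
The plan is to reduce this statement directly to the Rabin case already established in Proposition~\ref{prop:Rabin-min-in-PTIME}. The key observation is that the "Streett language" of a family of "Rabin pairs" is, by definition, the complement of its "Rabin language": $\StreettC{\R}{\SS} = \SS^\oo \setminus \RabinC{\R}{\SS}$. Since complementation is a bijection on $\pow{\SS^\oo}$, for any two families $\R, \Tilde{\R}$ of "Rabin pairs" over $\SS$ we have $\StreettC{\R}{\SS} = \StreettC{\Tilde{\R}}{\SS}$ if and only if $\RabinC{\R}{\SS} = \RabinC{\Tilde{\R}}{\SS}$. In particular, a family of "Rabin pairs" defines the same "Streett language" as $\R$ precisely when it defines the same "Rabin language" as $\R$, and the number of pairs (``clauses'') is the same quantity in both readings of the family.

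Concretely, I would run Algorithm~\ref{algo-min-rabin} on the input family $\R$ to obtain $\R_{\min}$. By Proposition~\ref{prop:Rabin-min-in-PTIME}, this terminates in polynomial time, satisfies $\RabinC{\R_{\min}}{\SS} = \RabinC{\R}{\SS}$, and $|\R_{\min}|$ is minimal among all families of "Rabin pairs" with "Rabin language" equal to $\RabinC{\R}{\SS}$. Applying the equivalence above twice: first, $\RabinC{\R_{\min}}{\SS} = \RabinC{\R}{\SS}$ gives $\StreettC{\R_{\min}}{\SS} = \StreettC{\R}{\SS}$; second, any family $\Tilde{\R}$ with $\StreettC{\Tilde{\R}}{\SS} = \StreettC{\R}{\SS}$ also satisfies $\RabinC{\Tilde{\R}}{\SS} = \RabinC{\R}{\SS} = \RabinC{\R_{\min}}{\SS}$, so by minimality $|\R_{\min}| \leq |\Tilde{\R}|$. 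Hence $\R_{\min}$ has a minimal number of clauses among all families of "Rabin pairs" defining the "Streett language" $\StreettC{\R}{\SS}$, and it is computed in polynomial time.

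There is essentially no technical obstacle here: the only point requiring care is making explicit that minimising the number of pairs is invariant under passing to complements, which follows immediately from the definition of $\StreettC{\R}{\SS}$. One could additionally remark, if desired, that by Remark~\ref{rmk-min:Rabin-type-same-colours} (and its evident Streett analogue) it suffices to compare with families over the alphabet $\SS$ itself, which is exactly the setting in which the minimality of Lemma~\ref{lem:algo-min-Rabin-minimal} was proved, so no extra work is needed to handle families over larger colour sets.
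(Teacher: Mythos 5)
Your proof is correct and follows exactly the paper's argument: the paper's own proof is the one-line observation that two families of Rabin pairs have the same Rabin language if and only if they have the same Streett language (since one is the complement of the other), after which Proposition~\ref{prop:Rabin-min-in-PTIME} applies verbatim. You simply spell out this equivalence and its two uses (correctness and minimality) in more detail, which is fine but adds nothing beyond the paper's reasoning.
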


\begin{proof}
	It suffices to observe that two sets of "Rabin pairs" $\R_1, \R_2$ have the same "Rabin language" if and only if they have the same "Streett language". Hence by Proposition~\ref{prop:Rabin-min-in-PTIME}, Algorithm~\ref{algo-min-rabin} also minimises the number of pairs for the "Streett language".
\end{proof}

\subsection{Minimisation of acceptance conditions on top of an automaton is $\NP$-complete}\label{subsec:min-colours-automata}

We now consider the problem of minimising the number of colours or "Rabin pairs" used by a "Muller" or "Rabin condition" over a fixed "automaton".
We could expect that it is possible to generalise the previous polynomial time algorithms by using the "ACD", instead of the "Zielonka DAG". 
Quite surprisingly, we show that these problems become $\NP$-complete when taking into account the structure of the automata.

%This is quite surprising, as we could think that a generalisation of the methods from the proof of Theorem~\ref{th-min:colorMinML-poly} to the "ACD" would yield a polynomial time algorithm. However, it is not possible to properly combine the structure of the "local subtrees" of the "ACD". 

\subsubsection{Minimisation of colours on top of a Muller automaton}

\AP We say that a "deterministic" "Muller" "automaton" $\A$ is ""$k$-colour type@@TS"" if we can "relabel" it with a  "Muller condition"  using at most $k$ "output colours" that is "equivalent over" $\A$.

\AP
\begin{center}
	\fbox{\begin{tabular}{rl}	
			{\textbf{Problem:}} & \intro*\pbColorMinAut{} \\[1mm]
			{\textbf{Input:}} & A "deterministic" "Muller" "automaton" $\A$ and a positive integer $k$.\\[1mm]   
			{\textbf{Question:}} & Is $\A$ "$k$-colour type@@TS"?  
	\end{tabular}} 
\end{center}

Muller automata are sometimes defined using several colours per edge, which may allow one to use less colours and simplify the "Muller condition"~\cite{HOAFormat2015}.  
Formally, the "output alphabet" of such automata is of the form $\pow{\GG}$, and the acceptance condition is given as a family $\F$ of "accepting sets" over $\GG$ (and not over $\pow{\GG}$). An infinite sequence of outputs $w\in (\pow{\GG})^\oo$ is accepting if 
$\{c\in \GG\mid c\in w_i \text{ for infinitely many } i\}$ belongs to $\F$.
%$\bigcup_{C \in \minf(w)} C$ belongs to $\F$.

\AP We say that a "deterministic" "Muller" "automaton" $\A$ is ""$k$-multiple-colour type@@TS"" if we can "relabel" it with an "equivalent@@cond" "Muller condition" using at most $k$ "colours", with possibly several colours per edge.
The problem \intro*\pbMultiColourMinAut{} consists in determining whether an input "deterministic" "Muller automaton" $\A$ is "$k$-multiple-colour type@@TS".	\\

%\AP
%\begin{center}
%	\fbox{\begin{tabular}{rl}
%			{\textbf{Problem:}} & \intro*\pbMultiColourMinAut{} \\[1mm]
%			{\textbf{Input:}} & A "Muller" "automaton" $\A$ and a positive integer $k$.\\[1mm]   
%			{\textbf{Question:}} & Is $\A$ "$k$-multiple-colour type@@TS"?  
%	\end{tabular}} 
%\end{center}   

We remark that we have not specified the representation of the "acceptance condition" of $\A$ in the previous problems; therefore, they admit different variants according to this representation. We will show that for the three representations we are concerned with ("colour-explicit@@Muller", "Zielonka tree" and "Zielonka DAG"), both problems \pbColorMinAut{} and \pbMultiColourMinAut{} are $\NP$-complete. 
This implies that the problem is $\NP$-hard even if the "ACD" is provided as input, by Theorem~\ref{th-comp:compt-ACD-poly-ZT}.\\

Hugenroth showed\footnote{As of today, the proof is not currently publicly available online, we got access to it by a personal communication. The statement of the theorem only express the $\NP$-hardness for the "colour-explicit@@Muller" representation, but a look into the reduction works unchanged if the condition is given as a "Zielonka tree".} that, \emph{for state-based "automata"}, the problem \pbColorMinAut{} is $\NP$-hard when the "acceptance condition" of $\A$ is represented "colour-explicitly@@Muller" or as a "Zielonka tree"~\cite{Hugenroth23ZielonkaDAG}. 
However, it is not straightforward to generalise it to \emph{transition-based "automata"}.

\begin{theorem}[$\NP$-completeness of minimisation of colours for Muller automata]\label{th-min:colorMin-Aut-NPhard}
	The problems \pbColorMinAut{} and \pbMultiColourMinAut{} are $\NP$-complete, if the "acceptance condition" $\MullerC{\F}{\GG}$ of $\A$ is represented "colour-explicitly@@Muller", as a "Zielonka tree" or as the "ACD" of $\A$.
\end{theorem}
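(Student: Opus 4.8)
The plan is to prove $\NP$-membership first and then $\NP$-hardness. For membership: given $\A$ and $k$, we can guess a colouring $\colAut_2 \colon \DD \to \GG_2$ with $|\GG_2| \le k$ (in the multiple-colour version, $\colAut_2 \colon \DD \to \pow{\GG_2}$), together with a family $\F_2 \subseteq \powplus{\GG_2}$, and we must verify in polynomial time that $(\colAut_2, \MullerC{\F_2}{\GG_2})$ is equivalent over $\A$ to the input condition. The subtle point is that $\F_2$ can in principle be of exponential size, so we cannot guess it explicitly; instead we observe that the only sets of $\GG_2$ that matter are those of the form $\colAut_2(\ell)$ for $\ell$ a cycle of $\A$ — but there may be exponentially many cycles. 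The right move is to certify equivalence via the structure of $\A$: guess, for each SCC $S$ of $\A$, a labelling of the nodes of the "local subtree"/"local subDAG" structure (or equivalently guess, for each pair of a state $q$ and a subset of its outgoing colours, the acceptance status it determines), and check consistency. Concretely: the equivalence of two Muller conditions over $\A$ is decidable in polynomial time given the "ACD" of $\A$ and (a succinct description of) both conditions, because by Proposition~\ref{prop-prel:typeness-ACD}-style reasoning the acceptance status of a cycle is determined by the alternating structure, and checking that $\colAut_2$ induces the same "ACD" (equivalently, partitions cycles into accepting/rejecting the same way) reduces to a polynomial number of reachability/SCC computations inside the leaves of the "ACD". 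So we guess $\colAut_2$ and the accepting/rejecting status of each node of $\acd{\A}$ computed with respect to $\colAut_2$, and verify in polynomial time that this is consistent and agrees with $\acd{\A}$; this puts the problem in $\NP$ for all three input representations (using Theorem~\ref{th-comp:compt-ACD-poly-ZT} to get the "ACD" in polynomial time).

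For $\NP$-hardness, the plan is to reduce from a graph-colouring-type problem — most naturally \pbChromNum{} or \pbThreeCol{} — mirroring Hugenroth's state-based reduction but engineering the gadget so it survives the transition-based setting. Given a graph $G = (V,E)$, build a "deterministic" "Muller automaton" $\A_G$ over a small input alphabet whose SCC is arranged so that: each vertex $v \in V$ corresponds to a cycle $\ell_v$, and for each edge $\{u,v\} \in E$ there is a cycle $\ell_{u,v}$ witnessing that $\ell_u$ and $\ell_v$ must receive "distinguishable" colour-behaviour (i.e. the union $\ell_u \cup \ell_v$ changes acceptance status, forcing any relabelling to use colours that separate $u$ from $v$). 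One designs $\F$ so that the $\sim_\F$-type equivalence on the level of the automaton collapses exactly to: a valid $k$-colouring of $G$ corresponds to a relabelling with $k$ colours, and conversely. The key is that the automaton structure must allow colours to be identified across non-adjacent vertices (so that a colouring with few colours is realisable as a relabelling) while forbidding identification across adjacent ones (because the corresponding edge-cycle would then be misclassified). For the multiple-colour version \pbMultiColourMinAut{} one adapts the same gadget; here one must additionally rule out that assigning several colours per edge circumvents the lower bound — this is handled by making the rejecting sets in $\F$ "primitive" enough that a set of colours can be cheated into an accepting/rejecting status only by genuinely using $\ge k$ distinct colours.

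The main obstacle I expect is twofold. First, on the membership side: being careful that the certificate is genuinely polynomial — one cannot write down $\F_2$, so the verification must go entirely through the "ACD"/SCC machinery, and one must argue that equality of two transition-based Muller conditions over $\A$ is polynomial-time checkable given both their (succinct) representations and $\acd{\A}$; this is where the bulk of the technical care goes. Second, on the hardness side: transition-based acceptance is genuinely more flexible than state-based, so the gadget must prevent the relabelling from exploiting the fact that a single state's outgoing transitions can carry independent colours to "save" a colour illegitimately; getting the reduction gap tight (chromatic number of $G$ equals, say, minimal number of colours of $\A_G$ up to an additive constant) is the delicate point, and is presumably why the authors remark that Hugenroth's state-based reduction "is not straightforward to generalise". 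I would resolve this by padding each vertex-cycle with a dedicated fresh colour forced to be unique and then quotienting it away, so that the only freedom left in a relabelling is exactly the choice of a proper colouring of $G$.
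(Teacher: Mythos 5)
Your overall architecture (guess a colouring for membership; reduce from graph colouring for hardness) matches the paper's, but both halves have genuine gaps.

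On membership, the certificate you propose does not verify the right thing. Once the colouring $\colAut_2$ is guessed, the acceptance status that each cycle (hence each node of $\acd{\A}$) must receive under the new condition is already forced --- it must agree with the original condition --- so there is nothing meaningful to guess there, and consistency of these statuses with $\acd{\A}$ is automatic. The real obstruction is whether this forced status function \emph{factors through} $\colAut_2$: whether there exist an accepting cycle $\ell_+$ and a rejecting cycle $\ell_-$ (possibly in incomparable positions, and typically not appearing as labels of any ACD node, since the ACD records only maximal alternating subcycles) with $\colAut_2(\ell_+)=\colAut_2(\ell_-)$. This is a global pairwise condition that local reachability/SCC computations inside the leaves of the ACD do not capture. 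The paper's Lemma~\ref{lem-compatibility-acdDAG} handles it by a product construction over $\SS\times\SS$: a Streett automaton $\S_=$ accepting pairs of words whose runs see the same sets of new colours infinitely often, intersected with Streett automata $\S_{n_+,n_-}$ (one per pair of a round and a square node of $\zielonkaDAG{\F}$) certifying that the first word is accepted and the second rejected; non-emptiness of such an intersection witnesses failure and is checkable in polynomial time. Without this (or an equivalent) pairwise check your $\NP$-membership argument does not go through. Your observation that one cannot write down $\F_2$ is correct, and is exactly why the paper checks \emph{existence} of a compatible family deterministically rather than guessing one.

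On hardness, the plan is right in spirit --- the paper also reduces from $3$-colourability --- but you give no concrete gadget, and the concrete choice is where the difficulty lies. The paper's automaton has states $V\cup E\cup\{q_\init\}$, reads pseudo-paths of $G$, and accepts exactly the runs that stabilise around a vertex (taking $\F=\{C\subseteq\{v\}\cup\adj(v)\mid v\in V\}$); a $3$-colouring yields a $3$-colour relabelling directly, and conversely one defines $\env(v)$ as the union of new colours on transitions adjacent to $v$, proves $\env(u)\nsubseteq\env(v)$ for neighbours, and --- this is the step your sketch is missing for the multiple-colour variant --- decomposes $\pow{\set{1,2,3}}$ into three chains so that the preimages of the chains under $\env$ are independent sets. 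Your proposed fix for the multiple-colour case (making rejecting sets ``primitive'') and the padding/quotienting idea are not worked out, and the additive-constant gap you allow yourself is unnecessary: the reduction gives an exact equivalence at $k=3$. These points are fixable, but as written the hardness half is a strategy rather than a proof.
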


We note that both the $\NP$-hardness and the fact that these problems lie in \NP{} are not obvious: we could be tempted to guess an acceptance condition on the same automaton structure and check equivalence of the two automata. The problem is that reducing the number of colours might blow up the size of the representation of the acceptance condition.

\paragraph*{$\NP$-upper bound}
As mentioned before, to prove the $\NP$-upper bound for \pbColorMinAut{} we cannot just guess a "Muller acceptance condition" and check equivalence of automata, as the size of the representation of the acceptance condition could blow up.
However, when the condition is given "colour-explicitly@@Muller" or by a "Zielonka tree" or "Zielonka DAG", we can circumvent this problem by not building explicitly the representation of the acceptance condition. Instead we simply guess a colouring of the edges and check that it is compatible with the "ACD-DAG".

By contrast, in the case of Emerson-Lei conditions, this method fails. In fact, it is easy to reduce from \textsc{\small{UNSAT}} to the problem of whether a one-state Emerson-Lei automaton can be recoloured with zero colours, showing that the problem is $\coNP$-hard. 

\begin{lemma}
	\label{lem-compatibility-acdDAG}
	Let $\A = (Q,q_\init,\Sigma, \DD, \GG, \colAut, \MullerC{\F}{\GG})$ be a "deterministic" "Muller automaton" with its acceptance condition given "colour-explicitly@@Muller", by a "Zielonka tree" or by a "Zielonka DAG".
	Let $\colAut' : \DD \to \GG'$ (resp.~$\colAut' : \DD \to \pow{\GG'}$) be a colouring of its transitions.
	We can check in polynomial time that there exists a "Muller condition" $\F' \subseteq \powplus{\GG'}$ such that $(\colAut, \MullerC{\F}{\GG})$ and $(\colAut', \MullerC{\F'}{\GG'})$ are "equivalent over" $\A$.  
\end{lemma}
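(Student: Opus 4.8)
The key idea is that whether a colouring $\colAut'$ of the transitions is realisable by \emph{some} Muller condition $\F'$ over $\GG'$ depends only on the partition of $\cycles{\A}$ (equivalently, of the SCCs-and-subcycles structure captured by the ACD-DAG) into accepting and rejecting cycles that the original condition induces. Indeed, two transition-colourings $(\colAut, \MullerC{\F}{\GG})$ and $(\colAut', W')$ are equivalent over $\A$ if and only if they declare exactly the same set of cycles of $\A$ accepting. So first I would reduce the statement to a purely combinatorial check: compute, from the input representation of $\F$ (colour-explicit, Zielonka tree, or Zielonka DAG), the ACD-DAG $\acdDAG{\A}$ in polynomial time (Theorem~\ref{th-comp:compt-ACD-DAG-poly-ZDAG}), which records for every relevant cycle whether it is accepting. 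Then a colouring $\colAut'$ is compatible if and only if there is a family $\F' \subseteq \powplus{\GG'}$ such that for every cycle $\ell \in \cycles{\A}$, $\colAut'(\ell) \in \F'$ iff $\ell$ is accepting according to the ACD-DAG.

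\textbf{The compatibility criterion.} The next step is to make this criterion effective without enumerating all cycles. Since a candidate $\F'$ is determined on the colour-sets $\colAut'(\ell)$, the obstruction is exactly a \emph{clash}: two cycles $\ell_1, \ell_2$ of $\A$ with $\colAut'(\ell_1) = \colAut'(\ell_2)$ but with different acceptance status. If there is no clash, then defining $\F' = \{\colAut'(\ell) \mid \ell \in \cycles{\A} \text{ accepting}\}$ works (noting that we only constrain $\F'$ on sets of the form $\colAut'(\ell)$, and can set it arbitrarily elsewhere). So the whole question reduces to: is there a pair of cycles with the same image under $\colAut'$ but opposite acceptance? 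Because acceptance of a cycle is a prefix-independent property determined by its set of colours in the \emph{original} alphabet $\GG$ (via the Zielonka DAG, using the analogue of Lemma~\ref{lemma-zt:accepting-set-in-ZT}), and we may freely enlarge cycles within an SCC, it suffices to examine, for each SCC $\S$ and each node $m$ of $\zielonkaDAG{\F}$ (or each set in $\F$, in the colour-explicit case), the maximal subcycle of $\S$ whose $\colAut$-image lies in $\nu(m)$ but not in the label of any child of $m$ --- this is precisely the quantity already computed inside $\computeChildrenACD$ (Algorithm~\ref{algo:ChildrenNodeACD}). These maximal subcycles are the "canonical witnesses": one checks that a clash exists somewhere iff a clash exists among these polynomially many canonical cycles, by comparing their $\colAut'$-images pairwise (an accepting one against a rejecting one).

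\textbf{Putting it together.} Concretely, the algorithm would: (1) compute $\acdDAG{\A}$ in polynomial time; (2) for each SCC and each round/square node of the Zielonka DAG, compute via SCC-decomposition (as in Algorithm~\ref{algo:ChildrenNodeACD}) the maximal subcycles realising each acceptance status, collecting a polynomial-size set $\mathcal{C}_{\mathrm{acc}}$ of accepting witness cycles and $\mathcal{C}_{\mathrm{rej}}$ of rejecting ones; (3) accept iff there is no $\ell \in \mathcal{C}_{\mathrm{acc}}$ and $\ell' \in \mathcal{C}_{\mathrm{rej}}$ with $\colAut'(\ell) = \colAut'(\ell')$ (in the multiple-colours variant, $\colAut'(\ell)$ denotes $\bigcup_{e \in \ell}\colAut'(e) \subseteq \GG'$). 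Correctness of the ``only if'' direction is immediate (these are genuine cycles of $\A$); the ``if'' direction requires the lemma that any clash between two arbitrary cycles of $\A$ can be propagated to a clash between two canonical witnesses --- here one uses that merging a cycle $\ell$ into the unique maximal subcycle of its SCC with the same $\colAut$-colour-set (contained in the same minimal Zielonka-DAG node) preserves both the acceptance status and, crucially since $\colAut'$-colours only grow under taking supersets, \emph{can only identify} $\colAut'$-images, so it preserves a clash.

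\textbf{Main obstacle.} The delicate point is exactly this last propagation argument for the multiple-colours variant: enlarging a cycle changes its $\colAut'$-colour-set monotonically (it grows), so a clash $\colAut'(\ell_1) = \colAut'(\ell_2)$ need not be preserved when we enlarge both $\ell_1$ and $\ell_2$ to their canonical forms. One must argue instead that we can enlarge only \emph{within the constraint} $\colAut'(\ell_i) \subseteq$ (original image), i.e. pick the maximal subcycle of the SCC whose $\colAut'$-image equals $\colAut'(\ell_i)$ and whose $\colAut$-image stays inside the relevant Zielonka-DAG node; a careful bookkeeping shows such a canonical representative exists and is found among the polynomially many candidates computed in step~(2) (iterating over nodes of $\zielonkaDAG{\F}$ \emph{and}, for the $\colAut'$-side, over the at most $|\DD|$ distinct $\colAut'$-values realised on transitions). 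Verifying that this enlargement simultaneously fixes the $\colAut$-acceptance status \emph{and} the $\colAut'$-image is the technical heart of the proof; everything else is the polynomial-time machinery already established for the ACD-DAG.
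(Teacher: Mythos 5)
Your reduction of the problem to the non-existence of a \emph{clash} --- two cycles $\ell_1,\ell_2$ of $\A$ with $\colAut'(\ell_1)=\colAut'(\ell_2)$ but opposite acceptance status under $(\colAut,\F)$ --- is correct, and it matches the characterisation the paper itself starts from (stated there for pairs of words $w_+\in\Lang{\A}$, $w_-\notin\Lang{\A}$ whose runs have equal infinity-sets under $\colAut'$; this is the same thing, since the set of transitions a run takes infinitely often is a cycle). The gap is in the algorithmic step. You propose to detect a clash by pairwise comparison of a polynomial list of canonical witness cycles, namely the maximal subcycles computed per SCC and per node of $\zielonkaDAG{\F}$ as in Algorithm~\ref{algo:ChildrenNodeACD}. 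As you yourself note, enlarging a cycle to such a canonical form can strictly enlarge its $\colAut'$-image, so a clash between $\ell_1$ and $\ell_2$ need not survive normalisation: the canonical supercycles $L_1\supseteq\ell_1$ and $L_2\supseteq\ell_2$ still have opposite acceptance, but nothing forces $\colAut'(L_1)=\colAut'(L_2)$. Your proposed repair --- additionally fixing the $\colAut'$-image and taking the maximal cycle with that image, iterating over the at most $|\DD|$ values of $\colAut'$ on single transitions --- does not close the gap: the $\colAut'$-image of a cycle is the \emph{union} of the images of its transitions, so the family of images realised by cycles can be exponential in $|\GG'|$, and the clashing image $S$ need not be realised on any single transition. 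No polynomial-size set of witness pairs is exhibited, and the deferred \emph{careful bookkeeping} is exactly the part that is missing.

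The paper sidesteps this enumeration entirely with a product construction: it builds a Streett automaton over $\SS\times\SS$ that tracks two runs of $\A$ simultaneously and whose condition (a conjunction of $2|\GG'|$ constraints) enforces that the two runs see the same set of $\colAut'$-colours infinitely often; for every pair $(n_+,n_-)$ of a round and a square node of $\zielonkaDAG{\F}$ it intersects this with Streett automata forcing the first run to be accepted at $n_+$ and the second rejected at $n_-$, and then checks Streett non-emptiness --- all polynomial-time operations. To complete your proof you would need either such a synchronised search for the clashing pair, or a genuinely new combinatorial argument bounding the witness set; as written, the reduction to polynomially many canonical cycles is not established and appears false.
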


\begin{proof}
	First, we claim that such a condition $\F'$ exists if and only if there is no pair of words $w_+ \in \L(A)$ and $w_- \notin \L(\A)$ such that the sets of colours produced infinitely often under $\colAut'$ by their "runs" are equal.
	Indeed, if such words exist, it is clear that no $\F'$ over $\GG'$ can be consistent with the language "recognised" by $\A$. Conversely, if there are no such words, we can just take $\F'$ as the family of sets of colours seen infinitely often by "accepted words".
	
	We also note that, by definition of the "Zielonka DAG", if $w_+ \in \L(A)$ and $w_- \notin \L(\A)$ then there are nodes $n_+, n_-$ of $\zielonkaDAG{\F}$ such that $n_+$ is "round" and $n_-$ is "square" and the set of colours seen infinitely often in the run of $w_+$ (resp.~$w_-$) in $\A$ is included in the label of $n_+$ (resp.~$n_-$) but not in the ones of its children.	
	
	Hence we simply need to check the existence of those two words and nodes.
	Towards this, we construct a "Streett automaton" $\S_=$ over $\SS\times \SS$ "accepting" the pairs of words $(w,w')$ whose runs produce the same sets of colours infinitely often under $\colAut'$.
	Also, for each pair of nodes $n_+, n_-$ of $\zielonkaDAG{\F}$ such that $n_+$ is "round" and $n_-$ is "square", we construct a "Streett automaton" $\S_{n_+,n_-}$ over $\SS\times \SS$ "accepting" the pairs of words $(w,w')$ such that $w$ is accepted according to the node $n_+$ and $w'$ is rejected according to $n_-$.
	Finally, we can conclude by checking the non-emptiness of the intersection of $\S_=$ with each of these automata; this can be done in polynomial time, as a Streett automaton for the intersection can be build in polynomial time~\cite{Boker18WhyTypes}, as well as checking non-emptiness~\cite{EmersonLei1987ComplexityEmptinessAut}.
	We show how to build these automata in polynomial time when the "Muller condition" is given as a "Zielonka DAG"; the other two representation can be converted to this one in polynomial time.
	
	All these automata have a similar structure; they follow simultaneously the runs of the words $w,w'$ in $\A$, producing their outputs either under $\col$ or under $\col'$.
	
	Formally, the set of states is $Q\times Q$, with $(q_\init, q_\init)$ as initial state, and has transitions
	\[ (q,q') \re{(a,a'): (c,c')} (p,p') \quad \tif \quad q\re{a:c}p \; \tand \; q'\re{a':c'} p' \; \tin  \A. \]
	The output colours of $\S_=$ are $\GG'\times \GG'$, and those of $\S_{n_+,n_-}$ are $\GG\times \GG$. The colouring is given by $\col'$ and $\col$, respectively, as expected.
	
	The "Streett condition" of $\S_=$ expresses that the colours seen infinitely often in the first component are the same as the ones seen in the second component. Formally, it can be given by the set of "Rabin pairs"
	\[\R_= = \{\big( (x,-),(-,x) \big) \mid x\in \GG' \} \; \cup \; \{\big( (-,x),(x,-) \big) \mid x\in \GG' \},\]
	where $(x,-) = \{(x,c) \mid c\in \GG'\}$, and $(-,x)$ is defined symmetrically.
	
%	\[\R_= = \{\big(\mathsf{first}_x, \mathsf{second}_x \big) \mid x\in \GG' \} \; \cup \; \{\big( \mathsf{second}_x,\mathsf{first}_x \big) \mid x\in \GG' \},\]
%	where $\mathsf{first}_x = \{(x,c) \mid c\in \GG'\}$, and $\mathsf{second}_x$ is defined symmetrically.
	
	The "Streett condition" of $\S_{n_+,n_-}$ expresses that the set of colours seen infinitely often by the run on the first component is included in $\nu(n_+)$ but not in $\nu(m)$ for any child $m$ of $n_+$, and similarly for the second component and $n_-$.
	Formally, the set of "Rabin pairs" is given by $\R_{n_+} \cup \R_{n_-}$ defined as:
	\[\R_{n_+} = \{ \big((\GG\setminus \nu(n_+)) \times \GG \, , \,  \emptyset\big)\} \cup\{\big(\GG\times \GG \, , \, (\GG\setminus \nu(m)) \times \GG\big) \mid m \text{ a child of } n_+\}.\]
	The set $\R_{n_-}$ is defined analogously in the second component for $n_-$.
\end{proof}

We conclude that \pbColorMinAut{} and \pbMultiColourMinAut{} are in $\NP$: we just need to guess the colouring $\col'$ with $k$ colours and check in polynomial time that a compatible "Muller condition" $\F'$ exists using the previous lemma.

\paragraph*{$\NP$-hardness}
We prove the $\NP$-hardness of the problems \pbColorMinAut{} and \pbMultiColourMinAut{} at the same time, for the representations "colour-explicit@@Muller" and "Zielonka tree". The result for the other representations follows then from Proposition~\ref{prop-size:explicit-to-DAG} and Theorem~\ref{th-comp:compt-ACD-poly-ZT}.
In Appendix~\ref{sec-app:alternative-reduction} we give an alternative $\NP$-hardness reduction (for the case of single-coloured transitions), using an automaton with only $2$ states.\\

We reduce from the problem \pbChromNum{}, defined as follows.
\AP An ""(undirected) graph"" is a pair $G = (V,E)$ consisting of a set of vertices $V$ and a set of edges $E\subseteq {V \choose 2}$ (that is, edges are subsets of size exactly two, in particular, no self loops are allowed).
%\AP We say that it is ""simple"" if for all $v\in V$, $(v,v)\notin E$. 
%\AP We say that a graph is connected if there is a path connecting any pair of vertices.
\AP A ""$k$-colouring@@graph"" of an "undirected graph" $G=(V,E)$ is a mapping $\graphcolouring:V\rightarrow \{1,\dots,k\}$ such that $\graphcolouring(v)=\graphcolouring(v')\Rightarrow \{v,v'\}\notin E$ for every pair of nodes $v, v' \in V$. 
%We say that such a colouring has ""size@@colouring"" $|\Lambda|$.
%\AP The ""chromatic number"" of $G$ is the minimal number $k$ such that $G$ has a "colouring@@chrm" of "size@@colouring" $k$. %We denote it $\intro*\chromNum(G)$.
\AP The problem \intro*\pbChromNum{} consists in, given a "graph" $G$ (that can be assumed connected) and a positive integer $k$, decide whether $G$ admits a "$k$-colouring@@graph". We write \intro*\pbThreeCol{} for this problem with fixed $k=3$. Both problems are well-known to be $\NPc$~\cite{Karp72Reducibility,Stockmeyer73Graph3colour}, and they remain $\NPc$ on graphs of degree at most $4$~\cite{GJS76SimplifiedNP}.

Let $G = (V, E)$ be a connected "graph". 
We select an arbitrary vertex $v_{\init} \in V$.
\AP A ""pseudo-path"" in $G$ is a (finite or infinite) sequence $v_0 e_0 v_1 e_1 \cdots\in (V\cup E)^\infty$ such that $v_i, v_{i+1} \in e_i$ for all $i$. Note that we allow $v_i$ and $v_{i+1}$ to be equal, hence the term "pseudo-path"; that is, we allow a "pseudo-path" to step on an edge without going through it, and come back to the previous vertex. 
\AP A "pseudo-path" is ""initial"" if $v_0 = v_{\init}$. We say that such a "pseudo-path" ""stabilises around"" $v$ if it is infinite and there exists $i$ such that for all $j>i$, $v_j = v$, i.e., the "pseudo-path" eventually stays on the same vertex and just steps on the adjacent edges.
\AP We write $\intro*\Stab(v)$ for the set of "initial" "pseudo-paths" "stabilising around" $v$.
\AP For $v\in V$, we write $\intro*\adj(v)$ for the set of edges $\set{e \in E \mid v\in e}$.

We define the "automaton" $\intro*\autChromG$ as follows:
\begin{itemize}\setlength\itemsep{0.5mm}
	\item $Q =  V \cup E \cup \{q_\init\}$, where $q_\init$ is a fresh element, which is the initial state,
	%\item the initial state is $q_\init$, 
	\item $\SS = \GG = V \cup E$,
	\item $\Delta = \{(q_\init,v_\init,v_\init)\} \cup \set{(v, e,  e) \in V \times E^2 \mid v \in e} \cup \set{(e, v,  v) \in E \times V^2 \mid v \in e}$,
	\item the colour of each transition is the letter it reads, $\colAut(q\re{x}q') = x$,
	%\item $W = \RabinC{\R}{V \cup E}$ with $\R = \set{\big(V \cup E, \,(V \cup E) \setminus (\set{v} \cup \adj(v))\big) \mid v \in V}$.
	\item The "acceptance condition" is the "Muller language associated to"
	\[ \F = \{ C\subseteq \{v\}\cup \adj(v) \mid v\in V \}. \]	
\end{itemize}

This automaton is "deterministic" and "recognises" the language $\bigcup_{v \in V} \Stab(v)$. Note that $\autChromG$ is not "complete": it only reads "initial" "pseudo-paths" of $G$.

The representation of the automaton $\autChromG$ is polynomial in $|V|+|E|$. 
The family $\F$ has a "Zielonka tree" and a "Zielonka DAG" of polynomial size (more precisely, they both have $|V|+1$ nodes), so by Theorem~\ref{th-comp:compt-ACD-poly-ZT}, we can provide in polynomial time $\acd{\autChromG}$.
Moreover, if $G$ has bounded degree (we can assume that it has outdegree $4$), the "colour-explicit" representation of $\F$ is also of polynomial size.

\begin{lemma}
	\label{lem-3-colouring-multicolour}
	A  connected "graph" $G$ admits a "$3$-colouring@@graph" if and only if $\autChromG$ is "$3$-colour type@@TS" if and only if $\autChromG$ is "$3$-multiple-colour type@@TS".
\end{lemma}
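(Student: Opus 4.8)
I would prove the chain of equivalences by going around a cycle: if $G$ is $3$-colourable then $\autChromG$ is $3$-colour type; if $\autChromG$ is $3$-colour type then it is $3$-multiple-colour type (this direction is trivial, since a single-colour labelling is a special case of a multiple-colour one); and if $\autChromG$ is $3$-multiple-colour type then $G$ is $3$-colourable. The heart of the argument is the first and last implications, and both rely on understanding exactly which runs of $\autChromG$ need to be separated by the recoloured acceptance condition.

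\textbf{From a $3$-colouring to a recolouring.} Suppose $\graphcolouring : V \to \{1,2,3\}$ is a valid $3$-colouring. The key observation is that a run of $\autChromG$ is accepting precisely when the set of vertices it visits infinitely often is a singleton $\{v\}$ and every edge visited infinitely often is adjacent to $v$; in particular any accepting run eventually alternates between one fixed vertex $v$ and edges in $\adj(v)$. I would recolour each transition entering a vertex $v$ (i.e.\ of the form $e \re{v} v$, and also $q_\init \re{v_\init} v_\init$) with the colour $\graphcolouring(v) \in \{1,2,3\}$, and each transition entering an edge $e$ (of the form $v \re{e} e$) with some colour, say also drawn from $\{1,2,3\}$ according to a rule I must choose carefully so that the resulting $\F'$ over $\{1,2,3\}$ exists. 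The natural choice: colour $v \re{e} e$ with a colour \emph{different} from $\graphcolouring(v)$ and from $\graphcolouring(v')$ where $e=\{v,v'\}$ — this is possible since $\graphcolouring$ uses $3$ colours and $\graphcolouring(v)\ne\graphcolouring(v')$ as $\{v,v'\}\in E$, so exactly one colour remains. Then for a run stabilising around $v$, the set of colours seen infinitely often is $\{\graphcolouring(v)\} \cup \{\text{colours of edges in } \adj(v) \text{ entered i.o.}\}$, and each such edge colour is distinct from $\graphcolouring(v)$. One checks that these infinitely-often sets are exactly the sets containing $\graphcolouring(v)$ together with a nonempty (or possibly empty, handled by the transition $q_\init\re{v_\init}v_\init$) collection of colours different from $\graphcolouring(v)$, and crucially that no rejecting run produces such a set — a rejecting run visits two distinct vertices $v,v'$ infinitely often, hence sees both $\graphcolouring(v)$ and $\graphcolouring(v')$, but if $\graphcolouring(v)=\graphcolouring(v')$ then... here is where I must be more careful, and the colouring of the edge-transitions must be set up so that a run toggling between two same-coloured vertices is distinguishable. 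I would instead assign edge-transition colours so that $\F'$ can be read off consistently; the cleanest formulation is to take $\F'$ to be exactly the family of $\minf$-sets of accepting runs and then verify, using the structure of pseudo-paths, that this family is disjoint from the $\minf$-sets of rejecting runs. This verification is the main obstacle and is where the $3$-colourability of $G$ is genuinely used.

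\textbf{From a recolouring to a $3$-colouring.} Conversely, suppose $\autChromG$ is $3$-multiple-colour type via $\colAut' : \DD \to \pow{\GG'}$ with $|\GG'|=3$ and a compatible $\F'$. For each vertex $v$, let $c(v) \subseteq \GG'$ be the set of colours carried by the self-loop-like transition cycle consisting of $v \re{e} e \re{v} v$ summed over $e \in \adj(v)$ that are used — more precisely, I would look at the colour set produced infinitely often by the run that stabilises around $v$ while visiting \emph{all} edges of $\adj(v)$ infinitely often; call it $X_v \subseteq \GG'$. Since this run is accepting, $X_v$ must be an accepting set of $\F'$. I claim that if $\{u,v\}\in E$ then $X_u \ne X_v$: the run that toggles infinitely often between $u$ and $v$ (possible since they share an edge) visits two distinct vertices infinitely often, hence is rejecting, and its $\minf$-colour-set is $X_u \cup X_v$ plus possibly the colours of the shared-edge transitions; if $X_u = X_v$ one derives that this rejecting run has the same $\minf$-set as the accepting run around $u$, contradicting compatibility of $\F'$. (I'll need to handle the edge-transition colours in this union carefully — this is the dual of the subtlety above.) Since there are only $2^3 = 8$ possible values of $X_v$ but I need the map $v \mapsto X_v$ to give a $3$-colouring, I would refine: observe that the $X_v$ can be assumed to be a chain-incomparable... actually the cleaner route is to note that adjacent vertices get \emph{different} accepting singletons-in-$\F'$-structure and count; with at most $3$ colours the number of realisable distinct "vertex signatures" that are pairwise-separating along edges is at most $3$, giving the colouring. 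I expect the bookkeeping around the colours of the edge-entering transitions — which appear in both the accepting $\minf$-sets and in the rejecting unions — to be the trickiest point, and I would isolate it as a preliminary claim stating that WLOG edge-transitions may be assumed uncoloured (carry $\emptyset$), reducing the analysis to vertex-transition colours alone.
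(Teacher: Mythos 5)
Your overall architecture --- the cycle of three implications, with the single-colour-to-multiple-colour step being trivial --- matches the paper's, but both substantive implications have gaps that would sink the proof as written. In the forward direction, your ``natural choice'' of colouring $v \re{e} e$ with the third colour, the one outside $\{\graphcolouring(v),\graphcolouring(v')\}$ for $e=\{v,v'\}$, does not work, and the failure is not in the case $\graphcolouring(v)=\graphcolouring(v')$ that you flag (adjacent vertices never share a colour). Take $G$ a triangle with vertices coloured $1,2,3$. The accepting run stabilising around the vertex of colour $1$ and stepping on both of its edges produces the $\minf$-set $\{1,2,3\}$ (colour $1$ from the in-transitions, colours $2$ and $3$ from the two out-transitions), while the rejecting run toggling between the vertices of colours $1$ and $2$ across their common edge also produces $\{1,2,3\}$. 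So no family $\F'$ over $\{1,2,3\}$ can be consistent with this colouring, and your fallback of ``let $\F'$ be the $\minf$-sets of accepting runs and verify disjointness'' is exactly the verification that fails. The fix is to colour the out-transition $v \re{e} e$ with $\graphcolouring(v)$ as well, i.e.\ with the colour of the source vertex rather than a colour attached to the edge: then every accepting run eventually produces only the singleton $\{\graphcolouring(v)\}$, and every rejecting run, visiting two distinct vertices infinitely often, produces at least two colours --- either those vertices have different colours, or they are non-adjacent and the run must infinitely often make its first move away from one of them onto a neighbour, which has a different colour. This is the paper's construction, with $\F'=\{\{1\},\{2\},\{3\}\}$.

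In the backward direction, the statement you actually establish --- $X_u \ne X_v$ for adjacent $u,v$ --- is too weak: it only shows that $v \mapsto X_v$ is a proper colouring of $G$ using the $7$ nonempty subsets of $\GG'$ as colours, which says nothing about $3$-colourability. What you need, and what the very same toggling-run argument delivers if you run it under an inclusion hypothesis instead of an equality, is incomparability: if $X_u \subseteq X_v$, then the rejecting run cycling through $u$, $v$ and all their incident edges sees exactly $X_u \cup X_v = X_v$ infinitely often, the same set as the accepting run around $v$, a contradiction. Given incomparability along edges, one decomposes $\pow{\set{1,2,3}}$ into three chains, e.g.\ $\set{\emptyset,\set{1},\set{1,2},\set{1,2,3}}$, $\set{\set{2},\set{2,3}}$ and $\set{\set{3},\set{1,3}}$, and colours each vertex by the index of the chain containing $X_v$; adjacent vertices cannot fall in the same chain because their $X$-sets are incomparable. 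Your counting claim (``at most $3$ realisable pairwise-separating signatures'') is false as stated --- nothing bounds the number of distinct values of $X_v$ across the graph by $3$ --- and your proposed reduction to uncoloured edge-transitions is both unproven and unnecessary, since the edge-transition colours are simply absorbed into $X_v$ (this is the paper's $\env(v)$).
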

\begin{proof}
	We prove that if $G$ admits a "$3$-colouring@@graph", then $\autChromG$ is "$3$-colour type@@TS" (and therefore "$3$-multiple-colour type@@TS"), and that if $\autChromG$ is "$3$-multiple-colour type@@TS" then $G$ admits a "$3$-colouring@@graph".
	
	Let $\graphcolouring \colon V \to \{1,2,3\}$ be a "$3$-colouring@@graph" of $G$. We let $\GG' = \{1,2,3\}$ and define the colouring $\col' : \Delta \to \GG'$ with $\col'(e \xrightarrow{v} v) = \graphcolouring(v)$ and $\col'(v \xrightarrow{e} e) = \graphcolouring(v)$ for all $e \in E, v \in e$ (the colouring from $v_\init$ is irrelevant).
	We define the family $\F = \set{\set{1}, \set{2}, \set{3}}$, and let $\A'$ be the "automaton" obtained by setting the "acceptance condition" of $\autChromG$ to be $\MullerC{\F}{\GG}$ (that is, we accept the "runs" eventually visiting only one colour). Let us prove that $\L(\autChromG) = \L(\A')$.
	Let $w \in \L(\autChromG)$, $w$ is an "initial" "pseudo-path" and there must exist $v$ such that $w$ "stabilises around" $v$. Let $i = \graphcolouring(v)$, ultimately $w$ only visits $v \cup \adj(v)$ and thus it only produces colour $i$, so $w \in \L(\A')$.
	Let $w \in \L(\A')$. Again, $w$ is an "initial" "pseudo-path" and there must exist $i$ such that $w$ ultimately only visits $\bigcup_{v\in \graphcolouring^{-1}(i)} \set{v} \cup \adj(v)$. 
	\AP Moreover, as $\graphcolouring$ is a "$3$-colouring@@graph" of $G$, $\graphcolouring^{-1}(i)$ is an ""independent set"" (that is, no two vertices are connected by an edge), hence $w$ cannot visit infinitely often two distinct vertices from this set without visiting infinitely often an intermediate vertex of a different colour.
	As a consequence, $w$ must "stabilise around" some $v \in \graphcolouring^{-1}(i)$, thus $w \in \L(\autChromG)$. We have shown that $\L(\autChromG) = \L(\A')$.

	For the other direction of the reduction, suppose we have a multi-colouring $\col' : \Delta \to \pow{\GG'}$, with $\GG' = \{1,2,3\}$ and a family $\F\subseteq \powplus{\GG'}$ yielding an "equivalent acceptance condition over" $\autChromG$.
	Then we define a colouring $\graphcolouring : V \to \set{1,2,3}$ as follows. First, we define the function $\intro*\env: V \to \pow{\GG}$ by $\env(v) = \bigcup_{e \in E, v \in e} \col'(e \xrightarrow{v} v) \cup \col'(v \xrightarrow{e} e)$.
	%We then show that we cannot have $\env(u) \subseteq \env(v)$ for any two neighbours $u$ and $v$.
	
	\begin{claim}
		For all $\set{u,v} \in E$, $\env(u) \nsubseteq \env(v)$.
	\end{claim}
	
	\begin{claimproof}
		Suppose by contradiction that we have $\env(u) \subseteq \env(v)$ for two neighbours $u,v$.
		Then we can construct a run cycling through $u, v$ and all their neighbouring edges, and seeing infinitely often the set of colours $\env(u) \cup \env(v) = \env(v)$. This run must be rejected. This is a contradiction as a run cycling through only $v$ and neighbouring edges is accepted but sees the same set of colours infinitely often.
	\end{claimproof}
	
	We can then split the set $\GG = \pow{\set{1,2,3}}$ into three chains $\GG = \set{\emptyset, \set{1}, \set{1,2}, \set{1,2,3}} \sqcup \set{\set{2}, \set{2,3}} \sqcup \set{\set{3}, \set{1,3}}$. By the previous claim, the preimage by $\env$ of each of those three sets is an "independent set". We thus obtain a partition of $G$ into three "independent sets", yielding a "$3$-colouring@@graph".
\end{proof}

\subsubsection{Minimisation of Rabin pairs on top of a Rabin automaton}
Similarly, we consider the problem of minimising the number of "Rabin pairs" over a fixed "Rabin" "automaton".

\AP We say that a "deterministic" "Muller" "automaton" $\A$ is ""$k$-Rabin-pair type@@TS"" if we can "relabel" it with an "equivalent@@cond" "Rabin condition" using at most $k$ "Rabin pairs".

\AP
\begin{center}
	\fbox{\begin{tabular}{rl}
			{\textbf{Problem:}} & \intro*\pbRabinPairMinAut{} \\[1mm]
			{\textbf{Input:}} & A "deterministic" "Rabin" "automaton" $\A$ and a positive integer $k$.\\[1mm]   
			{\textbf{Question:}} & Is $\A$ "$k$-Rabin-pair type@@TS"?  
	\end{tabular}} 
\end{center}   

As before, we can consider different representations of the "acceptance condition" $\RabinC{\R}{\GG}$ of the "automaton": using "Rabin pairs", with a "colour-explicit@@Muller" "Muller condition", or by providing the "Zielonka tree", the "Zielonka DAG" or the "ACD".

\begin{theorem}[$\NP$-completeness of minimisation of Rabin pairs for Rabin automata]\label{th-min:RabinPairsMin-Aut-NPHard}
	The problem \pbRabinPairMinAut{} is $\NP$-complete for all the previous "representations" of the "acceptance condition".
\end{theorem}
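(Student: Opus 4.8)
The plan is to follow exactly the two-part pattern of the proof of Theorem~\ref{th-min:colorMin-Aut-NPhard}: $\NP$-hardness via a reduction from \pbChromNum{} that reuses the automaton $\autChromG$, and $\NP$-membership via a guess-and-check argument whose verification step is made polynomial by the "ACD-DAG". For hardness, given a connected "graph" $G$ (which we may take of degree at most $4$, keeping \pbChromNum{} $\NP$-hard) and an integer $k$, I would output $\autChromG$ together with $k$. Its "acceptance condition" $\MullerC{\F}{\GG}$ over $\GG=V\cup E$, with $\F=\{C\mid C\subseteq\{v\}\cup\adj(v)\text{ for some }v\in V\}$, is in fact a "Rabin condition", given by the $|V|$ "Rabin pairs" $\{(\{v\}\cup\adj(v),\ \GG\setminus(\{v\}\cup\adj(v)))\mid v\in V\}$; hence $\autChromG$ is a genuine "Rabin automaton", and since $\F$ has a "Zielonka tree" and a "Zielonka DAG" with $|V|+1$ nodes, and a "colour-explicit@@Muller" representation of polynomial size when $G$ has bounded degree (and its "ACD" is computable in polynomial time by Theorem~\ref{th-comp:compt-ACD-poly-ZT}), the reduction produces a valid instance for each of the listed "representations@@Muller". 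The core claim to establish is that $\autChromG$ is "$k$-Rabin-pair type@@TS" if and only if $G$ admits a $k$-"colouring@@graph".

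For the ``if'' direction, a proper $k$-"colouring@@graph" $\graphcolouring\colon V\to\{1,\dots,k\}$ yields the relabelling $\col'(e\re{v}v)=\col'(v\re{e}e)=\graphcolouring(v)$ together with the pairs $(\{i\},\ \{1,\dots,k\}\setminus\{i\})$ for $i\in\{1,\dots,k\}$: a "cycle" confined to the star of a vertex $v$ produces only colour $\graphcolouring(v)$ and is accepted, whereas a "cycle" visiting two distinct vertices passes through two $G$-adjacent ones (consecutive along its underlying walk), hence through two distinct colours, and is rejected — exactly matching $\bigcup_{v\in V}\Stab(v)$. For the ``only if'' direction, take any "Rabin condition" with at most $k$ pairs equivalent over $\autChromG$; pulling colours back along $\col'$ (setting $\greenPair'_i=\{e\in\DD\mid\col'(e)\cap\greenPair_i\neq\emptyset\}$, and likewise for $\redPair'_i$) we may assume it is a family $\R'=\{(\greenPair'_1,\redPair'_1),\dots,(\greenPair'_k,\redPair'_k)\}$ over $\DD$ with the same number of pairs. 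For each $v\in V$ pick an index $i(v)$ such that the full-star "cycle" $\ell_v=\{(v,e,e),(e,v,v)\mid e\in\adj(v)\}$ is accepted by $(\greenPair'_{i(v)},\redPair'_{i(v)})$, i.e.\ $\ell_v\cap\greenPair'_{i(v)}\neq\emptyset$ and $\ell_v\cap\redPair'_{i(v)}=\emptyset$. If two $G$-adjacent vertices $u,v$ had $i(u)=i(v)=i$, then $\ell_u\cup\ell_v$ is a strongly connected set of transitions (sharing the state $e=\{u,v\}$), hence a "cycle"; it visits two vertices so it is rejecting, yet it avoids $\redPair'_i$ and meets $\greenPair'_i$, hence is accepted — a contradiction. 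Thus $v\mapsto i(v)$ is a proper $k$-"colouring@@graph" of $G$.

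For $\NP$-membership, the delicate point (as already observed for Theorem~\ref{th-min:colorMin-Aut-NPhard}) is that one cannot simply guess an equivalent "Rabin condition" and test automaton equivalence, since reducing the number of pairs may blow up the representation of the "acceptance condition". I would instead guess a family $\R'=\{(\greenPair'_i,\redPair'_i)\}_{i=1}^{k}$ of "Rabin pairs" directly over $\DD$ (of size polynomial in $\sizeAut{\A}$) and verify in polynomial time that $(\col',\RabinC{\R'}{\DD})$ is "equivalent over" $\A$ to the given condition. If the input already gives "Rabin pairs" one pulls them back to $\DD$; otherwise one first computes $\acdDAG{\A}$ in polynomial time (Theorem~\ref{th-comp:compt-ACD-DAG-poly-ZDAG}), which is of polynomial size (Proposition~\ref{prop-comp:size-ACD-DAG}). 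Equivalence over $\A$ means that no reachable "cycle" of $\A$ is accepted by one condition and rejected by the other; using the analogue of Lemma~\ref{lemma-zt:accepting-set-in-ZT} for "ACD-DAGs" (a "cycle" $\ell$ is rejected by the original condition exactly when the deepest node $n$ of $\acdDAG{\A}$ with $\ell\subseteq\nuAcd(n)$ is "square@@acd"), each of the two directions decomposes, over the nodes $n$ of $\acdDAG{\A}$ and the pairs of $\R'$, into polynomially many non-emptiness questions: in the direction ``$\R'$-accepted $\Rightarrow$ original-accepted'', for each "square node@@acd" $n$ and each $i$, decide whether some reachable SCC of $\nuAcd(n)$ restricted to transitions outside $\redPair'_i$ meets $\greenPair'_i$ and is not contained in any child of $n$; the other direction is symmetric with the roles of the two conditions swapped, reducing to "Streett" non-emptiness inside "round node@@acd"s. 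Each such question is solved by iterated SCC computations in polynomial time, and the search is exhaustive since every strongly connected set of transitions lies in a single SCC of any restriction. This makes the verification polynomial in $\sizeAut{\A}+|\acdDAG{\A}|$, yielding the $\NP$ bound for every representation; together with the hardness reduction this gives $\NP$-completeness. The main obstacle is exactly this: performing the equivalence test in polynomial time without ever materialising a (possibly exponential) "Muller" description of the candidate condition — the rest, including transferring the chromatic-number reduction, is routine.
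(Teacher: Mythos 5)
Your proposal is correct, and the $\NP$-hardness half is essentially the paper's own argument: the same reduction from \pbChromNum{} via the automaton $\autChromG$ (whose acceptance condition is exactly the Rabin condition $\set{(V\cup E,\ (V\cup E)\setminus(\set{v}\cup\adj(v)))\mid v\in V}$), with the same two directions. Your cosmetic variations are harmless: in the forward direction you recolour the transitions by the graph colouring and use the pairs $(\set{i},\set{1,\dots,k}\setminus\set{i})$, whereas the paper keeps the original colouring and takes $\greenPair_i=V\cup E$, $\redPair_i=(V\cup E)\setminus\bigcup_{v\in\inv{\graphcolouring}(i)}(\set{v}\cup\adj(v))$; in the backward direction you argue on the cycles $\ell_u\cup\ell_v$ directly, while the paper exhibits the witness word $\pi_u(\rho\rho')^\oo$ — these are the same argument in two phrasings.

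Where you genuinely diverge is the $\NP$ upper bound. You guess $k$ Rabin pairs over the transition set $\DD$ (correctly avoiding the representation blow-up of guessing a Muller condition) and then verify equivalence by a bespoke decomposition over the nodes of the ACD-DAG, reducing to polynomially many SCC/non-emptiness queries. The paper instead observes that once the candidate is a Rabin condition with polynomially many pairs over $\DD$, and once the input condition has been converted to explicit Rabin pairs (directly, or via Propositions~\ref{prop-size:explicit-to-DAG} and~\ref{prop-size:from-ZDAG-to-RabinPairs} for the Zielonka-tree/DAG and colour-explicit representations -- legitimate since the automaton is promised to be a Rabin automaton), one can simply invoke the known polynomial-time equivalence test for deterministic Rabin automata~\cite{ClarkeDK93Unified}. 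Your route is self-contained and would also cover inputs whose condition is only given by the ACD, but it is considerably more laborious, and it carries a small debt: you should state precisely the ACD-DAG analogue of Lemma~\ref{lemma-zt:accepting-set-in-ZT} you rely on (namely that every node $n$ with $\ell\subseteq\nuAcd(n)$ and $\ell\nsubseteq\nuAcd(m)$ for each child $m$ has the shape determined by the acceptance of $\ell$, so that quantifying over all square nodes and all pairs is exhaustive), and restrict attention to cycles reachable from the initial state. Neither point is problematic, but as written they are asserted rather than proved; the paper's citation-based verification sidesteps both.
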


\begin{lemma}
	The problem \pbRabinPairMinAut{} is in $\NP$.
\end{lemma}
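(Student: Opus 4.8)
The plan is to mirror the argument used for \pbColorMinAut{} in Lemma~\ref{lem-compatibility-acdDAG}: rather than guessing a representation of a Rabin condition with few pairs and then checking equivalence of automata directly (which fails because reducing the number of pairs can blow up the representation), I would guess a labelling of the transitions of $\A$ by $k$ Rabin pairs and then verify in polynomial time that the resulting acceptance condition is equivalent over $\A$ to the input condition. Concretely, given the input "Rabin" automaton $\A = (Q, q_\init, \SS, \DD, \GG, \colAut, \RabinC{\R}{\GG})$ and integer $k$, the certificate is a fresh colour set $\GG'$ of size polynomial in $k$ (it suffices to take $|\GG'| = 2k$, two colours per pair), a colouring $\colAut'\colon \DD \to \pow{\GG'}$, and a family $\R'$ of $k$ "Rabin pairs" over $\GG'$. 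Since $|\R'| = k \le |\DD|$ and each pair is a subset of $\GG'$ with $|\GG'|$ polynomial, the certificate has polynomial size; note this step crucially relies on the fact that a Rabin condition over $k$ pairs does have a polynomial-size representation (unlike a general Muller condition with few colours), so here one can actually guess $\R'$ itself.

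The verification step is then to check that $(\colAut, \RabinC{\R}{\GG})$ and $(\colAut', \RabinC{\R'}{\GG'})$ are "equivalent over" $\A$, i.e. that a "run" of $\A$ is "accepting@@run" under the first condition if and only if it is "accepting@@run" under the second. Following the recipe of Lemma~\ref{lem-compatibility-acdDAG}, I would build a product "transition system" over $Q \times Q$ tracking two synchronous runs of $\A$ on the same input word (transitions $(q,q)\re{a:(c,c')}(p,p)$ copied from $\DD$, restricted to the diagonal since we compare a single run against itself — in fact it suffices to work over $Q$ directly, tracking a single run but recording both its $\colAut$- and $\colAut'$-outputs). On this structure, non-equivalence is witnessed by the existence of a "cycle" that is "accepting@@cycle" for one condition and "rejecting@@cycle" for the other. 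Detecting such a cycle reduces to a pair of emptiness checks: one for a product automaton whose acceptance says ``$\colAut$-output satisfies $\R$ and $\colAut'$-output does not satisfy $\R'$'' (a conjunction of a Rabin and a Streett condition, hence a Streett condition of polynomial size), and the symmetric one with the roles swapped. Each such automaton can be built in polynomial time, its condition is a polynomial-size "Streett condition", and emptiness of Streett automata is decidable in polynomial time~\cite{EmersonLei1987ComplexityEmptinessAut}; so the whole check runs in polynomial time. The certificate is valid precisely when both emptiness checks succeed, which happens iff the two conditions accept exactly the same runs of $\A$, iff $\A$ is "$k$-Rabin-pair type@@TS" via this labelling.

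Finally, for the input representations other than explicit Rabin pairs ("colour-explicit@@Muller" "Muller condition", "Zielonka tree", "Zielonka DAG", or "ACD"), the only thing that changes is how the original condition is decoded; in each case we can compute the "Zielonka DAG" (or directly evaluate, for a "cycle" $\ell$, whether $\colAut(\ell)$ is "accepting@@Muller") in polynomial time by Lemma~\ref{lemma-zt:accepting-set-in-ZT} and the standard conversions, so the Streett automata in the verification step are still of polynomial size. The main obstacle, as in the colours case, is the soundness argument for replacing ``check automata equivalence'' by ``check cycle-equivalence via polynomial-size Streett emptiness'': one must argue that because both target conditions in play are prefix-independent, equivalence over $\A$ is entirely determined by the acceptance status of the "cycles" of $\A$, and that a discrepancy on any run can be localised to a discrepancy on a single cycle. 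This is routine given that Rabin, Streett and Muller conditions depend only on $\minf$ of the output, but it is the step that needs to be written carefully.
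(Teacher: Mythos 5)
Your certificate and overall strategy are sound, and your key observation --- that, unlike in the Muller-colour case, one can guess the new Rabin condition itself because $k$ Rabin pairs always admit a polynomial-size representation --- is exactly what the paper's proof rests on. The paper's verification step is much shorter, though: it guesses the $k$ pairs directly over the colour set $\DD$ (each transition being its own colour, so no colouring function needs to be guessed) and then invokes the known polynomial-time equivalence test for deterministic Rabin automata~\cite{ClarkeDK93Unified}; the other input representations are handled by converting them to Rabin pairs in polynomial time via Propositions~\ref{prop-size:explicit-to-DAG} and~\ref{prop-size:from-ZDAG-to-RabinPairs}. Your self-contained verification via a product construction and Streett emptiness, in the spirit of Lemma~\ref{lem-compatibility-acdDAG}, is a reasonable alternative if one does not want to rely on the cited equivalence result.

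There is, however, one incorrect step in your verification as written: a conjunction of a Rabin condition and a Streett condition is \emph{not} in general a Streett condition of polynomial size. A Rabin condition with $r$ pairs is a disjunction of $r$ single-pair conditions, and Streett conditions are not closed under disjunction; only a \emph{single} Rabin pair $(\greenPair,\redPair)$ can be rewritten as a two-pair Streett condition, namely $\{(\GG,\greenPair),(\redPair,\emptyset)\}$. The repair is routine: distribute over the disjunction and run $|\R|$ separate emptiness checks, one per pair $(\greenPair_j,\redPair_j)\in\R$, each with acceptance ``$(\greenPair_j,\redPair_j)$ holds on the $\colAut$-output and $\R'$ fails on the $\colAut'$-output'', which \emph{is} a polynomial-size Streett condition; then do the symmetric family of checks with the roles of the two conditions exchanged. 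With that fix your argument goes through. Two smaller points: the cycle-based characterisation of equivalence should be restricted to the reachable part of $\A$, and your colouring $\colAut'\colon\DD\to\pow{\GG'}$ uses multiple colours per edge whereas the definition of relabelling uses a single colour per edge --- harmless, since any multi-coloured Rabin relabelling with $k$ pairs induces a single-coloured one with $k$ pairs over the colour set $\DD$, but it deserves a sentence.
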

\begin{proof}
	It suffices to guess a family of $k$ "Rabin pairs" over the set of "colours" $\DD$ and check if the obtained automaton "recognises" the same language as before. For the representation as "Rabin pairs", this can be done in polynomial time as the equivalence of "deterministic" "Rabin" automata can be checked in polynomial time~\cite{ClarkeDK93Unified}.
	Propositions~\ref{prop-size:explicit-to-DAG} and~\ref{prop-size:from-ZDAG-to-RabinPairs}  imply that this is also possible for the other representations.
\end{proof}

We now show that the problem \pbRabinPairMinAut{} is $\NP$-hard. We reduce from the problem \pbChromNum{}, using the same construction as in the previous subsection.
Consider the automaton $\autChromG$ defined in the proof of Theorem~\ref{th-min:colorMin-Aut-NPhard}.
It turns out that the "acceptance condition" of this automaton is a "Rabin language", indeed, we can define it as $\RabinC{\R}{V \cup E}$ by letting:
\[\R = \set{\big(V \cup E, \,(V \cup E) \setminus (\set{v} \cup \adj(v))\big) \mid v \in V}.\]
	
As before, the "Zielonka tree" of $\RabinC{\R}{V \cup E}$ has size at most $|V|+1$, and for graphs of outdegree at most $4$, a family representing $\RabinC{\R}{V \cup E}$ "colour-explicitly" is of polynomial size.
	%To obtain the $\NP$-hardness when the "acceptance condition" is represented "colour-explicitly@@Muller", we note that the problem \pbChromNum{} remains $\NP$-hard over graphs of outdegree at most $4$~\cite{GJS76SimplifiedNP}. For those graphs, a family $\F$ representing $\RabinC{\R}{V \cup E}$ is of polynomial size.

	\begin{lemma}
		A  connected "graph" $G$ admits a "$k$-colouring@@graph" if and only if $\autChromG$ is "$k$-Rabin-pair type@@TS".
	\end{lemma}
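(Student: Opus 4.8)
The plan is to reuse the construction $\autChromG$ from the proof of Theorem~\ref{th-min:colorMin-Aut-NPhard} and show that the minimal number of Rabin pairs needed to relabel it equals the chromatic number $\chromNum(G)$. We already know that $\autChromG$ recognises $\bigcup_{v\in V}\Stab(v)$ and that its acceptance condition is the Rabin language $\RabinC{\R}{V\cup E}$ with $\R = \{(V\cup E,\,(V\cup E)\setminus(\{v\}\cup\adj(v)))\mid v\in V\}$. Since the hardness of \pbChromNum{} holds on connected graphs of bounded degree (degree at most $4$), all three relevant representations (Rabin pairs, colour-explicit, Zielonka tree/DAG, ACD) of the acceptance condition have polynomial size, so a single reduction works for all of them.

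The core of the argument is a clean correspondence between $k$-colourings of $G$ and families of $k$ Rabin pairs equivalent over $\autChromG$. First I would prove the easy direction: given a proper $k$-colouring $\graphcolouring\colon V\to\{1,\dots,k\}$, recolour the transitions of $\autChromG$ by sending $e\re{v}v$ and $v\re{e}e$ to $\{\graphcolouring(v)\}$ (over the new colour alphabet $\GG'=\{1,\dots,k\}$, with one colour per edge), and take the single-pair-per-colour Rabin condition $\R' = \{(\{i\},\{1,\dots,k\}\setminus\{i\})\mid i\in\{1,\dots,k\}\}$, i.e.\ ``the run eventually sees only one colour''. The equivalence $\Lang{\autChromG}=\Lang{\A'}$ is exactly the same computation as in Lemma~\ref{lem-3-colouring-multicolour} (independence of $\graphcolouring^{-1}(i)$ forces a run producing only colour $i$ infinitely often to stabilise around a single vertex), so this gives a relabelling with $\chromNum(G)$ Rabin pairs.

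For the converse I would suppose $\autChromG$ is $k$-Rabin-pair type@@TS, witnessed by a colouring $\col'\colon\DD\to\GG'$ (one colour per edge suffices here, or at worst $\pow{\GG'}$) and a family of $k$ Rabin pairs $\R'=\{(\greenPair_1,\redPair_1),\dots,(\greenPair_k,\redPair_k)\}$ over $\GG'$ with $\Lang{\autChromG}=\Lang{\A'}$. As in Lemma~\ref{lem-3-colouring-multicolour}, for $v\in V$ put $\env(v)=\bigcup_{e\in\adj(v)}\col'(e\re{v}v)\cup\col'(v\re{e}e)$; the run cycling forever around $v$ and its incident edges is accepting and produces exactly $\env(v)$ infinitely often, so $\env(v)$ satisfies some pair $(\greenPair_{j(v)},\redPair_{j(v)})$. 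The key observation is that the map $v\mapsto j(v)$ is a proper colouring of $G$: if $\{u,v\}\in E$ and $j(u)=j(v)=j$, then the run cycling around both $u$ and $v$ and all their incident edges produces the set $\env(u)\cup\env(v)$ infinitely often; this set still meets $\greenPair_j$ and still avoids $\redPair_j$ (since $\env(u),\env(v)$ both avoid $\redPair_j$), so this run is accepting, whereas it stabilises around no vertex, contradicting $\Lang{\A'}=\Lang{\autChromG}$. Hence $j\colon V\to\{1,\dots,k\}$ is a $k$-colouring of $G$, completing the equivalence, and the reduction is evidently polynomial-time.

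The main obstacle I expect is not the colouring correspondence itself, which is essentially a transcription of the single-vs-multiple colour argument already done, but rather getting the $\NP$ upper bound and the ``representations'' bookkeeping to go through uniformly: one must check that the Rabin-pair, colour-explicit, Zielonka-tree/DAG, and ACD encodings of $\RabinC{\R}{V\cup E}$ are all polynomially related on bounded-degree instances (this is where Propositions~\ref{prop-size:explicit-to-DAG} and~\ref{prop-size:from-ZDAG-to-RabinPairs} and Theorem~\ref{th-comp:compt-ACD-poly-ZT} are invoked), and that the guess-and-verify $\NP$ algorithm from the preceding lemma indeed covers all of them. Modulo that, the reduction gives $G$ is $k$-colourable $\iff$ $\autChromG$ is $k$-Rabin-pair type@@TS, and since \pbChromNum{} is $\NPc$ on bounded-degree connected graphs, so is \pbRabinPairMinAut{} for every representation in the list.
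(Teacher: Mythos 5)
Your proof is correct and follows essentially the same route as the paper: the backward direction (extracting a proper colouring from the pair indices via $\env(v)$ and the union/non-stabilising-run contradiction) is the paper's argument verbatim. The only difference is in the forward direction, where you recolour transitions by $\graphcolouring$ and use the ``eventually a single colour'' pairs, whereas the paper keeps the original colours $V\cup E$ and takes $\greenPair_i = V\cup E$, $\redPair_i = (V\cup E)\setminus\bigcup_{v\in\inv{\graphcolouring}(i)}(\set{v}\cup\adj(v))$; both are valid relabellings with $k$ pairs and rely on the same independence argument.
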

	\begin{proof}
	Let $\graphcolouring\colon V \to \{1,\dots,k\}$ be a "$k$-colouring@@graph" of $G$. For all $i \in \set{1, \ldots, k}$ we define the "Rabin pair" $R_i = (\greenPair_i,\redPair_i)$ with:
	\[\greenPair_i = V \cup E \; , \quad \redPair_i = (V \cup E)  \setminus \bigcup_{v\in \inv{\graphcolouring}(i)} \set{v} \cup \adj(v).\]
	We set $\R' = \set{R_i \mid i \in \set{1,\ldots,k}}$.
	Let $\A'$ be the "automaton" obtained by setting the "acceptance condition" of $\autChromG$ to be $\RabinC{\R'}{V\cup E}$. Let us prove that $\L(\A') = \L(\autChromG)$.
	Let $w \in \L(\autChromG)$, $w$ is an "initial" "pseudo-path" and there must exist $v$ such that $w$ "stabilises around" $v$. Let $i = \graphcolouring(v)$, ultimately $w$ only visits $v \cup \adj(v)$ and thus it satisfies $R_i$. As a result, $w \in \L(\A')$.
	Let $w \in \L(\A')$. Again, $w$ is an "initial" "pseudo-path" and there must exist $i$ such that $w$ ultimately only visits $\bigcup_{v\in \graphcolouring^{-1}(i)} \set{v} \cup \adj(v)$. 
	Moreover, as $\graphcolouring$ is a "$k$-colouring@@graph" of $G$, $\graphcolouring^{-1}(i)$ is an "independent set", hence $w$ cannot visit infinitely often two distinct vertices from this set without visiting infinitely often an intermediate vertex of a different colour.
	As a consequence, $w$ must "stabilise around" some $v \in \graphcolouring^{-1}(i)$, thus $w \in \L(\autChromG)$. We have shown that $\L(\autChromG) = \L(\A')$.
	
	For the other direction of the reduction, suppose we have a family of $k$ "Rabin pairs" $\R' = (\greenPair_i, \redPair_i)_{1\leq i \leq k}$ (over some set of colours $\GG'$) such that the automaton $\A'$ obtained by "relabelling" $\autChromG$ with the "acceptance condition" $\Rabin{\R'}$ recognises $\L(\autChromG)$. For all $v \in V$, as we assumed $G$ to be connected, there is a finite path $\pi_v = v_{\init} e_1 v_1 e_2 v_2 \cdots e_\ell v$. We also define $\rho_v$ as a finite "pseudo-path" $e_1 v e_2 \cdots v e_r v$ such that $\set{e_1, \ldots, e_r} = \adj(v)$.
	The word $w_v = \pi_v \rho_v^\oo$ is in $\L(\autChromG)$, thus there exists $c_v \in \set{1,\ldots, k}$ such that the run of $w_v$ in $\A'$ satisfies $(\greenPair_{c_v}, \redPair_{c_v})$. The set of colours it sees infinitely often is $\set{v} \cup \adj(v)$, thus $\redPair_{c_v} \cap (\set{v} \cup \adj(v)) = \emptyset$ and $\greenPair_{c_v} \cap (\set{v} \cup \adj(v)) \neq \emptyset$. 
	We thus define a function $\graphcolouring : V \to \set{1, \ldots, k}$ as $v \mapsto c_v$. 
	
	It remains to prove that $\graphcolouring$ is a valid "$k$-colouring@@graph" of $G$. 
	Suppose  there exist two neighbours $u,v$ such that $\graphcolouring(u)=\graphcolouring(v)=i$. 
	Then the runs over $w_u$ and $w_v$ both satisfy $(\greenPair_i, \redPair_i)$. 
	As a result, the set $\set{v} \cup \adj(v) \cup \set{u} \cup \adj(u)$ also satisfies $(\greenPair_i, \redPair_i)$. 
	We define $\rho = e_1 u e_2 \cdots u e_r u$ and $\rho' = e'_1 v e'_2 \cdots v e'_r v$ with $e_1 = e'_1 = \set{u,v}$.
	We can then observe that the word $\pi_u (\rho \rho')^\oo$ has an accepting run in $\A'$, as the colours it sees infinitely often are $\set{v} \cup \adj(v) \cup \set{u} \cup \adj(u)$. However, this word is not accepted by $\autChromG$, a contradiction.
	As a result, $\graphcolouring$ is a valid "$k$-colouring@@graph" of $G$. 
\end{proof}

This concludes our reduction, showing that the minimisation of "Rabin pairs" with respect to a given automaton is \NP-hard. 

%\subsubsection{Minimisation of colours on top of a Muller automaton with multiple coloured edges}
%
%
%\begin{theorem}[$\NP$-completeness of minimisation of multi-colours for Muller automata]\label{th-min:multicolorMin-Aut-NPhard}
%	The problem \pbMultiColourMinAut{} is $\NP$-complete, if the "acceptance condition" $\MullerC{\F}{\GG}$ of $\A$ is represented "colour-explicitly@@Muller", as a "Zielonka tree" or as the "ACD" of $\A$.
%\end{theorem}
%
%The upper bound follows from Lemma~\ref{lem-compatibility-acdDAG} (which is easily extended to multiple colours per edge).
%As before, we will show hardness for all three representations at once.
%We use the same automaton $\autChromG$ as in the previous section (proof of Theorem~\ref{th-min:RabinPairsMin-Aut-NPHard}).
%We start by proving that $\autChromG$ is "$3$-multiple-colour type@@TS" if and only if $G$ is "$3$-colourable". The hardness proof will then follow almost immediately.

\section{Size of different representations of acceptance conditions}
\label{sec:size}
We start Section~\ref{subsec:worst-case} by  analysing the size of the "Zielonka tree" and "ACD" in the worst case.
Using Proposition~\ref{prop-prelim:minimal-det-parity}, stating that minimal "(history-)deterministic" "parity" "automata" can be derived from the "Zielonka tree", we can directly translate the lower bounds for the size of the "Zielonka tree" into lower bounds for  "(history-)deterministic" "parity" "automata". We recover in this way some results from Löding~\cite{Loding1999Optimal} and generalise them to "history-deterministic" automata.
Then, we compare in Section~\ref{subsec:size-comparison} the size of different "representations@@Muller" of "Muller languages" and study the translations between them, with special focus on the "Zielonka tree" and the "Zielonka DAG", proving the claims from in Figure~\ref{fig-prel:representationMuller}.

\subsection{Worst case analysis of the Zielonka tree}\label{subsec:worst-case}

We study the size of the "Zielonka tree" in the worst case.
By Remark~\ref{rmk-acd:ZT-as-ACD}, the given bounds apply to the "ACD", as the "Zielonka tree" can be seen as the "ACD" of a "Muller" "automaton" with just one state.

\begin{proposition}[Size of the Zielonka tree: Worst case]\label{prop-size:worst-case-ZT}
	Let $\F\subseteq \powplus{\GG}$ be a family of subsets, and let $m=|\GG|$. It holds:
	\begin{itemize}\setlength\itemsep{1mm}
		\item $|\zielonkaTree{\F}| \leq 1+m+m(m-1)+\dots + m!$,
		\item $|\leaves(\zielonkaTree{\F})|\leq m!$, and
		%\item $|\memTree{\zielonkaTree{\F}}|\leq \doubleFact{m}$, and
		\item the height of $\zielonkaTree{\F}$ is at most $m$.
	\end{itemize}
	These bounds are tight: for all $m\in \NN$, there is a family $\F_m\subseteq \powplus{\GG_m}$ over a set of $m$ colours such that the previous relations are equalities.
\end{proposition}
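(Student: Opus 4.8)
The plan is to prove the upper bounds by a straightforward structural induction on the Zielonka tree, and then to exhibit an explicit family $\F_m$ achieving equality. First I would establish the height bound: along any branch of $\zielonkaTree{\F}$, each edge goes from a node labelled $X$ to a node labelled a \emph{strictly} smaller non-empty subset $Y \subsetneq X$ (by Definition~\ref{def-zt:zielonkaTree}, a child's label is a maximal \emph{proper} subset of the parent's label of the opposite acceptance status — it is proper because the acceptance status flips). Since $|\GG| = m$ and labels are non-empty, a branch has at most $m$ nodes, so the height is at most $m$ (counting the root as depth $1$, or $m-1$ if the root is depth $0$; I would fix the convention to match the claimed bound, noting the root has label $\GG$ of size $m$ and leaves have label of size $\geq 1$). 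For the bound on the number of nodes, I would prove by induction on $m - |X|$ that a node labelled $X$ with $|X| = j$ has a subtree with at most $1 + j + j(j-1) + \cdots + j!/1! \cdot \binom{?}{}$... more precisely: a node with label of size $j$ has at most $j$ children (each child's label is obtained by removing at least one colour, and children have pairwise incomparable labels of size $\leq j-1$, but the crude bound "at most $j$ children" suffices since we just need the stated sum), each child having label of size at most $j-1$; unfolding the recursion $T(j) \leq 1 + j\cdot T(j-1)$ with $T(0)$ vacuous and $T(1) = 1$ gives $T(m) \leq 1 + m + m(m-1) + \cdots + m!$. The same recursion on leaves, $L(j) \leq j \cdot L(j-1)$, $L(1) = 1$, gives $L(m) \leq m!$.

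For tightness, I would take $\GG_m = \{1, \dots, m\}$ and define $\F_m$ so that \emph{every} branch of the Zielonka tree removes colours one at a time in every possible order — i.e., the leaves of $\zielonkaTree{\F_m}$ should correspond bijectively to the $m!$ permutations of $\GG_m$, with the node at depth $k$ along a branch labelled by a subset of size $m-k$. A natural candidate: let the acceptance status of a set $X \subseteq \GG_m$ depend only on $|X|$, say $X \in \F_m \iff |X|$ is even (or a fixed parity — I would choose whichever makes the root's status consistent). Then a node labelled $X$ with $|X| = j$ has as children exactly the maximal proper subsets of $X$ of the opposite parity, which are the subsets of size $j-1$, and there are exactly $j$ of them; moreover these $j$ children all have distinct labels, so no merging occurs and the tree genuinely has $T(j) = 1 + j\cdot T(j-1)$ nodes and $j!$ leaves. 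I would verify that $\F_m$ is non-empty (it is, as long as $m \geq 1$, since e.g. $\emptyset \neq$ some set of even size exists for $m\geq 2$; the case $m=1$ and small cases I would check separately, adjusting the parity convention if needed so that the root status and the recursion line up). Checking that this $\F_m$ indeed yields the claimed Zielonka tree is the computational heart of the tightness argument: one must confirm that for a set $X$ of size $j$, the maximal subsets of opposite parity are precisely the $(j-1)$-subsets, which is clear since any set of size $\leq j-2$ of the right parity is contained in some $(j-1)$-set (of size $j-1$, the opposite parity to $j$).

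The main obstacle I anticipate is not the upper bounds — those are routine — but rather pinning down the exact parity/acceptance convention in the tightness construction so that (i) $\F_m$ is a legitimate non-empty family, (ii) the induced Zielonka tree has no coincidental label collisions (so it really is a tree of the stated size, not a smaller DAG-like object after the implicit identification), and (iii) all three bounds are simultaneously tight for the \emph{same} family. The "size depends only on cardinality" choice handles (ii) and (iii) cleanly, but I would need to be careful at the base of the recursion (sets of size $1$ are always, say, rejecting under the even-parity convention, hence leaves, which is exactly what we want). I would also double-check the off-by-one in the height bound against the precise phrasing "height at most $m$". Once the convention is fixed, writing out the induction is mechanical, so I would present the upper-bound induction compactly and devote the bulk of the exposition to defining $\F_m$ and verifying the three equalities.
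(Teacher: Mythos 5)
Your tightness construction coincides with the paper's: the family $\evenLetters{m}$ of subsets of even cardinality, for which a node labelled by a $j$-element set has exactly the $j$ subsets of size $j-1$ as children, giving $m!$ leaves, $1+m+m(m-1)+\dots+m!$ nodes, and height $m$; the height bound is also argued identically (labels strictly decrease along any branch). The problem lies in your upper bound on the number of nodes and leaves. The recursions $T(j)\leq 1+j\cdot T(j-1)$ and $L(j)\leq j\cdot L(j-1)$ rest on the claim that a node whose label has $j$ elements has at most $j$ children. That claim is false: the children's labels form an antichain of proper subsets of a $j$-element set, and such an antichain can have up to $\binom{j}{\lfloor j/2\rfloor}$ members. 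Concretely, take $\GG=\{1,2,3,4\}$ and $\F=\{C\subseteq\GG \mid |C|\geq 3\}$: the root is round and its children are the six $2$-element subsets, so a node with a $4$-element label has $6>4$ children (the union-flipping property of Remark~\ref{rmk-zt:union-changes-acceptance} is satisfied, so nothing excludes this). The stated bounds remain true in this example and in general, but your derivation of them is unsound --- ``pairwise incomparable'' does not yield ``at most $j$ of them'', and your parenthetical justification is exactly where the argument breaks.

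The paper closes this gap by a different route: it shows that $\zielonkaTree{\F}$ embeds into the Zielonka tree of $\evenLetters{m}$. For a node $n$ with children $n_1,\dots,n_k$, the labels $D_i=\nu(n_i)$ satisfy $D_i\in\F\iff D_i\cup D_j\notin\F$ for $i\neq j$, and an argument in the style of Lemma~\ref{lemma-comp:many-descendants} produces $k$ pairwise incomparable nodes of the reference tree, below the image of $n$, whose labels contain the respective $D_i$ and have the same acceptance status; since incomparable nodes of a tree root disjoint subtrees, the induction goes through and every Zielonka tree over $m$ colours is dominated node-for-node by that of $\evenLetters{m}$, whose size you have already computed exactly. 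If you prefer to keep a direct recursion, you would have to prove an inequality of the form $\sum_{i}T(|D_i|)\leq j\cdot T(j-1)$ for such antichains, which is essentially the same embedding argument in disguise; the crude ``at most $j$ children'' shortcut cannot be repaired.
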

\begin{proof}
	We start by showing that the given bounds are tight. We suppose that $m$ is even (the construction is symmetric if $m$ is odd), and let $\GG_m=\{1,\dots,m\}$. Consider the family\footnotemark{} $\evenLetters{m}\subseteq\powplus{\GG_m}$ given by:
	\footnotetext{This family of subsets already appear in the worst-case study of parity automata recognising a "Muller language" in Mostowski's paper introducing the "parity condition"~\cite[p.161]{Mostowski1984RegularEF}.}
	\[ \intro*\evenLetters{m} = \{C\subseteq \GG_m \mid |C| \text{ is even}\}. \]
	
	First, we remark that the last inequality follows from the fact that the subsets $\GG_m, \GG_m\setminus\{1\}, \dots, \GG_m\setminus\{1,\dots, m-1\}$ form a branch of the "Zielonka tree".
	Let $n$ be a node of the "Zielonka tree" of $\evenLetters{m}$, and let $X_n = \nu(n)$ be its label. Then $n$ has a child for each subset of $X_n$ of size $|X_n|-1$. A simple induction gives that the level at depth $k$ of the "Zielonka tree" has $m(m-1)\cdots (m-(k-1))$ nodes. This establishes the two first equalities of the statement.

	We prove now the upper bounds. The last item follows from the fact that the label of a node is a set of size strictly smaller than the label of its parent. 
	Let $\F\subseteq \powplus{\GG}$, and $m=|\GG|$. We show by recurrence that the "Zielonka tree" $\zielonkaTree{\F}$ is not bigger than that of $\evenLetters{m}$. We remark that for all $X\subseteq \GG_m$, there is a node $n_X$ in the "Zielonka tree" of $\evenLetters{m}$ labelled $X$, and that the subtree rooted at $n_X$ is isomorphic to the "Zielonka tree" of $\evenLetters{|X|}$.
	Let $n_0$ be the root of $\zielonkaTree{\F}$, and let $n_1,\dots, n_k$ be its children. Then, we can find in the "Zielonka tree" of $\evenLetters{m}$ $k$ incomparable nodes having as labels $\nu(n_1),\dots, \nu(n_k)$. By induction hypothesis, the subtree rooted at each of these nodes is not smaller than the subtrees rooted at $n_1,\dots, n_k$.
	This shows the two first items, ending the proof.
	%	To prove the third item we need to be slightly more careful and pay attention to "round" nodes and "square nodes". We show the result for the case in which $\GG\in \F$ and $m$ even (the other three cases are analogous, by taking if necessary a symmetric condition $\mathsf{OddLetters}_m$). Let $n_0$ be a node of $\zielonkaTree{\F}$, let  $n_1,n_2,\dots, n_k$ be its children, and let $n_0'$ be a node in the "Zielonka tree" of $\evenLetters{m}$ such that $n_0'$ is "round" if and only if $n_0$ is "round" and such that $\nu(n_0)\subseteq \nu(n_0')$. Then, we can find $k$ "descendants" of $n_0'$ in the "Zielonka tree" of $\evenLetters{m}$, $n_1',\dots, n_k'$ such that:
	%	\begin{itemize}
		%		\item nodes $n_i'$ are "round" if and only if nodes $n_i$ are "round",
		%		\item $\nu(n_i)\subseteq \nu(n_i')$.
		%	\end{itemize}
	%	Suppose that $n_0$ and $n_0'$ are round (the proof is similar if they are "square"). Let $T_{x}$ be the "subtree rooted at" a node $x$. Then, it is satisfied:
	%	\[ \memTree{T_{n_0}} = \sum_{i=1}^k\memTree{T_{n_i}} \leq \sum_{i=1}^k\memTree{T_{n_i'}} \leq \memTree{T_{n_0}}, \]
	%	where the first inequality follows from induction hypothesis, and the last one by definition of "round-branching width". We conclude by taking $n_0$ the root of $\zielonkaTree{\F}$.
\end{proof}

We recover results analogous to those of L\"oding~\cite{Loding1999Optimal}, and strengthen them as they apply to "history-deterministic" automata. These directly follow combining the previous proposition with Proposition~\ref{prop-prelim:minimal-det-parity}. % Theorem~4.13 from~\cite{CCFL24FromMtoP} 

\begin{corollary}{}
	For every "Muller language" $L\subseteq\GG^\oo$ there exists a "deterministic" "parity" "automaton" "recognising" $L$ of size at most~$|\GG|!$.
	This bound is tight: for all $n$, a minimal "history-deterministic" "parity" "automaton" "recognising" the "Muller language" associated to $\evenLetters{n}$ has $n!$ states.
\end{corollary}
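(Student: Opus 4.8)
The statement to prove is the corollary about minimal parity automata for Muller languages, with the tight bound $|\GG|!$.

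\textbf{The plan is as follows.} The corollary is essentially a direct consequence of two results already available in the excerpt: Proposition~\ref{prop-prelim:minimal-det-parity}, which says that a minimal (history-)deterministic parity automaton recognising $L = \MullerC{\F}{\GG}$ has exactly $|\leaves(\zielonkaTree{\F})|$ states, and Proposition~\ref{prop-size:worst-case-ZT}, which bounds $|\leaves(\zielonkaTree{\F})| \leq |\GG|!$ with equality attained by the family $\evenLetters{n}$. So the proof is really just a matter of chaining these together and checking that nothing is lost in the chaining.

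\textbf{First}, for the upper bound, I would take an arbitrary Muller language $L \subseteq \GG^\oo$, write $L = \MullerC{\F}{\GG}$ for the appropriate family $\F \subseteq \powplus{\GG}$ (this is legitimate: a Muller language is by definition determined by such a family, see Remark~\ref{rmk-p0-prelim:characterisation_Muller_Languages}), and invoke Proposition~\ref{prop-prelim:minimal-det-parity} to obtain a deterministic parity automaton recognising $L$ with $|\leaves(\zielonkaTree{\F})|$ states. Then Proposition~\ref{prop-size:worst-case-ZT} gives $|\leaves(\zielonkaTree{\F})| \leq m!$ where $m = |\GG|$, completing the upper bound.

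\textbf{Second}, for tightness, I would take $\F = \evenLetters{n}$ over $\GG_n = \{1,\dots,n\}$. By the tightness part of Proposition~\ref{prop-size:worst-case-ZT}, $|\leaves(\zielonkaTree{\evenLetters{n}})| = n!$. By the ``moreover'' clause of Proposition~\ref{prop-prelim:minimal-det-parity}, this number is exactly the size of a minimal parity automaton recognising $\MullerC{\evenLetters{n}}{\GG_n}$ \emph{among both deterministic and history-deterministic} parity automata. Hence any history-deterministic parity automaton recognising this language has at least $n!$ states, and there is one with exactly $n!$ states. This is the part of the statement that actually carries new content beyond Löding's original deterministic result: the lower bound now applies to the (a priori more succinct) history-deterministic model, and it comes for free because Proposition~\ref{prop-prelim:minimal-det-parity} already established minimality in the HD setting.

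\textbf{Main obstacle.} Honestly there is no real obstacle: the corollary is a packaging of previously proven statements, and the only thing to be careful about is making sure the quantifiers in the tightness claim match what Proposition~\ref{prop-prelim:minimal-det-parity} gives (``minimal among deterministic and history-deterministic'' $\Rightarrow$ both a lower bound for all HD automata and the existence of one meeting it). I would write the proof in two or three sentences, essentially: ``Write $L = \MullerC{\F}{\GG}$. By Proposition~\ref{prop-prelim:minimal-det-parity} there is a deterministic parity automaton of size $|\leaves(\zielonkaTree{\F})|$ recognising $L$, and by Proposition~\ref{prop-size:worst-case-ZT} this is at most $|\GG|!$. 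For tightness, take $\F = \evenLetters{n}$; Proposition~\ref{prop-size:worst-case-ZT} gives $|\leaves(\zielonkaTree{\evenLetters{n}})| = n!$, and by Proposition~\ref{prop-prelim:minimal-det-parity} this equals the size of a minimal history-deterministic parity automaton for $\MullerC{\evenLetters{n}}{\GG_n}$.''
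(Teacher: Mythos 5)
Your proposal is correct and matches the paper's own argument exactly: the corollary is stated as following directly by combining Proposition~\ref{prop-prelim:minimal-det-parity} (minimal deterministic and history-deterministic parity automata have $|\leaves(\zielonkaTree{\F})|$ states) with Proposition~\ref{prop-size:worst-case-ZT} (that quantity is at most $|\GG|!$, with equality for $\evenLetters{n}$). Your attention to the quantifiers in the history-deterministic tightness claim is exactly the right point to check, and it goes through as you describe.
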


Performing a slightly more careful analysis and using the characterisation of minimal "history-deterministic" "Rabin" automata by Casares, Colcombet and Lehtinen~\cite{CCL22SizeGFG}, we can obtain similar tight bounds for these automata. We refer to~\cite[Corollary~II.93]{Casares23Thesis} for details.

%\begin{corollary}{}
%	Let $L\subseteq\GG^\oo$ be a "Muller language". There exists a "history-deterministic" "Rabin" "automaton" "recognising" $L$ of size at most $\doubleFact{n}=n(n-2)\cdots 2$.
%	This bound is tight: for all $n$ even, a minimal "history-deterministic" "Rabin" "automaton" "recognising" the "Muller language" $\evenLetters{n}\subseteq\{1,\dots,n\}^\oo$ has $\doubleFact{n}$ states.
%\end{corollary}

\subsection{Comparing the sizes of various acceptance conditions}\label{subsec:size-comparison}
%\subsection{Explicit representation vs Trees vs DAGs vs Rabin pairs}

\subparagraph{Colour-explicit vs Zielonka trees.} First, we remark that a "colour-explicit@@Muller" representation of a family $\F$ can be arbitrary larger than a representation of $\zielonkaTree{\F}$.

\begin{proposition}\label{prop-size:expl-larger-ZT}
For all $n\in \NN$, there is a family of subsets $\F_n\subseteq \powplus{\GG_n}$ over $\GG_n = \{1,\dots,n\}$ such that $|\F_n| = 2^n-1$ and $|\zielonkaTree{\F}|=1$.
\end{proposition}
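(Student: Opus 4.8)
The plan is to exhibit the \emph{largest} possible family over $\GG_n$, namely $\F_n = \powplus{\GG_n}$, the family of \emph{all} non-empty subsets of $\GG_n = \{1,\dots,n\}$. By definition of $\powplus{\cdot}$ we immediately get $|\F_n| = |\pow{\GG_n}| - 1 = 2^n - 1$, which takes care of the first requirement. It then only remains to observe that the Zielonka tree of this family is trivial.

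For the second requirement, I would argue directly from Definition~\ref{def-zt:zielonkaTree}. The root of $\zielonkaTree{\F_n}$ is labelled $\GG_n$; since $\GG_n \in \F_n$, the root is a "round node", and it has one child for each maximal non-empty subset $Y \subsetneq \GG_n$ with $Y \notin \F_n$. But $\F_n = \powplus{\GG_n}$ contains every non-empty subset of $\GG_n$, so there is no such $Y$, and hence the root is a leaf. Therefore $\zielonkaTree{\F_n}$ consists of a single node and $|\zielonkaTree{\F_n}| = 1$. (Alternatively one can note that $\MullerC{\F_n}{\GG_n} = \GG_n^\oo$ by Remark~\ref{rmk-p0-prelim:characterisation_Muller_Languages}, and this language is trivially "recognised" by a one-state "deterministic" "Muller" "automaton"; combined with Remark~\ref{rmk-acd:ZT-as-ACD} this again gives that its Zielonka tree has one node.)

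There is essentially no obstacle here: the only point to check is that "every non-empty subset of colours is accepting" forces the construction in Definition~\ref{def-zt:zielonkaTree} to halt at the root, which is immediate. This proposition is meant purely as a witness of the exponential-size separation recorded in Figure~\ref{fig-prel:representationMuller}, showing that a "colour-explicit representation" can be exponentially larger than the corresponding "Zielonka tree".
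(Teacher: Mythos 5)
Your proposal is correct and matches the paper's proof exactly: the paper also takes $\F_n = \powplus{\GG_n}$ (its entire proof is ``It suffices to take $\F=\powplus{\GG_n}$''), and your verification that the root of the "Zielonka tree" is a leaf because no non-empty subset is "rejecting@@Muller" is precisely the detail the paper leaves implicit.
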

\begin{proof}
It suffices to take $\F=\powplus{\GG_n}$.
\end{proof}

Even if the family $\F$ is represented "explicitly@@Muller" as a list of subsets, we cannot compute its "Zielonka tree" in polynomial time, as $\zielonkaTree{\F}$ can be exponentially larger than $|\F|$.

%\begin{proposition}\label{prop-size:size-ZT-vs-explicit-logBounds}
%For all $n\in \NN$, there is a family of subsets $\F_n\subseteq \powplus{\GG_n}$ over $\GG_n = \{1,\dots,n\}$ such that:
%\[ |\zielonkaTree{\F_n}| \geq |\F_n|^{\log(n)}. \]
%More precisely, the family $\F_n$ satisfies $|\zielonkaTree{\F_n}|=n!$ and $|\F_n|=2^n-1$.
%\end{proposition}
%\begin{proof}
%The family $\F_n=\evenLetters{n}$ from Proposition~\ref{prop-size:worst-case-ZT} satisfies $|\zielonkaTree{\F_n}|=n!$ and $|\F_n|=2^n-1$. The inequality $|\zielonkaTree{\F_n}| \geq |\F_n|^{\log(n)}$ is obtained by applying \href{https://en.wikipedia.org/wiki/Stirling's_approximation}{Stirling's formula}.
%\end{proof}

\begin{proposition}
	For all $n\in \NN$, there is a family of subsets $\F_n\subseteq \powplus{\GG_{n}}$ over $\GG_{n} = \{1,\dots, 2n\}$ such that:
	\[ |\zielonkaTree{\F_n}| \geq 3^{|\F_n|}. \]
	More precisely, there is a family $\F_n$ such that $|\zielonkaTree{\F_n}|=4\cdot 3^{n-1} -1$ and $|\F_n|=n$.
\end{proposition}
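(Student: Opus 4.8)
The plan is to exhibit an explicit family $\F_n$ over $\GG_n = \{1,\dots,2n\}$ whose Zielonka tree is a perfect ternary-ish tree of depth roughly $n$, while $\F_n$ itself has only $n$ elements. The natural construction is to pair up the colours: group $\GG_n$ into $n$ blocks $B_i = \{2i-1, 2i\}$ for $i=1,\dots,n$, and let $\F_n$ consist, intuitively, of the sets that are "unanimous" inside the blocks they touch. Concretely I would try $\F_n = \{ \GG_n \setminus (B_{i_1} \cup \cdots \cup B_{i_k}) \mid \{i_1,\dots,i_k\}\subsetneq\{1,\dots,n\}\} $-style families, or more simply a family designed so that at each level of the Zielonka tree one removes exactly one element of exactly one still-complete block, and the acceptance status flips precisely when a block becomes "broken" (one element present, the other absent). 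The point is to arrange that from a node whose label contains $t$ complete blocks, there are on the order of $2t$ or $3$-per-block ways to descend, so that the branching is a constant $>1$ (ideally $3$) at each of $\approx n$ levels, yielding $|\zielonkaTree{\F_n}|$ exponential in $n$, hence in $|\F_n|$.

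First I would fix the precise family. A clean choice: say a subset $C\subseteq\GG_n$ is \emph{even-in-blocks} if for every $i$, $|C\cap B_i|$ is even (i.e. $B_i\subseteq C$ or $B_i\cap C=\emptyset$); take $\F_n$ to be a family forcing exactly this behaviour, but trimmed down to $n$ generating sets — e.g. $\F_n = \{ \GG_n\setminus B_i \mid i=1,\dots,n\}$ together with $\GG_n$ if $n$ is arranged to make it accepting, adjusting parities so that $|\F_n| = n$. Then I would use Lemma~\ref{lemma-zt:accepting-set-in-ZT} to read off the Zielonka tree: a node labelled $X$ is round iff $X$ is "unanimous in all blocks", and its children are the maximal non-unanimous subsets, which are obtained by deleting one colour from one complete block of $X$; conversely a square node labelled a "broken" set has children obtained by completing or emptying the unique broken block. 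A careful bookkeeping shows the label of a node at depth $d$ has exactly $n-\lceil d/2\rceil$-ish complete blocks and the tree alternates round/square with branching $2(\text{number of complete blocks})$ at round nodes. Summing the level sizes gives the stated closed form; I would verify the small cases $n=1,2$ by hand to pin down the exact constants $4\cdot 3^{n-1}-1$, which suggests the intended construction actually gives branching pattern producing the recurrence $a_d = 3 a_{d-1}$ after the first step (root has $3$ children, and the recursion is self-similar: each child's subtree is of the same shape for parameter $n-1$), so $|\zielonkaTree{\F_n}| = 1 + 3|\zielonkaTree{\F_{n-1}}|$ with $|\zielonkaTree{\F_1}|=4$, hence $|\zielonkaTree{\F_n}| = 4\cdot 3^{n-1} + (3^{n-1}-1)/\dots$ — I would solve this recurrence exactly to match $4\cdot 3^{n-1}-1$, i.e. $|\zielonkaTree{\F_n}| = 4\cdot 3^{n-1}-1$ satisfies $a_n = 3a_{n-1}+2$ with $a_1=3$? — the precise recurrence is what the bookkeeping must produce, and I would reverse-engineer the family so that it does.

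The main obstacle I anticipate is getting a family with branching exactly $3$ (not $2$ or $4$) at each relevant node while keeping $|\F_n|=n$: a pairing into blocks of size $2$ naturally gives branching $2$ per complete block at a round node, so I would instead want each block to offer $3$ maximal "bad" sub-configurations, which points toward making the per-block gadget slightly richer — e.g. within each block the local Zielonka structure should itself be a small tree with $3$ children. One robust way: arrange that at each level the tree chooses which of the remaining $n-(d)$ blocks to "break" \emph{and} how, but normalise so the self-similar recurrence is $a_n = 1 + 3a_{n-1}$; then $a_n = \tfrac{3^n}{2}\cdot\text{const} - \text{const}$, and with the right base case $a_0 = 1$ one gets $a_n = \tfrac{3^{n+1}-1}{2}$, which is not the claimed formula, so the correct recurrence must be $a_n = 3a_{n-1}+2$ with $a_1=4$ (giving $4,14,44,\dots = 4\cdot 3^{n-1}-1$ — check: $4\cdot 3^0-1=3\neq 4$). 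Thus the bulk of the work is the exact combinatorial design and the induction verifying the level sizes; once the family and the shape of $\zielonkaTree{\F_n}$ are fixed, the inequality $|\zielonkaTree{\F_n}| \geq 3^{|\F_n|}$ is immediate from $4\cdot 3^{n-1}-1 \geq 3^n$ for $n\geq 1$, and the "more precisely" clause is just the closed form. I would also invoke Lemma~\ref{lemma-checkZielonkaTree} only if needed to confirm the tree genuinely arises from a Muller family, but here it does by construction since we define $\F_n$ first.
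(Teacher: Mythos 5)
There is a genuine gap here: the entire content of this proposition is the exhibition of a concrete family $\F_n$ together with a verification of the shape of its Zielonka tree, and your proposal never does either. You float several candidate constructions (blocks $B_i=\{2i-1,2i\}$, an ``even-in-blocks'' family, $\F_n=\{\GG_n\setminus B_i\mid i\}$), but you do not commit to any of them, and you explicitly defer the decisive step (``I would reverse-engineer the family so that it does'', ``the precise recurrence is what the bookkeeping must produce''). Worse, the one candidate you state precisely fails: for $\F_n=\{\GG_n\setminus B_i\mid 1\le i\le n\}$ the root is square with $n$ round children labelled $\GG_n\setminus B_i$, and each of these has \emph{no} accepting proper subset (since $\GG_n\setminus B_j\nsubseteq\GG_n\setminus B_i$ for $j\neq i$), so all its children are leaves; the tree has size $O(n^2)$, not exponential. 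Your own attempts to match the recurrence to $4\cdot 3^{n-1}-1$ also run aground (you test $a_1=4$ and note it contradicts the formula), which is a symptom of not having a construction to compute from.

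The construction the paper uses is quite different from block-pairing: it is the nested chain $\F_n=\{\{1,2\},\{1,2,3,4\},\dots,\{1,\dots,2n\}\}$. The root $\GG_n$ is round with $2n$ square children $\GG_n\setminus\{i\}$; the two children with $i\in\{1,2\}$ are leaves, and each of the other $2n-2$ has a single child below which sits a copy of $\zielonkaTree{\F_j}$ for the appropriate $j<n$, each $j\in\{1,\dots,n-1\}$ occurring twice. This yields $|\zielonkaTree{\F_n}|=1+2n+\sum_{j=1}^{n-1}2|\zielonkaTree{\F_j}|$, hence $T_n=3T_{n-1}+2$ with $T_1=3$, giving $4\cdot3^{n-1}-1$. (Your final arithmetic point is fine: $4\cdot 3^{n-1}-1\ge 3^n$ does hold for $n\ge1$.) To repair your write-up you would need to replace the speculative block gadgets with an explicit family, determine its Zielonka tree exactly, and prove the recurrence by induction — which is essentially the whole proof.
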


\begin{proof}
	Consider the family $\F_n = \set{\set{1,2}, \set{1,2,3,4}, \ldots, \set{1,2,\ldots,2n-1, 2n}}$.	
	The Zielonka tree for $\F_n$ has a "round" root labelled $\GG_n$. It has $2n$ children, which are "square" nodes labelled by $\set{1,2,\ldots,2n-1, 2n} \setminus \set{i}$ for $1 \leq i \leq 2n$. For $i \geq 3$, each of those nodes has a single child labelled $\set{1,2,\ldots,2j-1, 2j}$ with $j = \lfloor \frac{i+1}{2}\rfloor$. The subtree rooted at this child is the "Zielonka tree" of $\F_{j}$.
	
	As a result we get 
	\[\size{\zielonkaTree{\F_n}} = 
	1 + 2n + \sum_{j=1}^{n-1} 2\size{\zielonkaTree{\F_j}} =
	2+ 1+2(n-1) + 2\size{\zielonkaTree{\F_j}} + \sum_{j=1}^{n-2} 2\size{\zielonkaTree{\F_j}} = 
	%2 + 2\size{\zielonkaTree{\F_{n-1}}} + \size{\zielonkaTree{\F_{n-1}}} =
	2+ 3 \size{\zielonkaTree{\F_{n-1}}}.\]
	Hence $\size{\zielonkaTree{\F_n}} +1 = 3(\zielonkaTree{\F_{n-1}}+1)$. As $\size{\zielonkaTree{\F_1}}= 3$, we get $\size{\zielonkaTree{\F_n}} = 4\cdot 3^{n-1} -1$
\end{proof}

The previous bound is almost optimal, as shown next.

\begin{proposition}
	For every family of subsets $\F\subseteq \powplus{\GG}$ over an alphabet $\GG$ (for $|\F|$ and $|\GG|$ large enough), we have:
	\[ |\zielonkaTree{\F}| \leq \alpha^{\size{\F}} \beta^{\size{\GG}},\]
	for all $\alpha > 3^{1/3} \simeq 1.44$ and $\beta > 4$.
\end{proposition}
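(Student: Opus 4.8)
The plan is to bound $|\zielonkaTree{\F}|$ by induction on the alphabet $\GG$ together with a careful accounting of how the number of sets in $\F$ is split among the children of the root. Let me write $z(\F) = |\zielonkaTree{\F}|$ and set $N(s,g) = \max\{ z(\F) \mid \F\subseteq\powplus{\GG},\ |\GG|\le g,\ |\F|\le s \}$. We want $N(s,g)\le \alpha^s\beta^g$ for $\alpha>3^{1/3}$ and $\beta>4$ (once $s,g$ are large enough to absorb small constants).

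First I would set up the recursion. Let $n_0$ be the root of $\zielonkaTree{\F}$, labelled $\GG$, and let its children be labelled $X_1,\dots,X_k$, each a proper subset of $\GG$. The subtree rooted at the child labelled $X_i$ is $\zielonkaTree{\restSubsets{\F}{X_i}}$ (the Zielonka tree only depends on the trace of $\F$ on the current label). So $z(\F) = 1 + \sum_{i=1}^k z(\restSubsets{\F}{X_i})$. Two facts drive the estimate: (i) each $|X_i|\le |\GG|-1$, so in every child we have saved at least one colour; and (ii) the sets $X_1,\dots,X_k$ are the maximal subsets of $\GG$ on the ``other side'' of $\F$, and by Remark~\ref{rmk-zt:union-changes-acceptance} they are pairwise incomparable and no $X_i$ contains two distinct elements of $\F$ (resp. of the complement). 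The key combinatorial point is a \emph{budget-splitting} claim: one can choose, for each $i$, a sub-budget $s_i$ with $\sum_i s_i \le |\F| + (\text{small overhead})$ and $|\restSubsets{\F}{X_i}|\le s_i$, so that the recursion $N(s,g)\le 1 + \sum_i N(s_i, g-1)$ closes. The cleanest version: since each child ``uses up'' at least one colour and a branching of degree $k$ at a node forces roughly $k-1$ of the relevant sets of $\F$ to be ``spent'' (each child must witness a set of $\F$ not reachable from the parent via a smaller set), we get an inequality of the shape $z(\F) \le 1 + k\cdot N(s, g-1)$ with the extra constraint that a node of degree $\ge 2$ consumes budget; unrolling gives a $3^{s/3}$-type bound from the standard Moon–Moser-style optimization $\max_k k^{1/\text{cost}} = 3^{1/3}$ at $k=3$, and a $\le 4^g$ factor from the depth-$\le g$ / degree-$\le$ handling of the colour budget.

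Concretely, I would prove by induction on $g=|\GG|$ that $N(s,g)\le \alpha^s\beta^g$ as follows. In the base cases $g\le 1$ the tree has at most $1$ or $2$ nodes, trivially bounded. For the step, take $\F$ with $z(\F)=N(s,g)$, look at the root's children $X_1,\dots,X_k$. If $k=1$, then $z(\F) = 1 + z(\restSubsets{\F}{X_1}) \le 1 + N(s,g-1) \le 1+\alpha^s\beta^{g-1} \le \alpha^s\beta^g$ since $\beta>4>1$ and the $+1$ is absorbed (this is where ``large enough'' enters). If $k\ge 2$, then the $k$ children $X_i$ are pairwise incomparable maximal sets, and a counting argument shows we may distribute the budget so that $\sum_{i=1}^k |\restSubsets{\F}{X_i}| \le s + (k-1)$ — roughly, because passing from the parent to a child of a $k$-branching node strictly changes the $\F$-status witnessed, so the ``new'' sets appearing in distinct children are essentially disjoint contributions and at most one of them coincides with a set already counted at the parent. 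Then by induction and convexity of $x\mapsto \alpha^x$,
\[
z(\F) \;\le\; 1 + \sum_{i=1}^k N\bigl(|\restSubsets{\F}{X_i}|,\, g-1\bigr) \;\le\; 1 + \beta^{g-1}\sum_{i=1}^k \alpha^{|\restSubsets{\F}{X_i}|}.
\]
Maximizing $\sum_i \alpha^{s_i}$ subject to $\sum_i s_i \le s+k-1$ and $k$ parts, one checks the worst case is all $s_i$ equal to $(s+k-1)/k$, giving $\sum_i \alpha^{s_i} \le k\,\alpha^{(s+k-1)/k} = \alpha^{s}\cdot k\,\alpha^{(k-1)/k-s(1-1/k)}$; since $\alpha>1$ the factor $k\,\alpha^{-(s)(k-1)/k}$ is maximized over $k$ near $k=3$ for $s$ in the relevant range, and the constant $k^{1/3}\le 3^{1/3}<\alpha$ handles the $\beta$-budget, so the whole thing is $\le \alpha^s\beta^g$ after absorbing the $+1$ and the $\beta^{g-1}$ vs $\beta^g$ slack using $\beta>4\ge k$ for graphs of bounded branching — more carefully, one uses $\beta^{g-1}\cdot k \le \beta^g$ precisely because the depth is at most $g$ and the total number of nodes at a given level is $\le \prod(\text{branchings})$, which the ``$\beta>4$'' slack covers.

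The main obstacle I expect is the \textbf{budget-splitting step}: proving cleanly that the sizes $|\restSubsets{\F}{X_i}|$ of the restrictions of $\F$ to the children can be made to sum to at most $|\F|$ plus a controlled overhead. A set of $\F$ may appear in $\restSubsets{\F}{X_i}$ for several $i$ (if it is contained in several $X_i$), which naively would blow up the budget multiplicatively; the resolution must use the structure of the $X_i$ as \emph{maximal} sets of a given parity-status together with Remark~\ref{rmk-zt:union-changes-acceptance} (any set of $\F$ strictly between $X_i$ and $\GG$ has the wrong status), forcing the ``effective'' sets of $\F$ inside distinct children to be largely disjoint. Getting the precise overhead right — and checking that the resulting one-variable optimization genuinely yields the constant $3^{1/3}$ and not something larger — is the delicate part; everything else is a routine induction once that inequality is in hand.
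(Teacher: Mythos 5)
Your route is genuinely different from the paper's (which bounds the number of \emph{branches} of $\zielonkaTree{\F}$ by injecting each branch into a ``well-shaped chain'' and then counting chains of the poset $(\F,\subseteq)$ via the Moon--Moser-type bound of $3^{m/3}$ on maximal chains), but as written it has a fatal gap exactly where you flag the difficulty. The budget-splitting claim $\sum_{i=1}^k |\restSubsets{\F}{X_i}| \le |\F| + (k-1)$ is false: take $\GG = \{1,\dots,n\}$ and $\F = \{\GG\} \cup \{\{i\} \mid i \in \GG\}$. The root is round with the $k=n$ children $X_j = \GG\setminus\{j\}$, and each singleton of $\F$ lies in $n-1$ of them, so $\sum_j |\restSubsets{\F}{X_j}| = n(n-1)$, while $|\F|+k-1 = 2n$. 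A set of $\F$ really can be charged to linearly many children simultaneously, and the incomparability of the $X_i$ together with Remark~\ref{rmk-zt:union-changes-acceptance} do not prevent this: they only constrain sets lying \emph{between} some $X_i$ and $\GG$, not small sets sitting in the pairwise intersections. So the additive-overhead accounting on which your whole induction rests does not hold.

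Separately, even if some budget inequality $\sum_i s_i \le S$ were available, your optimization step is inverted: $x\mapsto \alpha^{x}$ is convex, so $\sum_i \alpha^{s_i}$ subject to $\sum_i s_i \le S$ with $k$ parts is maximized at an \emph{extreme} point (all budget on one child, giving $\alpha^{S}+(k-1)$), not at the equal split $k\,\alpha^{S/k}$, which is in fact the minimum over the simplex once $S$ is large. Hence the computation that is supposed to produce the constant $3^{1/3}$ does not go through; nor can the $\beta^{g}$ factor be recovered from ``$\beta>4\ge k$'', since the branching degree $k$ of a node is not bounded by any constant. In the paper, the constant $3^{1/3}$ genuinely comes from counting maximal chains of the poset $(\F,\subseteq)$ (a count over the $|\F|$ elements of the family), not from a degree-$3$ branching optimization in the tree recursion; to salvage your approach you would need a different way of charging nodes of $\zielonkaTree{\F}$ to chains of $\F$ rather than to children of their parent, which is essentially what the paper's branch-to-chain injection accomplishes.
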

\begin{proof}	
	We will over-approximate the number of branches of the "Zielonka tree" by the number of \emph{well-shaped chains} of subsets of $\GG$. A chain $A_0 \supset \cdots \supset A_k$ is well-shaped if for all $i$, if $A_i \notin \F$ (1) $A_{i-1}$ and $A_{i+1}$ are in $\F$ (if they exist) and (2) $\size{A_{i-1}} = \size{A_i} +1$ (if $A_{i-1}$ exists) or $A_{i} =\GG$.
	We note that this definition is asymmetric in the following sense: a well-shape sense may contain several "accepting@@Muller" subsets in a row, but no two rejecting consecutive ones. Also, when changing from accepting to rejecting only one element is added (but this is not necessarily true for the other direction).
	
	Let $R_0 \supset S_1 \supset R_1 \supset \cdots \supset S_k \supset R_k$ be the sequence of labels of a branch, with $(R_i)$ the labels of "round" nodes and $(S_i)$ the ones of "square" nodes (we assume the first and last sets are in $\F$, other cases are similar).
	For each $i$ such that $|R_{i-1} \setminus S_i| >1$, we pick an element $x \in R_{i-1} \setminus S_i$ and add the set $S_{i} \cup \set{x}$ in the chain between $R_{i-1}$ and $S_i$. Note that by definition of the tree $S_i$ is a maximal subset of  $R_{i-1}$ not in $\F$ and thus $S_i \cup \set{x}$ is in $\F$.
	Clearly we can recover the initial branch from the resulting chain. 
	Furthermore the resulting chain is well-shaped. Thus the number of branches is bounded by the number of well-shaped chains.
	
	Let us count the number of well-shaped chains. They can all be obtained by picking a chain $A_0 \supset \cdots \supset A_k$ of sets in $\F$ and, for each $i$, either adding $A_i \setminus \set{x}$ between $A_{i-1}$ and $A_{i}$ for some $x \in A_{i-1} \setminus A_{i}$ or not adding anything.
	The number of well-shaped chains that can be obtained this way from a fixed chain $A_0 \supset \cdots \supset A_k$ is at most $\prod_{i=1}^{k} (\size{A_{i-1}} - \size{A_{i}} + 1)$.
	An easy induction on the size of $A_0$ shows that this number is bounded by $2^{\size{\GG}}$.
	%First observe that by removing the "square" nodes in a branch of $\zielonkaTree{\F}$ we obtain a chain of sets of $\F$. 
	It is a classical exercise that a partially ordered set of size $m$ has at most $3^{(m+1)/3}$ chains of maximal length (it suffices to see the poset as a DAG and then proceed by induction on its height).
	Maximal chains of $\F$ have at most $|\Gamma|$ elements, hence $\F$ contains at most $3^{(\size{\F}+1)/3} 2^{\size{\GG}}$ chains.

	We thus obtain that $\zielonkaTree{\F}$ contains at most $3^{(\size{\F}+1)/3} 2^{2\size{\GG}} = 3^{(\size{\F}+1)/3} 4^{\size{\GG}}$ branches, and thus at most $3^{(\size{\F}+1)/3} 4^{\size{\GG}} \size{\GG}$ nodes, as the tree is of height at most $\size{\GG}$.
\end{proof}

The bounds on $\alpha$ and $\beta$ in the previous proposition could be improved with a finer analysis of the proof.

\subparagraph{Colour-explicit vs Zielonka DAGs.}
Hunter and Dawar showed that we can compute the "Zielonka DAG" of a family $\F$ in polynomial time if $\F$ is given as a list of subsets~\cite[Theorem~3.17]{HD08ComplexityMuller}. 
\begin{proposition}[{\cite[Theorem~3.17]{HD08ComplexityMuller}}]\label{prop-size:explicit-to-DAG}
Given a family of subsets $\F\subseteq \powplus{\GG}$, we can compute the "Zielonka DAG" of $\F$ in polynomial time in $|\F|+|\GG|$.
In particular, $\zielonkaDAG{\F}$ has polynomial size in $|\F|+|\GG|$.
\end{proposition}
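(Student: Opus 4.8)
The plan is to reduce the statement to a structural observation that confines the possible node labels of $\zielonkaDAG{\F}$ to a polynomially-sized family, and then to read off the polynomial-time algorithm from a naive top-down construction.

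First I would prove the following claim: every label occurring in $\zielonkaDAG{\F}$ belongs to
\[ \mathcal{C} \;=\; \{\GG\}\;\cup\;\F\;\cup\;\{\, A\setminus\{c\} \mid A\in\F\cup\{\GG\},\ c\in A\,\}, \]
which has size at most $1+|\F|+(|\F|+1)\cdot|\GG|$. By definition a "round node" is labelled by a set in $\F$, with the only possible exception of the root, whose label $\GG$ then also lies in $\F$; so it suffices to treat "square nodes". A square node is either the root (label $\GG\in\mathcal{C}$) or a child of a round node labelled $X$, with $X\in\F$ or $X=\GG$. In the latter case its label $Y$ is a maximal (for inclusion) non-empty proper subset of $X$ not in $\F$. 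If $|X\setminus Y|=1$, then $Y=X\setminus\{c\}$ with $c\in X\in\F\cup\{\GG\}$, so $Y\in\mathcal{C}$. If $|X\setminus Y|\geq 2$, then for each $x\in X\setminus Y$ we have $Y\subsetneq Y\cup\{x\}\subsetneq X$, so maximality of $Y$ forces $Y\cup\{x\}\in\F$; fixing one such $x$ gives $Y=(Y\cup\{x\})\setminus\{x\}$ with $Y\cup\{x\}\in\F$, hence $Y\in\mathcal{C}$. In particular this shows $|\zielonkaDAG{\F}|$ is polynomial in $|\F|+|\GG|$.

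Next I would describe the algorithm: build $\zielonkaDAG{\F}$ top-down by a breadth-first exploration starting from the root $\GG$, keeping the set of already-discovered labels and identifying nodes with equal labels (so each label is processed once). For a discovered label $X$, first test $X\in\F$ by scanning the list $\F$. If $X\notin\F$ (square), its children are the $\subseteq$-maximal elements of $\F\cap\pow{X}$, computed directly from the list. If $X\in\F$ (round), its children are the $\subseteq$-maximal non-empty $Y\subsetneq X$ with $Y\notin\F$; by the claim every such $Y$ lies in the polynomially-sized candidate family $\{X\setminus\{x\}\mid x\in X\}\cup\{A\setminus\{c\}\mid A\in\F\cap\pow{X},\ c\in A\}$, so one enumerates these candidates, discards those that are empty, in $\F$, or not proper subsets of $X$, and outputs the maximal survivors. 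A short argument — any set strictly above a maximal candidate and still having the required property extends to a child of $X$, which is itself a candidate — shows this returns exactly the children. Each step runs in time polynomial in $|\F|+|\GG|$, and by the claim only polynomially many nodes are ever created, so the whole procedure is polynomial; in particular $\zielonkaDAG{\F}$ has polynomial size.

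The main obstacle is the structural claim, specifically the case $|X\setminus Y|\geq 2$: the point is that maximality of $Y$ \emph{propagates}, forcing $Y\cup\{x\}\in\F$ for every $x\in X\setminus Y$, which is exactly what collapses the a priori exponentially many such children into the shape $A\setminus\{c\}$. It is essential here that we work with the \emph{DAG}, where equal labels are merged — the analogous bound fails for the Zielonka tree, which can be exponentially larger (cf. the examples around Proposition~\ref{prop-size:expl-larger-ZT}). Everything else (membership tests, extracting maximal elements of explicitly listed families, the breadth-first bookkeeping) is routine.
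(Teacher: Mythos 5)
The paper does not prove this proposition itself --- it is stated as a direct citation of Hunter and Dawar's Theorem~3.17 --- so there is no in-paper argument to compare against; your proof is correct and is essentially the argument of the cited source. The key structural claim (every label of $\zielonkaDAG{\F}$ lies in $\F\cup\{\GG\}$ or is obtained from such a set by deleting one element, the case $|X\setminus Y|\geq 2$ being handled by propagating maximality to force $Y\cup\{x\}\in\F$) is exactly right, and the top-down construction with the candidate family and the maximality check is a sound way to turn it into a polynomial-time algorithm.
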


However the reverse transformation cannot be done in polynomial time as Proposition~\ref{prop-size:expl-larger-ZT} also applies to the "Zielonka DAG".

\subparagraph{Zielonka trees vs Zielonka DAGs.} It is clear that, given a "Zielonka tree" $\zielonkaTree{\F}$, we can compute the corresponding "Zielonka DAG" $\zielonkaDAG{\F}$ in polynomial time. The converse is not possible.
We note that this statement follows from complexity considerations: solving "Muller" games with the winning condition represented as a "Zielonka DAG" is $\PSPACE$-complete~\cite{HD08ComplexityMuller}, while solving those games with the condition represented as a "Zielonka tree" is equivalent to solving "parity" games~\cite{DJW1997memory}, which can be done in quasi-polynomial time~\cite{CJKLS22}.
However, to the best of the author's knowledge, no explicit family witnessing an exponential gap between the two representations appears in the literature.

\begin{proposition}\label{prop-size:size-ZT-vs-ZDAG}
For all $n\in \NN$, there is a family of subsets $\F_n\subseteq \powplus{\GG_n}$ over $\GG_n = \{1,\dots,n\}$ such that:
\begin{itemize}
	\item the size of the "Zielonka DAG" of $\F_n$ is at most $2n$,
	\item the size of the "Zielonka tree" of $\F_n$ is at least $2^{\lfloor n/2\rfloor}$.
\end{itemize} 
\end{proposition}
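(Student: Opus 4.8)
The plan is to exhibit an explicit family whose Zielonka DAG is a \emph{chain of diamonds} --- linear in $n$ --- while its Zielonka tree unfolds each diamond independently, doubling in size at every step. First I would treat the even case $n = 2m$. Thinking of $\GG_n$ as a sequence of $m$ blocks $B_i = \{2i-1, 2i\}$, I would define $\F_n$ by the simple rule: a nonempty set $C \subseteq \GG_n$ is \emph{accepting} (i.e.\ $C \in \F_n$) if and only if $\min(C)$ is odd and $\min(C)+1 \in C$. The intended skeleton is: the round nodes are the tails $Z_k = \{2k+1, 2k+2, \dots, 2m\}$ for $0 \le k \le m-1$ (with $Z_0 = \GG_n$), and each $Z_k$ has exactly two square children $A_k = Z_k \setminus \{2k+1\}$ and $B_k = Z_k \setminus \{2k+2\}$, both of which have the single child $Z_{k+1}$ --- the ``reconvergence'' that keeps the DAG small.

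The core of the argument is a verification, by induction on $m$, that $\zielonkaDAG{\F_n}$ is exactly this diamond chain. The key computation is a case analysis on $\min(C)$: (i) every rejecting subset of $Z_k$ is contained in $A_k$ or in $B_k$ (so these are precisely the maximal rejecting subsets of $Z_k$, hence the children of the round node $Z_k$), and (ii) every accepting subset of $A_k$ or of $B_k$ is contained in $Z_{k+1}$ (so $Z_{k+1}$ is their unique child). Once the skeleton is established, counting is immediate. Writing $D_m$ and $T_m$ for the number of nodes of the DAG and the tree: the DAG adds three nodes per level ($Z_k, A_k, B_k$), so $D_m = 3m$; the tree satisfies $T_m = 3 + 2T_{m-1}$ with $T_1 = 3$ (root $Z_0$, two square children, and two disjoint copies of the subtree rooted at $Z_1$, which is isomorphic to the tree for the analogous family on $m-1$ blocks), giving $T_m = 3(2^m - 1)$. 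Since $3m \le 2\cdot 2m = 2n$ and $3(2^m-1) \ge 2^m = 2^{\lfloor n/2\rfloor}$ for $m \ge 1$, this settles the even case.

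For odd $n = 2m+1$ I would add colour $n$ as a \emph{neutral} colour to the family on $\{1,\dots,2m\}$: set $C \in \F_n$ iff $C \cap \{1,\dots,2m\} \neq \emptyset$ and $C \cap \{1,\dots,2m\} \in \F_{2m}$. A short lemma shows that adding a neutral colour transforms the Zielonka tree (resp.\ DAG) by simply inserting that colour into every label, leaving the number of nodes unchanged; the only delicate point is that this could in principle create a spurious extra child $\{n\}$ below a \emph{round} leaf, which does not happen here because all leaves of $\zielonkaTree{\F_{2m}}$ are square (they are the singletons $\{2m-1\}$ and $\{2m\}$). Hence $\zielonkaDAG{\F_n}$ still has $3m \le 2n$ nodes and $\zielonkaTree{\F_n}$ still has $3(2^m-1) \ge 2^{\lfloor n/2\rfloor}$ nodes, as required.

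I expect the main obstacle to be the inductive verification that the maximal rejecting and accepting subsets behave exactly as intended --- in particular, ruling out ``unexpected'' maximal subsets of intermediate size that are not of the form $Z_k \setminus \{i\}$. The $\min(C)$-based rule is designed precisely so that these can be dispatched uniformly (a rejecting set with even minimum avoids colour $1$; a rejecting set with odd minimum $\ge 3$ avoids both $1$ and $2$; a rejecting set with minimum $1$ must avoid $2$), but carrying this bookkeeping cleanly through the recursion, and being careful at the bottom level where the two square children are genuinely distinct singletons --- so that no merging occurs there, which is what keeps the tree-versus-DAG gap alive --- is the part that needs the most attention.
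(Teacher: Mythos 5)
Your proposal is correct and uses essentially the same construction as the paper: your accepting rule ``$\min(C)$ is odd and $\min(C)+1\in C$'' is exactly the family $\smallDAGCondition{n}$ used there, and the diamond-chain analysis of the DAG versus the doubling recursion for the tree matches the paper's argument. The only (immaterial) difference is that for odd $n$ the paper applies the same definition directly rather than padding with a neutral colour.
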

\begin{proof}
Consider the family defined as follows:
\[ \intro*\smallDAGCondition{n} = \{ C = \{c_1<c_2<\dots<c_k\}\subseteq \GG_n \mid c_1 \text{ is odd and } c_2=c_1+1\}. \]

Equivalently, we can describe this family as \[\bigcup_{\substack{i=1, \\ i\text{ odd}}}^n X_i,  \text{ where } X_i = \{C\subseteq \GG_n \mid i\in C \tand i+1\in C \tand c>i \text{ for all } c\in C\}.\]

\begin{figure}[ht]
	%\centering
	\hspace{0mm}
	\begin{minipage}[c]{0.37\textwidth} 
		\includegraphics[scale=0.95]{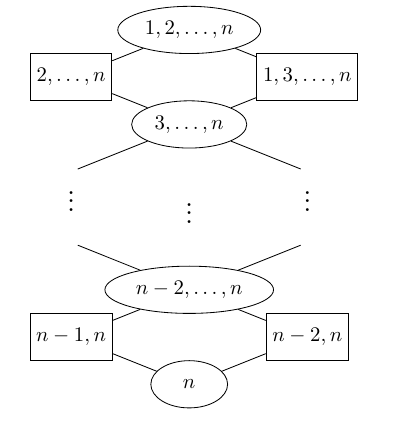}
		%		\caption{"ZT-HD-Rabin-automaton" $\zielonkaTree{\F}$ for 
			%			$\F=\{ \{a,b\}, \{a,c\}, \{b\}\}$.}
		%		\label{fig-zt:zielonka-tree-Numbers}
	\end{minipage} 
	\hspace{8mm}
	\begin{minipage}[c]{0.55\textwidth} 
		\includegraphics[scale=0.95]{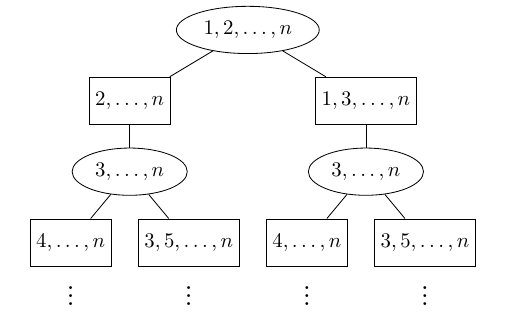}
	\end{minipage} 
	\caption{On the left, the "Zielonka DAG" of the condition $\smallDAGCondition{n}$ (for $n$ odd), of size $\O(n)$. On the right, its "Zielonka tree", of exponential size.}
	\label{fig-size:ZT-vs-ZDAG}
\end{figure}	

We show the "Zielonka DAG" and the "Zielonka tree" of $\smallDAGCondition{n}$ (for $n$ odd) in Figure~\ref{fig-size:ZT-vs-ZDAG}.
We observe that the "Zielonka DAG" has height $n$; even levels consist in a single node, and odd levels have two nodes. Therefore, its size is $\lceil n/2 \rceil + n$. On the other hand, the "Zielonka tree" (with height also $n$), has $2^{\lfloor k/2\rfloor}$ nodes at the level of depth $k$.
\end{proof}

%%Moreover, we note that for the "Zielonka tree" of $\smallDAGCondition{n}$, $\memTree{\zielonkaTree{n}}$ is also~$2^{\lfloor n/2\rfloor}$, and that the "Muller language" associated to this condition is a "Streett language" (the "Zielonka tree" in Figure~\ref{fig-size:ZT-vs-ZDAG} has "Streett shape"). We can construct a condition giving a "Rabin language" dually.

\subparagraph{Rabin vs Zielonka trees and Zielonka DAGs.}
If the "Muller language" associated to a family~$\F$ is a "Rabin language", then we can compute a family of "Rabin pairs" $\R$ such that $\Rabin{\R} = \Muller{\F}$ in polynomial time. The converse is not possible, we cannot compute the "Zielonka DAG" in polynomial time, since it can be of exponential size in the number of "Rabin pairs".

\begin{proposition}\label{prop-size:from-ZDAG-to-RabinPairs}
Let $\F\subseteq \powplus{\GG}$ be a family of subsets, and assume that $\MullerC{\F}{\GG}$ is a "Rabin language" (that is, it admits a representation with "Rabin pairs").
Then, given the "Zielonka DAG" $\zielonkaDAG{\F}$ we can compute in polynomial time a family of "Rabin pairs" $\R$ over $\GG$ such that $\RabinC{\R}{\GG} = \MullerC{\F}{\GG}$. 
\end{proposition}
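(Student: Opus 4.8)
The plan is to extract one Rabin pair from each round node of $\zielonkaDAG{\F}$. The construction rests on the following structural observation, which I would establish first: since $\MullerC{\F}{\GG}$ is a Rabin language, \emph{every round node of $\zielonkaDAG{\F}$ has at most one child} (which is then necessarily a square node). This can be seen either by applying Proposition~\ref{prop-prel:typeness-ACD} to the one-state deterministic Muller automaton recognising $\MullerC{\F}{\GG}$ (whose ACD-DAG is exactly $\zielonkaDAG{\F}$ by Remark~\ref{rmk-acd:ZT-as-ACD}), or directly: if a round node had two children with labels $Y_1,Y_2$, then $Y_1,Y_2\notin\F$ while $Y_1\cup Y_2\in\F$ by the DAG analogue of Remark~\ref{rmk-zt:union-changes-acceptance}, and a short case analysis on any family of Rabin pairs representing $\MullerC{\F}{\GG}$ shows this is impossible.

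Given this, for each round node $n$ with label $X_n\subseteq\GG$, let $m$ denote its unique child (if it exists) with label $X_m$, and set
\[ \redPair_n = \GG\setminus X_n, \qquad \greenPair_n = \begin{cases} X_n\setminus X_m & \text{if } n \text{ has a child } m,\\[1mm] X_n & \text{otherwise.} \end{cases}\]
Let $\R = \{(\greenPair_n,\redPair_n)\mid n \text{ a round node of } \zielonkaDAG{\F}\}$. This family is computable in polynomial time in $|\zielonkaDAG{\F}|+|\GG|$: one enumerates the round nodes, reads off the (at most one) child of each, and computes two set differences.

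It then remains to verify $\RabinC{\R}{\GG} = \MullerC{\F}{\GG}$, equivalently that a nonempty set $C\subseteq\GG$ lies in $\F$ iff it is accepted by some pair $(\greenPair_n,\redPair_n)$; here I would use that Lemma~\ref{lemma-zt:accepting-set-in-ZT} holds verbatim for $\zielonkaDAG{\F}$. For soundness, suppose $C\cap\greenPair_n\neq\emptyset$ and $C\cap\redPair_n=\emptyset$; then $C\subseteq X_n$ and $C\not\subseteq X_m$ (or $n$ has no child). As $m$ is the \emph{only} child of $n$, every strict descendant of $n$ has label contained in $X_m$ and hence does not contain $C$, so $n$ is $\ancestorDAG$-maximal among the nodes whose label contains $C$; since $n$ is round, Lemma~\ref{lemma-zt:accepting-set-in-ZT} yields $C\in\F$. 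For completeness, given $C\in\F$, pick a node $n$ that is $\ancestorDAG$-maximal among those whose label contains $C$; Lemma~\ref{lemma-zt:accepting-set-in-ZT} forces $n$ to be round, and maximality gives $C\not\subseteq X_m$ for the child $m$ of $n$ (if any). Then $C\cap\greenPair_n = C\setminus X_m\neq\emptyset$ and $C\cap\redPair_n = C\setminus X_n=\emptyset$, so the pair indexed by $n$ accepts $C$ (and if $n$ has no child, $C\subseteq X_n=\greenPair_n$ and $C\neq\emptyset$, so it still does).

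The crux of the argument, and the point where the Rabin hypothesis is genuinely used, is soundness: if round nodes could have several children, then "$C\subseteq X_n$ and $C$ not contained in any single child's label" would no longer imply that no descendant of $n$ contains $C$ (the set $C$ could be split across several children), and the pair $(\greenPair_n,\redPair_n)$ might wrongly accept a set outside $\F$. This is exactly the obstruction that makes the statement fail for non-Rabin families, and it is precisely what the structural observation of the first paragraph rules out. A minor point to spell out carefully is that $\ancestorDAG$-maximality is the correct DAG substitute for "deepest node containing $C$" in the tree version of Lemma~\ref{lemma-zt:accepting-set-in-ZT}, which is harmless since labels of descendants are contained in labels of ancestors in $\zielonkaDAG{\F}$.
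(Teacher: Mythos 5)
Your proof is correct and follows essentially the same route as the paper's: one Rabin pair per round node of the Zielonka DAG, with green set $\nu(n)\setminus\nu(m)$ and red set $\GG\setminus\nu(n)$, justified by the fact that round nodes of the Zielonka DAG of a Rabin language have at most one child, and verified via the DAG version of Lemma~\ref{lemma-zt:accepting-set-in-ZT}. The only differences are that you prove the at-most-one-child fact yourself (correctly, via union-closure of Rabin-rejected sets) where the paper cites~\cite{CCFL24FromMtoP}, and that your green/red assignment matches the paper's verbal description and verification rather than its displayed definition, which appears to swap the two sets.
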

\begin{proof}	
Let $N = \roundnodes \disjUnion \squarenodes$ be the nodes of the "Zielonka DAG", partitioned into "round" and "square" nodes.
By Proposition~6.2 from~\cite{CCFL24FromMtoP}, all "round nodes" of $\zielonkaDAG{\F}$  have at most one child.	
We define a "Rabin pair" for each "round" node of $\zielonkaDAG{\F}$, $\R = \{(\greenPair_n,\redPair_n)\}_{n\in\roundnodes}$, where $\greenPair_n$ and $\redPair_n$ are defined as follows: %For $n$ a "round node" and $n'$ an arbitrary node of $\zielonkaDAG{\F}$ we let:
\begin{equation*}
	\begin{cases}
		\greenPair_n = \GG \setminus \nu(n),\\
		\redPair_n = \nu(n)\setminus \nu(n'), \text{ for } n'  \text{ the only child of } n\text{, if it exists.}\\ 
		\redPair_n=\nu(n) \text{ if $n$ has no children.}
	\end{cases}
\end{equation*} 

That is, the pair $(\greenPair_n,\redPair_n)$ "accepts@@RabinPair" the sets of colours $A\subseteq \GG$ that contain some of the colours that disappear in the child of $n$ and none of the colours appearing above $n$ in the "Zielonka DAG". We show that $\Rabin{\R}= \Muller{\F}$. Let $A$ be a set of colours. If $A\in \F$, let $n$ be a maximal node (for $\ancestor$) containing $A$. It is a "round node" and there is some colour $c\in A$ not appearing in the only child of $n$. Therefore, $c\in \greenPair_n$ and $A \cap \redPair_n=\emptyset$. Conversely, if $A\notin \F$, then for every "round node" $n$ with a child $n'$, either $A\subseteq \nu(n')$ (and therefore $A\cap \greenPair_n = \emptyset$) or $A \nsubseteq \nu(n)$ (and in that case $A\cap \redPair_n \neq \emptyset$).
\end{proof}

\begin{proposition}\label{prop-app-size:ZT-exponential-in-RabinPairs}
For all $m\in \NN$, there is a family $\R$ of $m$ "Rabin pairs" over a set of colours~$\GG$ of size $2m$, such that $|\zielonkaTree{\F_{\R}}| \geq m!$ and $|\zielonkaDAG{\F_{\R}}|\geq 2^{m}$, where $\F_{\R}\subseteq \powplus{\GG}$ is the (only) family such that $\Muller{\F_{\R}}=\Rabin{\R}$.
\end{proposition}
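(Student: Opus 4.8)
The plan is to exhibit an explicit family and read its Zielonka tree and Zielonka DAG off by hand. Take $\GG = \{a_1,b_1,\dots,a_m,b_m\}$, which has $2m$ colours, set $\greenPair_i = \{a_i\}$ and $\redPair_i = \{b_i\}$, and let $\R = \{(\greenPair_i,\redPair_i) : 1\le i\le m\}$. Since every Rabin language is a Muller language, Remark~\ref{rmk-p0-prelim:characterisation_Muller_Languages} provides the unique family $\F_{\R}\subseteq \powplus{\GG}$ with $\Muller{\F_{\R}} = \Rabin{\R}$, namely
\[ \F_{\R} = \{\, C\in \powplus{\GG} \mid \exists i,\ a_i\in C \text{ and } b_i\notin C \,\}, \]
so that $C$ is rejecting exactly when $a_i\in C \Rightarrow b_i\in C$ for every $i$.

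For $J\subseteq \{1,\dots,m\}$ write $\GG_J = \{a_j,b_j : j\in J\}$. I would prove by induction on $|J|$ that in $\zielonkaTree{\F_{\R}}$ every node labelled $\GG_J$ is square, its children are exactly the round nodes labelled $\GG_J\setminus\{b_i\}$ for $i\in J$, and that each such round node has, if $|J|\ge 2$, a single child, the square node labelled $\GG_{J\setminus\{i\}}$, whereas the round node $\{a_i\} = \GG_{\{i\}}\setminus\{b_i\}$ is a leaf. The only points needing care are two maximality computations: for $X = \GG_J$, the maximal subsets of $X$ lying in $\F_{\R}$ are precisely the $X\setminus\{b_i\}$ with $i\in J$ (removing an $a_i$ never creates a new witness, and removing more than one $b_i$ destroys maximality); and for $X = \GG_J\setminus\{b_i\}$, the maximal subset of $X$ outside $\F_{\R}$ is $\GG_{J\setminus\{i\}}$ (since $b_i\notin X$ one must delete $a_i$, and deleting exactly $a_i$ suffices). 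This is also where $\emptyset\notin\powplus{\GG}$ enters: for $J = \{i\}$ the would-be child $\GG_\emptyset = \emptyset$ is not an admissible label, so $\{a_i\}$ is a leaf. As the root of $\zielonkaTree{\F_{\R}}$ is labelled $\GG = \GG_{\{1,\dots,m\}}$, the whole tree has this shape.

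The two lower bounds then follow by counting. If $L(k)$ is the number of leaves below a square node whose label $\GG_J$ satisfies $|J| = k$, the structure gives $L(1) = 1$ and $L(k) = k\, L(k-1)$ for $k\ge 2$ (each of the $k$ round children has a unique square grandchild with active set of size $k-1$), hence $L(m) = m!$ and $|\zielonkaTree{\F_{\R}}| \ge L(m) = m!$. For the DAG, every nonempty $J\subseteq\{1,\dots,m\}$ appears as $\GG_J$ among the square-node labels, since from $\GG_J$ the tree reaches $\GG_{J\setminus\{i\}}$ for each $i\in J$; these $2^m - 1$ sets are pairwise distinct because the pairs $\{a_j,b_j\}$ are disjoint, and none of them equals a round-node label $\GG_J\setminus\{b_i\}$ (which contains $a_i$ but not $b_i$). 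Adding at least one round node (the root has $m\ge 1$ round children) gives $|\zielonkaDAG{\F_{\R}}| \ge (2^m - 1) + 1 = 2^m$.

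The main obstacle is the inductive verification of the precise shape of $\zielonkaTree{\F_{\R}}$, i.e. the two maximality computations above together with the bookkeeping of the $\emptyset\notin\powplus{\GG}$ corner case; once the shape is established, the $m!$ and $2^m$ counts are immediate, and the same labels $\GG_J$ serving simultaneously to make the tree wide and the DAG exponential is exactly what produces both bounds at once.
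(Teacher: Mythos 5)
Your proposal is correct and follows essentially the same route as the paper: the same family of Rabin pairs $(\{a_i\},\{b_i\})$, the same alternating tree shape with square nodes $\GG_J$ and round nodes $\GG_J\setminus\{b_i\}$, and the same counts ($m!$ leaves, $2^m-1$ distinct square labels plus a round one for the DAG). The only difference is that you verify the maximality computations and the $\emptyset\notin\powplus{\GG}$ corner case explicitly, whereas the paper reads the tree shape off a figure; your bookkeeping for the DAG bound (getting $2^m$ rather than $2^m-1$) is in fact slightly more careful than the paper's.
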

\begin{proof}
Let $\GG = \{g_1,r_1,g_2,r_2,\dots,g_m,r_m\}$ and define the "Rabin pairs" of $\R$ as $\greenPair_i = \{g_i\}$ and $\redPair_i = \{r_i\}$. We depict the "Zielonka tree" of the corresponding family of subsets in Figure~\ref{fig-app-size:Rabin-worst}.

\begin{figure}[ht]
	\centering
	\includegraphics[]{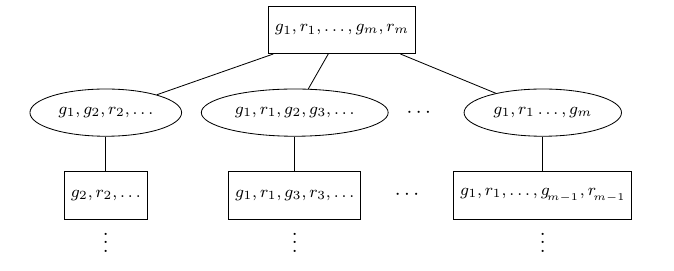}
	\caption{The "Zielonka tree" $\zielonkaTree{\F_{\R}}$ of the "Rabin language" from the proof of Proposition~\ref{prop-app-size:ZT-exponential-in-RabinPairs}.}
	\label{fig-app-size:Rabin-worst}
\end{figure}

The "Zielonka tree" $\zielonkaTree{\F_{\R}}$ satisfies that the levels at depth $k$ and $k+1$ have $m(m-1)\dots k$ nodes, which shows that $|\leaves(\zielonkaTree{\F_{\R}})| = m!$. 
For the bound on the size of the "Zielonka DAG", we observe that for each subset $X\subseteq \{1,\dots, m\}$ there is at least one subset appearing as the label of some nodes of the "Zielonka tree", namely, $\{g_i,r_i \mid i\in X\}$.
\end{proof}

\section{Conclusion}
\label{sec:conclusion}
In this work we obtained several positive results concerning the complexity of simplifying the acceptance condition of an $\omega$-"automaton".

Our first technical result is that the computation of the "ACD" (resp. "ACD-DAG") of a "Muller" automaton is not harder than the computation of the "Zielonka tree" (resp. "Zielonka DAG") of its "acceptance condition" (Theorems~\ref{th-comp:compt-ACD-poly-ZT} and~\ref{th-comp:compt-ACD-DAG-poly-ZDAG}). This provides support for the assertion that the optimal transformation into "parity" automata based on the "ACD" is applicable in practical scenarios, backing the experimental evidence provided by the implementations of the "ACD-transform"~\cite{CDMRS22Tacas}.

Furthermore, this result has several implications for our simplification purpose: 
\begin{itemize}
	\item We can "decide the typeness" of "Muller" "automata" in polynomial time (Corollary~\ref{cor-comp:decision-typ-poly}). 
	\item We can compute the "parity index" of a language "recognised" by a "deterministic" "Muller" "automaton" in polynomial time (Corollary~\ref{cor-comp:decision-parity-index-poly}).
\end{itemize}

In addition, we showed that we can minimise in polynomial time the "colours" and "Rabin pairs" necessary to represent a "Muller language". 
However, these problems become $\NP$-hard when taking into account the structure of a particular automaton using this "acceptance condition", even if the "ACD" of the automaton is provided as input.
Nevertheless, we believe that the methods for the minimisation of colours in the case of "Muller languages" could be combined with the structure of the ACD to obtain heuristics reducing the number of colours used by "Muller automata", which might lead to substantial (although not optimal) reductions in the number of colours.

In sum, our results help to clarify the potential of the "alternating cycle decomposition" and complete the picture of our understanding about the possibility of simplifying the "acceptance conditions" of $\oo$-automata.

\bibliographystyle{plainurl}
\bibliography{references}

\newpage
\appendix

\section{Generalised Horn formulas}
\label{sec:Gen-Horn}

"Horn formulas" are a popular fragment of propositional logic, as they enjoy some convenient complexity properties. It is well-known that the satisfiability problem for those formulas can be solved in linear time~\cite{Dowling84linearHorn}.

In this appendix, we study a succinct representation of "Horn formulas", called "Generalised Horn formula". They allow one to merge several Horn clauses with the same premises, e.g. $(x_1 \land x_2 \implies y_1)$ and $(x_1 \land x_2 \implies y_2)$, into a single clause $(x_1 \land x_2 \implies y_1 \land y_2)$. 
We can apply the classical linear-time algorithm for satisfiability on this generalised form, however, note that it is not linear in the size of the generalised formula, but in the size of the implicit Horn formula represented.

We will prove that we can minimise the number of clauses in a "GH formula" in polynomial-time, using our algorithm for minimising the number of pairs in a "Rabin condition" as a black box.

This result contrasts nicely with the \NP-completeness of minimising the number of clauses in a Horn formula~\cite{Boros1994complexity} (see also~\cite{Chang2006HornMin}).
On the other hand, minimising the number of literals in a "GH formula" remains \NP-complete, just like in the case of Horn formulas~\cite{Hammer1993optimal}. This can be showed by a slight adaptation of the reduction from~\cite{Chang2006HornMin} to "GH formulas".

Our technique for clause minimisation may thus be of interest for the study of Horn formulas. 

On the other hand, "generalised Horn formulas" are likely not a suitable representation for "acceptance conditions" on "automata", as they yield an $\NP$-complete emptiness problem (Proposition~\ref{prop-Horn:emptiness}). This is an interesting example of a family of "acceptance conditions" whose satisfiability problem is in \PTimeFull{} but which yields an \NP-complete emptiness problem on automata.\\

\begin{definition}
	\AP A ""Horn clause"" is a disjunction of literals with at most one \emph{positive} literal, that is, a literal with no negation.
	Equivalently, it is a Boolean formula of the form either $(x_1 \land \cdots \land x_n) \implies y$ or $(x_1 \land \cdots \land x_n) \implies \bot$.
	A ""Horn formula"" is a conjunction of "Horn clauses".
	
	\AP A ""generalised Horn clause"" (or "GH clause") is a Boolean formula of the form either $(x_1 \land \cdots \land x_n) \implies (y_1 \land \cdots \land y_m)$ or $(x_1 \land \cdots \land x_n) \implies \bot$ (in the latter case, the clause is called ""negative@@Horn"").
	A ""generalised Horn formula"" (or "GH formula") is a conjunction of "GH clauses". It is ""simple"" if none of its "GH clauses" are "negative@@Horn".  
\end{definition}

We will now use our \PTimeFull{} algorithm for minimising the number of pairs in a "Rabin condition" to minimise the number of clauses in a "GH formula" (Proposition~\ref{prop-Horn:minimisation-Horn}). 
We start by applying it to minimise the number of clauses of "simple" "GH formulas".

In all that follows we will not distinguish valuations $\nu : \mathrm{Var} \to \set{\top, \bot}$ from the corresponding subsets of variables $\set{v \in \mathrm{Var} \mid \nu(v) = \top}$.

\begin{lemma}
	\label{lem:min-simple-horn}
	There is a polynomial-time algorithm that minimises the number of clauses of a "simple" "GH formula".
\end{lemma}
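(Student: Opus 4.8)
The plan is to reduce the problem to the minimisation of the number of "Rabin pairs" of a "Rabin pairs"-family, for which a polynomial-time algorithm is already available (Algorithm~\ref{algo-min-rabin} and Proposition~\ref{prop:Rabin-min-in-PTIME}). Fix the set of variables $\mathrm{Var}$ and regard it also as an alphabet. To a "simple" "GH clause" $c = \big((x_1 \wedge \cdots \wedge x_n) \implies (y_1 \wedge \cdots \wedge y_m)\big)$ we associate the "Rabin pair" $R_c = (\set{y_1,\dots,y_m},\ \set{x_1,\dots,x_n})$. The key observation is that a valuation $S \subseteq \mathrm{Var}$ \emph{falsifies} $c$ exactly when $\set{x_1,\dots,x_n} \subseteq S$ and $y_j \notin S$ for some $j$; writing $T = \mathrm{Var} \setminus S$, this is precisely the condition that $T$ is \emph{accepted by} $R_c$, i.e. $T \cap \set{y_1,\dots,y_m} \neq \emptyset$ and $T \cap \set{x_1,\dots,x_n} = \emptyset$. (The degenerate cases $n = 0$, or $S$ empty or equal to $\mathrm{Var}$, cause no trouble.)

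Consequently, for a "simple" "GH formula" $\varphi = \bigwedge_{i} c_i$, setting $\R_\varphi = \set{R_{c_i} \mid i}$, a valuation $S$ satisfies $\varphi$ if and only if $\mathrm{Var} \setminus S$ is \emph{not} accepted by $\R_\varphi$. Since $S \mapsto \mathrm{Var} \setminus S$ is a bijection of $\pow{\mathrm{Var}}$, since every non-empty subset of $\mathrm{Var}$ equals $\minf(w)$ for some $w \in \mathrm{Var}^\oo$, and since no "Rabin pair" accepts $\emptyset$, two "simple" "GH formulas" over $\mathrm{Var}$ are equivalent if and only if $\R_\varphi$ and the corresponding family $\R_\psi$ have the same "Rabin language" over $\mathrm{Var}$. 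Conversely, from any family of "Rabin pairs" $\R$ over $\mathrm{Var}$ in which every green component is non-empty, reading the dictionary backwards produces a "simple" "GH formula" $\varphi_\R$ with one clause per pair and $\R_{\varphi_\R} = \R$. Both translations are computable in linear time.

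The algorithm is then: on input $\varphi$, compute $\R_\varphi$, apply Algorithm~\ref{algo-min-rabin} to obtain $\R_{\min}$, and return $\varphi_{\R_{\min}}$. By Proposition~\ref{prop:Rabin-min-in-PTIME} this runs in polynomial time, $\R_{\min}$ has the same "Rabin language" as $\R_\varphi$, and $|\R_{\min}|$ is minimal among families with that language; moreover every pair produced by Algorithm~\ref{algo-min-rabin} has the form $(\mathrm{Var} \setminus T, \mathrm{Var} \setminus S)$ with $T \subsetneq S$, so its green component $\mathrm{Var} \setminus T$ is non-empty and $\varphi_{\R_{\min}}$ is indeed a "simple" "GH formula". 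Hence $\varphi_{\R_{\min}} \equiv \varphi$ has $|\R_{\min}|$ clauses, and any "simple" "GH formula" $\psi \equiv \varphi$ over $\mathrm{Var}$ yields via $c \mapsto R_c$ a family with the same "Rabin language" and $|\psi|$ pairs, so $|\psi| \geq |\R_{\min}| = |\varphi_{\R_{\min}}|$ by minimality. (If one also wishes to forbid introducing auxiliary variables, one invokes the exact analogue of Remark~\ref{rmk-min:Rabin-type-same-colours}.) The only genuinely delicate point is matching the two notions of equivalence — equal sets of satisfying valuations versus equal "Rabin language" — through the complementation $S \mapsto \mathrm{Var} \setminus S$, paying attention to the boundary cases; everything else is a routine linear-time translation.
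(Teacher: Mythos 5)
Your proposal is correct and follows essentially the same route as the paper: the same clause-to-Rabin-pair dictionary $c \mapsto (\set{y_1,\dots,y_m},\set{x_1,\dots,x_n})$, the same complementation of valuations linking satisfaction of $\varphi$ to (non-)acceptance by the associated pairs, and the same appeal to the polynomial-time Rabin-pair minimisation (the paper phrases the correspondence via the "Streett language" and Corollary~\ref{cor:Streett-condition-minimisation}, which is the same thing). Your extra care about degenerate cases — empty green components and the all-true valuation versus $\minf(w)\neq\emptyset$ — is a welcome tightening of a point the paper's bijection claim glosses over.
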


\begin{proof}
	It suffices to observe that there is a correspondence between "simple" "GH formulas" and "Streett conditions".	
	Define the function $\alpha$ that turns a "GH clause" $(x_1 \land \cdots \land x_n) \implies (y_1 \land \cdots \land y_m)$ into the "Rabin pair" $(\set{y_1, \ldots, y_m}, \set{x_1, \ldots, x_n})$.
	We extend it into a function turning "simple" "GH formulas" into families of "Rabin pairs" by defining $\alpha(\bigwedge_{i=1}^k \mathrm{GH}_i) = (\alpha(\mathrm{GH}_i))_{i=1}^k$, with its associated "Streett language".
	We can then observe that $\alpha$ is a bijection (we consider Boolean formulas up to commutation of the terms, for instance we consider that $\varphi \lor \psi$ and $\psi \lor \varphi$ are the same formula).
	We also note that the number of clauses of a "simple" "GH formula" is the number of pairs of its image by $\alpha$.

	Finally, note that for all "simple" "GH formula" $\varphi$, the set of sets "accepted@@Streett" by the "Streett condition" $\alpha(\varphi)$ is $\set{\nu^{-1}(\bot) \mid \nu \text{ satisfies } \varphi}$.
	As a result, two "simple" "GH formula" are equivalent if and only if their images by $\alpha$ define the same "Streett language".

	In conclusion, in order to minimise the number of clauses of a "simple" "GH formula", one can simply apply $\alpha$ to it, minimise the number of pairs in the resulting "Streett condition", and then apply $\alpha^{-1}$.
\end{proof}

The extension to all "Generalised Horn formulas" is essentially a technicality, due to the fact that "negative@@Horn" clauses cannot be directly translated into "Rabin pairs" as in the previous proof. We circumvent this problem by replacing them with some non-"negative@@Horn" clauses and proving that minimising the initial "Horn formula" comes down to minimising the resulting "simple" one.

\begin{proposition}\label{prop-Horn:minimisation-Horn}
	There is a polynomial-time algorithm to minimise the number of clauses of a  "GH formula".
\end{proposition}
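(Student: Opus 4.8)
The plan is to reduce the general case to the case of \emph{simple} "GH formulas" handled in Lemma~\ref{lem:min-simple-horn}. The only obstacle is the presence of "negative@@Horn" clauses, i.e. clauses of the form $(x_1 \land \cdots \land x_n) \implies \bot$, which do not correspond to "Rabin pairs" under the bijection $\alpha$ from the proof of Lemma~\ref{lem:min-simple-horn}. So the first step is to show that we can handle negative clauses by first grouping \emph{all} of them together and replacing them with non-negative clauses over a slightly larger set of variables, in such a way that minimising the number of clauses of the original formula reduces to minimising the number of clauses of a simple formula.

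Concretely, given a "GH formula" $\varphi = \varphi_{\mathrm{pos}} \land \varphi_{\mathrm{neg}}$, where $\varphi_{\mathrm{pos}}$ is the conjunction of non-negative clauses and $\varphi_{\mathrm{neg}}$ the conjunction of negative ones, I would introduce one fresh variable $z$ and replace each negative clause $(x_1 \land \cdots \land x_n) \implies \bot$ by the non-negative clause $(x_1 \land \cdots \land x_n) \implies z$, and then additionally require (as part of the construction) that $z$ is forced to be $\bot$. The cleanest way to do this is to observe that a valuation $\nu$ of the original variables satisfies $\varphi$ iff the valuation $\nu$ (with $z \mapsto \bot$) satisfies the resulting simple formula $\varphi'$; equivalently, the models of $\varphi$ are exactly the restrictions to the original variables of the models of $\varphi'$ that set $z$ to $\bot$. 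The next step is to argue that a minimal simple "GH formula" equivalent to $\varphi'$ (restricted to models with $z = \bot$) can be computed using Lemma~\ref{lem:min-simple-horn}, and that translating back — replacing each clause whose conclusion contains $z$ by its negative version with conclusion $\bot$, and deleting $z$ from any clause it appears in — yields a minimal "GH formula" equivalent to $\varphi$.

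The key point to verify is that this back-and-forth translation preserves the \emph{number of clauses} and cannot be cheated: a minimal simple formula for $\varphi'$ has no fewer clauses than a minimal "GH formula" for $\varphi$, and vice versa. For one direction, any "GH formula" $\psi$ equivalent to $\varphi$ with $k$ clauses gives, by the same substitution, a simple "GH formula" $\psi'$ with $k$ clauses equivalent to $\varphi'$ on the relevant valuations. For the converse, a minimal simple formula for $\varphi'$ may a priori use the variable $z$ in the premises of some clauses; here I would use that over the valuations with $z = \bot$, any premise containing $z$ makes the clause vacuously true, so we may discard such clauses without increasing the count, and after this cleanup every clause containing $z$ in its conclusion can be faithfully turned into a negative clause. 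This shows the minimal counts agree. Since all steps — separating the clauses, introducing $z$, applying the polynomial-time algorithm of Lemma~\ref{lem:min-simple-horn} (which itself calls Algorithm~\ref{algo-min-rabin} and Proposition~\ref{prop:Rabin-min-in-PTIME} as a black box), and translating back — are polynomial, the whole procedure runs in polynomial time.

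The main obstacle I anticipate is the bookkeeping in the converse direction: one must be careful that the minimal simple formula returned by the algorithm does not exploit the fresh variable $z$ in a way that has no analogue among "GH formulas" over the original variables, and that merging all negative clauses into clauses sharing the single conclusion $z$ does not artificially merge clauses that a minimal "GH formula" would have to keep separate. Handling this amounts to checking that the constraint ``$z = \bot$'' exactly captures the semantics of negation-to-$\bot$, so that equivalence of simple formulas modulo $z = \bot$ coincides with equivalence of the corresponding "GH formulas"; once that equivalence is pinned down, the minimality transfer and the polynomial-time bound follow directly from Lemma~\ref{lem:min-simple-horn} and Proposition~\ref{prop:Rabin-min-in-PTIME}.
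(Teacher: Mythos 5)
Your high-level plan is the same as the paper's --- introduce a fresh variable $z$ standing for $\bot$, rewrite each negative clause $(x_1 \land \cdots \land x_n) \implies \bot$ as $(x_1 \land \cdots \land x_n) \implies z$, minimise the resulting simple GH formula with Lemma~\ref{lem:min-simple-horn}, and translate back --- but there is a genuine gap in the minimality transfer, and as described the algorithm can return a non-minimal formula. Lemma~\ref{lem:min-simple-horn} minimises under \emph{genuine} logical equivalence, whereas your $\varphi'$ only agrees with $\varphi$ on the valuations with $z=\bot$: on valuations with $z=\top$ the clauses $(P \implies z)$ are vacuous, so the $z=\top$ models of $\varphi'$ are exactly the models of $\varphi_{\mathrm{pos}}$. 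Consequently, if $\psi$ is another GH formula equivalent to $\varphi$, its translation $\psi'$ need not be equivalent to $\varphi'$ (they may differ on $z=\top$ valuations), so you cannot invoke the minimality guarantee of Lemma~\ref{lem:min-simple-horn} against it; moreover, the minimal formula genuinely equivalent to $\varphi'$ may be forced to waste clauses reproducing the irrelevant $z=\top$ behaviour of $\varphi_{\mathrm{pos}}$. A concrete failure: $\varphi = (a\land b_1\implies c_1)\land(a\land b_2\implies c_2)\land(a\implies\bot)$ is equivalent to the single clause $(a\implies\bot)$, yet $\varphi'$ has $z=\top$ models equal to those of the two positive clauses, which no single simple clause can express; so minimising $\varphi'$ under genuine equivalence and translating back yields at least two clauses instead of one.

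The paper repairs exactly this by adding one extra clause $(x_\bot \implies \bigwedge_{y\in V} y)$ to the translated formula. This forces the only $x_\bot=\top$ model to be the all-$\top$ valuation (which satisfies every simple GH formula anyway), so that \emph{every} GH formula equivalent to $\varphi$, after the same translation plus this one clause, becomes genuinely equivalent to the same simple formula $\Tilde{\varphi}$. The extra clause costs $+1$ in the clause count, which the paper then offsets: since $\Tilde{\varphi}$ is falsified by the valuation sending $x_\bot$ to $\top$ and everything else to $\bot$, the minimised $\Tilde{\varphi}_{\min}$ must contain a clause with $x_\bot$ in its premise, and that clause is deleted by the back-translation, cancelling the $+1$ in the counting argument. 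Your proposal is missing both the extra clause and this bookkeeping; with them added, the rest of your argument (correctness of the back-translation and the polynomial running time) goes through as you describe.
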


\begin{proof}
	Let $\varphi$ be a "GH formula", $V$ the set of variables appearing in it. If $\varphi$ does not contain any "negative@@Horn" clause, then it is satisfied by the valuation mapping every variable to $\top$ and thus can only be equivalent to "simple" "GH formulas". We can thus apply Lemma~\ref{lem:min-simple-horn} directly.

	Let $\psi$ be a "simple" "GH formula" and $N_1, \ldots, N_k$ "negative@@Horn" "Horn clauses", with $k>0$, such that $\varphi = \psi \land \neg N_1 \land \cdots \land \neg N_k$. We add a fresh variable $x_\bot$ that will play the role of $\bot$. For all $i \in [1,k]$, let $x^i_1, \ldots, x^i_{p(i)}$ be such that $N_i = (x^i_1 \land \cdots \land x^i_{p(i)} \implies \bot)$ and let $C_i = (x^i_1 \land \cdots \land x^i_{p(i)} \implies x_\bot)$.
	Define $\Tilde{\varphi} = \psi \land C_1 \land \cdots \land C_k \land (x_\bot \Rightarrow \bigwedge_{y \in V} y)$. 
	Note that the valuations satisfying $\Tilde{\varphi}$ are exactly the ones mapping $x_{\bot}$ to $\bot$ and whose projection on the other variables satisfies $\varphi$, plus the one mapping every variable to $\top$.
	
	As $\Tilde{\varphi}$ is "simple", we can apply Lemma~\ref{lem:min-simple-horn} to obtain an equivalent "simple" "GH formula" $\Tilde{\varphi}_{\min}$ with a minimal number of clauses.
	We define $\varphi_{\min}$ as this formula where every clause with $x_{\bot}$ on the left side has been removed and every clause of the form $(x_1 \land \cdots \land x_n) \implies (y_1 \land \cdots \land y_m)$ where one of the $y_i$ is $x_\bot$ has been replaced by $(x_1 \land \cdots \land x_n \implies \bot)$.
	
	As $\Tilde{\varphi}$ is not satisfied by the valuation mapping $x_\bot$ to $\top$ and all other variables to $\bot$, at least one clause in $\Tilde{\varphi}_{\min}$ has an $x_\bot$ on the left, hence $\varphi_{\min}$ has less clauses than $\Tilde{\varphi}_{\min}$.
	
	We have to argue that $\varphi$ and $\varphi_{\min}$ are equivalent, and that $\varphi_{\min}$ is minimal with respect to the number of clauses.
	First let us show that $\varphi$ and $\varphi_{\min}$ are equivalent. Let $\nu$ be a valuation, we write $\nu_{\bot}$ for the valuation mapping $x_\bot$ to $\bot$ and matching $\nu$ on $V$. We have 
	\[\nu \text{ satisfies } \varphi_{\min} \;\; \iff \;\; \nu_\bot \text{ satisfies } \Tilde{\varphi}_{\min} \;\;
	\iff \;\; \nu_\bot \text{ satisfies } \Tilde{\varphi}  \;\; \iff \;\; \nu \text{ satisfies } \varphi.\]
	
%	\begin{align*}
%		&\nu \text{ satisfies } \varphi_{\min}\\
%		\Leftrightarrow &\nu_\bot \text{ satisfies } \Tilde{\varphi}_{\min}\\
%		\Leftrightarrow &\nu \text{ satisfies } \Tilde{\varphi}\\
%		\Leftrightarrow &\nu \text{ satisfies } \varphi
%	\end{align*} 

	Then let us prove that $\varphi_{\min}$ is minimal with respect to the number of clauses. Assume by contradiction that we have a "GH formula" $\varphi'$ equivalent to $\varphi$ and with less clauses than $\varphi_{\min}$.
	Then we can replace every "negative@@Horn" clause $(\neg x_1 \lor \cdots \lor \neg x_n)$ in $\varphi'$ by a clause $(x_1 \land \cdots \land x_n) \implies x_{\bot}$ and add a clause $(x_{\bot} \implies \bigwedge_{y \in V} y)$ to get a "simple" "GH formula" $\varphi''$ equivalent to $\Tilde{\varphi}$ and with less clauses than $\Tilde{\varphi}_{\min}$. This contradicts the minimality of $\Tilde{\varphi}_{\min}$.
	
	Hence $\varphi_{\min}$ has a minimal number of clauses.
\end{proof}

\AP Given a "GH formula" $\varphi$ using variables in $\GG$, its ""GH language"" is
\[\intro*\GHC{\varphi}{\GG} = \set{w\in \GG^\oo \mid \minf(w)\models \varphi}.\]

%"Rabin pair" $(\greenPair, \redPair)$, we define its ""GH language"" as 
%\[\intro*\GHC{(\greenPair, \redPair)}{\GG} = \set{\GG \setminus C \mid C \in \StreettC{(\greenPair, \redPair)}{\GG}} = \set{C \subseteq \GG \mid \redPair \subseteq C \implies \greenPair \subseteq C}.\]
% We then extend it to families of "Rabin pairs", $\GHC{\R}{\GG} = \set{\GG \setminus C \mid C \in \StreettC{\R}{\GG}} =  \bigcap_{(\greenPair,\redPair)\in \R} \GHC{(\greenPair,\redPair)}{\GG} $.

\begin{proposition}\label{prop-Horn:emptiness}
	Checking emptiness of an automaton with an "acceptance condition" represented by a "GH formula" is \NP-complete.
\end{proposition}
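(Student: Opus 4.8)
The statement claims that emptiness of an automaton with a "GH formula" acceptance condition is $\NP$-complete. I would prove this in two halves: membership in $\NP$, and $\NP$-hardness.

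\textbf{Membership in $\NP$.} The plan is to guess a reachable cycle (equivalently, a reachable SCC together with a set of edges forming a strongly connected subgraph within it), inspect the set of colours $C$ it produces infinitely often, and verify that $C \models \varphi$. More carefully: a word is accepted iff there is a run whose infinitely-occurring colour set lies in $\GHC{\varphi}{\GG}$, i.e.\ satisfies $\varphi$ as a valuation. So it suffices to guess a nonempty subset $C \subseteq \GG$ and a cycle $\ell$ of the automaton (a strongly connected set of transitions, reachable from $q_\init$) with $\colAut(\ell) = C$, and then check $C \models \varphi$. All of this is polynomial-time checkable: reachability and strong connectivity are polynomial, and evaluating a (generalised) Horn formula on a fixed valuation is linear. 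The certificate has polynomial size, so the problem is in $\NP$. One subtlety worth spelling out: we need $\colAut(\ell)$ to be exactly $C$, not just contained in it; this is automatic because we choose $\ell$ and then set $C := \colAut(\ell)$.

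\textbf{$\NP$-hardness.} Here the plan is to reduce from \pbSAT{} (or directly from the complement of \pbunSAT{}), exploiting the fact that a \emph{single-state} automaton already suffices, so the ``emptiness'' question degenerates to: is there an infinite sequence over $\GG$ whose infinitely-occurring set satisfies $\varphi$? But that holds iff $\varphi$ is satisfiable (take any satisfying valuation $\nu$, read its variables cyclically forever). The genuine work is that a general SAT instance is not a Horn formula. So I would take a CNF formula $\phi$ over variables $X$ and build a $\GH$ formula $\varphi$ and an automaton whose accepting runs correspond to satisfying assignments of $\phi$; the standard trick is to encode each variable $x$ by a pair of colours $x^+, x^-$ and let the automaton's structure enforce that exactly one of each pair is seen infinitely often, while the $\GH$ clauses (which can be negative, i.e.\ of the form $(x_1\wedge\cdots\wedge x_n)\Rightarrow\bot$) forbid the colour-combinations that falsify a clause of $\phi$. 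Concretely, a clause $(\ell_{i_1}\vee\cdots\vee\ell_{i_r})$ of $\phi$ is violated exactly when all its literals are false, which is a conjunction of colour-occurrences, so its negation is a negative $\GH$ clause; plus we add clauses forcing consistency of the $x^+/x^-$ choice. One checks that the automaton has an accepting run iff $\phi$ is satisfiable.

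\textbf{Where the difficulty lies.} The $\NP$ upper bound is essentially routine once one observes the characterisation via cycles (already developed in the preliminaries of the paper). The hardness reduction is the delicate part: one must design the automaton structure so that the set of colours seen infinitely often ranges exactly over (the colour-encodings of) assignments to $X$ and nothing spurious — in particular making sure one cannot ``cheat'' by seeing both $x^+$ and $x^-$, or neither, infinitely often, which may require routing the automaton through gadget states between colour-reads. An alternative, possibly cleaner route is to reduce instead from a problem already known to correspond to negative Horn clauses, or to reuse the $\autChromG$-style construction from Section~\ref{subsec:min-colours-automata}; I would first try the direct SAT reduction and fall back on a graph-based gadget if the consistency constraints become awkward to express with $\GH$ clauses alone.
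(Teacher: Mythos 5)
Your proposal is correct, but the hardness half takes a genuinely different route from the paper. For the upper bound you both do essentially the same thing: the paper just invokes the \NP{} bound for "Emerson-Lei" conditions, while you spell out the underlying certificate (a reachable "cycle" whose colour set satisfies $\varphi$); either is fine. For hardness, the paper reduces from the Hamiltonian cycle problem: the automaton is the input graph itself (split into $v^-,v^+$ states reading vertex- and edge-letters), and the "GH formula" forces every accepted run to eventually leave each vertex by a single fixed edge while visiting all vertices infinitely often, so accepting runs are exactly Hamiltonian cycles. Notably, the paper's formula uses no "negative@@Horn" clauses at all -- it writes $l_e\land l_{e'}\Rightarrow l_\bot$ for a colour $l_\bot$ that labels no transition, which acts as $\bot$ for free. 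You instead reduce from \pbSAT{}: a ring of $n$ states with parallel $x_i^+/x_i^-$ transitions forces at least one colour per variable to recur, the clauses $(x^+\land x^-)\Rightarrow\bot$ forbid both, and one negative "GH clause" per CNF clause forbids its falsifying colour combination. This works -- you correctly identify that the single-state case is useless (GH satisfiability is in \PTimeFull{}) and that the automaton structure must supply the non-Horn-expressible ``at least one of $x^+,x^-$'' constraint; the ring gadget you allude to is exactly the standard way to do it, and the resulting equivalence (nonempty iff satisfiable) is immediate. The only thing to tighten before this counts as a full proof is to actually fix that gadget rather than leave it as ``may require routing through gadget states''; once written down, no fallback to $\autChromG$ or other constructions is needed. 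What each approach buys: yours is the more generic SAT-style encoding and makes transparent exactly which constraint the automaton (as opposed to the formula) is responsible for; the paper's is more economical, needs no per-variable gadgets, and incidentally shows hardness already for "simple" GH formulas (no genuine $\Rightarrow\bot$ clauses).
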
 

\begin{proof}
	
	The \NP~upper bound follows from the one on Emerson-Lei conditions.
	
	For the hardness, we reduce from the Hamiltonian cycle problem.
	Let $G = (V,E)$ be a directed graph. For all edge $e \in E$ we write $\mathrm{src}(e)$ for its first vertex and $\mathrm{tgt}(e)$ for the second one. We define the "automaton" $\mathcal{A} = (Q,  q_{\init}, \SS, \DD, \GG, \col, W)$ as follows:
	\begin{itemize}
		\item $Q = \set{v^-, v^+ \mid v \in V}$, and we pick an arbitrary $v \in V$ and set $q_{\init} = v^-$.
		
		\item $\SS = \GG = \set{l_v \mid v \in V} \cup \set{l_e \mid e\in E} \cup \set{l_\bot}$, every transition is coloured with the letter it reads.
		
		\item $\Delta = \set{(v^-, l_v, v^+) \mid v \in V} \cup \set{(\mathrm{src}(e)^+, l_e, \mathrm{tgt}(e)^-) \mid e \in E}$.
		
		\item $W = \GHC{\varphi}{\GG}$ with 
		\[\varphi =  \Big[ \bigwedge_{\substack{e\neq e' \in E \\ \mathrm{src}(e) = \mathrm{src}(e')}} (l_e \land l_{e'} \implies l_{\bot})\Big] \land \Big[\bigwedge_{v \in V} (l_v \implies \bigwedge_{v' \in V} l_{v'})\Big]. \]

	\end{itemize}  
	
	A run of $\mathcal{A}$ is a sequence $v^-_0 \xrightarrow{v_0} v^+_0 \xrightarrow{(v_0, v_1)} v^-_1 \xrightarrow{v_1} v^+_1 \xrightarrow{(v_1, v_2)} \cdots$. It is accepted if and only if all vertices are visited infinitely often and the run ultimately always selects the same edge from every vertex.
	The existence of such a run is equivalent to the existence of a Hamiltonian cycle.
\end{proof}

\section{An alternative reduction for Theorem~\ref{th-min:colorMin-Aut-NPhard}}
\label{sec-app:alternative-reduction}
We provide an alternative $\NP$-harness reduction for the problem \pbColorMinAut{}, obtaining another proof for Theorem~\ref{th-min:colorMin-Aut-NPhard}.
The interest of this reduction is that it uses an automaton with only $2$ states, and it brings to light the difficulty to combine the structure of the "local subtrees" of the "ACD" to minimise the number of colours. 

We reduce from the problem \pbMaxClique{} defined as follows.
\AP A ""clique"" of $G$ is a subset $V'\subseteq V$ such that  $\{v',u'\}\in E$ for every $v'\neq u'\in V'$.
\AP The problem \intro*\pbMaxClique{} consists in, given a "graph" $G$ (that can be assumed connected) and a positive integer $k$, decide whether $G$ contains a "clique" of size $k$.
The problem \pbMaxClique{} is well-known to be $\NPc$~\cite{Karp72Reducibility}.

Let $G=(V,E)$ be a "simple", connected "undirected" graph and $k\in \NN$. We consider the "automaton" $\A_{G,k}$ defined as:
\begin{itemize}
	\item It has two states $q_{\mathsf{vert}}$ (which is initial) and $q_{k}$.
	\item The "input alphabet" is $\SS= V\cup A_k \cup \{x\}$, where $A_k$ is a set of size $k$ disjoint from $V$ and $x$ is a fresh letter.
	\item The set of "output colours" is $\GG = V\cup A_k$.
	\item The transitions of $\A_{G,k}$ are given by:
	\begin{itemize}
		\item $q_{\mathsf{vert}} \re{v:v} q_{\mathsf{vert}}$ for every $v\in V$,
		\item $q_{\mathsf{vert}} \re{x:y} q_{k}$ (where $y\in \Gamma$ is irrelevant), and
		\item $q_k\re{a:a}q_k$ for every $a\in A_k$.
	\end{itemize}
	%The set of edges is $V\cup A_k$. $\A_{G,k}$ has "self loops" $q_{\mathsf{vert}} \re{v} q_{\mathsf{vert}}$ for every $v\in V$ and $q_k\re{a}q_k$ for every $a\in A_k$.
	
	\item Its "acceptance condition" is the "Muller language" associated to the family:
	\[ \F = E \cup \{\{a,a'\} \mid a,a'\in A_k, \, a\neq a'\}. \]
\end{itemize} 

The "representation@@Muller" of this "automaton" is polynomial in $|G|+k$, since $|\F| = \O(|E| + k^2)$. We also note that the "Zielonka tree" of $\F$ has size $\O(|E| + k^2)$.

We will use the following property satisfied by $\Lang{\A_{G,k}}$:
\begin{itemize}
	%		\item "Cycles" consisting in a single self loop are "rejecting@@cycle".
	\item For all $\aa\in \SS$, words ending by $\aa^\oo$ are not in $\Lang{\A_{G,k}}$ ("cycles" consisting in a single self loop are "rejecting@@cycle").
	%		\item "Cycles" $\{a,a'\}$, with $a,a'\in A_k$, $a\neq a'$ are "accepting@@cycles".
	\item For all $a,b\in A_k$, $a\neq b$, $x(ab)^\oo \in \Lang{\A_{G,k}}$.
	% "Cycles" $\{a,a'\}$, with $a,a'\in A_k$, $a\neq a'$ are "accepting@@cycles".
\end{itemize}

\begin{lemma}
	$G$ admits a "clique" of size $k$ if and only if $\A_{G,k}$ is "$|V|$-colour type@@TS". 
\end{lemma}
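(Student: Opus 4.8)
The plan is to prove the two directions of the equivalence separately, in both cases exploiting the following elementary fact about deterministic automata: since $\A_{G,k}$ is "deterministic", a relabelling with condition $(\col',\MullerC{\F'}{\GG'})$ is "equivalent over" $\A_{G,k}$ if and only if, for every reachable "cycle" $C$ of $\A_{G,k}$, one has $\col(C)\in\F \iff \col'(C)\in\F'$ (a run's set of infinitely occurring colours is exactly $\col(C)$ for $C$ the set of its infinitely occurring transitions, which is a reachable cycle; conversely any reachable cycle is the recurrent part of some lasso run). Now in $\A_{G,k}$ the states $q_{\mathsf{vert}}$ and $q_k$ lie in distinct SCCs — there is no transition from $q_k$ back to $q_{\mathsf{vert}}$, and the $x$-transition is transient — so every cycle is either $C_S=\{q_{\mathsf{vert}}\re{v}q_{\mathsf{vert}}\mid v\in S\}$ for some $\emptyset\neq S\subseteq V$, or $C_T=\{q_k\re{a}q_k\mid a\in A_k,\ a\in T\}$ for some $\emptyset\neq T\subseteq A_k$; all of these are reachable (the latter ones via the letter $x$). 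Under the original colouring $\col(C_S)=S$ and $\col(C_T)=T$, and by definition of $\F$ we have $S\in\F\iff S\in E$ (in particular $S\in\F$ only if $|S|=2$), and $T\in\F\iff |T|=2$.

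For the direction "$G$ has a clique of size $k$ $\Rightarrow$ $\A_{G,k}$ is $|V|$-colour type", fix a clique $K=\{v_1,\dots,v_k\}\subseteq V$ and a bijection $\sigma\colon A_k\to K$. Take $\GG'=V$ (so $|\GG'|=|V|$), let $\col'$ keep the colour of every self-loop of $q_{\mathsf{vert}}$, send $q_k\re{a}q_k$ to $\sigma(a)$, give the transient $x$-transition any colour, and set $\F'=E$. Using the criterion above: for a cycle $C_S$ we get $\col'(C_S)=S$, so $S\in\F\iff S\in E\iff\col'(C_S)\in\F'$; for a cycle $C_T$ we get $\col'(C_T)=\sigma(T)\subseteq K$, and since $\sigma$ is injective $|\sigma(T)|=|T|$, so $\col'(C_T)=\sigma(T)\in E$ iff $|T|=2$ (using that every $2$-subset of the clique $K$ is an edge, and that edges have exactly two elements), which matches $T\in\F$. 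Hence the relabelling is equivalent over $\A_{G,k}$ and uses $|V|$ colours.

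For the converse, suppose $\col'\colon\DD\to\GG'$ and $\F'\subseteq\powplus{\GG'}$ yield an equivalent condition with $|\GG'|\le|V|$. Write $g_v=\col'(q_{\mathsf{vert}}\re{v}q_{\mathsf{vert}})$ and $h_a=\col'(q_k\re{a}q_k)$. Applying the criterion to single self-loops gives $\{g_v\}\notin\F'$ and $\{h_a\}\notin\F'$; to edges gives $\{g_u,g_v\}\in\F'$ whenever $\{u,v\}\in E$, whence $g_u\neq g_v$ for adjacent $u,v$; and to $2$-subsets of $A_k$ gives $\{h_a,h_b\}\in\F'$ for all $a\neq b$, whence the $h_a$ are pairwise distinct (a singleton is not in $\F'$). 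The crux is to show that $v\mapsto g_v$ is injective. If $g_u=g_v$ with $u\neq v$, then $u,v$ are non-adjacent, and for any neighbour $w$ of $u$ (so $w\neq u$ and $w\neq v$, giving $|\{u,v,w\}|=3$) we would have $\col'(C_{\{u,v,w\}})=\{g_u,g_v,g_w\}=\{g_u,g_w\}$, which must be outside $\F'$ since $\{u,v,w\}\notin E$, while $\col'(C_{\{u,w\}})=\{g_u,g_w\}$ must be in $\F'$ since $\{u,w\}\in E$ — a contradiction; hence $u$ is isolated, impossible in a connected graph with at least two vertices. So $g$ is injective, and together with $|\GG'|\le|V|$ this forces $g$ to be a bijection $V\to\GG'$. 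In particular every $h_a$ equals $g_{v(a)}$ for a unique $v(a)\in V$, and $a\mapsto v(a)$ is injective since the $h_a$ are distinct. Finally, for $a\neq b$ we have $\{g_{v(a)},g_{v(b)}\}=\{h_a,h_b\}=\col'(C_{\{v(a),v(b)\}})\in\F'$, which by the criterion forces $\{v(a),v(b)\}\in E$. Thus $\{v(a)\mid a\in A_k\}$ is a clique of size $k$.

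I expect the injectivity of $g$ to be the main obstacle: a priori a relabelling could identify several colours, and ruling this out requires the interplay between two- and three-element subsets of $V$ together with connectedness of $G$; once $g$ is known to be a bijection onto $\GG'$, transferring the clique structure from the $A_k$-colours to vertices is routine. Trivial instances ($|V|\le 1$ or $k\le 1$, and the case $k>|V|$, where the $k$ distinct colours $h_a$ already contradict $|\GG'|\le|V|$) are handled directly and may be assumed away when reducing from \pbMaxClique{}.
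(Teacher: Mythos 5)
Your proof is correct and follows essentially the same route as the paper's: the same recolouring for the forward direction (the paper simply calls its verification ``immediate''), and for the converse the same key injectivity argument for the $q_{\mathsf{vert}}$ self-loop colours via the accepting pair $\{u,w\}$ versus the rejecting triple $\{u,v,w\}$ using connectedness, followed by transferring the pairwise-distinct $q_k$-loop colours back to a clique of vertices. The explicit cycle-by-cycle equivalence criterion you state up front is implicit in the paper but is the standard justification and is used there in the same way.
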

\begin{proof}
	Assume that $V' = \{v_1',\dots, v_k'\}$ is a "clique" of size $k$ of $G$, and let $A_k = \{a_1, \dots, a_k\}$.
	We consider the "Muller condition" using as set of colours $\GG' = V$ and given by $\F' = E$.
	The new "acceptance condition" over $\A_{G,k}$ is obtained by using the same "colouring@@TS" for the self loops over $q_{\mathsf{vert}}$, and recolouring self loops $q_k\re{a_i:a_i}q_k$ with $q_k\re{a_i:v_i'}q_k$.
	It is immediate that the obtained "acceptance condition" is "equivalent to@@acc" the original one of $\A_{G,k}$.
	
	For the converse, assume that $\A_{G,k}$ is "$|V|$-colour type@@TS". Then there is a set $\GG'$ of $|V|$ colours and a "colouring function" $\colAut'\colon \DD \to \GG'$ yielding an "equivalent@@cond" condition over $\A_{G,k}$. 
	
	First, we show that for two different self loops $e_1=q_{\mathsf{vert}}\re{v_1:c_1} q_{\mathsf{vert}}$ and $e_2 = q_{\mathsf{vert}}\re{v_2:c_2}q_{\mathsf{vert}}$,  we have $c_1\neq c_2$ (where $c_1=\colAut'(e_1)$ and $c_2=\colAut'(e_2)$).
	If $\{v_1,v_2\}\in E$, this is clear, as $\{e_1\}$ is a "rejecting cycle", but $\{e_1,e_2\}$ is "accepting@@cycle". 
	Suppose that  $\{v_1,v_2\}\notin E$, and let $u\in V$ such that $\{v_1,u\}\in E$ (which exists as $G$ is connected). Then, the "cycle" $\{e_1, q_{\mathsf{vert}}\re{u} q_{\mathsf{vert}} \}$ is "accepting@@cycle" while $\{e_1,e_2,q_{\mathsf{vert}}\re{u} q_{\mathsf{vert}}\}$ is "rejecting@@cycle", so they cannot be coloured equally.
	Therefore, for each colour $c\in \GG'$ there is one self loop $v$ such that $\colAut'(v)= c$.
	
	Secondly, we remark that for two different self loops $e_1=q_{k}\re{a_1:c_1} q_{k}$ and $e_2 = q_{k}\re{a_2:c_2} q_k$ over $q_k$ it is also satisfied that $c_1=\colAut'(e_1) \neq c_2 = \colAut'(e_2)$, as $xa_1^\oo\notin \Lang{\A_{G,k}}$, but $x(a_1a_2)^\oo\in \Lang{\A_{G,k}}$. %On the contrary, we would have $\colAut'(\{a,a'\}) =  \colAut'(\{a\})$, a contradiction. 
	Let $\{c_1,\dots,c_k\}$ be the $k$ different colours labelling the self loops over $q_k$.
	We obtain that the subset $\{v_1,\dots, v_k\}\subseteq V$ of vertices such $\colAut'(v_i)=c_i$ form a "clique" of size $k$ in $G$.
\end{proof}

\end{document}